\newcommand{\kCSP}{k\mathrm{CSP}}
\newcommand{\CSP}{\mathrm{CSP}}
\newcommand{\kSAT}{k\mathsf{SAT}}
\newcommand{\kXOR}{k\mathsf{XOR}}
\newcommand{\ThreeSAT}{3\mathsf{SAT}}
\newcommand{\ThreeXOR}{3\mathsf{XOR}}
\newcommand{\TwoXOR}{2\mathsf{XOR}}
\newcommand{\tXOR}{t\mathsf{XOR}}
\newcommand{\kmoXOR}{(k-1)\mathsf{XOR}}
\newcommand{\XOR}{\mathsf{XOR}}
\newcommand{\SAT}{\mathsf{SAT}}
\newcommand{\ER}{Erd\H{o}s-R\'{e}nyi\xspace}
\newcommand{\Lovasz}{Lov\'{a}sz\xspace}
\newcommand{\Inst}{\calI}
\newcommand{\bInst}{\bcalI}
\newcommand{\pInst}{\calI_+} 
\newcommand{\HDist}[3]{\calH_{#1}^{#3}(#2)}
\newcommand{\UnsignedH}[3]{\calH_{#1,+}^{#3}(#2)}
\newcommand{\cube}{\left\{\pm \frac{1}{\sqrt{n}} \right\}^n}
\newcommand{\spn}{\operatorname{span}}
\newcommand{\G}{\bM} 
\newcommand{\primalH}{H_P} 
\newcommand{\bprimalH}{\bH_P} 
\newcommand{\GOE}{\mathsf{GOE}}
\newcommand{\OPT}{\mathsf{OPT}}
\newcommand{\cls}{U} 
\newcommand{\T}{T} 
\newcommand{\Binom}{\mathrm{Binom}}
\newcommand{\meanA}{\ol{A}} 
\newcommand{\meanbA}{\ol{\bA}}
\newcommand{\Ram}{r_d} 
\newcommand{\Cindset}{C_d} 
\newcommand{\Cynorm}{c_d} 
\newcommand{\indsets}[1]{\calS_\eta(#1)} 
\newcommand{\Y}[1]{\calY_\eta(#1)} 
\newcommand{\TruncInst}[2]{\left.\Inst\right|_{#1,#2}}
\newcommand{\TruncbInst}[2]{\left.\bInst\right|_{#1,#2}}
\newcommand{\poseps}{\eps_{\ref{lem:induced-two-xor-balanced}}}
\begin{document}
\title{Certifying solution geometry in random CSPs: counts, clusters and balance}
\author{Jun-Ting Hsieh\thanks{Carnegie Mellon University. \texttt{juntingh@cs.cmu.edu}.  Supported by NSF CAREER Award \#2047933.} \and Sidhanth Mohanty\thanks{University of California at Berkeley.  \texttt{sidhanthm@cs.berkeley.edu}.  Supported by Google PhD Fellowship.} \and Jeff Xu\thanks{Carnegie Mellon University. \texttt{jeffxusichao@cmu.edu}.  Supported by NSF CAREER Award \#2047933.}}
\date{\today}
\maketitle

\begin{abstract}
	An active topic in the study of random constraint satisfaction problems (CSPs) is the geometry of the space of satisfying or almost satisfying assignments as the function of the density, for which a precise landscape of predictions has been made via statistical physics-based heuristics.  In parallel, there has been a recent flurry of work on \emph{refuting} random constraint satisfaction problems, via nailing refutation thresholds for spectral and semidefinite programming-based algorithms, and also on \emph{counting} solutions to CSPs.  Inspired by this, the starting point for our work is the following question:
	\begin{displayquote} 
		\emph{What does the solution space for a random CSP look like to an efficient algorithm?}
	\end{displayquote}
	In pursuit of this inquiry, we focus on the following problems about random Boolean CSPs at the densities where they are unsatisfiable but no refutation algorithm is known.
	\begin{enumerate}
		\item {\bf Counts.} For every Boolean CSP we give algorithms that with high probability certify a subexponential upper bound on the number of solutions.  We also give algorithms to certify a bound on the number of large cuts in a Gaussian-weighted graph, and the number of large independent sets in a random $d$-regular graph.
		\item {\bf Clusters.}  For Boolean $3$CSPs we give algorithms that with high probability certify an upper bound on the number of \emph{clusters} of solutions.
		\item {\bf Balance.} We also give algorithms that with high probability certify that there are no ``unbalanced'' solutions, i.e., solutions where the fraction of $+1$s deviates significantly from $50\%$.
	\end{enumerate}
	Finally, we also provide hardness evidence suggesting that our algorithms for counting are optimal.
\end{abstract}

\thispagestyle{empty}
\setcounter{page}{0}
\newpage
\tableofcontents
\thispagestyle{empty}
\setcounter{page}{0}
\newpage

\section{Introduction}
Constraint satisfaction problems (CSPs) are fundamental in the study of algorithm design and complexity theory.  They are simultaneously simple and also richly expressive in capturing a wide range of computational tasks, which has led to fruitful connections to other areas of theoretical computer science (see, for example, \cite{Gol11,ABW10} for connections to cryptography, \cite{DLSS14} for applications to hardness of learning, and \cite{Fei02} for applications to average-case hardness).  Hence, understanding them has received intense attention in the past few decades, leading to several comprehensive theories of their complexity.  Some of the highlights include: the Dichotomy Theorem, which characterizes the worst-case complexity of satisfiability of CSPs via their algebraic properties \cite{Sch78,BJK05,Zhuk20}, inapproximability results via the PCP Theorem \cite{Has01}, and the theory of optimal inapproximability based on connections between semidefinite programming and the Unique Games conjecture \cite{Kho02,KKMO07,Rag08}.

In this work, we are interested in the algorithmic aspects of random instances of CSPs.  There has been a diverse array of phenomena about random CSPs illustrated in recent work, of dramatically varying nature depending on the ratio of the number of constraints to the number of variables, known as the \emph{density}.  Of central importance is the \emph{satisfiability threshold}, which marks a phase transition where a random CSP instance shifts from being likely satisfiable to being likely unsatisfiable.  When the density is well below the satisfiability threshold, there are several algorithms for tasks such as counting and sampling assignments to a random CSP instance \cite{Moi19,JPV21,GGGY19,BGG19}, whereas well above this threshold there are efficient algorithms for \emph{certifying} that random CSPs are unsatisfiable \cite{AOW15}.  The densities in the interim hold mysteries that we don't yet fully understand, and this work is an effort to understand the algorithmic terrain there.  To make matters concrete, for now we will specialize the discussion of the problem setup and our work to the canonical $\ThreeSAT$ predicate.

Consider a random $\ThreeSAT$ formula $\bInst$ on $n$ variables and $\Delta n$ clauses where each clause is sampled uniformly, independently, and adorned with uniformly random negations.  Once the density $\Delta$ is a large enough constant, this random instance is unsatisfiable with high probability.\footnote{In fact, it is conjectured that there is a sharp threshold for unsatisfiability once $\Delta$ crosses some constant $\alpha_{\mathrm{SAT}}\approx 4.267$.}  On the other hand, the widely believed Feige's random $3$SAT hypothesis \cite{Fei02} conjectures that when $\Delta$ is any constant, there is no algorithm to \textit{certify} that a random instance is unsatisfiable.  Further, the best known algorithms for efficiently certifying that it is unsatisfiable require $\Delta\gtrsim\sqrt{n}$ \cite{GL03,COGL07,FO07,AOW15}.  Moreover, when $\Delta\lesssim\sqrt{n}$ there is a lower bound against the Sum-of-Squares hierarchy \cite{Gri01,Sch08} (known to capture many algorithmic techniques), which suggests an \emph{information-computation gap} and earns $\sqrt{n}$ the name \emph{refutation threshold}.

In this picture, at both densities $n^{.25}$ and $n^{.35}$, $\bInst$ is likely unsatisfiable but ``looks'' satisfiable to an efficient algorithm.
But is there a concrete sense in which a random formula at density $n^{.25}$ is ``more satisfiable'' than one at density $n^{.35}$ from the lens of a polynomial-time algorithm?  A natural measure of a $\ThreeSAT$ formula's satisfiability is its number of satisfying assignments, which motivates the following question.
\begin{displayquote}
	\emph{What is the best efficiently certifiable upper bound on the number of assignments satisfying $\bInst$?}
\end{displayquote}
In the context of $\ThreeSAT$, our work proves:
\begin{theorem}[Informal]	\label{thm:main-thm-3SAT}
    There is an efficient algorithm to certify with high probability that a random $\ThreeSAT$ formula with density $\Delta = n^{1/2-\delta}$ has at most $\exp(\wt{O}(n^{3/4+\delta/2}))$ satisfying assignments.
\end{theorem}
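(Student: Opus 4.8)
The plan is to pass from counting satisfying assignments to counting ``good'' assignments of an associated $\ThreeXOR$ form, then to control the latter with a Kikuchi-type spectral argument. It is useful to note that the target is really a bound of the form $\exp(\wt{O}(n/\sqrt{\Delta}))$, since $n^{3/4+\delta/2}=n/\sqrt{\Delta}$ when $\Delta=n^{1/2-\delta}$. Writing the violation indicator of a clause $a$ on variables $x_{i_{a,1}},x_{i_{a,2}},x_{i_{a,3}}$ (with signs $\xi_{a,t}$) as $\prod_{t=1}^{3}\frac{1-\xi_{a,t}x_{i_{a,t}}}{2}$ and summing over the $m=\Delta n$ clauses, the number of violated clauses equals $\frac m8-\frac18\Phi(x)+(\text{a linear form})+(\text{a }\TwoXOR\text{ form})$, where $\Phi(x)=\sum_a\xi_{a,1}\xi_{a,2}\xi_{a,3}\,x_{i_{a,1}}x_{i_{a,2}}x_{i_{a,3}}$. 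Because $\Delta=n^{1/2-\delta}$ is polynomially large, both low-degree pieces are $o(m)$ for every $x$ and with high probability over $\bInst$: the linear form has $\ell_1$-norm $\wt{O}(n\sqrt{\Delta})$ (random signs), and the $\TwoXOR$ form is a signed random graph of average degree $O(\Delta)$, hence has operator norm $\wt{O}(\sqrt{\Delta})$. So any satisfying assignment forces $\Phi(x)\ge\epsilon m$ for a constant $\epsilon>0$, and since $\|\cdot\|_1$ and that operator norm are polynomial-time computable, it suffices to certify $\#\{x\in\{\pm1\}^n:\Phi(x)\ge\epsilon m\}\le\exp(\wt{O}(n/\sqrt{\Delta}))$.

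Next I would make $\Phi$ into a bounded object. Designating one variable per clause and grouping the terms of $\Phi$ by designated variable, one Cauchy--Schwarz gives $\Phi(x)^2\le n\sum_i L_i(x)^2=n\bigl(m+Q(x)\bigr)$, where $Q$ is an XOR-type form of arity at most $4$ with $M':=\wt{O}(\Delta^2 n)$ monomials (inheriting enough randomness from the clause signs); thus $\Phi(x)\ge\epsilon m$ forces $Q(x)\ge\tau$ with $\tau=\Theta(\Delta^2 n)=\Theta(M')$. I would then form the level-$\ell$ Kikuchi matrix $\mathcal K$ of $Q$, an $\binom n\ell\times\binom n\ell$ matrix with $\mathcal K[S,T]$ the sign of the constraint whose variable set is $S\triangle T$ (with the $\ell$--$\ell$ split), so that the lift $x^{(\ell)}$, $(x^{(\ell)})_S=\prod_{i\in S}x_i$, satisfies $(x^{(\ell)})^\top\mathcal K\,x^{(\ell)}=N_\ell\,Q(x)$ for a combinatorial factor $N_\ell$, while $\|x^{(\ell)}\|^2=\binom n\ell$. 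Taking $\ell\asymp\sqrt n/\Delta$ (legal since $\ell\ge 2$, i.e. $\Delta\lesssim\sqrt n$), the matrix $\mathcal K$ has tiny average degree $\wt{\Theta}(\Delta^2\ell^2/n)=\wt{\Theta}(1)$, and the standard bucketing/row-removal argument yields $\|\mathcal K\|_{\mathrm{op}}\le\lambda$ with $\lambda=\wt{O}(1)$. Projecting onto the span $V$ of the eigenvectors of $\mathcal K$ with eigenvalue at least $\theta:=\frac{\tau N_\ell}{2\binom n\ell}$, any $x$ with $Q(x)\ge\tau$ must have $\|\Pi_V x^{(\ell)}\|^2\ge\beta\binom n\ell$ for $\beta:=\theta/\lambda$.

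The final step, which I expect to be the crux, is to bound the number of $x\in\{\pm1\}^n$ whose lift has a $\beta$-fraction of its mass in the subspace $V$ of dimension $D:=\dim V$. A naive net of $V$ plus a per-point Chernoff bound is far too lossy: it leaves a factor $2^{\Omega(n)}$, because a single linear test on the lifted space rules out only a constant fraction of $x$, and the hypercontractive tail of one degree-$\ell$ polynomial is too weak. Instead I would use the entire nonnegative degree-$2\ell$ polynomial $F(x):=\|\Pi_V x^{(\ell)}\|^2$ at once: $\mathbb{E}_x[F(x)]=D$, and a $p$-th moment estimate via hypercontractivity gives a bound of the shape
\[
\#\{x:Q(x)\ge\tau\}\ \le\ 2^n\exp\!\Bigl(-\Omega\bigl(\ell\,(\beta\tbinom n\ell/D)^{1/\ell}\bigr)\Bigr).
\]
The remaining work is to pin down $\lambda$, $\theta$, $\beta$ and---crucially---$D$ at $\ell\asymp\sqrt n/\Delta$ precisely enough that this exponent lands at $n-\wt{O}(n/\sqrt{\Delta})$, so the right side is $\exp(\wt{O}(n/\sqrt{\Delta}))=\exp(\wt{O}(n^{3/4+\delta/2}))$; this is where most of the spectral book-keeping goes, and it is also exactly what must prevent the argument from ``over-certifying'' (which would contradict the hardness of refutation below $\Delta\approx\sqrt n$). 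Since $\|\mathcal K\|_{\mathrm{op}}$ and the threshold eigenvalue count $D$ are computable in polynomial time and hold in the stated ranges with high probability over $\bInst$, while the moment inequality for $F$ is unconditional once $D$ is known, the whole pipeline is an efficient certification algorithm.

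The main obstacle, then, is the counting step: squeezing a genuinely subexponential bound out of ``the good assignments' lifts concentrate on $V$'' without bleeding the $2^{\Omega(n)}$ factor, and choosing the Kikuchi level (hence the regime of $D$, $\theta$, $\beta$) so that the exponent comes out to $n/\sqrt{\Delta}$ up to logarithms rather than something larger. The $\ThreeSAT\to\ThreeXOR$ reduction and the Kikuchi-norm bound of the first two steps are, by now, fairly routine. It is also worth noting that $n^{3/4+\delta/2}=n/\sqrt{\Delta}$ is the log-cardinality of a Hamming ball of radius $\approx n/\sqrt{\Delta}$, so morally the theorem says the satisfying assignments admit a $\wt{O}(n/\sqrt{\Delta})$-bit description.
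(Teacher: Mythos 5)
Your first step gives away the quantitative strength that the whole theorem rests on, and this is a genuine gap rather than a presentational one. Your own estimates (linear part $\wt{O}(n\sqrt{\Delta})$, degree-two part $\wt{O}(\sqrt{\Delta})\cdot n$, both against $m=\Delta n$) actually show that a satisfying assignment must $\ThreeXOR$-satisfy a $\bigl(1-\wt{O}(1/\sqrt{\Delta})\bigr)$-fraction of clauses, but you then retain only ``$\Phi(x)\ge \eps m$ for a constant $\eps$'' and propose to certify that $\#\{x:\Phi(x)\ge\eps m\}\le \exp(\wt{O}(n/\sqrt{\Delta}))$. That target is not just hard to reach with your machinery --- it cannot be certifiably true without breaking the refutation barrier: if some $x$ has $\Phi(x)\ge 2\eps m$, then flipping any subset of a fixed set of $c_\eps n$ variables (which touch $O(c_\eps m)$ clauses w.h.p.) produces $2^{\Omega_\eps(n)}$ assignments with $\Phi\ge\eps m$, so a certified $\exp(o(n))$ bound on that set certifies that no assignment $\XOR$-satisfies a $\tfrac12+\eps$ fraction, i.e.\ strong refutation of random $\ThreeXOR$ at $\Delta=n^{1/2-\delta}$, which is exactly the SoS-hard regime (this is the paper's own refutation-to-counting reduction, \pref{thm:kXOR_hardness}). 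The bound $\exp(\wt{O}(n^{3/4+\delta/2}))$ is only consistent because the slack parameter is $\eta=\wt{O}(1/\sqrt{\Delta})$, not a constant: the paper's $\kXOR$ principle (\pref{lem:kXOR-principle}, via the Fourier expansion of $\kSAT$ and the \cite{AOW15} quasirandomness certificate) keeps this near-perfect $\XOR$ satisfaction, and the final count is of the form $\exp(\wt{O}(\eta n))$ with that tiny $\eta$.

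Even after repairing the reduction, the crux of your plan --- counting hypercube points whose level-$\ell$ Kikuchi lift correlates with a top eigenspace --- is not carried out, and the parameters you indicate do not obviously close. At $\ell\asymp\sqrt{n}/\Delta$ the Kikuchi graph has average degree $\wt{\Theta}(1)$, so the threshold $\theta$ and the certified norm $\lambda$ are both $\wt{\Theta}(1)$; the trace bound then gives no nontrivial control of $D=\dim V$ (it can be $\wt{\Omega}\bigl(\binom{n}{\ell}\bigr)$), $\beta$ is at best $1/\polylog$, and your hypercontractive moment bound has the shape $2^n\exp(-\wt{O}(n))$ with unpinned constants --- nothing forces the exponent to land at $n-\wt{O}(n/\sqrt{\Delta})$, which you yourself flag as the open crux. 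The paper avoids this entirely: having reduced to $\ThreeXOR$ with $\eta=\wt{O}(1/\sqrt{\Delta})$, it counts combinatorially. The pointwise product of two $(1-\eta)$-satisfying assignments $(1-2\eta)$-satisfies the all-positive instance $\pInst$ (\pref{obs:product_of_solns}); partitioning the variables and passing to induced $\TwoXOR$ instances on random graphs (\pref{lem:average_partition}, \pref{thm:kxor_upper_bound}), Cheeger's inequality plus degree concentration forces the product to be $O(\eta n)$-close to $\pm\vec{1}$, so all solutions lie in $\exp(\wt{O}(\eta n))$ Hamming balls (\pref{thm:2xor_upper_bound}). No Kikuchi matrices or subspace counting are needed; the dimension-based counting idea you are reaching for appears in the paper only for the SK model and independent sets, where the relevant object is a genuine degree-two eigenspace with a dimension bound coming from the semicircle or Kesten--McKay law.
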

In addition to certifying the number of satisfying assignments, we can certify that the solutions form clusters and upper bound the number of clusters under the refutation threshold.

\paragraph{Clusters.}
Besides the satisfiability threshold, random $\kSAT$ is conjectured to go through other phase transitions too, as predicted in the work of \cite{KMRSZ07}.  In particular, the \textit{clustering threshold} is the density where the solution space is predicted to change from having one giant component to roughly resembling a union of several small Hamming balls, known as \emph{clusters}, that are pairwise far apart in Hamming distance.

Much like the refutation threshold that marks where efficient algorithms can witness unsatisfiability, it is natural to ask if there is some regime under the refutation threshold where an efficient algorithm can witness a bound on the number of clusters of solutions.  The following more nuanced version of \pref{thm:main-thm-3SAT} gives an answer to this question.
\begin{theorem}[Informal]	\label{thm:main-thm-clustering-3SAT}
    There is an efficient algorithm to certify with high probability that the satisfying assignments of a random $\ThreeSAT$ formula with density $\Delta = n^{1/2-\delta}$ are covered by at most $\exp(\wt{O}(n^{1/2+\delta}))$ diameter-$\wt{O}(n^{3/4+\delta/2})$ clusters.
\end{theorem}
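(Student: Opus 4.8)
The plan is to reformulate the statement as a bound on a \emph{packing number} of the solution set and then certify that bound. Write $\mathrm{Sol}(\bInst)\subseteq\{\pm1\}^n$ for the set of satisfying assignments and $\rho=\wt{O}(n^{3/4+\delta/2})$. A maximal family of solutions pairwise at Hamming distance $>\rho$ is automatically a $\rho$-cover of $\mathrm{Sol}(\bInst)$: by maximality every solution is within $\rho$ of some member, since otherwise it could be adjoined. So, taking such a family as the cluster centers and the balls of radius $\rho$ around them as the clusters, the theorem follows once we certify that every $\rho$-separated family of satisfying assignments has size at most $\exp(\wt{O}(n^{1/2+\delta}))$. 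The whole task is therefore to produce a poly-time-computable upper bound on the $\rho$-packing number of $\mathrm{Sol}(\bInst)$ that is $\exp(\wt{O}(n^{1/2+\delta}))=\exp(\wt{O}(n/\Delta))$ with high probability.

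The natural route, and the one I would take, is a spectral/SDP certificate in the style that already powers \pref{thm:main-thm-3SAT}. For a candidate family size $m$, one writes down the CSP on (a projection of) $mn$ variables whose feasible points are tuples $(x^{(1)},\dots,x^{(m)})$ with each $x^{(i)}\models\bInst$ and $d_H(x^{(i)},x^{(j)})>\rho$ for all $i\ne j$, and one aims to certify infeasibility once $m>\exp(\wt{O}(n/\Delta))$; this is a ``multi-overlap-gap'' refutation problem, and \pref{thm:main-thm-3SAT} is essentially the $m=1$ (no separation) instance of it. The separation constraints are what should buy the improvement: whenever two copies are forced $\rho$-apart, the $\Theta(\rho/n)$-fraction of clauses of $\bInst$ whose scope straddles their symmetric difference impose, after conditioning on that symmetric difference, clause constraints forbidding \emph{two} patterns rather than one, so the relevant spectral norms grow and, after the reduction to a Kikuchi-type matrix, the effective number of constraints is boosted by a $\mathrm{poly}(\rho/n)$ factor. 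Conditioning on the supports of the pairwise symmetric differences — rather than literally carrying the extra ``difference'' variables — should keep the object $3$-ary, so that the argument of \pref{thm:main-thm-3SAT} can be rerun at this boosted density, which is what should pull the certified bound down from $\exp(\wt{O}(n/\sqrt\Delta))$ to $\exp(\wt{O}(n/\Delta))$.

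I expect the main obstacle to be exactly pinning down this boost: showing that the $\mathrm{poly}(\rho/n)$ gain in effective density survives the reduction to a poly-size spectral or SDP object (rather than an $m$-block program one cannot evaluate), and that with $\rho=\wt{O}(n^{3/4+\delta/2})$ it is quantitatively enough to cross the threshold where refutation becomes possible. A related subtlety is the choice of $\rho$: it must be large enough that the boost takes effect (which is what forces $\rho\gtrsim n^{3/4+\delta/2}$, hence the cluster diameter), yet the analysis must still control the $\exp(\wt{O}(n/\Delta))$-many conditionings on the symmetric-difference supports. Purely metric packing bounds are of no help here — with only $\Omega(\rho)=o(n)$ separation the Hamming-space packing number is $2^{n(1-o(1))}$ — so one is forced to use the structure of $\bInst$, and making the structural gain worth a full factor $\sqrt\Delta$ in the exponent is the crux.
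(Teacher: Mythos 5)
Your reduction from covering to packing (a maximal $\rho$-separated family of solutions is a $\rho$-cover) is fine, but it only reformulates the theorem: everything now rests on certifying that every $\rho$-separated family of satisfying assignments has size $\exp(\wt{O}(n/\Delta))$, and that is exactly the step you do not prove. The proposed mechanism --- an $m$-block CSP over tuples $(x^{(1)},\dots,x^{(m)})$ with pairwise-separation constraints, refuted via a Kikuchi-type matrix at a ``boosted'' effective density gained from the $\Theta(\rho/n)$-fraction of straddling clauses --- is speculative on every point that matters: there is no argument that the separation constraints actually yield a $\mathrm{poly}(\rho/n)$ gain in certifiable density after the reduction, no argument that the certificate can be compressed to polynomial size when $m=\exp(\wt{O}(n^{1/2+\delta}))$ (as written the program has $mn$ variables), and no quantitative accounting showing the gain crosses a refutation threshold at $\rho=\wt{O}(n^{3/4+\delta/2})$. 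You flag all of this yourself as ``the crux,'' so the proposal is a research program whose central lemma is missing, not a proof.

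It is also worth noting that the paper's route is structurally different and sidesteps the multi-copy refutation entirely. After the $\ThreeXOR$ reduction (\pref{lem:kXOR-principle}), one uses the product trick (\pref{obs:product_of_solns}): for two near-solutions $x,x'$ the product $y=x\circ x'$ nearly satisfies the \emph{all-positive} instance on the same hypergraph, and a single certified spectral bound on the primal graph plus the expander mixing lemma (\pref{lem:3_hypergraph_structure}) then certifies a pairwise-distance dichotomy: any two solutions are either within $\theta n$ or within $(\tfrac12\pm\theta)n$ of each other, with $\theta=\wt{O}(\Delta^{-1/2})$. The number of clusters is then bounded non-algorithmically: cluster representatives form a $(2\theta)$-balanced code, and the MRRW/Alon bound (\pref{lem:balanced_code_bound}) gives at most $2^{O(\theta^2\log(1/\theta))n}=\exp(\wt{O}(n^{1/2+\delta}))$ of them. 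The quadratic dependence on $\theta$ from the coding bound is precisely what produces the $n^{1/2+\delta}$ exponent; nothing in your sketch supplies an analogous quadratic gain, and without the certified ``close or nearly antipodally balanced'' dichotomy a packing bound at scale $\rho=o(n)$ cannot come from the geometry alone, as you correctly observe. To repair the proposal you would either have to prove your boosted-density refutation lemma in poly size, or redirect to a pairwise statement of the paper's type and then import a code-size bound.
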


\paragraph{Balance in the solution space.}  Suppose at density $\Delta$, a typical $\ThreeSAT$ formula has $\sim\exp(c_{\Delta} n)$ satisfying assignments, then due to the uniformly random negations in clauses, each string is satisfying with probability $\sim\exp((c_{\Delta}-1)n)$.  Then one can show via the first moment method that with high probability there are no satisfying assignments with Hamming weight outside $\left[\frac{1}{2}-f(c_{\Delta}), \frac{1}{2}+f(c_{\Delta})\right].$\footnote{where $f$ is chosen so that the number of strings outside that Hamming range is $\ll\exp((c_{\Delta}-1)n)$.}  In particular, the intersection of the solution space with the set of unbalanced strings empties out under the satisfiability threshold.  This raises the question:
\begin{displayquote}
    \emph{Is there an efficient algorithm to certify that a random CSP instance has no unbalanced assignments at density significantly under the refutation threshold?}
\end{displayquote}
We affirmatively answer this question and in the special case of $\ThreeSAT$ prove:
\begin{theorem}[Informal] \label{thm:main-thm-global-card-3SAT}
    There is an efficient algorithm to certify with high probability that a random $\ThreeSAT$ formula with density $\Delta = n^{1/2-\delta}$ has no satisfying assignments with Hamming weight outside
    \[
        \left[\frac{1}{2}-\wt{\Theta}\left(\frac{1}{n^{1/4-\delta/2}}\right), \frac{1}{2}+\wt{\Theta}\left(\frac{1}{n^{1/4-\delta/2}}\right)\right].
    \]
\end{theorem}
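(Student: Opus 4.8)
The plan is to certify the stronger quantitative statement that every satisfying $x\in\{\pm1\}^n$ of the random formula $\bInst$ has $\bigl|\sum_{i\in[n]}x_i\bigr|\le\wt O(n^{3/4+\delta/2})$, which is precisely the claimed two-sided window on the Hamming weight. Fix an instance $\bInst$ with $\Delta n$ width-$3$ clauses and, for $x\in\{\pm1\}^n$, let $U(x)$ be the number of clauses it falsifies, so $x$ is satisfying iff $U(x)=0$. Expanding the Boolean indicator $\tfrac18\prod_{a}(1-s_ax_{i_a})$ that a clause on literals $(i_1,s_1),(i_2,s_2),(i_3,s_3)$ is falsified in the monomial basis and summing over clauses gives $U(x)=\tfrac{\Delta n}{8}+L(x)+Q(x)+C(x)$, where $L(x)=-\tfrac18\langle b,x\rangle$ with $b$ the signed variable-degree vector, $Q(x)=\tfrac18\,x^{\top}Mx$ with $M$ the signed clause-adjacency matrix, and $C(x)=-\tfrac18\langle x^{\otimes3},T\rangle$ with $T$ the symmetric degree-$3$ coefficient tensor, supported on the $\wt\Theta(\Delta n)$ clause triples with $\pm\Theta(1)$ entries. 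First I would dispose of $L$ and $Q$: $b$ carries total $\ell_1$-mass $\wt O(\sqrt\Delta\,n)$, and for $\Delta=n^{1/2-\delta}$ a random signed sparse matrix of average degree $\wt\Theta(\Delta)$ has operator norm $\wt O(\sqrt\Delta)$ with high probability, so $\|L\|_\infty,\|Q\|_\infty\le\wt O(\sqrt\Delta\,n)=o(\Delta n)$, and these bounds are themselves efficiently certifiable. Consequently $U(x)=0$ pins the cubic part near its trivial maximum: $\langle x^{\otimes3},T\rangle\ge\Delta n-o(\Delta n)$, and $\langle x^{\otimes3},T\rangle\le\Delta n$ always.

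So it remains to certify that no \emph{unbalanced} $x$, i.e.\ one with $\bigl|\sum_i x_i\bigr|>\tau_0 n$ for $\tau_0:=\wt\Theta(n^{-1/4+\delta/2})$, can push $\langle x^{\otimes3},T\rangle$ this close to $\Delta n$. On the full cube this is uncertifiable — naively flattening $T$ to an $n\times n^2$ matrix only gives $\langle x^{\otimes3},T\rangle\le\|T\|_{\mathrm{op}}\cdot\sqrt n\cdot n=\wt O(\sqrt\Delta\,n^{3/2})$, which is $\gg\Delta n$ exactly because $n^{1/2-\delta}$ sits below the $\ThreeSAT$ refutation threshold — so the balance hypothesis must be exploited. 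Write $x=\mathbf{1}-2\mathbf{1}_S$ with $S=\{i:x_i=-1\}$ the minority set (the negative-bias case is symmetric), so $|S|\le\tfrac n2-\tfrac12\tau_0 n$, and expand $\langle x^{\otimes3},T\rangle$ by trilinearity. Every term containing at least one factor $\mathbf{1}$ is a partial contraction of $T$: $\langle\mathbf{1}^{\otimes3},T\rangle$ (a signed sum of $\Delta n$ entries, hence $\wt O(\sqrt{\Delta n})$), $\langle\mathbf{1}_S,\,r\rangle$ with $r_i=\sum_{j,k}T_{ijk}$, and $\mathbf{1}_S^{\top}M'\mathbf{1}_S$ with $M'_{ij}=\sum_k T_{ijk}$ — each $\wt O(\sqrt\Delta\,n)=o(\Delta n)$ by the same spectral estimates. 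Thus the problem reduces to certifying the \emph{induced $\ThreeXOR$} bound $\bigl|\langle\mathbf{1}_S^{\otimes3},T\rangle\bigr|$ is, say, below $\tfrac{1-\Omega(1)}{8}\Delta n$ uniformly over all $S$ with $|S|\le\tfrac n2-\wt\Omega(n^{3/4+\delta/2})$ — i.e.\ the $\ThreeXOR$ sub-instance of $\bInst$ induced on any such $S$ is certifiably near-balanced.

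For this last bound I would run the Kikuchi-graph (even-cover) machinery behind the counting and clustering theorems, localized to a Hamming-weight slice — equivalently, reformulated as a bound on an induced $\TwoXOR$ instance on $\ell$-subsets of the variables. Lift (the $\ThreeXOR$ relaxation of) $\bInst$ to the signed adjacency $A_\ell$ of the Kikuchi graph on $\binom{[n]}{\ell}$ at level $\ell\approx n^{1/2+\delta}$, matching the cluster-count exponent of \pref{thm:main-thm-clustering-3SAT}. A single high-probability certified bound $\|A_\ell\|_{\mathrm{op}}\le B$, obtained by a matrix-Chernoff / trace-moment computation after discarding the negligible fraction of clauses that spoil the required regularity (spreadness) of the clause hypergraph, then yields — uniformly over all $S$, via $\langle v_S,A_\ell v_S\rangle\le B\|v_S\|^2$ where $v_S$ is the indicator of $\binom{S}{\ell}\subseteq\binom{[n]}{\ell}$ — a bound on $\langle\mathbf{1}_S^{\otimes3},T\rangle$ that stays well below $\tfrac{\Delta n}{8}$ as long as $|S|\le\tfrac n2-\wt\Omega(n^{3/4+\delta/2})$; the competition between this bound and the trivial count $\approx\Delta n(|S|/n)^3$ of clauses inside $S$, optimized over $\ell$ against $\Delta=n^{1/2-\delta}$, is what pins the window at $\wt\Theta(n^{-1/4+\delta/2})$. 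I expect this step to be the main obstacle: the lifting, the spreadness truncations, and the trace-moment estimates are largely inherited from the counting and clustering arguments, but pushing the spectral bound through uniformly over a \emph{slice} of the Hamming cube — rather than over the whole cube or a single cluster — and making the slice/level/density constants line up, is the genuinely new ingredient.
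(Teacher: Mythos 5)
Your decomposition of the falsification count into constant, linear, quadratic and cubic parts, the spectral disposal of the linear/quadratic terms (this is the $k=3$ XOR principle, \pref{lem:kXOR-principle}), and the expansion $x=\mathbf{1}-2\mathbf{1}_S$ reducing everything to the signed count $\langle\mathbf{1}_S^{\otimes 3},T\rangle$ of clauses inside the minority set are all sound, and the arithmetic correctly locates the window at $\tau_0\gg\mathrm{polylog}(n)/\sqrt{\Delta}=\wt\Theta(n^{-1/4+\delta/2})$. The genuine gap is the final step, which you yourself flag as the main obstacle. First, it is not an efficient algorithm: the Kikuchi matrix at level $\ell\approx n^{1/2+\delta}$ lives on $\binom{n}{\ell}$ vertices, i.e.\ has size $\exp(\wt\Omega(n^{1/2+\delta}))$, so even writing it down is subexponential time --- and at that cost one could strongly refute the whole $\ThreeXOR$ instance (\`a la RRS), making the balance restriction pointless; the theorem is a polynomial-time statement. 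Second, the sub-goal you set --- certifying $|\langle\mathbf{1}_S^{\otimes 3},T\rangle|\le\frac{1-\Omega(1)}{8}\Delta n$ uniformly over the slice --- asks for certified \emph{sign cancellation} over exponentially many sets at density $\Delta=n^{1/2-\delta}$, since the raw number of clauses inside such an $S$ is $\frac{1-o(1)}{8}\Delta n$. That is a below-refutation-threshold $\ThreeXOR$-type certification task, not known to be achievable in polynomial time and running against the very SoS lower bounds that motivate the paper; no polynomial-size slice restriction is known to circumvent this.

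Third, the strengthening is unnecessary: your own accounting only needs $|\langle\mathbf{1}_S^{\otimes 3},T\rangle|\le\frac{\Delta n}{8}-\wt\omega(\sqrt{\Delta}\,n)$, and this follows from the \emph{trivial} bound $|\langle\mathbf{1}_S^{\otimes 3},T\rangle|\le\#\{\text{clauses fully inside }S\}$ together with a certified upper bound on that unsigned count for every $S$ with $|S|\le\frac{n}{2}(1-\tau_0)$. Certifying such hyperedge counts is exactly what the paper does: matrix Bernstein gives $\|\bA-\frac{6\Delta}{n}J\|\le O(\sqrt{\Delta\log n})$ for the primal graph (\pref{lem:primal_graph_spectral_gap}), and the expander mixing lemma then certifies, simultaneously for all sets, how many hyperedges lie inside them (\pref{lem:3_hypergraph_structure}); the error $O(n\sqrt{\Delta\log n})$ is what pins the window. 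In fact the paper's proof of \pref{thm:3SAT_global_card} is even more elementary than your pipeline: it never invokes the XOR principle at all, but splits off the all-positive and fully-negated clause sub-instances, observes that a biased assignment violates every fully-negated clause inside its positive set (and every all-positive clause inside its negative set), and certifies via \pref{lem:3_hypergraph_structure} that there are $\Omega(\rho m)$ such clauses whenever the bias is $\rho\gg\sqrt{\log n/\Delta}$. So the architecture you chose can be repaired by swapping your Kikuchi step for the unsigned primal-graph expander-mixing certificate, but as written the key step is neither efficient nor plausibly certifiable at this density.
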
 

We illustrate our upper bounds for counting satisfying assignments and clusters in \pref{fig:plot-of-results}.  We delve into the precise technical statements of our results and the techniques involved in proving them in \pref{sec:our_contributions}.
Then to put our work in context, we survey and discuss existing work on information-computation gaps, and algorithmic work on counting, sampling and estimating partition functions in \pref{sec:related_work}.

\myfig{.6}{3SAT-fig}{Our results for $\ThreeSAT$. {\bf \color{ForestGreen} Green}: certifiable upper bound on the number of satisfying assignments. {\bf \color{Fuchsia} Purple}: upper bound on the number of clusters of satisfying assignments.  In the case of $\kSAT$, the green plot looks identical but with $n$ replaced by $n^{(k-1)/2}$ and $n^{3/2}$ replaced by $n^{k/2}$.}{fig:plot-of-results}

\subsection{Our Contributions} \label{sec:our_contributions}
In this section, we give a more detailed technical description of our contributions.  To set the stage for doing so, we first formally clarify the notion of \emph{certification} and some preliminaries on constraint satisfaction problems.

Fix a sample space $\Omega$, a probability distribution $\calD$ over $\Omega$, and a function $f: \Omega \to \R$.  For example, \ $\Omega$ is the space of $\ThreeSAT$ instances, $\calD$ is the distribution of instances given by the random $\ThreeSAT$ model, and $f$ is the number of satisfying assignments.

\begin{definition} \label{def:certification_algorithm}
    We say that a deterministic algorithm $\calA$ certifies that $f \leq C$ with probability over $1-p$ over $\calD$ if $\calA$ satisfies
    \begin{enumerate}
        \item For all $\omega \in \Omega$, $f(\omega) \leq \calA(\omega)$.
        \item For a random sample $\bomega \sim \calD$, $\calA(\bomega) \leq C$ with probability over $1-p$.
    \end{enumerate}
\end{definition}
We emphasize that an algorithm that always outputs the typical value of $f$ is \textit{not} a certification algorithm: it will satisfy the second condition but not the first.  
Thus, in several average-case problems, there are gaps between the typical value and the best known certifiable upper bound.

\begin{remark}
    Due to the guarantees of $\calA$, one can think of the ``transcript'' of the algorithm on input $\omega$ as being a proof that $f(\omega)\le\calA(\omega)$.
\end{remark}

\begin{definition}
    A \emph{predicate} $P:\{\pm1\}^k\to\{0,1\}$ is any Boolean function that is not the constant function that always evaluates to $1$.  An instance $\calI$ of a \emph{constraint satisfaction problem} on predicate $P$ and vertex set $[n]$ is a collection of \emph{clauses}, where a clause is a pair $(c,S)$ with $c\in\{\pm1\}^k$ and $S\in[n]^k$.  Given $x\in\{\pm1\}^n$, the value of $\calI$ on $x$ is:
    \[
        \calI(x) \coloneqq \frac{1}{|\calI|}\sum_{(c,S)\in\calI} P(c_1 x_{S_1},\dots, c_k x_{S_k}).
    \]
    We say $x$ \emph{satisfies} a clause $(c,S)$ if $P(c_1 x_{S_1},\dots,c_k x_{S_k})=1$, and say $x$ is \emph{$(1-\eta)$-satisfying} if $\calI(x)\ge1-\eta$.  If $\eta = 0$, we say $x$ is \emph{exactly satisfying}.
\end{definition}

In this work we are concerned with random CSPs. We defer an exact description of the random model to \pref{sec:random-hypergraphs} (note however that the common random models used in the literature are all qualitatively similar; cf.\ \cite[Appendix D]{AOW15}).
Our first result is an algorithm certifying a subexponential upper bound on the number of $(1-\eta)$-satisfying assignments for random CSPs.  
\begin{theorem} \label{thm:kCSP_main_thm}
    Let $\bInst$ be a random $\kCSP$ instance on any predicate $P$ on $n$ variables and $\Delta n$ clauses.
    For every $\eps > 0$, there is an algorithm that certifies with high probability that the number of $(1-\eta)$-satisfying assignments to $\bInst$ is upper bounded by:
    \[
        \exp\left(\wt{O}(\eta n)\right)\cdot\exp\left(\wt{O}\left(\sqrt{\frac{n^{(k+1)/2}}{\Delta}}\right)\right) \cdot \exp\left(O\left(\frac{n^{1+\eps}}{\Delta^{1/(k-2)}}\right)\right).
    \]
\end{theorem}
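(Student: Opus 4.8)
The plan is to bound the number of $(1-\eta)$-satisfying assignments by a ``certifiable'' quantity: we will show that any such assignment must lie close in Hamming distance to a low-dimensional subspace, and the relevant dimension/radius can be certified via spectral norm bounds on random matrices associated with $\bInst$. Concretely, first I would pass from the predicate $P$ to its Fourier expansion $P(y) = \sum_{\alpha \subseteq [k]} \widehat{P}(\alpha) \prod_{i\in\alpha} y_i$ and observe that $\bInst(x) \ge 1-\eta$ forces a constraint of the form $\sum_{\alpha} \widehat{P}(\alpha) \Phi_\alpha(x) \ge (1-\eta)|\bInst|$, where $\Phi_\alpha(x)$ collects the degree-$|\alpha|$ monomials over the clause-structure. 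The constant Fourier coefficient contributes a fixed term; the remaining low-degree parts (degrees $1,\ldots,k-1$) are lower-order and can be absorbed into the $\exp(O(n^{1+\eps}/\Delta^{1/(k-2)}))$ slack via a Cauchy--Schwarz / hypergraph-regularity style argument, exactly as in the Kikuchi/spectral refutation literature (\cite{AOW15} and follow-ups); the degree-$k$ part is the ``hard'' core.

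Next, for the degree-$k$ (top) part, I would build the Kikuchi-type matrix: index rows/columns by $\ell$-subsets of $[n]$ for an appropriate level $\ell$, and place (signed) entries according to which clauses are ``covered'' by a pair of subsets. The key random-matrix fact I would invoke is that the spectral norm of this matrix is, with high probability, $\wt{O}$ of its expectation, which is governed by the density $\Delta$ and the level $\ell$; optimizing $\ell$ produces the $\exp(\wt{O}(\sqrt{n^{(k+1)/2}/\Delta}))$ factor. The point is that a $(1-\eta)$-satisfying $x$ certifies, via $x^{\otimes \ell}$ (suitably symmetrized) being a near-top-eigenvector, that $x$ has large correlation with a bounded-dimensional eigenspace of this matrix; a covering/net argument over that eigenspace, together with the observation that $x$ can deviate from it in at most $\wt{O}(\eta n)$ coordinates (contributing the $\exp(\wt{O}(\eta n))$ factor), then bounds the total number of such $x$. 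The certificate is just the transcript exhibiting the spectral norm bound (which is efficiently computable) plus the elementary counting inequality.

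The main obstacle I anticipate is the bookkeeping to make the degree-by-degree reduction genuinely lossless in the claimed parameter regime: one must ensure that the cross-terms between the degree-$k$ part and the lower-degree parts, and the lower-degree parts among themselves, are all dominated either by the $\eta n$ slack or by the $n^{1+\eps}/\Delta^{1/(k-2)}$ slack, uniformly over all $x$, and this requires care about how the random clause-hypergraph concentrates at each level (in particular controlling the maximum ``degree'' of a vertex or small subset in the hypergraph, which is where the $\Delta^{1/(k-2)}$ exponent comes from). A secondary subtlety is handling predicates $P$ whose top Fourier coefficient $\widehat{P}([k])$ vanishes or is tiny: in that case the ``core'' constraint lives at degree $k-1$ or lower, and one instead applies the Kikuchi machinery at the highest degree with a nonzero coefficient, which changes the exponent of $n$ in the middle factor but is still captured by the stated bound after the $\max$/union over Fourier levels is taken. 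I would structure the proof so that the spectral estimate is black-boxed as a lemma (matching known Kikuchi norm bounds), and the rest is the reduction plus the net argument.
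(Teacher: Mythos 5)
There is a genuine gap at the heart of your plan: the counting mechanism for the top-degree part does not work in the regime the theorem is about. You want to say that a $(1-\eta)$-satisfying $x$ makes (a symmetrization of) $x^{\otimes \ell}$ a near-top eigenvector of a Kikuchi-type matrix, and then count solutions by a net over a bounded-dimensional top eigenspace. But the interesting range here is $\Delta$ \emph{below} the refutation threshold $n^{k/2-1}$, and for any polynomially bounded level $\ell$ the Kikuchi matrix of a random instance at such densities has spectral norm far exceeding the Rayleigh quotient attained by vectors coming from actual assignments --- this is exactly why spectral/SoS refutation fails there. So solution vectors are nowhere near the top of the spectrum, the ``near-top eigenvector'' premise is false, and there is in any case no certifiable small-dimension bound on the relevant eigenspace (unlike the GOE or Kesten--McKay settings, which is the only place the paper's dimension-plus-net technique is used). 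Your attribution of the error terms is also inverted, which is a symptom of the same issue: in the correct accounting, the factor $\exp\bigl(\wt{O}\bigl(\sqrt{n^{(k+1)/2}/\Delta}\bigr)\bigr)$ comes from certifying that the \emph{low-degree} (degrees $1,\dots,k-1$) Fourier parts are small --- i.e.\ from $(k-1)$-wise quasirandomness \`a la \cite{AOW15}, which needs $\Delta \gtrsim n^{(k-3)/2}$ and contributes an additive $\wt{O}(\sqrt{n^{(k-3)/2}/\Delta})$ to the effective $\eta$ --- while the factor $\exp\bigl(O(n^{1+\eps}/\Delta^{1/(k-2)})\bigr)$ has nothing to do with absorbing low-degree terms: it is the cost of handling the degree-$k$ ($\kXOR$) part, which cannot be certified small at these densities and must instead be treated by a non-spectral counting argument.

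Concretely, the paper's route is: (i) reduce an arbitrary predicate $P$ to $\kSAT$ by shifting each clause by a non-satisfying string (so no case analysis on $\wh{P}([k])$ is needed, contrary to your secondary worry); (ii) the $\kXOR$ principle (\pref{lem:kXOR-principle}) certifies that any $(1-\eta)$-satisfying assignment must $\kXOR$-satisfy a $1-O(\eta)-\wt{O}(\sqrt{n^{(k-3)/2}/\Delta})$ fraction of clauses, yielding the middle factor after multiplying the slack by $n$; (iii) the number of approximate $\kXOR$ solutions is bounded by a recursion that depends only on the hypergraph: partition the variables, pass to induced $(k-1)\XOR$ instances, and bottom out at a $\TwoXOR$ base case where degree concentration plus the spectral gap of the normalized Laplacian (Cheeger) forces any two approximate solutions to be $O(\eta n)$-close up to a global sign (\pref{thm:2xor_upper_bound}, \pref{lem:average_partition}, \pref{thm:kxor_upper_bound}). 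That recursion is the source of both $\exp(\wt{O}(\eta n))$ and $\exp(O(n^{1+\eps}/\Delta^{1/(k-2)}))$; your proposal has no substitute for it, and the eigenspace-net step it leans on instead fails precisely where the theorem is nontrivial.
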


To more easily parse the statement, let's plug in concrete parameters.
\begin{remark}
    Let's fix the predicate to be $\kSAT$ for any $k\ge 3$, $\eta = 0$, and $\Delta = n^{k/2-1.1}$.  The quantity of interest is the number of exactly satisfying solutions to a random $\kSAT$ formula at a density strictly smaller than the refutation threshold of $\wt{\Omega}(n^{k/2-1})$.  Then, we get an algorithm that with high probability certifies that the number of exactly satisfying assignments is at most:
    \[
        \exp(\wt{O}(n^{0.8})),
    \]
    which is a subexponential bound.  More generally, our algorithms certify a subexponential bound on the number of satisfying assignments for $\kSAT$ for $\Delta = n^{k/2-1.5+c}$ for any $c > 0$ and this bound improves as we increase $c$.
\end{remark}

The proof of \pref{thm:kCSP_main_thm} relies on $3$ ingredients of increasing complexity.  The first is the simple observation that given a $\kCSP$ instance $\calI$ on any predicate $P$, there is a transformation to a $\kSAT$ instance $\calI'$ such that:
\begin{enumerate}[(i)]
    \item For any $\eta>0$, if $x$ is $(1-\eta)$-satisfying for $\calI$, then it is also $(1-\eta)$-satisfying for $\calI'$.
    \item If $\bInst$ is a random instance of a CSP on $P$ with density $\Delta$, then $\bInst'$ is a random instance of $\kSAT$ with density $\Delta$.
\end{enumerate}
This reduction is described in the proof of \pref{cor:kCSP-to-kSAT}.

The second ingredient is a generalization of the ``$\ThreeXOR$-principle'' of \cite{Fei02,FO07}, which we call the ``$\kXOR$-principle''.  The $\kXOR$ principle, which we state below, reduces count certification/refutation for a random $\kSAT$ formula to the same task on a random $\kXOR$ formula.
\begin{lemma}   \label{lem:intro-kXOR-principle}
    Let $\bInst$ be a random $\kSAT$ formula on $m = \Delta n$ clauses.  There is an efficient algorithm that with high probability certifies that any $(1-\eta)$-satisfying assignment of $\bInst$ must $\kXOR$-satisfy at least $\left(1-O(\eta)-\wt{O}\left(\sqrt{\frac{n^{(k-3)/2}}{\Delta}}\right)\right)m$ clauses.
\end{lemma}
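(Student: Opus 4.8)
The plan is to transfer the task from $\kSAT$ to $\kXOR$ by writing the $\kSAT$-clause indicator exactly as a $\kXOR$ term plus lower-degree parities, and then certifying that those lower-degree parities are negligible via standard $\tXOR$ strong refutation. To a clause $C$ with literals $\ell^C_i = c^C_i x_{S^C_i}$ (encoding \emph{true} as $+1$) associate the $\kXOR$ constraint $\prod_{i=1}^{k}\ell^C_i = (-1)^{k-1}$; this is the natural choice because an unsatisfied clause has $\ell^C_i = -1$ for all $i$ and hence $\prod_i \ell^C_i = (-1)^k$, so every $x$ satisfying this $\kXOR$ constraint also satisfies $C$. Expanding $\mathbf{1}\bigl[C\text{ is satisfied}\bigr] = 1 - 2^{-k}\prod_{i=1}^{k}(1-\ell^C_i)$ by its Fourier expansion and solving for the top monomial $\prod_i \ell^C_i$ gives the identity, valid pointwise in $x$,
\[
    \mathbf{1}\bigl[C\text{ is }\kXOR\text{-satisfied}\bigr] \;=\; 1 - 2^{k-1} + 2^{k-1}\,\mathbf{1}\bigl[C\text{ is satisfied}\bigr] \;+\; \tfrac12 \sum_{\emptyset \neq R \subsetneq [k]} (-1)^{|R|} \prod_{i \in R} \ell^C_i .
\]

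Summing over all $m$ clauses and using $\bInst(x) \ge 1-\eta$ shows that every $(1-\eta)$-satisfying $x$ satisfies at least
\[
    \bigl(1 - 2^{k-1}\eta\bigr) m \;-\; \tfrac12 \sum_{t=1}^{k-1}\ \sum_{R \subseteq [k],\, |R|=t}\ \max_{y \in \{\pm1\}^n} \Bigl|\, \sum_{C} \prod_{i \in R} \bigl(c^C_i\, y_{S^C_i}\bigr) \,\Bigr|
\]
of the associated $\kXOR$ constraints. So it suffices to certify, for each fixed set of \emph{positions} $R \subseteq [k]$ with $1 \le |R| = t \le k-1$, a good upper bound on the discrepancy $\max_{y} \bigl| \sum_C \prod_{i\in R} c^C_i\, y_{S^C_i} \bigr|$.

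The point is that for a fixed position-set $R$ of size $t$, the data $\{\,((S^C_i)_{i\in R},\ \prod_{i\in R} c^C_i)\,\}_C$ is distributed exactly as a random $\tXOR$ instance on $n$ variables with $m$ constraints and uniform right-hand sides: each clause of $\bInst$ draws its scope and its negations independently, and $R$ merely exposes $t$ of the $k$ coordinates (repeats among the exposed variables occur with probability $O(k^2/n)$ and affect only lower-order terms). We can therefore invoke standard $\tXOR$ strong-refutation guarantees \cite{AOW15,FO07}: for constant $t$ there is a polynomial-time spectral/Kikuchi certificate — the $\ell_1$-norm of a signed count vector for $t=1$, the norm of a signed adjacency matrix for $t=2$, and the level-$t/2$ Kikuchi matrix (its bipartite analogue when $t$ is odd), combined with a Cauchy--Schwarz/pairing step, for $t \ge 3$ — that with high probability witnesses $\max_y |\sum_C \prod_{i\in R} c^C_i y_{S^C_i}| \le \wt{O}(\sqrt{m}\cdot n^{t/4})$, except that for $t=1$ the correct bound is $\wt{O}(\sqrt{mn})$.

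Finally, divide through by $m = \Delta n$: a position-set $R$ with $|R| = t$ contributes at most $\wt{O}(\sqrt{n^{t/2 - 1}/\Delta})$ to the error fraction (and $\wt{O}(\sqrt{1/\Delta})$ when $t=1$); since this is increasing in $t$ and there are only $O_k(1)$ choices of $R$, the total is dominated by $t = k-1$ and equals $\wt{O}\bigl(\sqrt{n^{(k-3)/2}/\Delta}\bigr)$. Combined with the identity this gives the claimed bound $\bigl(1 - O(\eta) - \wt{O}(\sqrt{n^{(k-3)/2}/\Delta})\bigr) m$; the certification algorithm just runs the $O_k(1)$ refutation subroutines and reports this number, and soundness on every instance is automatic because the displayed identity is an exact pointwise-in-$x$ equality and each refutation subroutine outputs a valid upper bound on every instance. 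I expect the main obstacle to be the $\tXOR$ refutation step for \emph{odd} $t$ — in particular $t = k-1$ when $k$ is even — where the clean even-level Kikuchi argument does not apply directly and one must use the bipartite Kikuchi matrix (or square and reduce to pairs of disjoint-scope clauses) to reach the $n^{t/2}$ threshold without losing a polynomial factor; everything else is an exact algebraic identity plus off-the-shelf random-matrix norm bounds.
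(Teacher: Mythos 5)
Your proposal is correct and follows essentially the same route as the paper: expand the $\kSAT$ indicator per clause to isolate the top parity, then certify that all lower-degree parity sums are small via \cite{AOW15}-style refutation, which forces any near-satisfying assignment to satisfy most clauses as $\kXOR$. The only difference is packaging — the paper invokes the quasirandomness certification theorem of \cite{AOW15} once, which simultaneously bounds $|\wh{D_{\bInst,x}}(\T)|$ for all position sets $\T$ with $1\le|\T|\le k-1$, whereas you unpack that statement into separate $t\mathsf{XOR}$ refutations for each position set, arriving at the same error term $\wt{O}\bigl(\sqrt{n^{(k-3)/2}/\Delta}\bigr)$.
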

We detail the proof in \pref{sec:kSAT}, which is close to the reduction from generic CSP refutation to $\kXOR$ refutation in \cite{AOW15} based on the Fourier expansion.

For the sake of a notationally simple sketch, let's restrict ourselves to the case $\eta = 0$.  We can write $\kSAT(x_1,\dots,x_k) = (1-2^{-k}) + 2^{-k}x_1 x_2\cdots x_k + q(x_1,\dots,x_k)$ where $q$ is a degree-$(k-1)$ polynomial without a constant term.  Thus, given a random $\kSAT$ instance $\bInst$ and any satisfying assignment $x$:
\[
    1 = \bInst(x) = 1-2^{-k} + 2^{-k} \frac{1}{|\bInst|}\sum_{(c,S)\in\bInst}\prod_{i=1}^k c_ix_{S_i} + \frac{1}{|\bInst|}\sum_{c,S\in\bInst} q(c_1x_{S_1},\dots,c_kx_{S_k}).
\]
Once $\Delta\gtrsim n^{(k-3)/2}$ the refutation algorithm of \cite{AOW15} can be employed to certify that the last term is insignificantly small by virtue of the last term being a degree-$(k-1)$ polynomial with no constant term.  This would force $2^{-k} \frac{1}{|\bInst|}\sum_{(c,S)\in\bInst}\prod_{i=1}^k c_ix_{S_i}$ to be near $1$, which is the same as saying $x$ must $\kXOR$-satisfy most clauses.

Our third ingredient for \pref{thm:kCSP_main_thm} is a count certification algorithm for $\kXOR$, which we prove in \pref{sec:kXOR}.
\begin{theorem}
\label{thm:kxor_main_thm}
    For constant $k\geq 3$, consider a random $\kXOR$ instance with $n$ variables and $\Delta n$ clauses.
    For any constant $\eps > 0$, there is a polynomial-time algorithm that certifies with high probability that the number of $(1-\eta)$-satisfying assignments is at most
    \[
        \exp\left(\wt{O}(\eta n)\right)\cdot\exp\left(O\left(\frac{n^{1+\eps}}{\Delta^{1/(k-2)}}\right)\right).
    \]
\end{theorem}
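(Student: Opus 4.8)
The plan is to reduce, by a homogenization trick, to counting near-solutions of a \emph{homogeneous} random $\kXOR$ instance, and then to control that count via a partition-function identity, reducing the problem to a spectral estimate on a Kikuchi matrix.

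\emph{Homogenization.} Let $\calU \subseteq \{\pm1\}^n$ be the set of $(1-\eta)$-satisfying assignments of the given instance $\Inst$; if $\calU = \emptyset$ we are done, so fix $x_0 \in \calU$. For any $y \in \calU$ and any clause $(b, S)$ satisfied by both $x_0$ and $y$, the product $z = x_0 \odot y$ has $\prod_{i \in S} z_i = b^2 = 1$; since $x_0$ and $y$ jointly satisfy at least a $(1-2\eta)$-fraction of clauses, $z$ is a $(1-2\eta)$-satisfying assignment of the homogeneous instance $\Inst_0$ obtained by resetting every right-hand side to $+1$. As $y \mapsto x_0 \odot y$ is a bijection, $|\calU| \le |\calV|$, where $\calV$ is the set of $(1-2\eta)$-satisfying assignments of $\Inst_0$, and $\Inst_0$ is distributed as a uniformly random homogeneous $\kXOR$ instance (with $\mathbf 1 \in \calV$ always, and $-\mathbf 1 \in \calV$ when $k$ is even). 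It remains to certify $|\calV| \le \exp(\wt{O}(\eta n))\cdot\exp(O(n^{1+\eps}/\Delta^{1/(k-2)}))$.

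\emph{From the count to a Kikuchi estimate.} Writing $v(z)$ for the number of clauses of $\Inst_0$ violated by $z$, for any $\lambda \in (0,1)$ we have $|\calV| \le \lambda^{-2\eta m}\sum_z \lambda^{v(z)}$, and expanding $\lambda^{v(z)} = \prod_{(1,S)\in\Inst_0}\bigl(\tfrac{1+\lambda}{2} + \tfrac{1-\lambda}{2}\prod_{i\in S}z_i\bigr)$ and summing over $z$ yields the identity
\[
 \sum_z \lambda^{v(z)} \;=\; 2^n \Bigl(\tfrac{1+\lambda}{2}\Bigr)^{m} \sum_{W} \Bigl(\tfrac{1-\lambda}{1+\lambda}\Bigr)^{|W|},
\]
where $W$ ranges over the \emph{even covers} of $\Inst_0$, i.e.\ sub-collections of clauses in which every variable occurs an even number of times (equivalently, codewords of the $\mathbb{F}_2$ dependency code of $\Inst_0$). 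Choosing $\lambda = 1 - \Theta(1/\Delta)$ makes $\lambda^{-2\eta m} = \exp(O(\eta n))$ and $2^n(\tfrac{1+\lambda}{2})^m = \exp(-\Omega(n))$ with a large implied constant, so, after a Gaussian-elimination computation certifying that $\Inst_0$ has $\mathbb{F}_2$-corank $O(1)$ (which handles the ``bulk'' of the even-cover enumerator), everything reduces to certifying that the $\Theta(1/\Delta)$-weighted count of \emph{short} even covers of $\Inst_0$ is at most $\exp(O(n^{1+\eps}/\Delta^{1/(k-2)}))$. This is where Kikuchi matrices enter: an even cover of size $s$ is a closed walk of length $s$ in the level-$\ell$ Kikuchi graph of $\Inst_0$ (for odd $k$ one first Cauchy--Schwarzes to a $(2k-2)$-ary instance), so the weighted count is bounded by $\sum_s r^s\,\mathrm{tr}(K_\ell^{s})$ with $r=\Theta(1/\Delta)$, and a trace-moment bound on the spectral norm of $K_\ell$ certifies this. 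Below the refutation threshold $\Delta \ll n^{k/2-1}$ a random $\kXOR$ instance has no even cover of size $\ll n^{1+\eps}/\Delta^{1/(k-2)}$, and choosing $\ell$ appropriately as a function of $\eps$ gives the claimed bound; the $\exp(\wt{O}(\eta n))$ factor is unavoidable, since each of the $\binom{n}{\le\Theta(\eta n/k)}$ assignments of Hamming weight at most $\Theta(\eta n/k)$ is automatically $(1-2\eta)$-satisfying for $\Inst_0$. Alternatively one can argue geometrically: certify that $\calV$ breaks into clusters by forbidding the differences $z\odot z'$ of near-solutions from having intermediate Hamming weight --- again a weight-restricted $\kXOR$ refutation handled by the same Kikuchi construction --- and bound $|\calV|$ by the number of clusters times $\binom{n}{\le R}$ with $R = \wt{O}(\eta n)+\wt{O}(n^{1+\eps}/\Delta^{1/(k-2)})$, which also delivers the cluster bound of \pref{thm:main-thm-clustering-3SAT}.

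\emph{Main obstacle.} The homogenization and the final volume bookkeeping are routine; the crux is the Kikuchi spectral estimate for $\Inst_0$ in the regime $\Delta$ strictly below the refutation threshold $n^{k/2-1}$. There the relevant Kikuchi graph is \emph{under-dense} --- its average degree is below $1$ --- so the top of its spectrum is governed by atypically heavy local neighbourhoods rather than by a semicircle law, and the trace-moment computation must be carried out carefully (passing to a regularised submatrix, or explicitly controlling heavy-tailed walk contributions) to land on an exponent of exactly $n^{1+\eps}/\Delta^{1/(k-2)}$ rather than something polynomially weaker, all while keeping the Kikuchi level $\ell$ --- and hence the running time $n^{O(\ell)}$ --- under control.
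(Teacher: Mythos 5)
Your route (homogenize, pass to the partition function, expand into an even-cover enumerator, and certify it via Kikuchi trace moments) is genuinely different from the paper's, but it has concrete gaps that go beyond the "main obstacle" you flag. First, the temperature is wrong: with $\lambda = 1-\Theta(1/\Delta)$ the relaxation $|\calV|\le \lambda^{-2\eta m}\sum_z\lambda^{v(z)}$ is exponentially lossy. Any $z$ at Hamming distance $w$ from $\vec{1}$ violates at most $\approx k\Delta w$ clauses, so $\sum_z\lambda^{v(z)} \ge \sum_w \binom{n}{w}e^{-\Theta(kw)} = (1+e^{-\Theta(k)})^n = e^{\Omega(n)}$, and by your own identity the $\Theta(1/\Delta)$-weighted even-cover enumerator is then itself $e^{\Omega(n)}$ -- so the statement "everything reduces to certifying the weighted count of short even covers is $\exp(O(n^{1+\eps}/\Delta^{1/(k-2)}))$" is false at your parameter choice; you would need $\lambda = 1-\Theta(\tfrac{\log n}{\Delta})$ to suppress the near-$\vec{1}$ bulk (the $\mathbb{F}_2$-corank computation does not help here, since corank controls exact solutions, not this bulk). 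Second, the girth claim is wrong: by the hypergraph Moore bound, every (hence every random) $k$-uniform hypergraph with $\Delta n$ edges has even covers of length $\wt{O}(n/\Delta^{2/(k-2)})$, which is far below your claimed threshold $n^{1+\eps}/\Delta^{1/(k-2)}$; the exponent $1/(k-2)$ in the theorem has nothing to do with even-cover girth. Third, and most seriously, the step you defer -- a polynomial-time (i.e.\ constant Kikuchi level $\ell$) spectral/trace certificate for a random $\kXOR$ instance at density strictly below $n^{k/2-1}$ -- is exactly the regime where such certificates are not known and where the known level--density tradeoffs (and the SoS lower bounds cited in the paper) indicate that the natural Kikuchi statements require $\ell = n^{\Omega(1)}$, hence superpolynomial time. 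So the proposal as written does not constitute a proof.

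For contrast, the paper needs no spectral certificate on the $k$-uniform structure at all. It partitions $[n]$ into $\ell = n^{1-c}$ blocks of size $n^c$ with $c = 1-\tfrac{\delta}{k-2}+\eps$, passes to the induced $(k-1)\XOR$ instances on each block (\pref{lem:average_partition}), and recurses down to a $\TwoXOR$ base case certified by degree concentration plus a Cheeger/spectral-gap argument on the (dense) induced graph (\pref{thm:2xor_upper_bound}); crucially the certificate depends only on the hypergraph, so one pays $2^{|S_i|} = 2^{n^c}$ rather than enumerating signings, and this enumeration cost is precisely the source of the $\exp\bigl(O(n^{1+\eps}/\Delta^{1/(k-2)})\bigr)$ term, while the homogenization observation you use (\pref{obs:product_of_solns}) supplies the $\exp(\wt{O}(\eta n))$ term. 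If you want to pursue the partition-function/even-cover route, you would have to fix the temperature, redo the enumerator bookkeeping against the corrected girth scale $n/\Delta^{2/(k-2)}$, and, above all, supply a polynomial-time certificate for the enumerator bound below the refutation threshold -- which is the open core, not a technicality.
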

In fact, the certification algorithm only depends on the hypergraph structure of the $\kXOR$ instance and not the signings of each clause.  This is crucial since our algorithm recursively looks at $(k-1)\XOR$ subinstances with unknown signings.  The stronger statement we prove is:
\begin{theorem} \label{thm:boost-kxor_main_thm}
    For constant $k\ge 2$, consider a random $k$-uniform hypergraph $\bH$ on $n$ vertices and $\Delta n$ hyperedges where $\Delta\gg\log n$.  For $\eps > 0$, there is a polynomial-time algorithm that certifies with high probability that the number of $(1-\eta)$-satisfying assignments to any $\kXOR$ instance on $\bH$ is at most
    \[
        \exp\left(\wt{O}(\eta n)\right)\cdot
        \begin{cases}
            1 & \text{if $k=2$} \\
            \exp\left(\wt{O}\left(\frac{n^{1+\eps}}{\Delta^{1/(k-2)}}\right)\right) & \text{if $k\ge 3$}.
        \end{cases}
    \]
\end{theorem}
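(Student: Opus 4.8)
The plan is to strip the problem down to a statement about the hypergraph $\bH$ alone and then prove that statement by induction on the uniformity $k$, with $k=2$ handled spectrally. First, reduce to the homogeneous instance: if $\calI$ is any $\kXOR$ instance on $\bH$ and $x_0$ is one $(1-\eta)$-satisfying assignment of it (if there is none, the count is $0$ and any bound is valid), then $x\mapsto x_0\odot x$ injects the $(1-\eta)$-satisfying assignments of $\calI$ into the $(1-2\eta)$-satisfying assignments of the all-ones $\kXOR$ instance on $\bH$, because a clause satisfied by both $x_0$ and $x$ in $\calI$ --- and at least $(1-2\eta)m$ clauses are --- is satisfied by $x_0\odot x$ homogeneously. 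So it suffices, for every $k$-uniform $\bH$ with $m\coloneqq|\bH|=\Delta n$, to upper bound
\[
    N(\bH,\eta)\ \coloneqq\ \#\bigl\{\,z\in\{\pm1\}^n\ :\ z \text{ violates at most } \eta m \text{ of the parity clauses } \textstyle\prod_{i\in e}z_i=1\,\bigr\},
\]
equivalently the number of sets $B\subseteq[n]$ meeting all but $\le\eta m$ hyperedges in odd size. The algorithm will be a single deterministic routine that runs polynomial-time ``pseudorandomness checks'' on $\bH$ (eigenvalue computations, plus density checks on derived hypergraphs), outputs the claimed bound if they all pass, and outputs the trivial bound $2^n$ otherwise; since a random $\bH$ with $\Delta\gg\log n$ passes all checks with high probability, the theorem follows once the checks are shown to imply the bound.

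\emph{Base case $k=2$.} Here $\#\{e:|e\cap B|\text{ odd}\}$ is exactly the number of edges cut by $(B,\bar B)$, i.e.\ $\tfrac14 z^{\top}L z$ for the Laplacian $L$ of $\bH$, so the algorithm certifies $\lambda_2(L)\ge\Omega(\Delta)$ by an eigenvalue computation (valid with high probability for these dense random graphs, where $\lambda_2(L)=\Delta-O(\sqrt{\Delta\log n})$). Any $z$ violating $\le\eta m$ clauses then satisfies $\tfrac{4|B||\bar B|}{n}=\|z-\bar z\mathbf 1\|^2\le 4\eta m/\lambda_2(L)=O(\eta n)$, so $\min(|B|,|\bar B|)=O(\eta n)$; hence $z$ lies within Hamming distance $O(\eta n)$ of $\pm\mathbf 1$, and $N(\bH,\eta)\le 2\binom{n}{O(\eta n)}=\exp(\wt O(\eta n))$. (This is just the Cheeger-type fact that a spectral expander has edge expansion $\Omega(\Delta)$.)

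\emph{Inductive step $k\ge3$: lower the uniformity.} Choose, canonically or greedily, a pivot set $T\subseteq[n]$ of size $t=\wt\Theta\bigl(n/\Delta^{1/(k-2)}\bigr)$ for which the $(k-1)$-uniform hypergraph $\bH'\coloneqq\{e\setminus v : e\in\bH,\ e\cap T=\{v\}\}$ on $[n]\setminus T$ has density $\Theta(\Delta t/n)\gg\log n$ --- this holds with high probability and is certifiable. For each $\sigma\in\{\pm1\}^{T}$, fixing $z|_T=\sigma$ turns the clauses from edges meeting $T$ in exactly one vertex into a signed $(k-1)\XOR$ instance on $\bH'$, and any $z$ counted by $N(\bH,\eta)$ restricts to an assignment violating at most $\eta m$ of them; since the recursive $(k-1)$-uniform algorithm depends only on $\bH'$ and not on its signing, it bounds the number of such restrictions. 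Summing over the $2^{t}$ choices of $\sigma$ and unwinding $k-2$ levels of pivoting down to $k=2$ --- the density evolving as $\Delta_{i+1}=\Theta(\Delta_i t_i/n)$ with all $t_i\approx t$ and staying $\gg\log n$ throughout --- the pivot factors multiply to $\prod_i 2^{t_i}=\exp(\wt O(n/\Delta^{1/(k-2)}))\le\exp(O(n^{1+\eps}/\Delta^{1/(k-2)}))$, while the base case is meant to contribute the $\exp(\wt O(\eta n))$ factor (the precise $\eta$-dependence needs the care below).

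\emph{The main obstacle.} Passing from $\bH$ to $\bH'$ keeps the violation budget $\eta m$ fixed but drops the number of clauses from $m$ to $\Theta(\Delta t)$, so the naive ``error rate'' grows by $\approx n/t\approx\Delta^{1/(k-2)}$ per level and hence by $\approx\Delta$ over the recursion, which would inflate the $\exp(\wt O(\eta n))$ factor past what is claimed. Recovering the factored bound requires decoupling the two contributions: peel off, up front, the $\wt O(\eta n)$ coordinates on which a near-satisfying assignment may differ from its cluster center (one $\binom{n}{\wt O(\eta n)}$ factor, morally the cluster diameter), and run the pivot recursion only on the remaining, essentially exactly satisfiable core (morally, counting the clusters, where no error-rate blow-up occurs). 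Making this decoupling precise, and verifying in polynomial time that each successively sparser derived hypergraph still inherits the pseudorandomness the next level needs, is where the work concentrates; tracking the resulting binomial and $2^{t}$ factors is then routine bookkeeping.
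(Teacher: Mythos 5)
Your base case and your reduction to the homogeneous instance (multiplying by a fixed near-satisfying assignment and invoking degree/spectral-gap checks plus Cheeger) are essentially the paper's argument, and your bookkeeping of pivot sizes and densities across levels matches the paper's recursion. The problem is the inductive step itself: you correctly identify that a \emph{single} pivot set $T$ keeps the absolute violation budget $\eta m$ while shrinking the clause count to $\Theta(\Delta t)$, so the error rate inflates by $\approx n/t$ per level, and you then defer the fix to an unproven ``decoupling'': peel off the $\wt O(\eta n)$ coordinates where a solution differs from its ``cluster center'' and recurse on an ``essentially exactly satisfiable core.'' As stated this does not yield a certificate. The cluster center and the set of peeled coordinates depend on the (unknown) assignment, so the ``core'' hypergraph is not a fixed object you can test in polynomial time --- there are $\binom{n}{\wt O(\eta n)}$ candidate cores, and nothing in your sketch shows a single polynomial-time check on $\bH$ that simultaneously controls all of them; nor is any clustering structure for approximate solutions of arbitrary $\kXOR$ instances on random hypergraphs established for $k\ge 3$ (the paper only proves such structure for $k=3$, and even there the cluster count is $\exp(O(\theta^2\log(1/\theta))n)$, not of the form you need). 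So the step that ``is where the work concentrates'' is precisely the missing idea, and the theorem does not follow from what you have written.

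The paper resolves this obstacle differently and much more simply: instead of one pivot set, it partitions \emph{all} of $[n]$ into $\ell\approx n^{1-c}$ parts $S_1,\dots,S_\ell$ of size $n^c$ and uses an averaging/pigeonhole argument (\pref{lem:average_partition}). Since each hyperedge contributes at most $k$ hyperedges across the induced $(k-1)$-uniform instances, any assignment violating at most $\eta m$ clauses of $\Inst$ must, for \emph{some} part $S_i$, violate at most $\lfloor k\eta m/\ell\rfloor$ clauses of the induced instance $\Inst_{S_i,\sigma_{S_i}}$; as the induced instance has $\approx km/\ell$ clauses, the relative error stays $O(\eta)$ at every level (growing only by a factor $k$ per level, harmless for constant $k$). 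The count is then bounded by $\sum_{i=1}^{\ell}2^{|S_i|}u_i$, where $u_i$ is the bound certified recursively for the fixed induced hypergraph $H_{S_i}$ (signings unknown, which is why the recursion must be signing-oblivious --- a point you do use correctly). This single lemma is what keeps the $\exp(\wt O(\eta n))$ and $\exp(\wt O(n^{1+\eps}/\Delta^{1/(k-2)}))$ factors cleanly multiplied, with no need for cluster-center decoupling. If you replace your single-pivot step by this partition-and-average step, the rest of your outline goes through.
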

\pref{thm:boost-kxor_main_thm} is of interest beyond algorithmic considerations as it gives a high-probability bound on the number of approximate solutions for \emph{any} $\kXOR$ formula on a random hypergraph.

\begin{remark}
    Gaussian elimination is able to count exact solutions to an explicit $\kXOR$ instance but fails for counting $(1-\eta)$-satisfying assignments or when the signings are unknown.
\end{remark}

We now give a brief sketch of our proof of \pref{thm:boost-kxor_main_thm}.  Given a random $k$-uniform hypergraph, we would like to certify that \emph{any} $\kXOR$ instance on this hypergraph has no more than $\exp\left(\wt{O}(\eta n)\right)\cdot\exp\left(\wt{O}\left(\frac{n^{1+\eps}}{\Delta^{1/(k-2)}}\right)\right)$ approximate solutions.   We will first present an overview in the context of $\TwoXOR$ as the ``base case'', and then explain the algorithm for $\ThreeXOR$ to illustrate the ``recursive step''.

\paragraph{$\twoxor$ sketch.} Let's consider a random graph $\bG$ on $n$ vertices and $\Delta n$ edges where $\Delta\gg\log n$.  Then, its degrees concentrate and its normalized Laplacian has a large spectral gap (more precisely, a spectral gap of $1-O\left(\frac{1}{\sqrt{\Delta}}\right)$).  As a consequence of Cheeger's inequality, any set $S$ containing fewer than half the vertices has roughly half its edges leaving --- which quantitatively would be around $\Delta|S|$.  We prove that a large spectral gap and concentration of degrees is all that is necessary for any $\twoxor$ instance to have an appropriately bounded number of satisfying assignments.

Now let $\Inst$ be any $\TwoXOR$ instance on $\bG$.  The key observation is that if $x$ and $x'$ are two $(1-\eta)$-satisfying assignments for $\Inst$, then the pointwise product $y\coloneqq x\circ x'$ is $(1-2\eta)$-satisfying for $\Inst_+$, the $\twoxor$ instance on $\bG$ obtained by setting the sign on all constraints to $+1$.  The constraints violated by $y$ are the ones on the cut between $S_+$ and $S_-$, the positive and negative vertices in $y$ respectively.  There are roughly $\Delta\cdot\min\{|S_+|,|S_-|\}$, and consequently $\min\{|S_+|,|S_-|\}\le 2\eta n$ since $y$ is $(1-2\eta)$-satisfying.  In particular, $y$ either has at most $2\eta n$ positive entries or $2\eta n$ negative entries.  The upshot is the number of $(1-\eta)$-satisfying assignments of $\Inst$ is at most $\exp(\wt{O}(\eta n))$.  This sketched argument is carefully carried out in \pref{sec:2XOR}.

\paragraph{$\ThreeXOR$ sketch.}  Now, let $\bH$ be a random hypergraph on $n$ vertices and $\Delta n$ hyperedges.  The observation here is that for any $\ThreeXOR$ instance $\Inst$ on $\bH$, any assignment $x$ that $(1-\eta)$-satisfies $\Inst$ also approximately satisfies a particular \emph{induced $\TwoXOR$ instance} of a fixed subset of variables $S$ (cf.\ \pref{def:induced_txor}).  The induced $\TwoXOR$ instance's underlying graph $G$ is fixed and distributed like a random graph, and only the signings on the edges vary as we vary $x$.   That lets us run the algorithm for $\TwoXOR$ on $G$ to obtain an upper bound $F$ on all induced instances on $G$, which then yields a bound of $2^{|S|}\cdot F$.  This is where we use that our algorithm depends only on the underlying graph, hence avoiding an enumeration of all assignments to variables in $S$.

We immediately see that for a fixed subset $S$, the above procedure throws away most of the clauses (keeping only clauses that have 1 variable in $S$).
Thus, it is clearly suboptimal to look at just one subset $S$.
To resolve this, we partition the variables into subsets $S_1,\dots, S_\ell$, run the algorithm on each of them, and aggregate the results.  This is explained in detail in the proofs of \pref{lem:average_partition} and \pref{thm:kxor_upper_bound}.

\paragraph{Clustering.}  Our next result is an algorithm to upper bound the number of clusters formed by the solutions.  Given $x\in\{\pm1\}^n$, we call the Hamming ball $B(x,r)$ a \emph{radius-$r$ cluster} or \emph{diameter-$2r$ cluster}.  For $3\CSP$s we prove in \pref{cor:3SAT_clusters}:
\begin{theorem}
    Let $P$ be any $3$-ary predicate, and let $\bInst$ be a random instance of $P$ on $n$ variables and $\Delta n$ clauses.
    Let $\eta \in [0,\eta_0]$ where $\eta_0$ is a universal constant, and let $\theta \coloneqq 8\eta + O\left(\sqrt{\frac{\log^5 n}{\Delta}}\right)$.
    There is an algorithm that certifies with high probability that the $(1-\eta)$-satisfying assignments to $\bInst$ as a $P$-$\CSP$ instance are covered by at most
    \begin{equation*}
        \exp(O(\theta^2 \log(1/\theta))n)
    \end{equation*}
    diameter-$(\theta n)$ clusters.
\end{theorem}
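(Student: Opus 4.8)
The plan is to chain together the two structural reductions already set up in the paper with a covering bound for $\TwoXOR$ instances on expanders, paying a $2^{o(n)}$ overhead to lift the latter to $\ThreeXOR$ via the same partitioning device used for the counting result. First I would reduce to $\ThreeXOR$: by \pref{cor:kCSP-to-kSAT} any $(1-\eta)$-satisfying assignment of $\bInst$ is $(1-\eta)$-satisfying for an associated random $\ThreeSAT$ formula of the same density, and then the $\ThreeXOR$-principle (\pref{lem:intro-kXOR-principle} with $k=3$) lets one certify, with high probability, that every such assignment is $(1-\eta'')$-satisfying for the associated $\ThreeXOR$ instance $\Inst$ on a random $3$-uniform hypergraph $\bH$, where $\eta'' = O(\eta) + O(\sqrt{\log^5 n/\Delta}) = O(\theta)$. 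So it suffices to certify, from $\bH$ alone, that the $(1-\eta'')$-satisfying assignments of \emph{any} $\ThreeXOR$ instance on $\bH$ are covered by the claimed number of Hamming balls of radius $O(\theta n)$ — which is consistent with the philosophy of \pref{thm:boost-kxor_main_thm} that the certificate depends only on the hypergraph.

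The core input is a covering statement for $\TwoXOR$, implicit in the analysis of \pref{sec:2XOR}: if $G$ is a graph with concentrated degrees and normalized-Laplacian spectral gap $1-O(1/\sqrt{\bar d})$ (both properties efficiently certifiable and holding with high probability for the graphs produced below), then for any $\TwoXOR$ instance on $G$ and any two $(1-\epsilon)$-satisfying assignments $x,x'$, the pointwise product $x\circ x'$ is $(1-2\epsilon)$-satisfying for the all-$+$ instance, whose violated constraints are exactly the edges cut by the $\pm1$-level sets of $x\circ x'$; the spectral gap then forces the smaller side of this cut to have size $O(\epsilon n)$. Consequently all $(1-\epsilon)$-satisfying assignments of such an instance lie in $B(z,O(\epsilon n))\cup B(-z,O(\epsilon n))$, where $z$ is chosen canonically (e.g.\ the lexicographically first $(1-\epsilon)$-satisfying assignment), so that $z$ is a deterministic function of the instance.

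Now I would partition and lift. Fix an equipartition $[n]=S_1\sqcup\cdots\sqcup S_\ell$ with $\ell=\Theta(\Delta/\log^2 n)$; for each $i$ let $G_i$ be the graph on $[n]\setminus S_i$ with an edge $\{a,b\}$ for each hyperedge $\{a,b,v\}$ of $\bH$ whose unique vertex in $S_i$ is $v$. With high probability every $G_i$ has average degree $\Theta(\Delta/\ell)\gg\log n$, concentrated degrees, and spectral gap $1-o(1)$, all certifiable. Setting the $S_i$-variables to $\sigma\in\{\pm1\}^{S_i}$ turns the part-$i$ hyperedges of $\Inst$ into a $\TwoXOR$ instance $\Inst_i(\sigma)$ on $G_i$. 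A counting argument (each violated hyperedge is a part-$i$ hyperedge for at most $3$ indices $i$, and each part carries $\Theta(m/\ell)$ hyperedges) shows that for any $(1-\eta'')$-satisfying $x$, the restriction $x|_{[n]\setminus S_i}$ is $(1-O(\eta''))$-satisfying for $\Inst_i(x|_{S_i})$ for at least half the indices $i$; for each such $i$, the previous step puts $x|_{[n]\setminus S_i}\in B(\pm z_i(x|_{S_i}),O(\theta n))$ with $z_i$ a deterministic function of $\sigma$ and $G_i$. Hence the $(1-\eta'')$-satisfying assignments of $\Inst$ are covered by the balls $B_i(\sigma,s)\coloneqq\{x: x|_{S_i}=\sigma,\ x|_{[n]\setminus S_i}\in B(s\cdot z_i(\sigma),O(\theta n))\}$ over $i\in[\ell]$, $\sigma\in\{\pm1\}^{S_i}$, $s\in\{\pm1\}$, each a Hamming ball of radius $O(\theta n)$ in $\{\pm1\}^n$. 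Their number is $2\ell\cdot 2^{n/\ell}=\exp(O(n\log^2 n/\Delta))$, and since $\theta\ge\Omega(\sqrt{\log^5 n/\Delta})$ we get $\theta^2\log(1/\theta)\,n\ge\Omega(n\log^5 n/\Delta)\ge n\log^2 n/\Delta$, so this is at most $\exp(O(\theta^2\log(1/\theta))n)$. After rescaling the implicit constants and choosing $\eta_0$ small enough, the ball radii are at most $\theta n/2$, i.e.\ the clusters have diameter at most $\theta n$. The algorithm runs the $\ThreeXOR$-principle refutation and computes the spectral data of the $G_i$'s; if all required bounds hold it outputs $2\ell\cdot2^{n/\ell}$, else the trivial $2^n$ — always an upper bound on the true number of clusters, and small with high probability.

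The main obstacle is isolating, in the $\TwoXOR$ step, exactly the properties of $G$ that are simultaneously (a) enough to pin the near-solutions into two small balls with a canonically choosable common center, (b) enjoyed with high probability by the induced graphs $G_i$ at the reduced density $\Delta/\ell$ that survives the partition, and (c) efficiently certifiable. Requirement (b) is what forces $\ell\ll\Delta/\log n$, and that constraint is precisely what keeps the $2^{n/\ell}$ enumeration overhead below the target bound; threading the polylogarithmic factors so that the spectral gap stays close enough to $1$ while $\ell$ is still large enough is the delicate point. A secondary subtlety is verifying that the averaging-over-parts argument really gives a uniform statement over \emph{all} $(1-\eta'')$-satisfying assignments (not just a typical one), which it does because it is an edge-counting identity rather than a probabilistic bound.
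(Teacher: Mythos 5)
Your proposal is correct in its essentials, but it takes a genuinely different route from the paper. After the common first step (the $3\CSP\to\ThreeSAT$ reduction and the $\ThreeXOR$-principle), the paper does \emph{not} partition variables at all: it certifies, via a spectral norm bound on the primal graph of the random $3$-uniform hypergraph (\pref{lem:primal_graph_spectral_gap}) and the expander mixing lemma (\pref{lem:3_hypergraph_structure}), that for any two near-solutions $x,x'$ the product $x\circ x'$ forces their Hamming distance to be either at most $\theta n$ or in $[(\tfrac12-\theta)n,(\tfrac12+\theta)n]$; picking one representative per cluster then yields a $(2\theta)$-balanced code, and the MRRW/Alon bound (\pref{lem:balanced_code_bound}) caps the number of clusters at $\exp(O(\theta^2\log(1/\theta))n)$. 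Your argument instead recycles the counting machinery of \pref{sec:kXOR}: partition into $\ell=\Theta(\Delta/\log^2 n)$ parts, pass to induced $\TwoXOR$ instances on certified expanders, and cover the solutions by $2\ell\cdot 2^{n/\ell}$ explicitly indexed balls. The averaging-over-parts step (each violated hyperedge is counted in at most $3$ induced instances, each part carries $(1\pm o(1))3m/\ell$ hyperedges) and the fact that the $\TwoXOR$ certificate depends only on the graph are used exactly as in the counting proof, and they are sound; in exchange for a $2^{n/\ell}$ enumeration overhead you get a cluster count $\exp(O(n\log^2 n/\Delta))$ that is independent of $\eta$ and, for constant $\eta$, is much smaller than the paper's bound (no contradiction --- both are certificates), while the paper's route needs no enumeration and no reworking of the $\TwoXOR$ analysis, but pays the coding-theoretic $\exp(O(\theta^2\log(1/\theta))n)$.

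Two caveats keep your write-up from proving the statement as literally phrased. First, the diameter constant: the theorem fixes $\theta=8\eta+O(\sqrt{\log^5 n/\Delta})$ explicitly, whereas your balls have radius on the order of $3\cdot(\text{induced violation fraction})\cdot n$, i.e.\ diameter roughly $24\eta n$ or worse after the Markov step; shrinking $\eta_0$ or ``rescaling implicit constants'' does not repair a multiplicative gap in the coefficient of $\eta$, so you only obtain the statement with a larger absolute constant in place of $8$ (the count bound is unaffected, since it is stated with an $O(\cdot)$ in the exponent). Second, your induced graphs $G_i$ have average degree $\Theta(\log^2 n)$, which lies outside the range $\Delta=n^{\delta}$ for which \pref{thm:2xor_upper_bound} is stated; you must rerun its proof at polylogarithmic degree, which does go through (the spectral gap bound \pref{thm:spectral_gap} and degree concentration only need degree $\omega(\log n)$, and the failure probabilities still survive a union bound over the $\ell$ parts), and you must handle the fact that the $G_i$ are multigraphs, as the paper does in \pref{lem:simul-2XOR}. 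Neither caveat is a fatal flaw in the approach, but both need to be addressed explicitly.
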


Inspecting the proof of counting $\TwoXOR$ (specifically the argument about $\pInst$), we see that it additionally certifies that the approximate solutions form clusters.
In a similar fashion, we certify that any pair of $(1-\eta)$-satisfying assignments to a random $\ThreeSAT$ instance must have Hamming distance close to 0 or roughly $\frac{n}{2}$, i.e.\ the solutions form clusters where the clusters are roughly $\frac{n}{2}$ apart.
The main ingredient is an efficient algorithm to certify an important structural result of random 3-uniform hypergraphs (\pref{lem:3_hypergraph_structure}), allowing us to reason about the constraints violated in $\pInst$.
In fact, \pref{lem:3_hypergraph_structure} will also be a crucial step in refuting CSPs under global cardinality constraints in \pref{sec:global-cardinality}.  The upshot is that we will be able to certify that any pair of solutions is either $\rho$-close or $\frac{1-\rho}{2}$-far.

The second ingredient is a result in coding theory.
Since the clusters are roughly $\frac{1 \pm \rho}{2}n$ apart in Hamming distance, the number of clusters must be upper bounded by the cardinality of the largest $\rho$-balanced binary error-correcting code.
The best known upper bound is $2^{O(\rho^2\log(1/\rho))n}$ by \cite{MRRW77} (see also \cite{Alo09}), which yields our final result.  Complete details are in \pref{sec:clustering}.

\paragraph{Balance.}  We observe that the idea of hypergraph expansion can be applied to the problem of strongly refuting random CSPs with \textit{global cardinality constraints}.  This problem was first investigated by Kothari, O'Donnell, and Schramm \cite{KOS18}, where they proved that under the refutation threshold $n^{k/2}$, the polynomial-time regime of the Sum-of-Squares hierarchy cannot refute the instance even with the global cardinality constraint $\sum_{i=1}^n x_i = B$ for any integer $B\in [-O(\sqrt{n}), O(\sqrt{n})]$ (here we assume $x\in \{\pm1\}^n$).
On the other hand, they proved once that $|B| > n^{3/4}$, Sum-of-Squares could indeed refute a random $\kXOR$ instance up to a factor of $\sqrt{n}$ under the refutation threshold.

We say an assignment $x$ is $\rho$-biased if $\frac{1}{n}\left|\sum_{i\in[n]}x_i\right|\ge\rho$.  We give a strong refutation algorithm for random instances of all Boolean CSPs under the constraint that the solution is ``unbalanced''.
\begin{theorem}
    Let $P$ be any $k$-variable predicate and let $\bInst$ be a random CSP instance on $m\coloneqq n^{\frac{k-1}{2}+\beta}$ clauses where $\beta > 0$.  For every constant $\rho>0$, there is an efficient algorithm that certifies that $\bInst$ has no $2\rho$-biased assignment which $(1-O(\rho^k))$-satisfies $\bInst$ as a $P$-CSP instance.
\end{theorem}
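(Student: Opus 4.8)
The plan is to reduce to a statement about random $\kXOR$ and then analyze biased assignments by centering them around their mean. By the reduction from $\CSP$s on an arbitrary predicate to $\kSAT$ (\pref{cor:kCSP-to-kSAT}) followed by the $\kXOR$-principle (\pref{lem:intro-kXOR-principle}), and since these reductions do not change the assignment (so they preserve its bias), it suffices to handle a random $\kXOR$ instance $J$ on $n$ variables and $m = n^{(k-1)/2+\beta}$ clauses. The density here is $\Delta = m/n = n^{(k-3)/2+\beta}$, a polynomial factor above the $n^{(k-3)/2}$ needed for \pref{lem:intro-kXOR-principle}, so the error term $\wt O(\sqrt{n^{(k-3)/2}/\Delta}) = \wt O(n^{-\beta/2})$ there is $o(1)$; chasing the reductions, any $2\rho$-biased assignment that $(1-O(\rho^k))$-satisfies $\bInst$ must $(1-O(\rho^k))$-satisfy $J$ as a $\kXOR$ instance. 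Hence it is enough to certify that \emph{no} $2\rho$-biased $x$ makes the $\kXOR$-objective $\frac{1}{m}\sum_{(b,S)\in J} b\prod_{i\in S}x_i$ exceed $1-\Omega(\rho^k)$, where the implied constant may be made smaller than the slack we end up certifying by taking the implicit constant in $O(\rho^k)$ small enough.

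Given a candidate $2\rho$-biased $x$, write $x = \mu\mathbf{1} + z$ with $\mu = \frac1n\sum_i x_i$, so $|\mu|\ge 2\rho$, $z\perp\mathbf{1}$, $\|z\|_\infty \le 1+|\mu|\le 2$, and — the crucial extra structure — $z$ takes only two values: $z = (1-\mu)\mathbf{1} - 2\cdot\mathbf{1}_{S_-}$ where $S_- = \{i: x_i=-1\}$. Expanding $\prod_{i\in S}(\mu + z_i)$,
\[
    \frac1m\sum_{(b,S)\in J} b\prod_{i\in S}x_i \;=\; \sum_{j=0}^{k}\mu^{\,k-j}\cdot\frac1m\sum_{(b,S)\in J} b\!\!\sum_{T\subseteq S,\,|T|=j}\!\!\prod_{i\in T}z_i .
\]
For every $j\le k-1$ the inner sum is a $j\XOR$-type objective in the bounded vector $z$ with independent random signs $b$, and since $m\gg n^{(k-1)/2}\ge n^{j/2}$ the strong refutation of \cite{AOW15} over the $\ell_\infty$-ball (which contains $\tfrac12 z$) certifies it is $\wt O(\sqrt{n^{j/2}/m}) = o(1)$; for $j=0$ this is just $\tfrac1m\sum b = o(1)$ by a Chernoff bound. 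As $|\mu|\le 1$, the entire $j\le k-1$ part is certifiably $o(1)$, uniformly over all $x$.

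This leaves the top-order term $\frac1m\sum_{(b,S)\in J} b\prod_{i\in S}z_i$, which is a $\kXOR$-objective on a vector of near-full $\ell_2$-norm at density below the refutation threshold $n^{k/2}$, so it is not controllable by generic refutation — this is the crux. I would split on $s := \min(|S_-|,|S_+|)$. If $s\le\delta n$ for a small constant $\delta = \delta(\rho,k)$, then since the hypergraph of $J$ has maximum degree $O(m/n)$ — certifiable because $m/n\gg\log n$ — the vectors $\pm\mathbf{1}$ and $x$ agree on all but $O(\delta m)$ clauses, so the term differs from $\tfrac1m\sum b = o(1)$ by at most $O(\delta)$, which is $<1-\Omega(\rho^k)$ once $\delta$ is small. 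If instead $s\ge\delta n$, then $\prod_{i\in S}z_i = (1-\mu)^{k-a}(-1-\mu)^{a}$ with $a = |S\cap S_-|$, so the term equals $\sum_{a=0}^{k}(1-\mu)^{k-a}(-1-\mu)^{a}\cdot\frac1m\sum_{|S\cap S_-|=a} b$; the certified structural result on random $k$-uniform hypergraphs (\pref{lem:3_hypergraph_structure} and its $k$-ary analogue) controls, simultaneously for all $S_-$ with $\delta n\le|S_-|\le(\tfrac12-\rho)n$, how the clauses distribute over the values of $|S\cap S_-|$, and feeding this into a Chernoff bound over the random signs $b$ — a union bound over the at most $2^n$ relevant $S_-$ being affordable precisely because $m\gg n$ — certifies this term is $o(1)$, a fortiori $\le 1-\Omega(\rho^k)$. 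Combining the three estimates, the $\kXOR$-objective of any $2\rho$-biased $x$ is at most $1-\Omega(\rho^k)$, completing the certification.

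The main obstacle is the top-order term in the regime $s\ge\delta n$: it is a $\kXOR$-refutation task strictly below the refutation threshold, so it can only be resolved by exploiting that the candidate vector is two-valued with a fixed global bias — exactly the content that the certified hypergraph structure lemma \pref{lem:3_hypergraph_structure} (the same ingredient powering the clustering result) is built to supply. Everything else — the $\CSP\to\kSAT\to\kXOR$ reductions, the $o(1)$ bounds on the low-order terms via \cite{AOW15}, and the degree-concentration bound when $s$ is small — is routine given the machinery already developed.
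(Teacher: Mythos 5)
There is a genuine gap at exactly the point you flag as the crux. In the regime $s \ge \delta n$ you dispose of the top-order term $\frac1m\sum_{(b,S)} b\prod_{i\in S}z_i$ by rewriting it via $a=|S\cap S_-|$ and then invoking ``a Chernoff bound over the random signs $b$'' together with ``a union bound over the at most $2^n$ relevant $S_-$.'' That argument shows the bound holds \emph{with high probability over $\bInst$}, but it does not produce an efficient \emph{certificate}: the statement being proved is a universally quantified claim over exponentially many sets $S_-$, and verifying it is precisely the task of strongly refuting a random degree-$k$ signed polynomial (equivalently, a $\kXOR$-type objective in the indicator vector of $S_-$) at density $m=n^{(k-1)/2+\beta}\ll n^{k/2}$, i.e.\ below the refutation threshold — the very barrier you correctly identified one sentence earlier. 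This is the distinction in \pref{def:certification_algorithm}: first-moment/union-bound truths are not certificates. Nor does \pref{lem:3_hypergraph_structure} rescue this step: it (and any $k$-ary analogue) certifies \emph{unsigned} counts of hyperedges with a given intersection pattern with $S_-$, whereas the quantity you need to control is the \emph{signed} sum $\sum_{|S\cap S_-|=a} b$ over those hyperedges, and that signed sum is the whole difficulty. (The paper does use \pref{lem:3_hypergraph_structure} for balance, but only in the special case $k=3$, where restricting to the all-positive and all-negative clauses of a random $\ThreeSAT$ instance turns the problem into a purely unsigned hypergraph statement; that trick does not extend to your top-order $\kXOR$ term for general $k$.) The low-order terms ($j\le k-1$) and the $s\le\delta n$ case are essentially fine, modulo routine issues (AOW is stated over $\{\pm1\}^n$ and the coefficients $\bw_U$ need the truncation/normalization treatment), but the argument collapses where the real work is.

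For contrast, the paper's route for $k\ge 4$ avoids any quantification that is settled by a union bound over assignments. It fixes a single small set $S$ of size $\rho n$ in advance and keeps only clauses with $k-2$ variables in $S$ and $2$ outside; for every assignment $y\in\{\pm1\}^S$ this yields an induced $\TwoXOR$ instance on a \emph{fixed} random graph whose signs depend on $y$. Two genuine certificates are then combined: (i) applying \pref{thm:aow-poly} to the truncated $(k-2)\XOR$ instance certifies, simultaneously for all $y$, that the induced $\TwoXOR$ instance is nearly balanced between positive and negative constraints (\pref{lem:induced-two-xor-balanced}); and (ii) a matrix-Bernstein/expander-mixing bound on $\|A_{\bG}-\E A_{\bG}\|$ certifies, simultaneously for all nearly balanced sign patterns, that no biased assignment can approximately satisfy the induced instance (\pref{lem:simul-2XOR}). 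If you want to salvage your decomposition, the missing ingredient is a replacement of the Chernoff-plus-union-bound step by a spectral or SoS certificate of comparable strength — which is exactly what conditioning on a small seed set and dropping to an induced $\TwoXOR$ buys.
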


\begin{remark}
    Compared to \cite{KOS18}, our result is a strong refutation algorithm for all CSPs, whereas their algorithm is specific for $\kXOR$ and only a weak refutation (refuting only exactly satisfying assignments).
    For $k=3$ (\pref{thm:3SAT_global_card}), we match their cardinality constraint requirement (see \pref{rem:comparison_to_KOS}).
    However, for $k\geq 4$ (\pref{cor:main-kCSP-global-card}), we require a slightly stronger cardinality assumption.
\end{remark}

The formal statements and proofs are detailed in \pref{sec:global-cardinality}.  Akin to the case for counting solutions, we employ the reduction of every $\kCSP$ to $\kSAT$ and the $\kXOR$ principle to reduce the problem to strongly refuting $\kXOR$ under global cardinality constraints.

The first main insight is that given a graph $G$ which is a sufficiently good spectral expander, we can efficiently certify that any $\TwoXOR$ instance on $G$, where the number of positive constraints is roughly equal to the number of negative constraints, has no unbalanced approximately satisfying assignments.  The proof of this is based on using the expander mixing lemma to show that any imbalanced assignment $x$ must satisfy $x_ux_v=+1$ for $\gg\frac{1}{2}$ of the edges, which then lets us lower bound the number of negative constraints that are violated.

Then given a random $\kXOR$ instance $\bInst$, we pick some set of $\rho n$ vertices $S$ and consider all clauses with exactly $k-2$ vertices in $S$ and $2$ variables outside $S$.  If we place an edge between the two variables outside $S$ for every clause, we get some random graph $\bG$.  Now consider any assignment $y$ to the variables in $S$.  For this chosen set of clauses to be (nearly) satisfied, the assignment to variables outside $S$ must nearly satisfy the induced $\TwoXOR$ instance on the graph $\bG$ whose signings are determined by $y$.  The second insight is that we can efficiently certify that for any assignment $y$ the induced $\TwoXOR$ instance has a roughly equal number of positive and negative constraints.  This is possible since the quantity $\#\text{positive constraints}(y)-\#\text{negative constraints}(y)$ is the objective value of a particular random $(k-2)\XOR$ instance on assignment $y$, which we can certify tight bounds on using the algorithm of \cite{AOW15}.

\paragraph{Certified counting for subspace problems.}
So far, we have developed certification algorithms for CSPs mainly based on analyses of random hypergraphs.
For other inherently different problems such as counting solutions to the SK model, we turn to a different technique.
Our main insight is that for several problems, the approximate solutions must lie close to a small-dimensional linear subspace.
Thus, we can reduce the problem to counting the number of (Boolean) vectors close to a subspace.
We name this technique \textit{dimension-based count certification} since the algorithms and their guarantees only depend on the dimension of the subspace.

\begin{theorem} \label{thm:subspace_count_main_thm}
    Let $V$ be a linear subspace of dimension $\alpha n$ in $\R^n$.
    For any $\eps\in (0,1/4)$, the number of Boolean vectors in $\cube$ that are $\eps$ away from $V$ is upper bounded by $2^{(H_2(4\eps^2) + \alpha \log \frac{3}{\eps})n}$.
\end{theorem}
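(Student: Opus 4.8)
The plan is to cover the set of Boolean vectors near $V$ by a union of balls whose centers lie in $V$, and then bound the number of Boolean vectors in each such ball by a Hamming-ball / volume argument. Concretely, let $N_\eps$ denote the set of $x \in \cube$ with $\mathrm{dist}(x, V) \le \eps$, where distance is Euclidean in $\R^n$ (so the coordinates live on scale $1/\sqrt n$). First I would build a net of $V$: restricting attention to the ball of radius $1$ in $V$ (every $x \in \cube$ has norm $1$, so its projection onto $V$ has norm at most $1$), there is an $\eps$-net $\calN$ of this ball with $|\calN| \le (3/\eps)^{\alpha n}$ by the standard volumetric bound on $\eps$-nets of balls in $\alpha n$-dimensional space. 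Every $x \in N_\eps$ is then within $2\eps$ of some $v \in \calN$ (triangle inequality: $\eps$ to its projection, $\eps$ from the net). So $|N_\eps| \le |\calN| \cdot \max_{v} \#\{x \in \cube : \|x - v\|_2 \le 2\eps\}$, and it remains to bound the latter count for a fixed center $v$.

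The second step is the key one: bound the number of scaled-Boolean points within Euclidean distance $2\eps$ of an \emph{arbitrary} point $v \in \R^n$. If $x, x' \in \cube$ agree in all but $t$ coordinates, then $\|x - x'\|_2^2 = t \cdot (2/\sqrt n)^2 = 4t/n$. If both $x$ and $x'$ are within $2\eps$ of $v$ then $\|x - x'\|_2 \le 4\eps$, hence $4t/n \le 16\eps^2$, i.e.\ $t \le 4\eps^2 n$. Therefore all Boolean points in the ball of radius $2\eps$ about $v$ lie within Hamming distance $4\eps^2 n$ of each other, so they are contained in a single Hamming ball of radius $4\eps^2 n$ (pick any one of them as the center — if the set is empty the bound is trivial). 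The number of such points is then at most the volume of a Hamming ball of radius $4\eps^2 n$ in $\{\pm1\}^n$, which is at most $2^{H_2(4\eps^2) n}$ by the standard entropy estimate (valid since $4\eps^2 < 1/4 < 1/2$ as $\eps < 1/4$). Multiplying the two bounds gives $|N_\eps| \le (3/\eps)^{\alpha n} \cdot 2^{H_2(4\eps^2)n} = 2^{(H_2(4\eps^2) + \alpha \log(3/\eps))n}$, as claimed.

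The one place that needs a little care — and where I'd expect the main (minor) obstacle to lie — is matching the normalization conventions: the problem puts Boolean coordinates on scale $1/\sqrt n$ so that $\|x\|_2 = 1$, and ``$\eps$ away from $V$'' should be read in this Euclidean metric. One must be consistent so that (a) the net of $V$ can be taken inside the unit ball, giving the $(3/\eps)^{\alpha n}$ factor rather than something scaling with $\sqrt n$, and (b) the conversion between squared Euclidean distance and Hamming distance carries the right constant ($\|x-x'\|_2^2 = 4\,d_H(x,x')/n$), which is what produces the $4\eps^2$ inside $H_2(\cdot)$. Everything else is routine: the $\eps$-net bound and the Hamming-ball volume bound are both standard, and the triangle-inequality chaining is immediate. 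No hypergraph or spectral input is needed here — this is purely a covering argument, exactly the ``dimension-based'' flavor advertised before the theorem statement.
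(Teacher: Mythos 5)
Your proposal is correct and follows essentially the same route as the paper: project onto $V$, take a $(3/\eps)^{\alpha n}$-size $\eps$-net of the unit ball in $V$, use the triangle inequality to get $2\eps$-closeness to the net, and bound the Boolean vectors in each $2\eps$-ball by converting Euclidean to Hamming distance and invoking the entropy bound $\sum_{i\le 4\eps^2 n}\binom{n}{i}\le 2^{H_2(4\eps^2)n}$. The only cosmetic difference is that the paper anchors the Hamming-ball argument at one Boolean vector inside the ball via the ball's radius, while you argue pairwise via its diameter; the resulting bound is identical.
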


We note that the upper bound is almost tight (see \pref{rem:subspace_count_tight} for more details).

We now give a brief overview of the proof of \pref{thm:subspace_count_main_thm}.
First, we upper bound the maximum number of (normalized) Boolean vectors that can lie within any $\eps'$-ball.
Secondly, we take an $\eps$-net of the unit ball in the subspace $V$ (i.e.\ $B_1(0) \cap V$).
We simply multiply the two quantities to get the upper bound, which only depends on the dimension of $V$.

Next, we apply this technique to two problems: the Sherrington-Kirkpatrick model and the independent sets in random $d$-regular graphs.

\paragraph{Sherrington-Kirkpatrick (SK).}
Given $\G$ sampled from $\GOE(n)$, the SK problem is to compute
\begin{equation*}
    \OPT(\G) = \max_{x\in \{\pm1\}^n} x^\top \G x.
\end{equation*}
This problem can also be interpreted as finding the largest cut in a Gaussian-weighted graph.
The SK model arises from the spin-glass model studied in statistical physics \cite{SK75}.
Talagrand \cite{Tal06} famously proved that $\OPT(\G)$ concentrates around $2\mathsf{P}^* n^{3/2} \approx 1.526 n^{3/2}$, where $\mathsf{P}^*$ is the \textit{Parisi constant}, first predicted by Parisi \cite{Par79,Par80}.

Recently, the problem of certifying an upper bound for $\OPT(\G)$ has received wide attention.
A natural algorithm is the \textit{spectral refutation}:
$\OPT(\G) \leq n\cdot \lambda_{\max}(\G)$.
Since $\lambda_{\max}(\G)$ concentrates around $2\sqrt{n}$, the algorithm certifies that $\OPT(\G) \leq (2+o(1)) n^{3/2}$, which we call the \textit{spectral bound}.
Clearly, there is a gap between the spectral bound and the true value, and it is natural to ask whether there is an algorithm that beats the spectral bound.
Surprisingly, building on works by \cite{MS16,MRX20,KB20}, Ghosh et.\ al.\ \cite{GJJ20} showed that even the powerful Sum-of-Squares hierarchy cannot certify a bound better than $(2-o(1))n^{3/2}$ in subexponential time.
We also mention an intriguing work by Montanari \cite{Mon19} where he gave an efficient algorithm for the \emph{search problem} --- to find a solution with objective value close to $\OPT(\G)$ with high probability (assuming a widely-believed conjecture from statistical physics).
However, we emphasize that his algorithm is not a certification algorithm (recall \pref{def:certification_algorithm}).

In the spirit of this work, a natural question is to certify an upper bound on the number of assignments $x\in \{\pm 1\}^n$ such that $x^\top \G x \geq 2(1-\eta)n^{3/2}$ for some $\eta >0$.

\begin{theorem}\label{thm:SK_main_thm}
    Let $\G \sim \GOE(n)$.
    Given $\eta \in (0,\eta_0)$ for some universal constant $\eta_0$,
    there is an algorithm certifying that at most $2^{O(\eta^{3/5}\log\frac{1}{\eta})n}$ assignments $x\in \{\pm1\}^n$ satisfy $x^\top \G x \geq 2(1-\eta)n^{3/2}$.
\end{theorem}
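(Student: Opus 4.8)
The plan is to use the dimension-based count certification technique from Theorem~\ref{thm:subspace_count_main_thm}. The key structural fact we need is that every near-optimal assignment must lie close to the span of the top eigenvectors of $\G$. Concretely, write $\G = \sum_{i=1}^n \lambda_i v_i v_i^\top$ with $\lambda_1 \ge \cdots \ge \lambda_n$, and for a threshold $\tau = (2-\gamma)\sqrt{n}$ let $V = \spn\{v_i : \lambda_i \ge \tau\}$. If $x^\top \G x \ge 2(1-\eta)n^{3/2}$, then decomposing $x = x_V + x_{V^\perp}$ and using $x^\top \G x \le \lambda_1 \|x_V\|^2 + \tau \|x_{V^\perp}\|^2 \le 2\sqrt{n}\,\|x_V\|^2 + (2-\gamma)\sqrt{n}\,(n - \|x_V\|^2)$, one solves for $\|x_{V^\perp}\|^2 \le \left(\frac{2\eta}{\gamma} + o(1)\right) n$ after normalizing by $\sqrt{n}$ (so $x \in \cube$), i.e. the normalized distance of $x$ to $V$ is at most $\eps := \sqrt{2\eta/\gamma}\,(1+o(1))$. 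This step is purely spectral and requires only that $\lambda_1(\G) \le (2+o(1))\sqrt{n}$ with high probability — a standard GOE fact — so the certification is just: compute the eigendecomposition, verify $\lambda_1 \le (2+o(1))\sqrt{n}$, form $V$, and apply the subspace count bound.

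The second ingredient is a high-probability bound on $\dim V$, i.e. on the number of eigenvalues of $\G$ exceeding $(2-\gamma)\sqrt{n}$. By the semicircle law, the number of eigenvalues of $\G/\sqrt{n}$ in $[2-\gamma, 2]$ is, to leading order, $n$ times the semicircle mass on that interval, which is $\Theta(\gamma^{3/2})$ (the semicircle density vanishes like a square root at the edge, so $\int_{2-\gamma}^2 \frac{1}{2\pi}\sqrt{4-t^2}\,dt = \Theta(\gamma^{3/2})$). Thus $\dim V \le C\gamma^{3/2} n$ with high probability, and this count is itself efficiently certifiable — e.g. via $\Tr\big(p(\G/\sqrt n)\big)$ for a suitable low-degree polynomial $p$ dominating $\mathbf{1}[t \ge 2-\gamma]$, whose expectation concentrates by standard trace-moment / matrix concentration bounds. (Alternatively one can simply count eigenvalues directly, since we are already computing the spectrum.)

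Now combine: setting $\alpha = C\gamma^{3/2}$ and $\eps = \sqrt{2\eta/\gamma}$, Theorem~\ref{thm:subspace_count_main_thm} gives an upper bound of $2^{(H_2(4\eps^2) + \alpha \log(3/\eps))n} = 2^{O(\eta/\gamma)n}\cdot 2^{O(\gamma^{3/2}\log(1/\eta))n}$ on the number of such assignments (using $H_2(4\eps^2) = O(\eps^2 \log(1/\eps^2)) = O((\eta/\gamma)\log(\gamma/\eta))$, absorbed into $O(\eta/\gamma)$ up to log factors). We are free to optimize over the slack parameter $\gamma \in (0,1)$: balancing $\eta/\gamma$ against $\gamma^{3/2}$ gives $\gamma \asymp \eta^{2/5}$, at which point both terms are $\Theta(\eta^{3/5})$, and tracking the logarithmic factor yields the claimed bound $2^{O(\eta^{3/5}\log(1/\eta))n}$. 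We need $\eta$ below a universal constant $\eta_0$ so that $\eps < 1/4$ (the hypothesis of Theorem~\ref{thm:subspace_count_main_thm}) and $\gamma < 1$.

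The main obstacle I anticipate is making the second ingredient genuinely \emph{certifiable} rather than merely true with high probability: the argument needs a one-sided efficient certificate that $\G$ has few eigenvalues above $(2-\gamma)\sqrt n$. Directly diagonalizing $\G$ and counting is fine since we are already doing so for the spectral bound, so this reduces to the standard GOE edge-eigenvalue-count concentration; the only care needed is uniformity of the $o(1)$ terms in $\gamma$ (we want the bound $\dim V \le C\gamma^{3/2}n$ to hold simultaneously as we optimize $\gamma$, which is handled by a union bound over a fine net of $\gamma$ values, or by a single polynomial-trace certificate valid for all $\gamma$ in the relevant range). The rest — the spectral projection inequality and the arithmetic optimization over $\gamma$ — is routine.
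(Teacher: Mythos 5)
Your proposal is correct and follows essentially the same route as the paper: project onto the eigenspace $V$ above the threshold $2(1-\delta)\sqrt{n}$ (the paper's \pref{lem:distance_to_subspace} is exactly your inline spectral inequality), bound $\dim V = O(\delta^{3/2})n$ via the semicircle law with the certificate being the directly computed spectrum, apply \pref{thm:subspace_count}, and set $\delta \asymp \eta^{2/5}$. Your worry about uniformity in the slack parameter is moot in the paper's version, since $\eta$ is an input and the algorithm fixes a single $\delta = \eta^{2/5}$ and outputs the bound computed from the observed eigenvalue count, so no net or union bound over thresholds is needed.
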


Our proof first looks at the eigenvalue distribution of $\G$, which follows the \textit{semicircle law} (\pref{thm:semicircle_law}).
This shows that any $x$ that achieves close to the spectral bound must be close to the top eigenspace of $\G$ (of dimension determined by the semicircle law).
Then, we directly apply \pref{thm:subspace_count_main_thm}.
See \pref{sec:SK} for complete details.

\paragraph{Independent sets in $d$-regular graphs.}
The largest independent set size (the \textit{independence number}) in a random $d$-regular graph has been studied extensively.
It is well-known that with high probability, the independence number is $\leq \frac{2n\log d}{d}$ for a sufficiently large constant $d$ (cf.\ \cite{Bol81,Wor99}).
The current best known \textit{certifiable} upper bound is via the smallest eigenvalue of the adjacency matrix (often referred to as Hoffman's bound, cf.\ \cite{FO05,BH11}):
Let $\bA$ be the adjacency matrix, and let $\lambda \coloneqq -\lambda_{\min}(\bA)$. Then, $|S| \leq \frac{\lambda}{d+\lambda}n$ for all independent sets $S$.
We give a proof for completeness in \pref{sec:independent_set}.

It is also well-established that $\lambda \leq 2\sqrt{d-1} + o(1)$ with high probability (see \pref{thm:eigenvalue_regular_graph}).
Thus, we can certify that the independence number is at most $\Cindset n$ where $\Cindset \coloneqq \frac{2\sqrt{d-1}}{d+2\sqrt{d-1}}$.

The natural question for us is to certify an upper bound on the number of independent sets larger than $\Cindset(1-\eta)n$ for some $\eta > 0$.

\begin{theorem} \label{thm:ind_set_main_thm}
    For a random $d$-regular graph on $n$ vertices,
    given $\eta \in (0,\eta_0)$ for some universal constant $\eta_0$,
    there is an algorithm certifying that there are at most $2^{O(\eta^{3/5}\log\frac{1}{\eta})n}$ independent sets of size $C_d(1-\eta)n$.
\end{theorem}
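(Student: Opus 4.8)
The plan is to mirror the strategy used for the SK model (Theorem \ref{thm:SK_main_thm}): show that every large independent set, viewed as a Boolean vector, must lie close to a low-dimensional subspace of $\R^n$, and then invoke the dimension-based count certification of Theorem \ref{thm:subspace_count_main_thm}. Concretely, let $\bA$ be the adjacency matrix of the random $d$-regular graph and let $\lambda \coloneqq -\lambda_{\min}(\bA)$, so that $\lambda \le 2\sqrt{d-1}+o(1)$ with high probability. For an independent set $S$ of size $s$, consider the indicator $\bone_S$ and its centered, normalized version. The Hoffman bound is proved by noting $\bone_S^\top \bA \bone_S = 0$ and expanding $\bone_S$ in the eigenbasis of $\bA$; the point $|S| = \Cindset n$ is achieved exactly when the relevant spectral mass sits on the most negative eigenvalue. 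So when $|S| \ge \Cindset(1-\eta)n$, the slack in the Hoffman inequality is only $O(\eta)$ (appropriately normalized), which forces all but an $O(\eta)$-fraction (in squared norm) of the component of $\bone_S$ orthogonal to the all-ones vector to lie in the span of eigenvectors with eigenvalue at most $-2\sqrt{d-1}+\gamma$, for a gap parameter $\gamma$ we get to choose.

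The second step is to bound the dimension of that near-bottom eigenspace. Here I would use the fact that, for a random $d$-regular graph, the empirical spectral distribution of $\bA$ converges to the Kesten--McKay law, whose density near the left edge $-2\sqrt{d-1}$ behaves like a square root, $\rho(x) \propto \sqrt{x + 2\sqrt{d-1}}$. Integrating, the number of eigenvalues of $\bA$ below $-2\sqrt{d-1}+\gamma$ is $O(\gamma^{3/2} n)$ with high probability. (One needs a slight care: Kesten--McKay convergence is a statement about the bulk, so to certify a bound on the count of small eigenvalues one can instead use $\Tr$ of a low-degree polynomial of $\bA$ that approximates the indicator of $(-\infty, -2\sqrt{d-1}+\gamma]$, exactly as one certifies edge-of-spectrum statements; $\Tr$ of a fixed polynomial in $\bA$ is computable and concentrates.) Thus, with high probability we can certify that $\bone_S$ lies $O(\sqrt{\eta/\gamma})$-close (after normalization) to a subspace $V$ of dimension $\alpha n$ with $\alpha = O(\gamma^{3/2})$, for every large independent set $S$ simultaneously.

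Now apply Theorem \ref{thm:subspace_count_main_thm} with distance parameter $\eps = \Theta(\sqrt{\eta/\gamma})$ and subspace dimension $\alpha n = \Theta(\gamma^{3/2})n$: the number of such Boolean vectors is at most $2^{(H_2(4\eps^2) + \alpha\log(3/\eps))n} \le 2^{O((\eta/\gamma)\log(\gamma/\eta) + \gamma^{3/2}\log(\gamma/\eta))n}$. Optimizing the trade-off between the net term $\eta/\gamma$ and the dimension term $\gamma^{3/2}$ by setting $\gamma = \Theta(\eta^{2/5})$ balances the two exponents at $\Theta(\eta^{3/5})$, giving the claimed bound $2^{O(\eta^{3/5}\log(1/\eta))n}$. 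A small subtlety is that Theorem \ref{thm:subspace_count_main_thm} counts vectors in $\cube$ near a \emph{linear} subspace, whereas $\bone_S$ is genuinely affine — its component along $\bone$ is pinned near $\Cindset\sqrt n$ by $|S|\approx \Cindset n$ — so I would either absorb $\bone$ into $V$ (costing one extra dimension, negligible) or translate and rescale so that the centered indicators are the objects being counted; each large independent set maps to a distinct Boolean-type vector, so no loss in the count.

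The main obstacle is the certification of the dimension bound, i.e.\ turning the Kesten--McKay edge behavior into an \emph{efficiently certifiable} upper bound of $O(\gamma^{3/2}n)$ on the number of eigenvalues of $\bA$ below $-2\sqrt{d-1}+\gamma$. Naively the spectral distribution is only known to converge weakly, which is not a certificate; the fix is the trace-of-polynomial trick, but one must (i) choose a polynomial $p$ of degree $\Theta(1/\sqrt\gamma)$ that dominates $\bone[x \le -2\sqrt{d-1}+\gamma]$ and is small on the rest of $[-d,d]$, (ii) argue $\Tr\, p(\bA)$ concentrates and that its expectation is $O(\gamma^{3/2}n)$ under Kesten--McKay, and (iii) ensure the $o(1)$ edge fluctuations (Friedman-type bounds on $\lambda$) don't spoil the estimate. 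This is technical but standard, and everything else — the Hoffman slack computation and the net-counting optimization — is routine.
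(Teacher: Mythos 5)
Your proposal follows essentially the same route as the paper's proof (Theorem \ref{thm:ind_set_upper_bound}): Hoffman slack forces the centered indicator $y_S$ of a large independent set to lie $O(\sqrt{\eta/\delta})$-close to the span of eigenvectors of $-\meanbA$ with eigenvalue at least $2\sqrt{d-1}(1-\delta)$ (Lemma \ref{lem:y_S_distance_to_subspace}), the Kesten--McKay law bounds that eigenspace's dimension by $O(\delta^{3/2})n$ (Lemma \ref{lem:km_eps_subspace_dim}), and a net-plus-ball count with $\delta=\eta^{2/5}$ gives $2^{O(\eta^{3/5}\log(1/\eta))n}$. Two remarks. First, what you flag as ``the main obstacle'' --- turning Kesten--McKay edge behavior into a certifiable eigenvalue count via traces of polynomials --- is not needed at all: the certification algorithm simply diagonalizes the \emph{given} graph's (de-meaned) adjacency matrix, counts the eigenvalues in the near-extremal window to get $\alpha$, and outputs a bound depending on that computed $\alpha$. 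Validity of the output as an upper bound holds for every graph; the Kesten--McKay law is invoked only to argue that, with high probability over the random graph, the computed $\alpha$ is $O(\delta^{3/2})$, which requires no certificate. So your trace-of-polynomial machinery is an unnecessary detour, though not an error. Second, on the Boolean-versus-affine issue: the paper does not reduce to Theorem \ref{thm:subspace_count} directly but instead proves a bespoke packing lemma (Lemma \ref{lem:y_in_ball}) bounding how many centered vectors $y_S$ fit in an $\eps\sqrt{n}$-ball, using $\|y_S-y_T\|_2^2\ge |S\Delta T|/4$ for $|S|,|T|\le n/2$; your alternative --- absorb $\vec{1}$ into the subspace (one extra dimension) and count the genuine $\pm 1$ indicators, which are in bijection with the sets --- also works and gives the same asymptotics, so this is a legitimate minor variant rather than a gap.
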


The proof is very similar to the SK model.
We first map each independent set $S$ to a vector $y_S\in \R^n$ such that if $S$ is large, then $y_S$ is close to the bottom eigenspace of $\bA$.
Then, using a variant of \pref{thm:subspace_count_main_thm}, we upper bound the number of such vectors that are close to the eigenspace.
We carry out the proof in full detail in \pref{sec:independent_set}.

\paragraph{Optimality for counting $k$CSP solutions.}  Finally, we give evidence suggesting that our algorithmic upper bounds are close to optimal.
Our hardness results are built on the hypothesis that there is no efficient \textit{strong} refutation algorithm for random $\kXOR$ under the refutation threshold (in the regime $n^{\eps} \ll \Delta \ll n^{k/2-1}$).
Although no $\NP$-hardness results are known, this hypothesis is widely believed to be true.
In particular, the problem was shown to be hard for the Sum-of-Squares semidefinite programming hierarchy \cite{Gri01,Sch08,KMOW17}, which is known to capture most algorithmic techniques for average-case problems.
Thus, improving our results would imply a significant breakthrough.

We show that assuming this hypothesis is true, then we cannot certify an upper bound on the number of $(1-\eta)$-satisfying assignments better than $\exp(O(\eta n))$.

\begin{theorem} \label{thm:hardness_main_thm}
    If there is an efficient algorithm that with high probability can certify a bound of $\exp(\frac{\eta n}{10k})$ on the number of $(1-\eta)$-satisfying assignments to $\bInst$, then there is an efficient algorithm that with high probability can certify that $\bInst$ has no $(1-\eta/2)$-satisfying assignments.
\end{theorem}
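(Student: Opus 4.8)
The plan is to prove the stated implication by a direct reduction: given a hypothetical algorithm $\calA$ that certifies the bound $\exp(\eta n/(10k))$ on the number of $(1-\eta)$-satisfying assignments, I build an algorithm $\mathcal{B}$ that certifies (in the sense of \pref{def:certification_algorithm}, with bound $C = 0$) that $\bInst$ has no $(1-\eta/2)$-satisfying assignment. The reduction does not alter the instance — $\mathcal{B}$ simply runs $\calA$ on $\bInst$ itself — and the only combinatorial ingredient is a \emph{blow-up} observation: from one $(1-\eta/2)$-satisfying assignment one manufactures $\binom{n}{t}$ distinct $(1-\eta)$-satisfying assignments by local bit flips, where $t = \Theta_k(\eta n)$, and this count beats $\calA$'s promised bound.

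First I would establish the perturbation lemma. For an instance $\calI$ with $m$ clauses, let $D^\ast(\calI)$ be the maximum, over variables, of the number of clauses containing that variable, and set $t \coloneqq \lfloor \eta m /(2 D^\ast(\calI))\rfloor$. If $x$ is $(1-\eta/2)$-satisfying then it violates at most $\tfrac{\eta}{2}m$ clauses; flipping any set $T$ of $t$ coordinates changes the satisfied/violated status of at most $\sum_{i\in T}\deg_i \le t\, D^\ast(\calI) \le \tfrac{\eta}{2}m$ clauses, so the perturbed string violates at most $\eta m$ clauses, i.e.\ is $(1-\eta)$-satisfying. Distinct choices of $T$ yield distinct strings, so a single $(1-\eta/2)$-satisfying assignment forces at least $\binom{n}{t}$ many $(1-\eta)$-satisfying assignments.

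Then $\mathcal{B}(\calI)$ proceeds as follows: read off $D^\ast(\calI)$ and compute $t$; run $\calA(\calI)$ to obtain a number $B$; output $0$ if $B < \binom{n}{t}$, and the trivial bound $2^n$ otherwise. Item~1 of \pref{def:certification_algorithm} (soundness on every instance) holds because $\calA$ is always an upper bound on the true count: if $B < \binom{n}{t}$ then $\calI$ cannot have a $(1-\eta/2)$-satisfying assignment, since otherwise the perturbation lemma would give $\binom{n}{t}$ many $(1-\eta)$-satisfying assignments while $\calA$ certifies at most $B < \binom{n}{t}$, a contradiction; and $2^n$ is vacuously an upper bound in the other case. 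For Item~2 (the high-probability guarantee for $\bInst$) I would invoke degree concentration of the random CSP: in the regime of interest ($\Delta = \omega(\log n)$, which contains the hard window $n^{\eps}\ll\Delta\ll n^{k/2-1}$), a Chernoff-plus-union-bound argument gives $D^\ast(\bInst)\le 2k\Delta$ with high probability, hence $t \ge \eta n/(4k)$ and $\binom{n}{t}\ge (4k/\eta)^{t}$, which exceeds $\exp(\eta n/(10k))$; simultaneously $\calA(\bInst)\le \exp(\eta n/(10k))$ with high probability by hypothesis. On the intersection of these two events $B < \binom{n}{t}$, so $\mathcal{B}$ outputs $0$. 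The algorithm is polynomial-time: it reads a degree sequence, performs one comparison (which can be carried out on logarithms, or against the clean lower bound $(4k/\eta)^{t}$, to avoid handling astronomically large integers), and calls $\calA$ once.

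The one delicate point — the main obstacle, such as it is — is the quantitative step where the blow-up $\binom{n}{t}$ must strictly dominate $\calA$'s promised bound $\exp(\eta n/(10k))$. This is exactly what forces the perturbation radius to be $t = \Theta(\eta n/k)$, which in turn is why we must control the maximum variable degree of the random instance (so that the local flips stay within the $(1-\eta)$ budget), and is the reason the theorem fixes the target count at $\exp(\eta n/(10k))$ rather than, say, $\exp(\eta n)$. A secondary bookkeeping point is to check that $\mathcal{B}$ is a legitimate certification algorithm: returning $2^n$ on the atypical instances (rather than failing) preserves the always-an-upper-bound property while leaving the high-probability conclusion untouched.
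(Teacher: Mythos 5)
Your proposal is correct and rests on the same core idea as the paper's proof of \pref{thm:kXOR_hardness} --- manufacture exponentially many $(1-\eta)$-satisfying assignments from a single $(1-\eta/2)$-satisfying one by local flips, and contradict the certified count --- but the implementation is genuinely different. The paper fixes one set $S$ of size $\eta n/(3k)$, checks that at most $\eta m/2$ clauses touch $S$ (a high-probability event whenever $\eta m$ grows, even at constant density), and flips arbitrary subsets of $S$, so the blow-up is simply $2^{|S|}=2^{\eta n/(3k)}$ with no binomial estimates. You instead flip arbitrary $t$-subsets of all of $[n]$ and control the damage via the maximum variable degree $D^\ast$; this is sound, but it makes the high-probability step hinge on max-degree concentration, i.e.\ $\Delta\gtrsim\log n$, whereas the paper's fixed-$S$ statistic concentrates at much lower densities (your restriction still covers the hard window $n^{\eps}\ll\Delta\ll n^{k/2-1}$, so nothing essential is lost for the intended regime). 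One small repair: the inequality $\binom{n}{t}\ge(4k/\eta)^{t}$ need not hold when $t$ exceeds $\eta n/(4k)$, which can happen when $D^\ast$ is smaller than $2k\Delta$; since $t\le \eta n/(2k)\le n/2$, just use monotonicity of $t\mapsto\binom{n}{t}$ on $[0,n/2]$ and lower bound by $\binom{n}{\lfloor \eta n/(4k)\rfloor}$, which already exceeds $\exp(\eta n/(10k))$. Both arguments share the bookkeeping point you make explicit: the comparison is against a deterministically computed statistic of the given instance, so the first condition of \pref{def:certification_algorithm} holds on every instance and randomness enters only through the second.
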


This shows that the term $\exp(\wt{O}(\eta n))$ in \pref{thm:kCSP_main_thm} and \pref{thm:kxor_main_thm} is tight up to log factors (see \pref{rem:kXOR_count_tight}).
Our proof is simple: given a $(1-\eta/2)$-satisfying assignment and a small set $S$, we can flip the assignments to $S$ arbitrarily and still be $(1-\eta)$-satisfying.  Hence the number of $(1-\eta)$-satisfying assignments is at least $2^{|S|}$.
Thus, an upper bound better than this would imply that there is no $(1-\eta/2)$-satisfying assignments.
See \pref{sec:kXOR_hardness} for complete details.

Surprisingly, the optimality of \pref{thm:kCSP_main_thm} suggests that there is a phase transition for certifiable counting at the refutation threshold.
For concreteness, take random $\kSAT$ for example,

\begin{remark}
    At $m = \wt{\Omega}(n^{k/2})$, there is a strong refutation algorithm \cite{AOW15} which certifies that no $(1-\eta)$-satisfying assignment exists (even for constant $\eta < 1/2$).
    However, at $m = n^{k/2-\eps}$ and take $\eta = n^{-\frac{1}{4}+ \frac{\eps}{2}}$, we can at best certify that the number of $(1-\eta)$-satisfying assignments is at most $\exp(O(n^{\frac{3}{4}+\frac{\eps}{2}}))$.
    See also \pref{fig:plot-of-results} for illustration.
\end{remark}

\paragraph{Optimality for counting independent sets.}
We also show barriers to improving \pref{thm:ind_set_main_thm}, which can be viewed as a weak hardness evidence.
Specifically, we show that improving the upper bound of \pref{thm:ind_set_main_thm} to $\exp\left(O(\eta \log(1/\eta)n)\right)$ would imply beating Hoffman's bound by a factor of $1-\eta/2$ (for any small constant $\eta$), which would be an interesting algorithmic breakthrough.

\begin{theorem}
    Let $\bG$ be a random $d$-regular graph.
    Given constant $\eta \in (0, 1/2)$, if there is an efficient algorithm that with high probability certifies a bound of $\exp\left(\frac{\Cindset}{4} \eta \log(1/\eta)n \right)$ on the number of independent sets of size $\Cindset (1-\eta) n$, then there is an algorithm that with high probability certifies that $\bG$ has no independent set of size $(1-\eta/2)\Cindset n$.
\end{theorem}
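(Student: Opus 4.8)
The plan is to mirror the reduction behind \pref{thm:hardness_main_thm}: I would show that a single independent set of size $(1-\eta/2)\Cindset n$ in $\bG$ already forces many independent sets of size $\Cindset(1-\eta)n$, so that a sufficiently strong count-certification algorithm can be converted into an algorithm refuting the existence of independent sets of size $(1-\eta/2)\Cindset n$ --- which is exactly a certificate beating Hoffman's bound by a factor $1-\eta/2$.

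First I would record the monotonicity observation that every subset of an independent set is again independent. Hence if $\bG$ contains an independent set $T$ with $|T| = (1-\eta/2)\Cindset n$, then each of its $\binom{|T|}{K}$ subsets of size $K \coloneqq \Cindset(1-\eta)n$ is an independent set of size exactly $K$, so $\bG$ has at least $\binom{(1-\eta/2)\Cindset n}{(\eta/2)\Cindset n}$ independent sets of size $K$. (Here I am suppressing the harmless rounding of $K$ and $|T|$ to integers.)

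Next I would lower bound this binomial coefficient. Using $\binom{a}{b}\ge (a/b)^b$ with $a = (1-\eta/2)\Cindset n$ and $b = (\eta/2)\Cindset n$, together with $2-\eta\ge 1$ for $\eta\le 1$, gives
\[
    \binom{(1-\eta/2)\Cindset n}{(\eta/2)\Cindset n} \;\ge\; \left(\frac{2-\eta}{\eta}\right)^{(\eta/2)\Cindset n} \;\ge\; \left(\frac{1}{\eta}\right)^{(\eta/2)\Cindset n} \;=\; \exp\!\left(\frac{\Cindset}{2}\,\eta\log(1/\eta)\,n\right),
\]
which is strictly larger than $\exp\!\left(\frac{\Cindset}{4}\,\eta\log(1/\eta)\,n\right)$, with a full factor of two of slack in the exponent, so that the integer rounding suppressed above changes nothing.

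Finally I would assemble the reduction. Let $\calA$ be the hypothesized algorithm and let $f(G)$ denote the number of independent sets of size $K$ in $G$; by \pref{def:certification_algorithm} we have $f(G)\le \calA(G)$ for every graph $G$, and $\calA(\bG)\le \exp(\frac{\Cindset}{4}\eta\log(1/\eta)n)$ with high probability. I would then define $\calA'$ to run $\calA$ and output ``no independent set of size $(1-\eta/2)\Cindset n$'' precisely when $\calA(G) < \binom{(1-\eta/2)\Cindset n}{(\eta/2)\Cindset n}$, and otherwise output a trivially valid bound. For soundness: if $\calA'$ declares nonexistence then $f(G)\le\calA(G) < \binom{(1-\eta/2)\Cindset n}{(\eta/2)\Cindset n}$, which by the first step is impossible if $G$ has an independent set of size $(1-\eta/2)\Cindset n$, so the declaration is always correct. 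For the typical case: on a random $\bG$, with high probability $\calA(\bG)\le \exp(\frac{\Cindset}{4}\eta\log(1/\eta)n)$, which by the second step is below the threshold, so $\calA'$ declares nonexistence with high probability. I do not expect any real obstacle here: this is a short counting reduction, and the only point requiring (minimal) care is the integrality of $K$ and $|T|$, which is absorbed by the slack noted above.
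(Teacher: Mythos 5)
Your proposal is correct and follows essentially the same argument as the paper: observe that subsets of an independent set of size $(1-\eta/2)\Cindset n$ yield at least $\binom{(1-\eta/2)\Cindset n}{(\eta/2)\Cindset n}$ independent sets of size $\Cindset(1-\eta)n$, lower bound this binomial coefficient by roughly $\exp\bigl(\frac{\Cindset}{2}\eta\log(1/\eta)n\bigr)$, and conclude that a certified count below $\exp\bigl(\frac{\Cindset}{4}\eta\log(1/\eta)n\bigr)$ certifies nonexistence. The only cosmetic difference is that the paper bounds the binomial via $H_2(p)\ge p\log(1/p)$ while you use $\binom{a}{b}\ge(a/b)^b$, and you spell out the certification reduction slightly more explicitly; both yield the same conclusion.
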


The proof is a simple observation that for any independent set $S$, all subsets of $S$ are also independent sets.
Thus, if $S$ is of size $(1-\eta/2)\Cindset n$, then we can lower bound the number of subsets of size $(1-\eta)\Cindset n$.
We give a short proof in \pref{sec:ind_set_hardness}.
We note the interesting gap between $\eta^{3/5}$ and $\eta$ in the exponent of the upper and lower bounds respectively, and we conjecture that there may be an algorithm matching the lower bound.

\subsection{Context and related work} \label{sec:related_work}

\paragraph{Information-computation gaps in CSPs.}  This work is very closely related to the line of work on information-computation gaps.  In the context of certification in random CSPs, the most well-understood information-gaps are in that of refutation of random CSPs.  Feige's random $\ThreeSAT$ hypothesis was one of the earliest conjectured gaps.  As discussed earlier, while unsatisfiability for random $\ThreeSAT$ set in at constant density, it was conjectured by Feige that certifying this was hard at all constant densities.  Further, integrality gaps for the Sum-of-Squares hierarchy of \cite{Gri01,Sch08} seem to point to hardness up to density $\sqrt{n}$.
The wide information-computation gap is a main motivation for us to understand what an efficient algorithm can certify about the landscape of solutions in the regime between the satisfiability threshold and the refutation threshold.
We refer the reader to the introduction of \cite{AOW15} for a comprehensive treatment of the literature on information-computation gaps for refuting random CSPs prior to their work, CSPs more broadly, as well as connections to other areas of theoretical computer science.

The situation for general constraint satisfaction problems beyond $\XOR$ and $\SAT$ was considered in the work of \cite{AOW15}, which gave algorithms to refute all CSPs at density $n^{t/2-1}$ where $t$ is the smallest integer such that there is no $t$-wise uniform distribution supported on the predicate's satisfying assignments.  Then somewhat surprisingly, the work of \cite{RRS17} gave algorithms for refuting random CSPs between constant density and the $n^{t/2-1}$ threshold from \cite{AOW15}, whose running time smoothly interpolated between exponential time at constant density to polynomial time at the \cite{AOW15} threshold, with a (steadily improving) subexponential running time in the intermediate regime.  The algorithms of \cite{AOW15,RRS17} are spectral, and can be captured within the Sum-of-Squares hierarchy.  Finally the work of \cite{KMOW17} (presaged by \cite{BCK15}) established that the algorithm of \cite{RRS17} was tight for Sum-of-Squares in all regimes, thereby nailing a characterization for the exact gaps (up to logarithmic factors) for all random CSPs.

\paragraph{Solution geometry in random CSPs.}  One of the earlier predictions using nonrigorous physics techniques was the location of the $\ThreeSAT$ satisfiability threshold in the works of \cite{MZ02,MPZ02}.   In particular, they conjectured that there is a sharp threshold at a constant $\alpha_{\mathrm{SAT}}\approx 4.267$.  These works put forth the ``1-step replica symmetry breaking hypothesis'' (a conjectured property of the solution space in random $\kSAT$; we refer the reader to the introduction of \cite{DSS15} for a description), which was the starting point for several subsequent works.   These techniques were used to precisely predict the $\kSAT$ satisfiability threshold for all values of $k$ \cite{MMZ06}, proved for large $k$ in a line of work culminating in \cite{DSS15} and building on \cite{AM02,AP03,CP13,CP16}.

Eventually, the works of \cite{KMRSZ07,MRS08} predicted that besides the satisfiability threshold, random $\kSAT$ goes through other phase transitions too, and gave conjectures for their locations.  A notable one connected to this work is the \emph{clustering threshold}, for which there has been rigorous evidence given in the works of \cite{MMZ05,AR06,AC08}.  Above the clustering threshold, the solution space is predicted to break into exponentially many exponential-sized clusters far away from each other in Hamming distance.  More precisely, there is some function $\Sigma$ for which there are $\exp(\Sigma(s,\Delta)n)$ clusters of size approximately $\exp(sn)$ each.  In particular, this leads to the prediction that the number of solutions at density $\Delta$ is roughly $\max_{s}\{\exp((s+\Sigma(s,\Delta))n)\}$.  Another phase transition of interest is the \emph{condensation threshold}, where the number of clusters of solutions drops to a constant.

\paragraph{Approximate Counting for CSPs.}
Approximate counting of solutions in CSPs has attracted much attention in recent years.  There have been numerous positive algorithmic results for approximately counting solutions in (i) sparse CSPs in the worst case, (ii) sparse random CSPs well under the satisfiability threshold.  The takeaway here is that even though the problems we consider get harder as we approach the satisfiability threshold, if one goes well under the threshold the algorithmic problems once again become tractable.

One exciting line of research for worst-case CSPs is the problem of approximately counting satisfying assignments of a $\kSAT$ formula under conditions similar to those of the \Lovasz Local Lemma (LLL) \cite{ER73}.
A direct application of the LLL shows that if the maximum degree $D$ of the \textit{dependency graph} is $\leq 2^k/e$, then the formula is satisfiable.
Building on works of \cite{Moi19,FGYZ20,FHY20,JPV20}, Jain, Pham, and Vuong \cite{JPV21} recently showed that there is an algorithm for approximate counting well under the LLL thresholds, i.e. when $D \lesssim 2^{k/5.741}$ (hiding factors polynomial in $k$), using techniques similar to an algorithmic version of the LLL.
Further, the algorithms of \cite{Moi19,JPV20} are deterministic, which may suggest their techniques are amenable to obtaining certifiable counts.
However, it was shown that the problem of approximately counting solutions to a $\kSAT$ formula is $\NP$-hard when $D \gtrsim 2^{k/2}$ by \cite{BGG19}, well in the sparse regime, which suggests a hard phase between the highly sparse setting and the dense setting we are concerned with.

For random $\kSAT$, the exact satisfiability threshold that was established by Ding, Sly, and Sun \cite{DSS15} takes on value $\alpha_{\mathrm{SAT}} = 2^k \ln 2 - \frac{1}{2}(1+\ln2) + o_k(1)$.  And similarly, well below the satisfiability threshold, Galanis, Goldberg, Guo, and Yang \cite{GGGY19} adapted Moitra's techniques \cite{Moi19} to the random setting and developed a polynomial-time algorithm when the density $\Delta \leq 2^{k/301}$ and $k$ sufficiently large.

Closely related to the counting problem is approximating the partition function of random $\kSAT$, for which there have also been positive algorithmic results.
Specifically, given a random $\kSAT$ instance $\bInst$, the partition function is defined as $Z(\bInst,\beta) \coloneqq \sum_{\sigma} e^{-\beta H(\sigma)}$, where $H(\sigma)$ is the number of unsatisfied clauses under assignment $\sigma$.
The partition function can be viewed as a weighted (or ``permissive'') version of the counting problem.
Montanari and Shah \cite{MS06} first showed that the Belief Propagation algorithm approximately computes the partition function at $\Delta \sim \frac{2\log k}{k}$;
their analysis is based on correlation decay (or the \textit{Gibbs uniqueness property}).
Recently, \cite{COMR20} further showed that Belief Propagation succeeds as long as the random $\kSAT$ model satisfies a \textit{replica symmetry} condition, conjectured to hold up to $\Delta \sim 2^k \ln k / k$.
See also the works of \cite{KMRSZ07,Pan13,CO17} for further details of this matter.

\paragraph{Counting independent sets and related problems.}
Another counting problem that has been the subject of active study is that of counting independent sets, especially in the statistical physics community.
For a graph $G$ with maximum degree $d$, let $\mathsf{IS}(G)$ be the set of independent sets in $G$.
The task is to estimate the \textit{independence polynomial} $Z_G(\lambda) = \sum_{I\in \mathsf{IS}(G)} \lambda^{|I|}$, also known as the partition function of the \textit{hard-core model} with \textit{fugacity} $\lambda$ in the physics literature.
Earlier works by \cite{DG00,Vig01} developed randomized algorithms based on \textit{Glauber dynamics} to estimate $Z_G(\lambda)$ when $\lambda \leq \frac{2}{d-2}$.
In a major breakthrough, Weitz \cite{Wei06} showed a deterministic algorithm, based on correlation decay, that approximates $Z_G(\lambda)$ when $0 \leq \lambda < \lambda_c$, where $\lambda_c \coloneqq \frac{(d-1)^{d-1}}{(d-2)^{d}}$.
Sly and Sun \cite{SS12} later proved that this is tight: no efficient approximate algorithm for $Z_G(\lambda)$ exists for $\lambda > \lambda_c$ unless $\NP = \RP$.

Recently, Barvinok initiated a line of research on estimating partition functions using the \textit{interpolation method}
(see Barvinok's recent book \cite{Bar16}).
The main idea is to estimate the low-order Taylor approximation of $\log Z_G(\lambda)$ provided that the polynomial $Z_G(\lambda)$ does not vanish in some region in $\C$.
This approach led to deterministic algorithms that match Weitz's result and work even for negative or complex $\lambda$'s \cite{PR17,PR19}.  These polynomial-based approaches were also used to obtain deterministic algorithms for counting colorings in bounded degree graphs \cite{LSS19a}, estimating the Ising model partition function \cite{LSS19b}, and algorithms for a counting version of the Unique Games problem \cite{CDKPR19}.

There has also been works on worst-case upper bounds of $Z_G(\lambda)$ for $d$-regular graphs.
Zhao proved that for any $d$-regular graph $G$ and any $\lambda\geq 0$, $Z_G(\lambda) \leq (2(1+\lambda)^d - 1)^{n/2d}$ \cite{Zha10}.
In particular, setting $\lambda =1$, this shows that the total number of independent sets is bounded by $(2^{d+1}-1)^{n/2d}$, settling a conjecture by Alon \cite{Alo91} and Kahn \cite{Kah01}.

\paragraph{Certifying bounds on partition functions and free energy.}  A recent line of work \cite{Ris16,RL16,JKR19} is focused on an approach based on a convex programming relaxation of entropy to certify upper bounds on the \emph{free energy} of the Ising model (weighted $\TwoXOR$), both in the worst case and in the average case.  While on the surface level, these approaches differ significantly from ours, an interesting direction is to investigate if these entropy-based convex programming relaxations can achieve our algorithmic results.






\subsection{Table of results}
We include a table to have a succinct snapshot of our results.

\newcommand{\ColorCap}{\color{MidnightBlue}}

\renewcommand{\arraystretch}{1.5}
\begin{table}[!ht]
    \centering
    \begin{tabular}{|c|c|c|c|c|}
    \hline
                              & \textbf{Problem}         & \textbf{Theorem}                           & \textbf{Upper bound}                                                                                                                                                                                                             & \textbf{Randomness}                                                      \\ \hline \hline
    \multirow{5}{*}{\textbf{Counts}}   & $\TwoXOR$       & $\ref{thm:2xor_upper_bound}$      & $\exp(\wt{O}(\eta n))$                                                                                                                                                                                                  & Hypergraph                                                      \\ \cline{2-5} 
                              & $\kXOR$         & $\ref{thm:kxor_upper_bound}$      & $\exp(\wt{O}(\eta n))\cdot\exp\left(O\left(\frac{n^{1+\eps}}{\Delta^{1/(k-2)}}\right)\right)$                                                                                                                           & Hypergraph                                                      \\ \cline{2-5} 
                              & $k$CSP          & $\ref{cor:kCSP-to-kSAT}$          & \begin{tabular}[c]{@{}c@{}}$\exp(\wt{O}(\eta n)) \cdot \exp\left(O\left(\frac{n^{1+\eps}}{\Delta^{1/(k-2)}}\right)\right)$\\ $\cdot \exp\left(\wt{O}\left(\sqrt{\frac{n^{(k+1)/2}}{\Delta}}\right)\right)$\end{tabular} & \begin{tabular}[c]{@{}c@{}}Hypergraph\\ + signings\end{tabular} \\ \cline{2-5} 
                              & SK model        & $\ref{thm:SK_upper_bound}$        & $\exp(O(\eta^{3/5}\log\frac{1}{\eta})n)$                                                                                                                                                                                & $\G \sim \GOE(n)$                                               \\ \cline{2-5} 
                              & Independent set & $\ref{thm:ind_set_upper_bound}$   & $\exp(O(\eta^{3/5}\log\frac{1}{\eta})n)$                                                                                                                                                                                & $d$-regular graph                                               \\ \hline
    \multirow{2}{*}{\textbf{Clusters}} & $\ThreeXOR$     & $\ref{thm:3XOR_clusters}$         & \begin{tabular}[c]{@{}c@{}}$\exp(O(\theta^2 \log(1/\theta))n)$,\\ for $\theta = \max(2\eta, \Delta^{-\frac{1}{2}}\log n)$\end{tabular}                                                                                   & Hypergraph                                                      \\ \cline{2-5} 
                              & 3CSP            & $\ref{cor:3SAT_clusters}$         & \begin{tabular}[c]{@{}c@{}}$\exp(O(\theta^2 \log(1/\theta))n)$,\\ for $\theta = 8\eta + \wt{O}(\Delta^{-1/2})$\end{tabular}                                                                                              & \begin{tabular}[c]{@{}c@{}}Hypergraph\\ + signings\end{tabular} \\ \hline
    \multirow{2}{*}{\textbf{Balance}}  & 3CSP            & $\ref{thm:3SAT_global_card}$      & \begin{tabular}[c]{@{}c@{}}bias $\leq \rho$ for $\rho \gg \sqrt{\frac{\log n}{\Delta}}$,\\ $\eta = \rho/16$\end{tabular}                                                                                                & \begin{tabular}[c]{@{}c@{}}Hypergraph\\ + signings\end{tabular} \\ \cline{2-5} 
                              & $k$CSP          & $\ref{cor:main-kCSP-global-card}$ & \begin{tabular}[c]{@{}c@{}}bias $\leq \rho$ for any constant $\rho > 0$,\\ $\eta = \rho^k/2^{k+1}$\end{tabular}                                                                                                         & \begin{tabular}[c]{@{}c@{}}Hypergraph\\ + signings\end{tabular} \\ \hline
\end{tabular}
\caption{A summary of our results. \newline
        (1)\quad \textbf{$k$CSP counts}: given $\bInst\sim\HDist{k}{m}{n}$ where $m = \Delta n$, we upper bound the number of $(1-\eta)$-satisfying assignments. \newline
        (2)\quad \textbf{SK counts}: given $\G\sim\GOE(n)$, we upper bound the number of $x\in \{\pm 1\}^n$ such that $x^\top \G x \geq 2(1-\eta)n^{3/2}$. \newline
        (3)\quad \textbf{Independent set counts}: given a random $d$-regular graph for constant $d\geq 3$, we upper bound the number of independent sets of size $\geq \Cindset n(1-\eta)$. \newline
        (4)\quad  \textbf{3CSP clusters}: given $\bInst\sim\HDist{3}{m}{n}$ where $m = \Delta n$, we upper bound the number of diameter-$(\theta n)$ clusters of $(1-\eta)$-satisfying assignments. \newline
        (5)\quad \textbf{Balance}: given $\bInst\sim\HDist{k}{m}{n}$ where $m = \Delta n$, we certify that any $(1-\eta)$-satisfying assignment must have \textit{bias} $\frac{1}{n}\left|\sum_{i\in[n]} x_i\right| \leq \rho$.
    }
\label{table:results}
\end{table}

\subsection{Open directions}
In this section we suggest a couple of avenues for further investigation on the themes related to this work.

\paragraph{Worst-case complexity of certified counting.}  In this work, we deal mostly with random CSPs.  Here we present a worst-case version of the problem, specialized to $\ThreeSAT$.  A classic result due to \cite{Has01} is that it is $\NP$-hard to distinguish between a $(7/8+\eps)$-satisfiable $\ThreeSAT$ formula from a fully satisfiable $\ThreeSAT$ formula.  However, it is unclear what the complexity of a version of this question is when there is a stronger promise on the satisfiable $\ThreeSAT$ formula.
\begin{question}
    Consider the following algorithmic task:
    \begin{displayquote}
        Given a $\ThreeSAT$ formula $\calI$ under the promise that it is either $(7/8+\eps)$-satisfiable, or has at least $T$ fully satisfying assignments, decide which of the two categories $\calI$ falls into.
    \end{displayquote}
    What is the complexity of the above problem?
\end{question}
We remark that this problem is similar to counting-$\ThreeSAT$, but subtly different.

\paragraph{Certifying optimal bounds on number of exactly satisfying $\kSAT$ solutions.}  In the context of $\kSAT$, while our algorithms can certify subexponential bounds for both exactly satisfying assignments and approximately satisfying assignments, the matching evidence of hardness is only for the approximate version of the problem.  Thus, it is still possible that there is an algorithm to certify an even tighter bound than ours for the problem of counting exactly satisfying assignments to a random $\kSAT$ formula.  This motivates the following question:
\begin{question}
    What is the tightest bound an efficient algorithm can certify on the number of solutions to a random $\kSAT$ instance?
\end{question}
We conjecture that the algorithms presented in this paper are indeed optimal.  An approach to providing hardness evidence for this is to construct a hard planted distribution, and prove it is hard within the \emph{low-degree likelihood ratio} framework of \cite{HS17}.
We outline a possible approach in \pref{sec:kSAT-hardness-approach} to construct a planted distribution for readers interested in this problem.

\paragraph{Properties of arbitrary CSP instances on random hypergraphs.}  In the context of approximate $\kXOR$, our certification algorithms for solution counts and cluster counts depend only on the hypergraph structure and not the random negations. Hence, they also prove nontrivial statements about the solution space of any $\XOR$ instance on a random hypergraph, which are potentially useful in the context of quiet planting or semi-random models of CSPs.  However, our certification algorithms for other CSPs, such as $\kSAT$, heavily make use of the random signings in the reduction to $\kXOR$.
\begin{question}
    Can all the results related to certifying bounds on number of solutions/clusters in this work for random $\kSAT$ instances be generalized to arbitrary $\kSAT$ instances on random hypergraphs?
\end{question}

\section{Preliminaries}





\subsection{Graph theory}

Given a graph $G$, we use $V(G)$ to denote its vertex set, $E(G)$ to denote its edge set, and $\deg_G(u)$ to denote the degree of a vertex $u$.
For $S\subseteq V(G)$ and $T\subseteq V(G)$, we use $e(S,T)$ to denote the number of tuples $(u,v)$ such that $u\in S$, $v\in T$ and $\{u,v\}\in E(G)$.  

We will be concerned with its \emph{normalized Laplacian matrix}, denoted $L_G$, defined as:
\[
	L_G \coloneqq \Id - D_G^{-1/2}A_G D_G^{-1/2},
\]
where $A_G$ is the adjacency matrix of $G$ and $D_G$ is the diagonal matrix of vertex degrees in $G$.  Since $L_G$ is a self-adjoint matrix, it has $n$ real eigenvalues, which we sort in increasing order and denote as:
\[
	0 = \lambda_1(G) \le \lambda_2(G) \le \dots \le \lambda_n(G).
\]
Of particular interest to us is $\lambda_2(G)$, which we call the \emph{spectral gap}.

A combinatorial quantity we will be concerned with is the \emph{conductance} of $G$.  For a subset $S\subseteq V$, we define the \emph{volume} of $S$ as $\vol(S)\coloneqq \sum_{u\in S}\deg(u)$ and let $\phi_G(S)\coloneqq \frac{e(S,\ol{S})}{\vol(S)}$.  The conductance of $G$ is then defined as:
\[
	\phi_G \coloneqq \min_{\substack{S\subseteq V(G) \\ \vol(S)\le \vol(V)/2 }} \phi_G(S).
\]
The well-known Cheeger's inequality on graphs, first proved in \cite{AM85}, relates the conductance and the spectral gap.  We refer the reader to \cite{T17} for a good exposition of the proof.
\begin{theorem}[Cheeger's inequality]
\label{thm:cheeger}
	For any graph $G$,
	\[
		\frac{\lambda_2(G)}{2} \le \phi_G \le \sqrt{2\lambda_2(G)}.
	\]
\end{theorem}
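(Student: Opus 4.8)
The plan is to prove the two inequalities separately, using the variational characterization of the spectral gap. Substituting $g = D_G^{-1/2} f$ into the Rayleigh quotient for $L_G$ turns the definition of the second eigenvalue into
\[
\lambda_2(G) \;=\; \min_{\substack{g\in\R^{V(G)},\ g\neq 0 \\ \sum_{u}\deg_G(u)\,g_u = 0}} \frac{\sum_{\{u,v\}\in E(G)} (g_u - g_v)^2}{\sum_{u}\deg_G(u)\,g_u^2},
\]
i.e.\ the minimum over nonzero vectors orthogonal, in the degree-weighted inner product, to the all-ones vector $\mathbf 1$. I will work with this form throughout, and I may assume $\lambda_2(G) > 0$: otherwise $G$ is disconnected, $\phi_G = 0$, and both inequalities hold trivially.

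For the easy bound $\lambda_2(G)/2 \le \phi_G$, I would fix any $S$ with $\vol(S)\le\vol(V(G))/2$ and plug in the test vector $g$ equal to $1/\vol(S)$ on $S$ and $-1/\vol(\ol S)$ on $\ol S$. This $g$ satisfies $\sum_u\deg_G(u)g_u = 0$; its denominator is exactly $\tfrac{1}{\vol(S)}+\tfrac{1}{\vol(\ol S)}$, and since cross edges are the only ones contributing, its numerator is $e(S,\ol S)\bigl(\tfrac{1}{\vol(S)}+\tfrac{1}{\vol(\ol S)}\bigr)^2$. Hence the Rayleigh quotient equals $e(S,\ol S)\bigl(\tfrac{1}{\vol(S)}+\tfrac{1}{\vol(\ol S)}\bigr) \le 2\,e(S,\ol S)/\vol(S) = 2\phi_G(S)$, using $\vol(\ol S)\ge\vol(S)$. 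Minimizing over $S$ gives $\lambda_2(G)\le 2\phi_G$.

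For the harder bound $\phi_G\le\sqrt{2\lambda_2(G)}$, I would run the standard sweep-cut (Cheeger rounding) argument in three steps. \emph{(1) Volume-median shift.} Take a minimizer $g$ above and choose $c\in\R$ so that $\tilde g := g - c\mathbf 1$ has both $\vol(\{u:\tilde g_u>0\})\le\vol(V(G))/2$ and $\vol(\{u:\tilde g_u<0\})\le\vol(V(G))/2$. Since $\sum_u\deg_G(u)g_u=0$, shifting leaves the numerator unchanged and increases the denominator by $c^2\vol(V(G))$, so $\tilde g$ still has Rayleigh quotient $\le\lambda_2(G)$. \emph{(2) Split into positive/negative parts.} Writing $\tilde g = \tilde g_+ - \tilde g_-$, an edge-by-edge check (the only interesting case is a cross-sign edge, where $(a-b)^2 \ge a^2+b^2$ for $a,b\ge 0$) gives $\sum_{\{u,v\}}(\tilde g_u-\tilde g_v)^2 \ge \sum_{\{u,v\}}((\tilde g_+)_u-(\tilde g_+)_v)^2 + \sum_{\{u,v\}}((\tilde g_-)_u-(\tilde g_-)_v)^2$, while the denominators add exactly; a mediant inequality then yields a nonzero nonnegative vector $h\in\{\tilde g_+,\tilde g_-\}$, supported on a set of volume $\le\vol(V(G))/2$, with $\sum_{\{u,v\}}(h_u-h_v)^2 \le \lambda_2(G)\sum_u\deg_G(u)h_u^2$. \emph{(3) Threshold rounding.} For $\tau$ uniform on $[0,\max_u h_u^2]$ set $S_\tau := \{u : h_u^2 > \tau\}$; each such set sits inside the support of $h$, so $\vol(S_\tau)\le\vol(V(G))/2$. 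One computes $\mathbb E[e(S_\tau,\ol{S_\tau})] \propto \sum_{\{u,v\}}|h_u^2-h_v^2|$ and $\mathbb E[\vol(S_\tau)] \propto \sum_u\deg_G(u)h_u^2$, and then factoring $|h_u^2-h_v^2| = |h_u-h_v|\,|h_u+h_v|$ and applying Cauchy--Schwarz together with $\sum_{\{u,v\}}(h_u+h_v)^2 \le 2\sum_u\deg_G(u)h_u^2$ gives $\mathbb E[e(S_\tau,\ol{S_\tau})] \le \sqrt{2\lambda_2(G)}\cdot\mathbb E[\vol(S_\tau)]$. Averaging produces a single threshold with $\phi_G(S_\tau)\le\sqrt{2\lambda_2(G)}$, proving the claim.

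The main obstacle is this hard direction, and within it the two non-routine maneuvers are (i) the volume-median shift, which is exactly what makes all the final sweep cuts admissible (volume $\le \vol(V(G))/2$) in the definition of $\phi_G$ without damaging the Rayleigh quotient, and (ii) tracking the constant through the Cauchy--Schwarz step so the final bound is precisely $\sqrt{2\lambda_2(G)}$ rather than something weaker. Everything else is bookkeeping; alternatively one may simply invoke the standard references \cite{AM85,T17}.
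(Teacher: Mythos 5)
The paper does not actually prove this theorem---it is imported from \cite{AM85}, with \cite{T17} cited for an exposition---and your proposal is exactly that standard proof: the two-value test vector for $\lambda_2(G)\le 2\phi_G$, and the median-shift, positive/negative-part split, and threshold (sweep-cut) rounding with Cauchy--Schwarz for $\phi_G\le\sqrt{2\lambda_2(G)}$, with the constants tracked correctly. It is correct as written, modulo the harmless slip in step (2): for a cross-sign edge the inequality you want is $(a+b)^2\ge a^2+b^2$ with $a=\tilde g_u\ge 0$ and $b=-\tilde g_v\ge 0$ (your $(a-b)^2$ evidently takes $b=\tilde g_v\le 0$).
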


It is well-known that dense \ER random graphs have large spectral gaps (cf. \cite{CO07,HKP19}).
\begin{theorem}[{\cite[Theorem 1.1]{HKP19}}]
\label{thm:spectral_gap}
    Let $\bG$ be an \ER random graph with $p = \omega\left(\frac{\log n}{n}\right)$, and let $d = p(n-1)$ denote the average degree.
    Then, there is a constant $C$ such that
    \[
        \lambda_2(\bG) \geq 1- \frac{C}{\sqrt{d}}
    \]
    with probability at least $1 - Cn \exp(-d) - C\exp(-d^{1/4}\log n)$.
\end{theorem}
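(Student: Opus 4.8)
The plan is to pass from the normalized Laplacian to a generalized Rayleigh quotient, and then reduce the whole statement to two standard facts about $\bG\sim G(n,p)$: concentration of the minimum degree, and the operator-norm bound $\|A_{\bG}-\mathbb{E}[A_{\bG}]\|\le C\sqrt{d}$. Write $A\coloneqq A_{\bG}$, $D\coloneqq D_{\bG}$, and let $\deg\in\R^n$ be the degree vector, so $D\mathbf{1}=\deg$. Since $L_{\bG}=\Id-D^{-1/2}AD^{-1/2}$ is positive semidefinite, $D^{-1/2}AD^{-1/2}$ has largest eigenvalue equal to $1$, attained at the vector $D^{1/2}\mathbf{1}$. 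By the Courant--Fischer theorem applied to $D^{-1/2}AD^{-1/2}$ together with the substitution $v=D^{1/2}u$,
\[
    \lambda_2(\bG)\;=\;1-\max_{\substack{u\in\R^n\setminus\{0\}\\ \langle u,\deg\rangle=0}}\frac{u^\top Au}{u^\top Du}\,,
\]
so it suffices to certify that the maximum on the right is $O(1/\sqrt{d})$ with the claimed probability.

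Next I would split $A$ into its mean $\mathbb{E}[A]=p(J-\Id)$ and the fluctuation $M\coloneqq A-\mathbb{E}[A]$; note $M\mathbf{1}=A\mathbf{1}-\mathbb{E}[A]\mathbf{1}=\deg-d\,\mathbf{1}$. For any admissible $u$ (i.e.\ $\langle u,\deg\rangle=0$),
\[
    u^\top Au \;=\; p\Big(\textstyle\sum_i u_i\Big)^2 - p\|u\|^2 + u^\top Mu \;\le\; p\Big(\textstyle\sum_i u_i\Big)^2 + u^\top Mu\,.
\]
The term $u^\top Mu$ is bounded by $\|M\|\,\|u\|^2\le C\sqrt{d}\,\|u\|^2$. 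For the first term, the constraint gives $\sum_i u_i=-\tfrac1d\langle u,\deg-d\mathbf{1}\rangle=-\tfrac1d\langle u,M\mathbf{1}\rangle$, whence $\big|\sum_i u_i\big|\le\tfrac1d\|M\|\sqrt{n}\,\|u\|\le\tfrac{C\sqrt{dn}}{d}\|u\|$, and therefore $p\big(\sum_i u_i\big)^2\le\tfrac{pn}{d}\,C^2\|u\|^2=O(\|u\|^2)$. Finally $u^\top Du=\sum_i\deg_i u_i^2\ge(\min_i\deg_i)\,\|u\|^2\ge\tfrac d2\|u\|^2$ on the event that the minimum degree is at least $d/2$. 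Combining,
\[
    \max_{u}\frac{u^\top Au}{u^\top Du}\;\le\;\frac{C\sqrt{d}\,\|u\|^2+O(\|u\|^2)}{(d/2)\,\|u\|^2}\;=\;O\!\left(\frac1{\sqrt{d}}\right),
\]
which is exactly the desired bound. The point worth noting is that the ``bad'' term $p(\sum_i u_i)^2$ only ever contributes an $O(1/d)$ correction --- the degree normalization in the denominator absorbs it --- so the rate $1/\sqrt{d}$ is governed entirely by $\|M\|$.

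It remains to supply the two probabilistic ingredients with the stated failure probabilities. Each $\deg_i\sim\Binom(n-1,p)$ has mean $d$, so a Chernoff bound gives $\Pr[\deg_i\le d/2]\le\exp(-\Omega(d))$, and a union bound over $i\in[n]$ shows $\min_i\deg_i\ge d/2$ except with probability $n\exp(-\Omega(d))$, of the form in the statement using $d=\omega(\log n)$. The bound $\|A-\mathbb{E}[A]\|\le C\sqrt{d}$ for a dense \ER graph with $d=\omega(\log n)$ is the standard operator-norm estimate for the centered adjacency matrix: it can be obtained by an $\eps$-net argument on the unit sphere combined with Bernstein's inequality (with a mild truncation of atypically large entries/rows), or cited from the literature on spectral norms of random matrices, and it holds except with probability $\le C\exp(-d^{1/4}\log n)$. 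A union bound over the two events gives the probability claimed in the statement.

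The main obstacle is precisely the operator-norm bound $\|A-\mathbb{E}[A]\|\le C\sqrt{d}$: everything else is elementary linear algebra and a Chernoff bound, but obtaining the \emph{sharp} $\sqrt{d}$ (rather than $\sqrt{d\log n}$, which would only yield $\lambda_2(\bG)\ge 1-O(\sqrt{\log n/d})$) together with the exponential-type tail requires the more delicate net/truncation analysis, and is where the hypothesis $p=\omega(\log n/n)$ is really used.
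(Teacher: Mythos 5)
This statement is imported verbatim from \cite[Theorem 1.1]{HKP19}; the paper gives no proof of it, so there is no internal argument to compare against. Your sketch follows the standard route to such bounds, and its deterministic part is correct: passing to the generalized Rayleigh quotient under the constraint $\langle u,\deg\rangle=0$, splitting $A=\E[A]+M$, noting that the constraint forces $\left|\sum_i u_i\right|\le \frac{1}{d}\|M\|\sqrt{n}\,\|u\|$ so the rank-one term contributes only $O(\|u\|^2)$, and absorbing everything with the min-degree lower bound on $u^\top D u$ indeed yields $\lambda_2(\bG)\ge 1-O(1/\sqrt{d})$ on the intersection of your two events. Two caveats. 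First, essentially all of the difficulty of the theorem sits in the operator-norm bound $\|A-\E[A]\|\le C\sqrt{d}$ with an $\exp(-d^{1/4}\log n)$-type tail; for $p=\omega(\log n/n)$ this is genuinely nontrivial (the content of Feige--Ofek/Vu/Lei--Rinaldo/Bandeira--van Handel-style arguments), and your sketch asserts it rather than proves it, so what you have is a correct reduction to that ingredient rather than a self-contained proof --- which is acceptable here, since the paper itself treats the whole statement as a black box. Second, your failure probabilities do not literally match the statement: a Chernoff bound for $\min_i \deg_i\ge d/2$ gives $n\exp(-d/8)$, not $Cn\exp(-d)$ (no constant $C$ repairs the smaller exponent), and the $C\exp(-d^{1/4}\log n)$ term is quoted, not derived. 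Since the paper only uses the theorem as a high-probability event (failure probability $\exp(-\Omega(\Delta^{1/4}))$ suffices in \pref{thm:2xor_upper_bound}), this mismatch is harmless downstream, but as a proof of the theorem exactly as stated your sketch delivers a slightly weaker tail than advertised.
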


Closely related to Cheeger's inequality is the \textit{expander mixing lemma}, which roughly states that the edges of an expander graph are \textit{well distributed}.
Here, we consider the adjacency matrix $A$ and the ``de-meaned'' matrix $\meanA \coloneqq A - \frac{d}{n}J$, where $J$ is the all-ones matrix.
We include a short proof for completeness.

\begin{theorem}[Expander Mixing Lemma \cite{AC88}]
\label{thm:expander_mixing_lemma}
	Let $G$ be a graph with $n$ vertices and average degree $d$, and let $\meanA$ be the de-meaned adjacency matrix.
	Then, for any $S,T \subseteq V(G)$,
	\begin{equation*}
		\left|e(S,T) - \frac{d}{n} |S|\cdot |T| \right| \leq \|\meanA\| \sqrt{|S| \cdot |T|}.
	\end{equation*}
\end{theorem}

\begin{proof}
    Let $1_S, 1_T\in \zo^{n}$ be the indicator vectors for subsets $S,T$.
    Clearly, we have $1_S^\top A 1_T = e(S,T)$ and $1_S^\top J 1_T = |S|\cdot |T|$.
    Moreover, $\|1_S\|_2 = \sqrt{|S|}$ and $\|1_T\|_2 = \sqrt{|T|}$.
    Thus,
    \begin{equation*}
    \begin{gathered}
        e(S,T) - \frac{d}{n}|S|\cdot |T| = 1_S^\top \left(A - \frac{d}{n}J\right) 1_T \\
        \Rightarrow \left| e(S,T) - \frac{d}{n}|S|\cdot |T| \right| \leq \left\| \meanA \right\| \cdot \|1_S\|_2 \cdot \|1_T\|_2.
    \end{gathered}
    \qedhere
    \end{equation*}
\end{proof}

\subsection{Fourier analysis of Boolean functions}
We refer the reader to \cite{O14} for an elaborate treatment of the subject.  The functions $\{\prod_{i\in \T}x_i\}_{\T\subseteq[k]}$ form an orthogonal basis for the space of functions from $\{\pm1\}^k$ to $\R$, and hence any function $f:\{\pm1\}^k\to\R$ can be expressed as a multilinear polynomial:
\[
	f(x) = \sum_{\T\subseteq[n]} \wh{f}(\T)\prod_{i\in \T}x_i.
\]
Further, the coefficients $\wh{f}(\T)$, which are called \emph{Fourier coefficients}, can be obtained via the formula:
\[
	\wh{f}(\T) = \E_{\bz\sim\{\pm 1\}^k}\left[f(\bz)\prod_{i\in \T}\bz_i\right].
\]
A key property, called \emph{Plancherel's theorem}, is the following:
\begin{fact}	\label{fact:plancherel}
	Let $f,g$ be functions from $\{\pm 1\}^k$ to $\R$.  Then:
	\[
		\langle f, g\rangle \coloneqq \E_{\bz\sim\{\pm 1\}^k} f(\bz)g(\bz) = \sum_{\T\subseteq[k]}\wh{f}(\T)\wh{g}(\T).
	\]
\end{fact}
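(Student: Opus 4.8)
The plan is to prove Plancherel's theorem by a direct computation that reduces everything to the orthonormality of the parity functions under the uniform measure on $\{\pm1\}^k$. Write $\chi_\T(x)\coloneqq\prod_{i\in\T}x_i$ for $\T\subseteq[k]$, so that by the Fourier expansion recalled above, $f=\sum_{\T\subseteq[k]}\wh{f}(\T)\,\chi_\T$ and $g=\sum_{\T'\subseteq[k]}\wh{g}(\T')\,\chi_{\T'}$. The whole statement will then fall out of expanding the product $f\cdot g$ and taking expectations term by term.

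The key step is to establish the identity $\E_{\bz\sim\{\pm1\}^k}[\chi_\T(\bz)\chi_{\T'}(\bz)]=1$ when $\T=\T'$ and $0$ otherwise. To see this, I would use that $\bz_i^2=1$ for every coordinate, which gives $\chi_\T(\bz)\chi_{\T'}(\bz)=\chi_{\T\triangle\T'}(\bz)$, where $\T\triangle\T'$ is the symmetric difference. Since the coordinates $\bz_i$ are independent and uniform on $\{\pm1\}$, for any $R\subseteq[k]$ we have $\E[\chi_R(\bz)]=\prod_{i\in R}\E[\bz_i]$, which equals $1$ if $R=\emptyset$ and $0$ otherwise because $\E[\bz_i]=0$. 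Applying this with $R=\T\triangle\T'$ and noting $\T\triangle\T'=\emptyset$ iff $\T=\T'$ yields the claimed identity.

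Finally, by linearity of expectation,
\[
    \langle f,g\rangle = \E_{\bz}\Big[\Big(\sum_{\T}\wh{f}(\T)\chi_\T(\bz)\Big)\Big(\sum_{\T'}\wh{g}(\T')\chi_{\T'}(\bz)\Big)\Big] = \sum_{\T,\T'\subseteq[k]}\wh{f}(\T)\wh{g}(\T')\,\E_{\bz}[\chi_\T(\bz)\chi_{\T'}(\bz)] = \sum_{\T\subseteq[k]}\wh{f}(\T)\wh{g}(\T),
\]
which is exactly the assertion of the Fact. There is no real obstacle here: this is essentially a restatement of the orthonormality of the Fourier basis $\{\chi_\T\}_{\T\subseteq[k]}$ already invoked above, and the only thing one must be slightly careful about is the symmetric-difference bookkeeping, which is immediate from $\bz_i^2=1$ together with independence of the coordinates.
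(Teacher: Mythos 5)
Your proof is correct: the expansion of $f$ and $g$ in the basis $\{\chi_\T\}_{\T\subseteq[k]}$, the symmetric-difference identity $\chi_\T\chi_{\T'}=\chi_{\T\triangle\T'}$, and the vanishing of $\E[\chi_R(\bz)]$ for $R\neq\emptyset$ by independence together give exactly the standard orthonormality argument. The paper states this fact without proof (deferring to the cited reference on Boolean function analysis), and your argument is precisely the canonical derivation one would supply there, so there is nothing to add.
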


Given a probability distribution $\calD$ on $\{\pm 1\}^k$, following the notation of \cite{AOW15}, we use $D$ to denote the function equal to its probability density function multiplied by $2^k$.  This leads to the notational convenience that for any function $f:\{\pm1\}^k\to\R$,
\[
	\E_{\bz\sim\calD}f(\bz) = \E_{\bz\sim\{\pm 1\}^k} f(\bz) D(\bz).
\]

\subsection{Random hypergraphs and CSPs}	\label{sec:random-hypergraphs}
In this work, we will deal with random CSPs defined as signed $k$-uniform hypergraphs.
\begin{definition}[{Signed $k$-uniform hypergraph}]
	A \emph{signed $k$-uniform hypergraph} $\calI$ on universe $[n]$ is a collection of pairs $\{(c,\cls)\}$ where for each $(c,\cls)$, $c\in\{\pm1\}^k$ and $\cls$ is in $[n]^k$.  Each pair $(c,\cls)\in\calI$ is also called a \emph{hyperedge} in $\calI$.
\end{definition}

The distribution over signed $k$-uniform hypergraphs we work with is:
\begin{definition}  \label{def:signed_hypergraph}
	We use $\HDist{k}{m}{n}$ to denote the distribution on signed $k$-uniform hypergraphs on universe $[n]$ where a sample $\bInst\sim\HDist{k}{m}{n}$ is obtained by independently including each of the $2^kn^k$ potential hyperedges with probability $\frac{m}{2^kn^k}$.
	Moreover, we use $\UnsignedH{k}{m}{n}$ to denote the distribution of (unsigned) $k$-uniform hypergraphs, which is the same as $\HDist{k}{m}{n}$ but with the signs removed.
\end{definition}

Given a tuple $\cls\in[n]^k$ and $x\in\{\pm1\}^n$, we use $x_\cls$ to denote the tuple $(x_{\cls[1]}, \dots, x_{\cls[k]})$.  And for a pair of $k$-tuples $a,b$ we use $a\circ b$ to denote the entrywise product of the tuples $(a_1\cdot b_1, \dots, a_k\cdot b_k)$.

\begin{definition}
	A \emph{constraint satisfaction problem instance} (CSP instance) on variable set $[n]$ is a signed hypergraph $\calI$ along with a function $P:\{\pm1\}^k\to\R$, called a \emph{predicate}.  Given $x\in\{\pm 1\}^n$, we use $P_{\calI}(x)$ to denote the \emph{objective value} on $x$:
	\[
		P_{\calI}(x) \coloneqq \frac{1}{|\calI|} \sum_{(c,\cls)\in\calI} P(c\circ x_\cls).
	\]
\end{definition}

\begin{definition}
	Given a signed $k$-uniform hypergraph $\calI$ on $[n]$ and any $x\in\{\pm 1\}^n$, we define $\calD_{\calI,x}$ to be the distribution on $\{\pm1\}^k$ with density (scaled by $2^k$):
	\[
		D_{\calI,x}(z) \coloneqq \frac{2^k}{|\calI|}\cdot\left|\{(c,\cls)\in\calI | c\circ x_\cls = z \}\right|.
	\]
\end{definition}

Note that $\wh{D_{\bInst,x}}(\varnothing) = \E_{\bz\sim \{\pm1\}^k}[D_{\Inst,x}(\bz)] = 1$ for all $x\in \{\pm 1\}^n$.  A simple but important observation is the following.
\begin{observation}	\label{obs:obj-val}
	If $\calI$ is a signed $k$-uniform hypergraph and $P:\{\pm1\}^k\to\R$ is some predicate, then:
	\begin{align*}
		P_{\calI}(x) &= \frac{1}{|\calI|}\sum_{(c,\cls)\in\calI} P(c\circ x_\cls) \\
		&= \E_{\bz\sim\{\pm1\}^k}\left[P(\bz)D_{\calI,x}(\bz)\right] \\
		&= \sum_{\T\subseteq[k]} \wh{P}(\T) \wh{D_{\calI,x}}(\T).
	\end{align*}
\end{observation}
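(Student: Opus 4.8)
The plan is to unpack the definitions and invoke Plancherel's theorem; nothing deeper is needed. The first equality in the statement is simply the definition of the objective value $P_{\calI}(x)$, so there is nothing to prove there.

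For the second equality, I would rewrite the expectation over the uniform cube as a normalized sum, $\E_{\bz\sim\{\pm1\}^k}[P(\bz)D_{\calI,x}(\bz)] = \frac{1}{2^k}\sum_{z\in\{\pm1\}^k}P(z)D_{\calI,x}(z)$, and then substitute the definition $D_{\calI,x}(z) = \frac{2^k}{|\calI|}\bigl|\{(c,\cls)\in\calI : c\circ x_\cls = z\}\bigr|$. The factor $2^{-k}$ cancels, leaving $\frac{1}{|\calI|}\sum_{z}P(z)\bigl|\{(c,\cls)\in\calI : c\circ x_\cls = z\}\bigr|$. The last move is a bookkeeping step: partition the hyperedge multiset $\calI$ according to the value $c\circ x_\cls\in\{\pm1\}^k$, so that summing $P(z)$ weighted by the number of hyperedges mapping to $z$ is exactly the same as summing $P(c\circ x_\cls)$ directly over all $(c,\cls)\in\calI$. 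This recovers $\frac{1}{|\calI|}\sum_{(c,\cls)\in\calI}P(c\circ x_\cls) = P_{\calI}(x)$.

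For the third equality, I would apply Plancherel's theorem (\pref{fact:plancherel}) with $f = P$ and $g = D_{\calI,x}$, giving $\langle P, D_{\calI,x}\rangle = \sum_{\T\subseteq[k]}\wh{P}(\T)\,\wh{D_{\calI,x}}(\T)$; and by definition of the inner product, $\langle P, D_{\calI,x}\rangle = \E_{\bz\sim\{\pm1\}^k}[P(\bz)D_{\calI,x}(\bz)]$, which is precisely the middle expression. Here I am using that $D_{\calI,x}$ is a genuine function $\{\pm1\}^k\to\R$ (namely $2^k$ times a probability density on a finite set), so its Fourier expansion and Fourier coefficients are well defined.

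There is no substantive obstacle here: the claim is immediate from the definitions of $P_{\calI}$, $\calD_{\calI,x}$, and Plancherel. The only place meriting a moment of care is the grouping step in the second equality — recognizing that collecting hyperedges by the common value of $c\circ x_\cls$ converts $\sum_{(c,\cls)}P(c\circ x_\cls)$ into $\sum_z P(z)\cdot\#\{(c,\cls)\in\calI : c\circ x_\cls = z\}$ — and this is entirely routine.
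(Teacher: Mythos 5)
Your proposal is correct and is exactly the argument the paper intends: the paper treats this as immediate from the definition of $D_{\calI,x}$ (grouping hyperedges by the value of $c\circ x_\cls$, with the $2^k$ normalization cancelling) together with Plancherel's theorem (\pref{fact:plancherel}), which is why it is stated as an observation without a separate proof. Nothing is missing from your write-up.
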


\begin{definition}[Biased assignment]
	We say $x\in\{\pm1\}^n$ is \emph{$\eta$-biased} if $\frac{1}{n}\left|\sum_{i=1}^n x_i\right| \ge \eta$.
\end{definition}

Next, we focus on $\kXOR$.
Given a pair $(c,\cls)\in \calI$, $P(c\circ x_\cls) = 1$ if and only if $\prod_{i=1}^k x_{\cls[i]} = \prod_{i=1}^k c_i$, thus in the case of $\kXOR$ we will also write $(b,\cls)$ as a constraint (or clause) of $\calI$, where $b = \prod_{i=1}^k c_i \in \{\pm1\}$ is the \textit{signing} of the constraint.  We call a constraint $(b,\cls)$ \emph{positive} if $b=+1$ and \emph{negative} if $b=-1$.  We say an instance $\calI$ is \emph{$p$-positive} if the fraction of its constraints that are positive is at most $p$.
Moreover, we will need the notion of \textit{induced $\XOR$} and \textit{truncated $\XOR$}, which we define below.

\begin{definition}[{Induced $\tXOR$ and truncated $(k-t)\XOR$}]
	\label{def:induced_txor}
    Given a $\kXOR$ instance $\Inst$, an integer $1 \leq t \leq k-1$, a subset of variables $S\subseteq [n]$, and an assignment $\sigma\in \{\pm1 \}^S$, we define the \textit{induced $\tXOR$ instance} $\Inst_{S,\sigma,t}$ on variables $\ol{S}$ as follows:
    for each clause $(b,\cls) \in \Inst$ where all variables in $\cls[1:k-t]$ are in $S$ and all variables in $\cls[k-t+1:k]$ are in $\ol{S}$, add a $\tXOR$ clause $(b', \cls[k-t+1:k])$ where $b' = b \cdot \prod_{i=1}^{k-t}\sigma_i$. 
	Similarly, we define the \emph{truncated $(k-t)\XOR$ instance} $\TruncInst{S}{k-t}$ on variables $S$ as follows: for each clause $(b,\cls)\in\Inst$ where all variables in $\cls[1:k-t]$ are in $S$ and all variables in $\cls[k-t+1:k]$ are in $\ol{S}$, add a $(k-t)\XOR$ clause $(b,\cls[1:k-t])$.

\end{definition}

For example, consider a $4\XOR$ instance and a constraint $x_a x_b x_c x_d = +1$.
Suppose $a, b\in S$, $c, d \notin S$, and $\sigma_a = +1, \sigma_b = -1$.
Then for $t=2$, the induced instance $\Inst_{S,\sigma,2}$ will have a constraint $x_c x_d = -1$, and the truncated instance $\TruncInst{S}{2}$ will have a constraint $x_a x_b = +1$.
Note that the truncated instance does not depend on the assignments to $S$.

\begin{observation} \label{obs:trunc-vs-induced}
	Given a $\kXOR$ instance $\Inst$, subset $S\subseteq [n]$ and $\sigma\in \{\pm1\}^S$, the following are useful relationships between the induced $\tXOR$ instance $\Inst_{S,\sigma,t}$ and the truncated $(k-t)\XOR$ instance $\TruncInst{S}{k-t}$:
	\begin{enumerate}
		\item $\left|\Inst_{S,\sigma,t}\right| = \left|\TruncInst{S}{k-t}\right|$.
		\item $\displaystyle \sum_{(b,U)\in \Inst_{S,\sigma,t}} b = \sum_{(b,U)\in\TruncInst{S}{k-t}} b\prod_{i\in U}\sigma_i$.
	\end{enumerate}	
\end{observation}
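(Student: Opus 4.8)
The plan is to observe that both $\Inst_{S,\sigma,t}$ and $\TruncInst{S}{k-t}$ are assembled from the \emph{same} underlying collection of clauses of $\Inst$, and that each such clause contributes exactly one clause to each of the two instances; the two identities then fall out of a clause-by-clause comparison along this correspondence. Concretely, I would first introduce the collection $\mathcal{C}\subseteq\Inst$ of ``contributing'' clauses, namely those $(b,\cls)$ with every variable of $\cls[1:k-t]$ in $S$ and every variable of $\cls[k-t+1:k]$ in $\ol{S}$. By \pref{def:induced_txor}, $\Inst_{S,\sigma,t}$ is obtained by replacing each $(b,\cls)\in\mathcal{C}$ with the single $t\XOR$ clause $(b', \cls[k-t+1:k])$, where $b' = b\prod_{i\in U}\sigma_i$ and $U = \cls[1:k-t]$, while $\TruncInst{S}{k-t}$ is obtained by replacing each $(b,\cls)\in\mathcal{C}$ with the single $(k-t)\XOR$ clause $(b, U)$ using the same $U = \cls[1:k-t]$. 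Since both replacements are one-clause-in/one-clause-out, the clauses of $\Inst_{S,\sigma,t}$, the clauses of $\mathcal{C}$, and the clauses of $\TruncInst{S}{k-t}$ are all in bijection; in particular $\left|\Inst_{S,\sigma,t}\right| = |\mathcal{C}| = \left|\TruncInst{S}{k-t}\right|$, which is part~1.

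For part~2, I would track a single contributing clause $(b,\cls)\in\mathcal{C}$ through the correspondence. Its image in $\Inst_{S,\sigma,t}$ is $(b', U)$ with $b' = b\prod_{i\in U}\sigma_i$, so it contributes $b\prod_{i\in U}\sigma_i$ to the left-hand side $\sum_{(b,U)\in\Inst_{S,\sigma,t}} b$. Its image in $\TruncInst{S}{k-t}$ is $(b, U)$, so it contributes exactly $b\prod_{i\in U}\sigma_i$ to the right-hand side $\sum_{(b,U)\in\TruncInst{S}{k-t}} b\prod_{i\in U}\sigma_i$. The two contributions agree for every clause, and summing over $\mathcal{C}$ yields the claimed equality.

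I do not expect a genuine obstacle here; the lemma is pure bookkeeping, and the heart of it is just the identity $b' = b\prod_{i\in U}\sigma_i$ built into \pref{def:induced_txor}. The one point deserving mild care is whether the instances are read as sets or as multisets of clauses --- two distinct contributing clauses can collapse to the same signed clause, or even to the same clause with opposite sign. Reading them as multisets, consistent with the word ``collection'' in \pref{def:signed_hypergraph}, keeps the correspondence literal and makes both identities immediate; under a set convention one would instead phrase part~1 as an equality of multiset sizes, but the sums in part~2 are unchanged either way, so nothing of substance is affected.
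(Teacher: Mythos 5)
Your proof is correct and is exactly the clause-by-clause bookkeeping that the paper treats as immediate from \pref{def:induced_txor} (the paper states the observation without a written proof). Your remark about reading the instances as multisets is the right convention and consistent with the paper's usage, so nothing further is needed.
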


\begin{definition}[Induced hypergraph]
	\label{def:induced_hypergraph}
    Given a $\kXOR$ instance $\Inst$, an integer $1 \leq t \leq k-1$, and a subset of variables $S\subseteq [n]$, we define the \textit{induced $t$-uniform hypergraph} $H_{S,t}$ on $n-|S|$ variables as the underlying $t$-uniform hypergraph of the induced $\tXOR$ of $S$.
    Note that the hypergraph only depends on $S,t$ and not the assignment $\sigma$ to $S$.
\end{definition}

For simplicity, we denote $\Inst_{S,\sigma}$ to be $\Inst_{S,\sigma,k-1}$ and $H_{S}$ to be $H_{S,k-1}$.

\begin{definition}[Primal graph]
	\label{def:primal_graph}
	Given a hypergraph $H$, we define the \textit{primal graph} of $H$ on the same vertex set, denoted $\primalH$, as follows: for every hyperedge $e\in H$ and every pair $(u,v)\in e$, add $(u,v)$ to $\primalH$.
	Parallel edges are allowed.
\end{definition}

\subsection{Refuting random CSPs}
We will need the refutation algorithm for random CSPs of \cite{AOW15}.  A crucial notion in \cite{AOW15} is that of approximate $t$-wise uniformity.
\begin{definition}[{$(\eps,t)$-wise uniform}] \label{def:eps_t_wise_uniform}
	A distribution $\calD$ is $(\eps,t)$-wise uniform if for all $\T\subseteq[k]$ such that $0 < |\T|\le t$, $|\wh{D}(\T)|\le\eps$.
\end{definition}

\begin{definition} \label{def:quasirandom}
	We say a signed $k$-uniform hypergraph $\calI$ on $[n]$ is $(\eps,t)$-quasirandom if for every $x\in\{\pm1\}^n$ the distribution $\calD_{\calI,x}$ is $(\eps,t)$-wise uniform.
\end{definition}

Now we are ready to state the key statement we use from \cite{AOW15}.
\begin{theorem}	\label{thm:aow-ref}
	Let $\bInst\sim\HDist{k}{m}{n}$ with $m \ge \alpha n^{t/2}$.
	Then there is an efficient algorithm that with probability $1-o(1)$ certifies that $\bInst$ is $\left(\frac{2^{O(t)}\log^{5/2}n}{\sqrt{\alpha}}, t\right)$-quasirandom.
\end{theorem}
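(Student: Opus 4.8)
The statement to prove is Theorem~\ref{thm:aow-ref}: for $\bInst\sim\HDist{k}{m}{n}$ with $m \ge \alpha n^{t/2}$, there is an efficient algorithm certifying with probability $1-o(1)$ that $\bInst$ is $(\eps,t)$-quasirandom for $\eps = \frac{2^{O(t)}\log^{5/2}n}{\sqrt{\alpha}}$.

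\medskip

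\textbf{Proof proposal.}

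The plan is to reduce, via Fourier expansion, the task of bounding $|\wh{D_{\bInst,x}}(\T)|$ uniformly over $x\in\{\pm1\}^n$ and all nonempty $\T\subseteq[k]$ with $|\T|\le t$ to the task of (strongly) refuting a family of random $|\T|$-XOR instances, one for each such $\T$; there are only $2^{O(k)} = O(1)$ such subsets, so a union bound over them is affordable. First I would write, for a fixed nonempty $\T = \{i_1,\dots,i_s\}$ with $s \le t$,
\[
	\wh{D_{\bInst,x}}(\T) = \E_{\bz\sim\{\pm1\}^k}\!\Big[D_{\bInst,x}(\bz)\prod_{j\in\T}\bz_j\Big] = \frac{1}{|\bInst|}\sum_{(c,\cls)\in\bInst} \prod_{j\in\T} c_j\, x_{\cls[j]},
\]
using the definition of $D_{\bInst,x}$. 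This is exactly $\frac{|\bInst_\T|}{|\bInst|}$ times the objective value (as an $s$-XOR) of the instance $\bInst_\T$ obtained by projecting each hyperedge $(c,\cls)\in\bInst$ onto its coordinates in $\T$, with signing $\prod_{j\in\T}c_j$. The key randomness observation is that $\bInst_\T$ is itself a random signed $s$-uniform hypergraph on $[n]$ with $m$ hyperedges (each projected hyperedge has an independent uniformly random $\pm1$ signing, since $\prod_{j\in\T}c_j$ is a $\pm1$-valued function of the independent uniform $c_i$'s, and the $\cls[j]$ for $j\in\T$ are uniform in $[n]^s$ — with the minor caveat that the model draws from $2^kn^k$ slots with inclusion probability $m/(2^kn^k)$, which projects cleanly to the analogous $s$-uniform model). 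So $\bInst_\T \sim \HDist{s}{m}{n}$ up to the standard negligible discrepancies between these hypergraph models (cf.\ \cite[Appendix D]{AOW15}), and $|\bInst_\T| = |\bInst| (1\pm o(1))$ with high probability by a Chernoff bound since $m \gg \log n$.

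Next I would invoke the strong refutation algorithm for random $s$-XOR (the algorithm underlying \cite{AOW15,RRS17}, or for even $s$ the direct spectral argument, and for odd $s$ the Cauchy--Schwarz trick lifting to $(s+1)$-XOR): with $m \ge \alpha n^{s/2}$ hyperedges (which holds since $s \le t$ and $m \ge \alpha n^{t/2} \ge \alpha n^{s/2}$), the algorithm certifies with probability $1-o(1)$ that
\[
	\max_{x\in\{\pm1\}^n}\ \Big|\frac{1}{|\bInst_\T|}\sum_{(b,\cls)\in\bInst_\T} b\prod_{j=1}^s x_{\cls[j]}\Big| \ \le\ \frac{2^{O(s)}\log^{5/2}n}{\sqrt{\alpha}}.
\]
Combining with $|\bInst_\T|/|\bInst| = 1\pm o(1)$ gives the same bound (up to constants absorbed into the $O(\cdot)$) on $|\wh{D_{\bInst,x}}(\T)|$ for all $x$ simultaneously. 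Taking a union bound over the at most $2^{k}$ nonempty subsets $\T$ with $|\T|\le t$ keeps the failure probability $o(1)$, and the certificate is the collection of the individual XOR-refutation certificates; the verifier simply checks each one and reports the max bound. This establishes $(\eps,t)$-quasirandomness with $\eps = \frac{2^{O(t)}\log^{5/2}n}{\sqrt{\alpha}}$.

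\medskip

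The main obstacle, and the only genuinely substantive input, is the strong XOR refutation bound with the stated $\frac{2^{O(t)}\log^{5/2}n}{\sqrt{\alpha}}$ dependence — in particular getting the correct power of $\log n$ and the exact dependence on $t$ requires the machinery of \cite{AOW15} (Kikuchi-type / row-pruning spectral arguments for the $n^{t/2}$ regime), which I would cite rather than reprove; the reduction itself is then essentially bookkeeping. A secondary technical point to handle carefully is the uniformity of the bound over all $2^n$ assignments $x$: this is automatic because the XOR refutation algorithms certify a bound on the operator norm of an associated (polynomially-sized) matrix, which upper bounds the quadratic/multilinear form simultaneously for every $x$ — so there is no union bound over $x$, only over the $O(1)$ choices of $\T$. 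The final minor nuisance is reconciling the precise random hypergraph model in Definition~\ref{def:signed_hypergraph} (independent slot inclusion) with the "$m$ i.i.d.\ hyperedges" model that the cited refutation results are stated for, which is the standard model-equivalence argument and contributes only lower-order terms.
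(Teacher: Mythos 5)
The paper does not actually prove this statement---it is imported verbatim from \cite{AOW15} as a black box (``the key statement we use from \cite{AOW15}'')---so there is no internal proof to compare your attempt against. Your sketch (write $\wh{D_{\bInst,x}}(\T)$ as the normalized objective of the projected $|\T|$-XOR instance, certify a bound uniform in $x$ via strong XOR refutation at density $m\ge\alpha n^{t/2}$, and union bound over the $O(1)$ nonempty $\T$ with $|\T|\le t$, handling the slot-model translation and the automatic uniformity over $x$ via operator-norm certificates) is precisely how \cite{AOW15} derive quasirandomness certification from their XOR refutation theorem and is sound; but since the quantitatively substantive ingredient (the $2^{O(t)}\log^{5/2}n/\sqrt{\alpha}$ refutation bound) is still cited rather than reproved, your write-up correctly unpacks one layer of the same citation rather than giving an independent proof.
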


Another statement we use from \cite{AOW15} is their algorithm to refute random polynomials on the hypercube.
\begin{theorem}	\label{thm:aow-poly}
	For $k\ge 2$, let $\{\bw_T\}_{T\in[n]^k}$ be independent centered random variables on $[-1,1]$ such that:
	\begin{align*}
		\Pr[\bw_T \ne 0] \le p \quad \forall T\in [n]^k.
	\end{align*}
	Then there is an efficient algorithm which certifies with high probability:
	\[
		\sum_{T\in[n]^k} \bw_T x^T \le 2^{O(k)} \max\{\sqrt{p} n^{3k/4}, n^{k/2}\} \log^{3/2} n
	\]
	for all $x\in\{\pm1\}^n$.
\end{theorem}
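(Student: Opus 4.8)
Theorem~\ref{thm:aow-poly} is stated as, and is, a result of \cite{AOW15}, so in the body of the paper one would most likely just invoke it. Were I to reprove it, the plan is a tensor-flattening argument for even $k$ together with a Cauchy--Schwarz degree-doubling reduction for odd $k$.

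For the even case $k = 2d$, I would index each $T \in [n]^k$ as a pair $(A,B)$ with $A,B \in [n]^d$, form the $n^d \times n^d$ matrix $M$ with $M_{A,B} = \bw_{(A,B)}$, and observe that $\sum_{T}\bw_T x^T = z^\top M z$, where $z = z(x)$ has coordinates $z_A = \prod_{i \in A} x_i$. Since $\|z\|_2^2 = n^{d} = n^{k/2}$ for \emph{every} $x \in \{\pm1\}^n$, the number $\|M\|_{\mathrm{op}}\cdot n^{k/2}$ is a single, efficiently computable quantity that upper bounds $\sum_T \bw_T x^T$ simultaneously over all $x$; this is the certificate the algorithm outputs. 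It then remains to show that with high probability
\[
  \|M\|_{\mathrm{op}} \;\le\; 2^{O(k)}\,\max\{\sqrt{p}\,n^{k/4},\,1\}\,\log^{3/2}n ,
\]
which is a standard random-matrix fact: $M$ has independent, mean-zero entries bounded by $1$ and nonzero with probability at most $p$ (hence variance $\le p$), and it is $N\times N$ with $N = n^d$, so matrix Bernstein (or the trace/moment method) controls $\|M\|_{\mathrm{op}}$ by $\lesssim \sqrt{pN} + \mathrm{polylog}(N)$. Multiplying through by $n^{k/2}$ gives exactly the claimed bound, with the $n^{k/2}$ fallback coming from the ``$1$'' in the max.

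For odd $k = 2d+1$ I would peel off the first coordinate: write $T = (i,S)$ with $i \in [n]$, $S \in [n]^{2d}$, set $g_i(x) = \sum_S \bw_{(i,S)}x^S$, and use Cauchy--Schwarz to get $\sum_T \bw_T x^T = \langle x, g(x)\rangle \le \sqrt{n}\,\|g(x)\|_2$. Now $\|g(x)\|_2^2 = \sum_i g_i(x)^2$ is an \emph{even}-degree (degree $2(k-1)$) polynomial in $x$ whose monomial $x^Sx^{S'}$ has coefficient $\sum_i \bw_{(i,S)}\bw_{(i,S')}$. I would split this into the diagonal $S=S'$ part (whose expectation is $\le p\,n^{k}$ and which concentrates) and the off-diagonal part, which is centered and has coefficients that are nonzero with probability $\lesssim n p^2$; feeding the latter into the even-degree case with parameters $(2(k-1),\,n p^2)$ bounds it by $2^{O(k)}\max\{p\,n^{(3k-2)/2},\,n^{k-1}\}\,\mathrm{polylog}(n)$, which dominates the diagonal term for $k\ge 2$. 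Taking square roots and multiplying by $\sqrt n$ lands on $2^{O(k)}\max\{\sqrt p\,n^{3k/4},\,n^{k/2}\}\,\log^{3/2}n$, using $\tfrac12 + \tfrac{3k-2}{4} = \tfrac{3k}{4}$ and $\tfrac12 + \tfrac{k-1}{2} = \tfrac k2$.

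The even case is essentially immediate once the flattening is written down; the main obstacle is the odd case. There one has to (i) subtract exactly the right mean so that the polynomial handed to the even-degree machinery is genuinely centered, (ii) handle the fact that the induced coefficients $\sum_i \bw_{(i,S)}\bw_{(i,S')}$ are not jointly independent across the index pairs $(S,S')$, so the random-matrix estimate needs an extra union bound over $i$ (or a direct moment comparison) rather than a black-box application, and (iii) carefully track the $2^{O(k)}$ and $\mathrm{polylog}(n)$ factors accumulated through Cauchy--Schwarz and the degree doubling so that the exponents come out exactly at $3k/4$ and $k/2$ rather than something slightly larger.
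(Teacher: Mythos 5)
You are right that \pref{thm:aow-poly} is not proved in this paper at all: it is imported from \cite{AOW15}, so invoking it is exactly what the authors do, and there is no internal proof to compare against. Your sketch does follow the same strategy as the cited source (flatten to a matrix for even degree, Cauchy--Schwarz degree-doubling for odd degree), and the even half is essentially complete: the flattened $n^{k/2}\times n^{k/2}$ matrix $M$ has independent, centered, $[-1,1]$-valued entries that are nonzero with probability at most $p$, so matrix Bernstein (\pref{thm:spec-norm-bound}, after the usual Hermitian dilation) gives $\|M\|\lesssim \sqrt{p\,n^{k/2}\,k\log n}+k\log n$, and $\|M\|_{\mathrm{op}}\cdot n^{k/2}$ is a valid certificate with the claimed value.

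The gap, if this were to stand as a proof, is the odd case. Feeding the off-diagonal part of $\sum_i g_i(x)^2$, whose coefficients are $c_{S,S'}=\sum_i \bw_{(i,S)}\bw_{(i,S')}$, back into the even-degree statement ``with parameters $(2(k-1),np^2)$'' is not a legitimate application, and not only because of the dependence you flag: the theorem also requires coefficients in $[-1,1]$, whereas $c_{S,S'}$ can be as large as $n$ in magnitude, and rescaling by a worst-case coefficient bound would spoil the exponent. Moreover ``an extra union bound over $i$'' is not the repair; the structure one actually exploits is that the off-diagonal matrix is $\sum_i \left(w_i w_i^{\top}-D_i\right)$, where $w_i=(\bw_{(i,S)})_{S}$ are independent across $i$ and $D_i$ is the diagonal of $w_iw_i^{\top}$, so after truncating/bucketing rows by their number of nonzero entries one can apply matrix Bernstein or the trace method to this sum of independent matrices --- this is where the $\log^{3/2}n$ factor comes from in \cite{AOW15}. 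Your diagonal term (at most the number of nonzero $\bw$'s, about $pn^k$) and the exponent bookkeeping $\tfrac12+\tfrac{3k-2}{4}=\tfrac{3k}{4}$, $\tfrac12+\tfrac{k-1}{2}=\tfrac{k}{2}$ are fine, so with that matrix-norm lemma supplied the argument closes; as written, though, the recursion into the even case is the missing step rather than a routine detail.
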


\subsection{Random matrix theory}

We will need the matrix Bernstein inequality for proving spectral norm bounds as stated in \cite[Theorem 6.1.1]{Tro15}.
\begin{theorem}[{Matrix Bernstein inequality (special case of \cite[Theorem 6.1.1]{Tro15})}]	\label{thm:spec-norm-bound}
	Let $\bS_1,\dots,\bS_\ell$ be a collection of independent random symmetric matrices of dimension $d\times d$.  Assume that $\E\bS_i=0$ and $\|\bS_i\|\le 1$ for all $i\in[\ell]$.  Let $\bZ=\sum_{i\in[\ell]}\bS_i$.  Define:
	\[
		v = \left\|\sum_{i\in[\ell]} \E[\bS_i^2] \right\|.
	\]
	Then, for any $t\geq 0$,
	\[
		\Pr[\|\bZ\|\ge t] \le 2d\exp\left(\frac{-t^2}{v+t/3}\right).
	\]
\end{theorem}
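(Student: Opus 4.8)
The plan is to obtain this as a one-step specialization of the general matrix Bernstein inequality proved in \cite[Theorem 6.1.1]{Tro15}, together with a union bound. That general statement asserts that for independent Hermitian $d\times d$ matrices $\bS_1,\dots,\bS_\ell$ with $\E\bS_i = 0$ and $\|\bS_i\|\le L$, writing $\bZ = \sum_{i\in[\ell]}\bS_i$ and $v = \|\sum_{i\in[\ell]}\E[\bS_i^2]\|$, one has the one-sided tail bound $\Pr[\lambda_{\max}(\bZ)\ge t]\le d\exp\!\left(\tfrac{-t^2}{2v+2Lt/3}\right)$ for every $t\ge0$ (up to the precise form of the constants recorded in \cite{Tro15}). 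First I would simply set $L=1$, which is exactly our hypothesis $\|\bS_i\|\le1$. To pass from this one-sided bound to the stated two-sided bound on $\|\bZ\|=\max\{\lambda_{\max}(\bZ),\lambda_{\max}(-\bZ)\}$, I would apply the one-sided bound a second time to the family $\{-\bS_i\}_{i\in[\ell]}$: these are again independent, Hermitian, mean-zero, and bounded in operator norm by $1$, and the variance proxy is unchanged since $\E[(-\bS_i)^2]=\E[\bS_i^2]$. A union bound over the events $\{\lambda_{\max}(\bZ)\ge t\}$ and $\{\lambda_{\max}(-\bZ)\ge t\}$ then produces the factor $2d$ out front, giving the claim.

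If one instead wanted a self-contained argument, the skeleton is the matrix Laplace-transform method: $\Pr[\lambda_{\max}(\bZ)\ge t]\le\inf_{\theta>0}e^{-\theta t}\,\E\,\operatorname{tr}\exp(\theta\bZ)$; Lieb's concavity theorem then yields the trace-mgf subadditivity $\E\,\operatorname{tr}\exp(\theta\bZ)\le\operatorname{tr}\exp\!\big(\textstyle\sum_{i}\log\E\,e^{\theta\bS_i}\big)$; and each summand's matrix mgf is controlled by the semidefinite bound $\log\E\,e^{\theta\bS_i}\preceq g(\theta)\,\E[\bS_i^2]$ with $g(\theta)=\tfrac{\theta^2/2}{1-\theta/3}$ for $0<\theta<3$, which uses $\E\bS_i=0$ together with $\|\bS_i\|\le1$ (expand the scalar inequality $e^{\theta s}\le 1+\theta s+\tfrac{(\theta^2/2)s^2}{1-\theta|s|/3}$ on $|s|\le1$ and lift it via the functional calculus). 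Substituting and using $\operatorname{tr}\exp(\cdot)\le d\exp(\lambda_{\max}(\cdot))$ gives $\E\,\operatorname{tr}\exp(\theta\bZ)\le d\exp(g(\theta)v)$, and optimizing $e^{-\theta t+g(\theta)v}$ over $\theta\in(0,3)$ recovers the advertised exponent; finally union-bound over $\pm\bZ$ as above.

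The only genuinely nontrivial ingredient here is Lieb's theorem — concavity of $A\mapsto\operatorname{tr}\exp(H+\log A)$ over positive-definite $A$ — which is exactly what allows one to reduce the matrix mgf of a sum to a sum of mgf's despite noncommutativity; that is the step I expect to be the real obstacle in a from-scratch proof. Since the paper only uses this inequality as a black-box tool and explicitly attributes it to \cite{Tro15}, however, the intended route is the short one: specialize Tropp's theorem to $L=1$ and take the union bound over $\bZ$ and $-\bZ$, and that is what I would write.
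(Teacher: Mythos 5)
Your proposal matches the paper's treatment exactly: the paper gives no proof of this statement and uses it purely as a black box attributed to \cite[Theorem 6.1.1]{Tro15}, and your route --- specialize Tropp's one-sided bound to $L=1$, apply it to $\{-\bS_i\}$ as well, and union bound over $\lambda_{\max}(\bZ)$ and $\lambda_{\max}(-\bZ)$ to get the factor $2d$ --- is precisely the intended derivation; the optional Laplace-transform/Lieb sketch is also the standard proof behind Tropp's theorem. One small but real point: your derivation (correctly) produces the exponent $-t^2/\bigl(2(v+t/3)\bigr)$, whereas the display in the statement reads $-t^2/(v+t/3)$; the latter is missing the factor $1/2$ and is in fact false already in the scalar case (e.g.\ sums of Rademachers), so this is a typo in the paper's statement rather than a gap in your argument, and it is immaterial for every application in the paper, which uses the bound only up to constants in the exponent.
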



We will also need the eigenvalue distribution of random matrices (in \pref{sec:subspace_count}).
We first consider Gaussian matrices.
Let $\bW$ be a random $n\times n$ matrix with independent standard Gaussian entries, and let $\G \coloneqq \frac{1}{\sqrt{2}}(\bW + \bW^\top)$.
We say that $\G$ is sampled from the Gaussian Orthogonal Ensemble, denoted $\GOE(n)$.
We recall the following results in random matrix theory,

\begin{fact}
\label{fact:norm_of_goe}
	The spectral norm $\lambda_{\max}(\G) \leq (2+t) \sqrt{n}$ with probability $1- 2\exp(-nt^2/2)$.
\end{fact}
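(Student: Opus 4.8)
The plan is to combine a sharp bound on $\E[\lambda_{\max}(\G)]$ with Gaussian concentration of measure. First I would bound the expectation by a comparison of Gaussian processes. Writing $\lambda_{\max}(\G) = \sup_{v\in S^{n-1}} X_v$ for the centered Gaussian process $X_v \coloneqq v^\top\G v$, one computes directly from the covariance structure of $\GOE(n)$ that $\E[X_v X_w] = 2\langle v,w\rangle^2$, so the canonical metric satisfies $\E[(X_v-X_w)^2] = 4\bigl(1-\langle v,w\rangle^2\bigr) \le 8\bigl(1-\langle v,w\rangle\bigr) = \E[(Y_v-Y_w)^2]$, where $Y_v \coloneqq 2\langle g,v\rangle$ and $g\sim N(0,\Id_n)$. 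By the Sudakov--Fernique comparison inequality, $\E[\lambda_{\max}(\G)] = \E[\sup_v X_v] \le \E[\sup_v Y_v] = 2\,\E\|g\|_2 \le 2\sqrt n$. (Alternatively, this expectation bound is classical and can simply be quoted from standard random matrix theory, or proved by the moment method by bounding high moments of $\lambda_{\max}(\G)$ via expected traces of powers of $\G$.)

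Second I would apply Gaussian concentration. Viewing $\bW$ as a vector of $n^2$ i.i.d.\ standard Gaussians, the map $\bW \mapsto \lambda_{\max}\bigl(\frac{1}{\sqrt 2}(\bW+\bW^\top)\bigr)$ is Lipschitz with respect to the Frobenius norm, since $\bigl|\lambda_{\max}(\G(\bW)) - \lambda_{\max}(\G(\bW'))\bigr| \le \bigl\|\frac{1}{\sqrt 2}\bigl((\bW-\bW') + (\bW-\bW')^\top\bigr)\bigr\| \le \sqrt 2\,\|\bW-\bW'\|_F$. Hence by the Gaussian concentration inequality for Lipschitz functions, $\Pr[\lambda_{\max}(\G) \ge \E[\lambda_{\max}(\G)] + s]$ is at most $\exp(-c s^2)$ for an absolute constant $c>0$ determined by the Lipschitz constant. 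Combining with the expectation bound and taking $s = t\sqrt n$ gives $\Pr[\lambda_{\max}(\G) \ge (2+t)\sqrt n] \le 2\exp(-nt^2/2)$ after tracking the constants (and accounting for the two-sided deviation if one reads $\lambda_{\max}$ as the spectral norm).

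The only non-routine ingredient is the expectation bound $\E[\lambda_{\max}(\G)] \le 2\sqrt n$; everything else --- the covariance computation, the Lipschitz estimate, and the final substitution --- is bookkeeping. If one wants a fully self-contained proof that avoids the Sudakov--Fernique inequality at the cost of a worse constant in front of $\sqrt n$, one can instead cover $S^{n-1}$ by a $\frac14$-net $\mathcal{N}$ of size at most $9^n$, use $\lambda_{\max}(\G) \le 2\max_{v\in\mathcal{N}} v^\top\G v$ together with $v^\top\G v \sim N(0,2)$ for each fixed $v$, and union-bound; this yields $\lambda_{\max}(\G) \le C\sqrt n$ with exponentially high probability for an absolute constant $C$, which already suffices for every application of this fact in the paper (it is only ever used to assert that the top of the spectrum is $O(\sqrt n)$ and obeys the semicircle law).
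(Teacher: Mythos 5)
The paper gives no proof of \pref{fact:norm_of_goe} at all: it is quoted as a classical random-matrix fact, so there is no argument of the authors' to compare against, and your Sudakov--Fernique-plus-Gaussian-concentration derivation is exactly the standard textbook route. The substance is correct: under the paper's normalization $\G = \frac{1}{\sqrt 2}(\bW + \bW^\top)$ one indeed has $\E[X_v X_w] = 2\langle v,w\rangle^2$, the comparison with $Y_v = 2\langle g,v\rangle$ gives $\E[\lambda_{\max}(\G)] \le 2\sqrt n$, and the map $\bW \mapsto \lambda_{\max}(\G)$ is $\sqrt 2$-Lipschitz in Frobenius norm. The one slip is the final constant-tracking claim: a $\sqrt 2$-Lipschitz function of a standard Gaussian vector satisfies $\Pr[f \ge \E f + s] \le \exp(-s^2/4)$, so your argument yields $\Pr[\lambda_{\max}(\G) \ge (2+t)\sqrt n] \le \exp(-nt^2/4)$, not $2\exp(-nt^2/2)$, and no bookkeeping closes that factor of two by this method: since $\sup_{\|v\|_2=1}\mathrm{Var}(v^\top \G v) = 2$ (e.g.\ $\G_{11}\sim N(0,2)$), Lipschitz/Borell--TIS concentration cannot give a rate better than $s^2/4$ for this normalization --- indeed a bound with exponent $nt^2/2$ cannot hold for large constant $t$, because already $\Pr[\G_{11} \ge (2+t)\sqrt n] \approx \exp(-n(2+t)^2/4)$, which exceeds $2\exp(-nt^2/2)$ once $t > 2+2\sqrt 2$. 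This discrepancy is harmless for the paper: the fact is invoked only in \pref{thm:SK_upper_bound} with $t = n^{-1/4}$, where a failure probability of $\exp(-\sqrt n/4)$ serves exactly as well.

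One caveat on your closing remark: the crude $\tfrac14$-net bound $\lambda_{\max}(\G) \le C\sqrt n$ does \emph{not} suffice for the paper's application. Step~(2) of the algorithm in \pref{thm:SK_upper_bound} checks that $|\lambda_1/\sqrt n - 2| < n^{-1/4}$, and the analysis needs this check to pass with high probability, which requires the sharp upper bound $\lambda_1 \le (2+n^{-1/4})\sqrt n$ --- the constant $2$ itself, not merely $O(\sqrt n)$. So the Sudakov--Fernique (or trace-moment) bound on the expectation is genuinely needed; keep that as the main route, and either state the tail as $\exp(-nt^2/4)$ or simply cite the classical fact rather than claim the exact constants.
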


\begin{theorem}[Semicircle law \cite{Erd11}]
\label{thm:semicircle_law}
	The empirical distribution of eigenvalues of $\G\sim \GOE(n)$ follows a universal pattern, the Wigner semicircle law.
	For any fixed real numbers $a < b$, with probability $1-o(1)$,
	\begin{equation*}
		\frac{1}{n} \left| \left\{i: \frac{1}{\sqrt{n}}\lambda_i(\G) \in [a,b] \right\}\right| = (1\pm o(1)) \int_a^b \rho_{sc}(x)dx, \quad
		\rho_{sc}(x) \coloneqq \frac{1}{2\pi} \sqrt{(4-x^2)_+},
	\end{equation*}
	where $(a)_+ = \max(a, 0)$.
\end{theorem}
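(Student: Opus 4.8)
The plan is to prove this via the \emph{moment method}. Write $\mu_n \coloneqq \frac1n \sum_{i=1}^n \delta_{\lambda_i(\G)/\sqrt n}$ for the empirical spectral measure; the goal is to show that $\mu_n$ converges weakly, in probability, to the semicircle law $\mu_{sc}$ with density $\rho_{sc}$, and then read off the count of eigenvalues in $[a,b]$ from this convergence.

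First I would compute, for each fixed integer $p \ge 0$, the expected $p$-th moment
\[
	\E\Big[\int x^p \, d\mu_n(x)\Big] \;=\; \frac{1}{n^{1+p/2}}\, \E\big[\mathrm{Tr}(\G^p)\big] \;=\; \frac{1}{n^{1+p/2}} \sum_{i_1,\dots,i_p\in[n]} \E\big[\G_{i_1 i_2}\G_{i_2 i_3}\cdots \G_{i_p i_1}\big].
\]
Each summand corresponds to a closed walk $i_1 \to i_2 \to \cdots \to i_p \to i_1$ on the complete graph on $[n]$. Since the entries of $\G$ are centered and jointly Gaussian with $\E[\G_{ij}^2]=1$ for $i\ne j$ (diagonal terms contribute at lower order), Wick's theorem shows that only walks traversing every edge an even number of times survive. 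The standard bookkeeping then gives: odd moments vanish, and for $p=2k$ the leading contribution comes from walks that use each of their $k$ distinct edges exactly twice and whose edge set forms a tree on $k+1$ vertices; such walks are in bijection with rooted plane trees with $k$ edges, of which there are $C_k$ (the $k$-th Catalan number), while every degenerate walk shape (an edge used $\ge 4$ times, or fewer than $k+1$ distinct vertices) contributes $O(1/n)$. Hence $\E[\int x^{2k}d\mu_n]\to C_k$ and $\E[\int x^{2k+1}d\mu_n]\to 0$, which are exactly the moments of $\rho_{sc}$.

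Next I would upgrade expectations to convergence in probability. A second-moment computation — the same walk count applied to pairs of closed walks, where the cross term forces the two walks to share at least one edge — yields $\mathrm{Var}\big[\int x^p\, d\mu_n\big] = O(1/n^2)$, so $\int x^p\, d\mu_n$ concentrates around its mean for every fixed $p$. Since $\mu_{sc}$ is compactly supported and therefore determined by its moments, the method of moments gives $\mu_n \Rightarrow \mu_{sc}$ weakly, in probability. Finally, because $\rho_{sc}$ is a bounded density, the distribution function of $\mu_{sc}$ is continuous at every point, so weak convergence yields $\mu_n([a,b]) \to \int_a^b \rho_{sc}(x)\,dx$ in probability for every fixed $a<b$; rephrasing, $\frac1n\big|\{i:\lambda_i(\G)/\sqrt n\in[a,b]\}\big| = (1\pm o(1))\int_a^b\rho_{sc}$ with probability $1-o(1)$, which is the claim. (For intervals disjoint from $[-2,2]$, where the right-hand side is $0$, one also uses \pref{fact:norm_of_goe}, or notes that uniformly bounded moments force $o(n)$ eigenvalues outside any fixed neighborhood of $[-2,2]$.)

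The main obstacle is the combinatorial core of the second paragraph: organizing the sum over closed walks so as to pin down the tree-like walks as the dominant term and show that every other walk topology loses a factor of $n$ — this is the classical Wigner argument, routine in spirit but fiddly to write out carefully. An alternative that avoids the combinatorics is the \emph{Stieltjes transform} method: show that the resolvent trace $m_n(z) \coloneqq \frac1n\mathrm{Tr}\big((\G/\sqrt n - zI)^{-1}\big)$ concentrates and approximately satisfies the self-consistent equation $m_n(z) = \frac{1}{-z-m_n(z)} + o(1)$ for $z$ in the upper half-plane, identify its stable root as the Stieltjes transform of $\rho_{sc}$, and invert; this delivers the same weak-convergence conclusion.
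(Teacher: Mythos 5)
You should know that the paper does not prove this statement at all: \pref{thm:semicircle_law} is imported as a black-box classical fact, cited to \cite{Erd11}, and is only ever used through \pref{lem:goe_eps_subspace_dim} (an upper bound on the fraction of eigenvalues near the spectral edge). So there is no in-paper proof to compare against; what you have written is the standard Wigner moment-method argument, and in outline it is correct: the trace/closed-walk expansion with Wick's theorem, the Catalan-number limit for even moments, the $O(1/n^2)$ variance bound giving concentration of each fixed moment, the method of moments (valid since $\mu_{sc}$ is compactly supported), and the continuity of the semicircle distribution function to convert weak convergence into convergence of $\mu_n([a,b])$ for fixed $a<b$. The Stieltjes-transform alternative you mention is equally standard and would also suffice.

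One small caveat on matching the statement as literally written: with a \emph{multiplicative} $(1\pm o(1))$ error, an interval with $\int_a^b \rho_{sc}=0$ would require \emph{zero} eigenvalues in $[a,b]$ with high probability. Your argument gives this only for intervals bounded away from $[-2,2]$ (via \pref{fact:norm_of_goe}); for intervals abutting $\pm 2$ the correct reading is an additive $o(1)$ fraction, which your ``$o(n)$ eigenvalues outside any neighborhood of $[-2,2]$'' remark delivers, and which is all that the paper's application in \pref{lem:goe_eps_subspace_dim} needs. So the residual issue is with the theorem's phrasing rather than with your proof strategy.
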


\begin{lemma}
\label{lem:goe_eps_subspace_dim}
	For $\G\sim \GOE(n)$ and any $\eps \in (0, 2)$, with probability $1-o(1)$,
	\begin{equation*}
		\frac{1}{n} \left|\left\{i: \lambda_i(\G) \geq (2-\eps)\sqrt{n}\right\}\right| \leq \frac{\eps^{3/2}}{\pi} (1 \pm o(1)).
	\end{equation*}
\end{lemma}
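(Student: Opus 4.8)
The plan is to read the lemma off the semicircle law (\pref{thm:semicircle_law}) together with the operator-norm tail bound (\pref{fact:norm_of_goe}). First I would fix a slowly vanishing parameter $t = t_n \to 0$ with $n t_n^2 \to \infty$ (e.g. $t_n = 1/\log n$), so that by \pref{fact:norm_of_goe} every eigenvalue of $\G$ lies below $(2+t_n)\sqrt{n}$ with probability $1 - o(1)$. On this event, the eigenvalues counted in the lemma are exactly those with $\frac{1}{\sqrt n}\lambda_i(\G) \in [2-\eps,\, 2+t_n]$. Applying \pref{thm:semicircle_law} with $a = 2-\eps$ and $b = 2+t_n$ then gives that their fraction is $(1\pm o(1))\int_{2-\eps}^{2+t_n}\rho_{sc}(x)\,dx = (1\pm o(1))\int_{2-\eps}^{2}\rho_{sc}(x)\,dx$, since $\rho_{sc}$ is supported on $[-2,2]$.

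It remains to estimate $\int_{2-\eps}^{2}\rho_{sc}(x)\,dx = \frac{1}{2\pi}\int_{2-\eps}^{2}\sqrt{4-x^2}\,dx$. Substituting $u = 2-x$ turns this into $\frac{1}{2\pi}\int_{0}^{\eps}\sqrt{u}\,\sqrt{4-u}\,du$, and bounding $\sqrt{4-u}\le 2$ on $[0,\eps]$ yields $\frac{1}{2\pi}\cdot 2\cdot \frac{2}{3}\eps^{3/2} = \frac{2\eps^{3/2}}{3\pi} \le \frac{\eps^{3/2}}{\pi}$. Combined with the previous paragraph, this gives the claimed bound $\frac{\eps^{3/2}}{\pi}(1 + o(1))$.

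There is essentially no substantive obstacle here; the only points requiring a little care are (i) handling the finitely many eigenvalues that may poke slightly above $2\sqrt{n}$, which is exactly why one invokes \pref{fact:norm_of_goe} with a vanishing $t_n$ rather than naively taking $b = 2$; and (ii) noting that the $o(1)$ error term in \pref{thm:semicircle_law} is stated for fixed $a, b$ — since $\eps$ is treated as a constant this is harmless, and if one wanted $\eps = \eps_n$ to shrink slowly one could recover it by a routine monotonicity/approximation argument (bracketing $[2-\eps_n,2]$ between fixed windows).
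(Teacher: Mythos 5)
Your proposal is correct and follows essentially the same route as the paper: apply \pref{thm:semicircle_law} directly and bound the mass of $\rho_{sc}$ on $[2-\eps,2]$ by an elementary estimate of order $\eps^{3/2}$ (the paper uses a rectangle of width $\eps$ and height $2\sqrt{\eps}$, you integrate and get the slightly sharper constant $\frac{2}{3\pi}$). Your extra care about eigenvalues exceeding $2\sqrt{n}$ via \pref{fact:norm_of_goe} is a reasonable tidying-up of a point the paper leaves implicit, but it does not change the argument.
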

\begin{proof}
	We apply \pref{thm:semicircle_law} directly.
	The area under the semicircle between $2-\eps$ and 2 can be upper bounded by a rectangle of width $\eps$ and height $\sqrt{4\eps - \eps^2} \leq 2\sqrt{\eps}$.
	Dividing by $2\pi$ completes the proof.
\end{proof}

Next, we look at the adjacency matrix $\bA$ of a random $d$-regular graph.
It is a standard result that the largest eigenvalue of $\bA$ is $d$, and the all-ones vector $\vec{1}$ is the top eigenvector.
Thus, it is common to look at the ``de-meaned'' matrix $\meanbA \coloneqq \bA - \frac{d}{n}J$ (removing the top eigenvector component).
The eigenvalue bounds for $\meanbA$ were conjectured by Alon \cite{Alo86} and later proved by Friedman \cite{Fri08}.
The following quantitative statement is by Bordenave \cite{Bor15}.

\begin{theorem}
	\label{thm:eigenvalue_regular_graph}
	Let $d\geq 3$ be an integer and let $\bA$ be the adjacency matrix of a random $d$-regular graph on $n$ vertices.
	With probability $1- \frac{1}{\poly(n)}$, all eigenvalues of $\bA - \frac{d}{n}J$ lie within $[-2\sqrt{d-1}- \eps_n, 2\sqrt{d-1}+\eps_n]$,
	where $\eps_n = c \left(\frac{\log\log n}{\log n}\right)^2$ for some constant $c$.
\end{theorem}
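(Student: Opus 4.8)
This is the quantitative form of Friedman's second-eigenvalue theorem, and at the level of detail of this paper one would simply invoke \cite{Bor15} (building on \cite{Fri08}); but here is the shape of the argument one would carry out. First I would replace the uniform random $d$-regular model with a contiguous model that is friendlier to moment computations --- the permutation model $\bA = \mathbf{P}_1 + \dots + \mathbf{P}_d$ with i.i.d.\ uniform permutation matrices (or $d/2$ permutations plus their transposes). The all-ones vector $\vec{1}$ is the common trivial eigenvector contributing the eigenvalue $d$, so removing $\frac{d}{n}J$ amounts to restricting to $\vec{1}^{\perp}$, and the goal becomes: the operator norm of $\bA$ restricted to $\vec{1}^{\perp}$ is at most $2\sqrt{d-1}+\eps_n$.

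The central device is the non-backtracking (Hashimoto) operator $\mathbf{B}$ on directed edges, $\mathbf{B}_{(u,v),(v,w)} = \mathbf{1}[w\neq u]$. By the Ihara--Bass formula, for a $d$-regular graph the nontrivial eigenvalues $\lambda$ of $\bA$ correspond to eigenvalues $\mu$ of $\mathbf{B}$ via $\mu^2 - \lambda\mu + (d-1) = 0$; in particular $|\lambda| \le 2\sqrt{d-1}(1+o(1))$ as soon as all nontrivial eigenvalues of $\mathbf{B}$ have modulus at most $\sqrt{d-1}(1+o(1))$. One then bounds the spectral radius of $\mathbf{B}$ on the relevant subspace by a high-moment trace method: estimate $\E\,\operatorname{tr}((\mathbf{B}^{\ell})^{*}\mathbf{B}^{\ell})$ with $\ell$ chosen of order $\log n / \log\log n$. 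Expanding the trace writes it as a sum over closed non-backtracking walks of length $2\ell$; the expectation factorizes over the permutations, and the dominant contribution comes from walks whose trace is tree-like (the walk explores a tree and retraces it), contributing growth $(d-1)^{\ell}$ --- exactly $\sqrt{d-1}$ per step for $\|\mathbf{B}^{\ell}\|$.

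The main obstacle --- and the technical core of \cite{Bor15} --- is bounding the contribution of walks that are \emph{not} tree-like, i.e.\ whose trace contains cycles (``tangles''). Such walks come with a combinatorial saving (fewer free vertices to sum over) but also a proliferation of labelled configurations, and one must show the net effect is lower order throughout the chosen window of $\ell$; the interplay of these two effects, together with the Markov-inequality loss in passing from the trace bound to the operator norm and the optimization $\ell \asymp \log n/\log\log n$, is precisely what produces the sharp leading constant $2\sqrt{d-1}$ and the additive error $\eps_n = c(\log\log n/\log n)^2$. A concluding step excludes the rare bad events (e.g.\ a short cycle of anomalously high multiplicity) to upgrade the in-expectation bound to the stated high-probability, two-sided control on all eigenvalues of $\bA - \frac{d}{n}J$.
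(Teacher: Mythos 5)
Your proposal matches the paper exactly: the paper does not prove this statement but states it as an imported result, citing Friedman \cite{Fri08} for Alon's conjecture and Bordenave \cite{Bor15} for the quantitative error term $\eps_n$, which is precisely what you say one would do. Your accompanying sketch of Bordenave's argument (contiguous permutation model, non-backtracking operator via Ihara--Bass, trace method at $\ell \asymp \log n/\log\log n$, and control of tangled walks) is an accurate account of the cited proof, so there is nothing to correct.
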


For constant $d$ (fixed as $n$ grows), the empirical eigenvalue distribution is given by the \textit{Kesten--McKay law} \cite{Kes59,McK81}.
The eigenvalues of $\bA$ and $\bA - \frac{d}{n}J$ interlace by Cauchy's interlacing theorem, and hence the limiting eigenvalue distributions are the same for both matrices.

\begin{theorem}[Kesten--McKay law \cite{McK81}]
	\label{thm:kesten_mckay}
	Let $d \geq 3$ be a fixed integer.
	Let $\bA$ be the adjacency matrix of a random $d$-regular graph, and let $\meanbA = \bA - \frac{d}{n}J$.
	For any fixed real numbers $a < b$, with probability $1-o(1)$,
	\begin{equation*}
	\begin{gathered}
		\frac{1}{n} \left| \left\{i: \lambda_i(\meanbA) \in [a,b] \right\}\right| = (1\pm o(1)) \int_a^b \rho_{d}(x)dx, \\
		\rho_{d}(x) \coloneqq \frac{d\sqrt{4(d-1)-x^2}}{2\pi(d^2-x^2)} \quad \text{for } |x| \leq 2\sqrt{d-1}.
	\end{gathered}
	\end{equation*}
\end{theorem}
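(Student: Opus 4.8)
The plan is to prove the Kesten--McKay law by the method of moments, in three stages: (i) identify the moments of $\rho_d$ with walk counts on the infinite $d$-regular tree; (ii) show the empirical moments of a random $d$-regular graph match these in expectation; (iii) bound the variance to obtain concentration, and then pass from moment convergence to the interval statement.

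First I would compute $m_k \coloneqq \int x^k \rho_d(x)\,dx$. The cleanest route is via the Stieltjes transform $S(z) \coloneqq \int \frac{\rho_d(x)}{z - x}\,dx$: because deleting the root of the infinite $d$-regular tree $T_d$ leaves $d$ independent copies of the rooted tree in which every vertex has $d-1$ children, the diagonal resolvent entry of the adjacency operator of $T_d$ satisfies a quadratic fixed-point equation, and one checks that $S(z)$ solves the same equation. Expanding $S(z) = \sum_{k \ge 0} m_k z^{-k-1}$ then shows that $m_k$ equals the number of length-$k$ closed walks from the root to itself in $T_d$; in particular $m_k = 0$ for odd $k$, and for even $k$ it is a weighted Catalan-type number. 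This identification is the conceptual crux: it is why the local tree-like structure of a random $d$-regular graph determines its limiting spectrum.

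Next I would carry out the first-moment computation. Realize the random $d$-regular graph through the configuration (pairing) model, which is contiguous with the uniform model conditioned on simplicity, so it suffices to work there, and write $\frac{1}{n}\E[\mathrm{Tr}(\bA^k)] = \frac{1}{n}\sum_W \Pr[W \text{ realized}]$, the sum over closed walks $W$ of length $k$. Grouping walks by the multigraph they trace: a walk whose trace is a tree is realized with probability $1 - O(1/n)$, and for a fixed root the number of such walks is exactly the $T_d$ count from the previous step, contributing $m_k + O(1/n)$; a walk whose trace contains a cycle loses a $\Theta(n)$ factor in the count (closing a cycle forces coincidences among the sampled half-edge endpoints), contributing $O(1/n)$. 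Hence $\frac{1}{n}\E[\mathrm{Tr}(\bA^k)] \to m_k$ for every fixed $k$. A parallel analysis of pairs of walks shows that $\E[(\mathrm{Tr}\,\bA^k)^2] - (\E\,\mathrm{Tr}\,\bA^k)^2 = O(n)$, the non-factorizing contribution coming from pairs of walks sharing at least one edge, so $\mathrm{Var}[\mathrm{Tr}\,\bA^k] = o(n^2)$ and Chebyshev gives $\frac{1}{n}\mathrm{Tr}\,\bA^k \to m_k$ in probability.

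Finally I would conclude. Since all eigenvalues of $\bA$ lie in $[-d, d]$, the empirical spectral distribution is tight, so moment convergence to those of the compactly supported measure $\rho_d$ yields weak convergence (in probability) of the empirical spectral distribution to $\rho_d$; as $\rho_d$ has a continuous density, this upgrades to $\frac{1}{n}\left|\{i : \lambda_i(\bA) \in [a,b]\}\right| \to \int_a^b \rho_d(x)\,dx$ for every fixed $a < b$, which is the claimed $(1 \pm o(1))$ estimate when $\int_a^b \rho_d > 0$ (intervals disjoint from $[-2\sqrt{d-1}, 2\sqrt{d-1}]$ contain no eigenvalues by \pref{thm:eigenvalue_regular_graph}). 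Passing from $\bA$ to $\meanbA = \bA - \frac{d}{n}J$ is immediate: it is a rank-one perturbation, so by Cauchy's interlacing theorem the eigenvalues interlace and the two empirical spectral distributions share the same limit. The main obstacle throughout is the combinatorial bookkeeping in the first- and second-moment counts — precisely classifying closed walks (and pairs of them) by their traced multigraph and verifying that the cyclic and edge-sharing cases are genuinely lower order — together with the resolvent computation establishing $m_k = \#\{\text{length-}k\text{ closed root-walks in }T_d\}$.
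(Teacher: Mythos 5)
This statement is not proved in the paper at all: it is quoted as a known result of McKay \cite{McK81}, and the only ``proof content'' the paper supplies in its vicinity is the remark that the eigenvalues of $\bA$ and $\meanbA = \bA - \frac{d}{n}J$ interlace (rank-one perturbation), so both have the same limiting spectral distribution --- a point your sketch also handles correctly, in the same way. Your proposal is essentially the classical proof of the Kesten--McKay law (indeed close to McKay's original argument): identify the moments of $\rho_d$ with closed-walk counts at the root of the infinite $d$-regular tree via the resolvent fixed-point equation, compute $\frac{1}{n}\E[\mathrm{Tr}(\bA^k)]$ in the configuration model by classifying closed walks according to their traced multigraph (tree-like traces dominate, cyclic traces lose a factor of $n$), control the variance by a second-moment walk-pair count, and pass from moment convergence of a tight sequence to weak convergence and then to the interval statement, using contiguity/conditioning on simplicity to transfer to the uniform model. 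As a sketch this is sound, with only minor caveats: in the configuration model even vertex-disjoint walk pairs are not exactly independent, so the variance bound should be argued via an $O(1/n)$ covariance per disjoint pair (which still gives $O(n)$ total, as you claim); and the multiplicative $(1\pm o(1))$ form of the statement is really only meaningful for intervals with $\int_a^b\rho_d>0$ --- your appeal to \pref{thm:eigenvalue_regular_graph} handles intervals at positive distance from the support, while intervals touching the spectral edge can in principle contain $o(n)$ stray eigenvalues, an edge case inherent to the paper's phrasing rather than a flaw in your argument (and irrelevant to how the theorem is used in \pref{lem:km_eps_subspace_dim}).
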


\begin{lemma}
\label{lem:km_eps_subspace_dim} 
	Let $d \geq 3$ be a fixed integer.
	Let $\bA$ be the adjacency matrix of a random $d$-regular graph, and let $\meanbA = \bA - \frac{d}{n}J$.
	For any $\eps \in (0,1)$, with probability $1-o(1)$,
	\begin{equation*}
		\frac{1}{n} \left|\left\{i: \lambda_i(\meanbA) \leq -2\sqrt{d-1}(1-\eps)\right\}\right| \leq \frac{12\sqrt{2}}{\pi}\eps^{3/2} (1 \pm o(1)).
	\end{equation*}
\end{lemma}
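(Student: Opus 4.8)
The plan is to mirror the proof of \pref{lem:goe_eps_subspace_dim}: use the Kesten--McKay law (\pref{thm:kesten_mckay}) to convert the eigenvalue count into an integral of the density $\rho_d$ over the interval $I_\eps \coloneqq [-2\sqrt{d-1},\, -2\sqrt{d-1}(1-\eps)]$, and then bound that integral crudely by (width of $I_\eps$) $\times$ $\big(\sup_{x\in I_\eps}\rho_d(x)\big)$. First I would invoke \pref{thm:kesten_mckay} with $b = -2\sqrt{d-1}(1-\eps)$ and $a$ taken slightly below $-2\sqrt{d-1}$ (e.g.\ $a = -2\sqrt{d-1}-\eps_n$, using \pref{thm:eigenvalue_regular_graph} to guarantee that with probability $1-o(1)$ no eigenvalue of $\meanbA$ falls below $a$); since $\rho_d$ is supported on $[-2\sqrt{d-1},2\sqrt{d-1}]$, this integral equals $\int_{I_\eps}\rho_d(x)\,dx$, and we obtain $\frac1n\big|\{i:\lambda_i(\meanbA)\le -2\sqrt{d-1}(1-\eps)\}\big| = (1\pm o(1))\int_{I_\eps}\rho_d(x)\,dx$.

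Next I would bound the integrand over $I_\eps$. The width of $I_\eps$ is $2\sqrt{d-1}\,\eps$. On $I_\eps$ we have $x^2\le 4(d-1)$, so the denominator satisfies $d^2-x^2 \ge d^2-4(d-1)=(d-2)^2$ (positive since $d\ge 3$), while the numerator satisfies $4(d-1)-x^2 \le 4(d-1)\big(1-(1-\eps)^2\big)\le 8(d-1)\eps$. Hence $\rho_d(x)\le \frac{d\sqrt{8(d-1)\eps}}{2\pi(d-2)^2} = \frac{\sqrt2\, d\sqrt{(d-1)\eps}}{\pi(d-2)^2}$ throughout $I_\eps$, and multiplying by the width gives
\[
    \int_{I_\eps}\rho_d(x)\,dx \;\le\; \frac{2\sqrt2\, d(d-1)}{\pi(d-2)^2}\,\eps^{3/2}.
\]
Finally I would note that $g(d)\coloneqq \frac{d(d-1)}{(d-2)^2}$ is decreasing for $d\ge 3$ (it equals $6$ at $d=3$, $3$ at $d=4$, and tends to $1$ as $d\to\infty$), so the prefactor is maximized at $d=3$, where $\frac{2\sqrt2\cdot 6}{\pi}=\frac{12\sqrt2}{\pi}$, yielding the claimed bound $\frac{12\sqrt2}{\pi}\eps^{3/2}(1\pm o(1))$.

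The computation is entirely elementary, so there is no serious obstacle; the only mild subtleties are the boundary issue --- a few eigenvalues of $\meanbA$ may lie just outside $[-2\sqrt{d-1},2\sqrt{d-1}]$, which is absorbed via \pref{thm:eigenvalue_regular_graph} and the $\eps_n=o(1)$ perturbation getting swallowed by the $(1\pm o(1))$ factor --- and the quick verification that the $d$-dependent prefactor is largest at $d=3$, which is what produces the universal constant.
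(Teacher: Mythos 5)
Your proposal is correct and follows essentially the same route as the paper's proof: apply the Kesten--McKay law, bound $\rho_d$ on the interval by using $d^2-x^2\ge(d-2)^2$ and $4(d-1)-x^2\le 8(d-1)\eps$, multiply by the width $2\sqrt{d-1}\,\eps$, and observe that $\frac{d(d-1)}{(d-2)^2}$ is maximized at $d=3$. Your extra care with the edge eigenvalues via \pref{thm:eigenvalue_regular_graph} is a harmless refinement of what the paper does implicitly.
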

\begin{proof}
	We apply \pref{thm:kesten_mckay} directly.
	The Kesten--McKay density $\rho_d$ can be upper bounded by $\frac{d}{2\pi (d^2-4(d-1))} \sqrt{4(d-1)-x^2}$, a scaled semicircle of radius $2\sqrt{d-1}$.
	Let $x_\eps = 2\sqrt{d-1}(1-\eps)$.
	Then, $\rho_d(x_{\eps}) \leq \frac{d\sqrt{d-1}}{\pi(d-2)^2}\sqrt{2\eps}$.
	This upper bound is increasing with $\eps$.
	Thus, we can bound the area under $\rho_d$ between $2\sqrt{d-1}(1-\eps)$ and $2\sqrt{d-1}$ by $\frac{2\sqrt{2}}{\pi} \cdot \frac{d(d-1)}{(d-2)^2} \eps^{3/2}$.
	The term $\frac{d(d-1)}{(d-2)^2}$ is decreasing for $d\geq 3$ and the maximum is 6.
	This completes the proof.
\end{proof}

\section{The $\kXOR$ principle}
\label{sec:kSAT}

A crucial ingredient in our algorithms is that we can efficiently certify with high probability that any assignment that approximately satisfies a random $\kSAT$ formula $\bInst$ must also approximately satisfy the formula as $\kXOR$.  This generalizes the $k=3$ case that appears in \cite{Fei02,FO07} under the name ``$\ThreeXOR$-principle''.

Concretely, we show:
\begin{lemma}	\label{lem:kXOR-principle}
	Let $\bInst$ be a random $\kSAT$ formula on $m = \alpha n^{(k-1)/2}$ clauses.  There is an algorithm that with high probability certifies:
	\begin{displayquote}
		Any $(1-\eta)$-satisfying assignment of $\bInst$ must $\kXOR$-satisfy at least $(1-2^{k-1}\eta)m - 2^{O(k)}\sqrt{\alpha \log^5 n}\cdot n^{(k-1)/2}$ clauses.
	\end{displayquote}
\end{lemma}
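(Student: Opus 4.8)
The plan is to use the Fourier expansion of the $\kSAT$ predicate, following the strategy sketched in the introduction and the generic CSP-to-$\kXOR$ reduction of \cite{AOW15}. Write $\kSAT:\{\pm1\}^k\to\{0,1\}$ as a multilinear polynomial. The unique assignment falsifying a $\kSAT$ clause contributes the top Fourier coefficient $\wh{\kSAT}([k]) = \pm 2^{-k}$ (the sign depending on the clause's signing), together with the constant term $1-2^{-k}$ and a collection of lower-degree terms $\wh{\kSAT}(\T)$ for $\emptyset\neq \T\subsetneq[k]$, each of magnitude at most $2^{-k}$ in absolute value. Concretely, for a clause $(c,\cls)$, $\kSAT(c\circ x_\cls) = (1-2^{-k}) + 2^{-k}\prod_{i}c_i x_{\cls[i]} + q_c(x_\cls)$ where $q_c$ is a degree-$(k-1)$ polynomial with no constant term whose coefficients have magnitude $\le 2^{-k}$.

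First I would average over all clauses of $\bInst$. For any $x\in\{\pm1\}^n$,
\[
	\kSAT_{\bInst}(x) = (1-2^{-k}) + 2^{-k}\cdot\frac{1}{m}\sum_{(c,\cls)\in\bInst}\Big(\prod_{i=1}^k c_i\Big)\prod_{i=1}^k x_{\cls[i]} + \frac{1}{m}\sum_{(c,\cls)\in\bInst} q_c(x_\cls).
\]
The middle term, up to the factor $2^{-k}/m$, is exactly $\kXOR_{\bInst}(x)$ rescaled: if $x$ satisfies a fraction $\rho$ of the clauses as $\kXOR$, then $\frac{1}{m}\sum_{(c,\cls)}(\prod_i c_i)\prod_i x_{\cls[i]} = 2\rho - 1$. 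The last term is a sum of $O(1)$ many random multilinear polynomials in $x$, one for each nonempty proper subset $\T\subsetneq [k]$; each such polynomial has independent, bounded (by $2^{-k}$), mean-zero coefficients indexed by $k$-tuples (or by the sub-tuples $\cls[\T]$, padded arbitrarily), each nonzero with probability $p = m/(2^k n^k)$. Second, I would apply \pref{thm:aow-poly} to each of these $O(1)$ polynomials (after symmetrizing/padding them to honest degree-$|\T|$ forms, or more simply bounding each by a degree-$k$ form on the full tuple with the appropriate sparsity). With $m = \alpha n^{(k-1)/2}$ we have $\sqrt{p}\,n^{3k/4} = \sqrt{\alpha/2^k}\cdot n^{(k-1)/4 + k/4} = \sqrt{\alpha/2^k}\cdot n^{(2k-1)/4}$, which dominates $n^{k/2}$, so \pref{thm:aow-poly} certifies
\[
	\Big|\frac{1}{m}\sum_{(c,\cls)\in\bInst} q_c(x_\cls)\Big| \le \frac{2^{O(k)}\sqrt{\alpha\log^5 n}\cdot n^{(2k-1)/4}}{m} = \frac{2^{O(k)}\sqrt{\alpha\log^5 n}}{\sqrt{n}}
\]
for all $x\in\{\pm1\}^n$ simultaneously, with high probability. (One should double-check the exact power of $\log n$ and absorb it into $\log^5 n$; this is the kind of routine bookkeeping I would not grind through here.)

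Finally I would combine the pieces. If $x$ is $(1-\eta)$-satisfying, then $\kSAT_{\bInst}(x)\ge 1-\eta$, so
\[
	2^{-k}(2\rho-1) \ge \eta' \coloneqq 1 - \eta - (1-2^{-k}) - \frac{2^{O(k)}\sqrt{\alpha\log^5 n}}{\sqrt{n}} = 2^{-k} - \eta - \frac{2^{O(k)}\sqrt{\alpha\log^5 n}}{\sqrt{n}},
\]
which rearranges to $\rho \ge 1 - 2^{k-1}\eta - 2^{O(k)}\sqrt{\alpha\log^5 n}/\sqrt{n}$; multiplying through by $m = \alpha n^{(k-1)/2}$ gives that $x$ must $\kXOR$-satisfy at least $(1-2^{k-1}\eta)m - 2^{O(k)}\sqrt{\alpha\log^5 n}\cdot n^{(k-1)/2}$ clauses, as claimed. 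The algorithm simply runs the \pref{thm:aow-poly} certification procedure on the $O(1)$ polynomials extracted from the Fourier expansion and reports the resulting bound.

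The main obstacle is the bookkeeping around applying \pref{thm:aow-poly} to the low-degree terms: the theorem is stated for genuinely degree-$k$ forms $\sum_{T\in[n]^k}\bw_T x^T$ with independent coordinates, but $q_c$ produces coefficients attached to sub-tuples $\cls[\T]$ of size $|\T|<k$, so one must either (i) lift to a degree-$k$ polynomial by summing over the free coordinates (which changes the effective sparsity and must be tracked), or (ii) invoke a degree-$|\T|$ version of the bound and check that $n^{|\T|/2}$ and $\sqrt{p}\,n^{3|\T|/4}$ are both dominated by the degree-$k$ estimate at density $\alpha n^{(k-1)/2}$. Either way the dominant contribution comes from the degree-$(k-1)$ part, and one needs to verify the threshold $m\gtrsim n^{(k-1)/2}$ is exactly what makes the $\sqrt{p}\,n^{3(k-1)/4}$ term win over $n^{(k-1)/2}$ — this is where the $n^{(k-1)/2}$ in the density hypothesis comes from, and it should be stated carefully. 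The other routine-but-nontrivial check is that the signs $\prod_i c_i$ in the $\kXOR$ term and the coefficients in $q_c$ are jointly independent enough across clauses for \pref{thm:aow-poly} to apply (they are, since each clause is sampled independently and the Fourier coefficients of a fixed-signing clause are deterministic functions of that signing).
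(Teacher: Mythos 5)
Your high-level plan (Fourier-expand $\kSAT$, isolate the top coefficient, and use AOW-style certification to kill the lower-order terms) is the same idea the paper sketches in the introduction, but the paper's actual proof of \pref{lem:kXOR-principle} does not go through \pref{thm:aow-poly} at all: it invokes \pref{thm:aow-ref} to certify that $\bInst$ is $\bigl(\tfrac{2^{O(k)}\log^{5/2}n}{\sqrt{\alpha}},\,k-1\bigr)$-quasirandom, which bounds $|\wh{D_{\bInst,x}}(\T)|$ for \emph{all} nonempty $\T$ with $|\T|\le k-1$ in one shot, and then just rearranges \pref{obs:obj-val}. That black box absorbs exactly the bookkeeping you defer, including the $|\T|=1$ terms.

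Your execution via \pref{thm:aow-poly} has concrete problems as written. First, the displayed estimate treats each lower-order piece as a genuine degree-$k$ form with per-tuple sparsity $p=m/(2^kn^k)$ and uses the $\sqrt{p}\,n^{3k/4}$ term; but the monomial attached to $\T\subsetneq[k]$ is $\prod_{i\in\T}x_{\cls[i]}$, of degree $|\T|<k$, and it cannot be re-indexed as $x^U$ over full $k$-tuples. The honest indexing is by sub-tuples $V\in[n]^{|\T|}$, and then the coefficient at $V$ is a sum over all clauses with $\cls[\T]=V$: these collide with expected multiplicity $m/n^{|\T|}$, which is $\gg 1$ once $|\T|<(k-1)/2$, so the coefficients are neither bounded by $2^{-k}$ nor by $1$. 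One must truncate at $B\approx\max\{1,\sqrt{m/n^{|\T|}}\}\log n$ and rescale before \pref{thm:aow-poly} applies---this is exactly the maneuver the paper performs in \pref{lem:induced-two-xor-balanced}. Carrying that out, the dominant $|\T|=k-1$ term gives a certified fraction error of $2^{O(k)}\log^{5/2}n/\sqrt{\alpha}$, which matches the lemma's error term $2^{O(k)}\sqrt{\alpha\log^5 n}\cdot n^{(k-1)/2}$ after multiplying by $m$, but is \emph{not} your claimed $2^{O(k)}\sqrt{\alpha\log^5 n}/\sqrt{n}$; your stronger intermediate bound is an artifact of the invalid degree-$k$ accounting. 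Second, \pref{thm:aow-poly} is stated only for degree $\ge 2$, so the $|\T|=1$ terms need a separate (easy, e.g.\ Cauchy--Schwarz plus concentration of the coefficient vector) certification, which you do not supply. With those two repairs your route does yield the lemma, but as written the quantitative step is wrong and the simpler course is to cite \pref{thm:aow-ref} as the paper does.
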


To prove \pref{lem:kXOR-principle}, we will need the Fourier expansion of $\kSAT$.  Though this is well-known and simple to derive, we present a proof for completeness.
\begin{claim}	\label{claim:kSAT-fourier}
	$\kSAT(x_1,\dots,x_k) = 1 - 2^{-k} \sum_{\T\subseteq[k]} \prod_{i\in \T} x_i$.
\end{claim}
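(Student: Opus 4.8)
The plan is to prove the identity by directly expanding the indicator of the unique falsifying assignment. Recall that in the sign convention of this paper the predicate $\kSAT:\{\pm1\}^k\to\{0,1\}$ evaluates to $0$ on exactly the input $(1,1,\dots,1)$ and to $1$ on every other input (within a clause $(c,\cls)$ the signings $c$ are what encode literal negations, so the bare predicate is the ``all-positive'' clause). The first step is to observe that for $x_i\in\{\pm1\}$ the quantity $\tfrac{1+x_i}{2}$ lies in $\{0,1\}$ and equals $1$ iff $x_i=1$; hence $\prod_{i=1}^k\tfrac{1+x_i}{2}$ equals $1$ when $x=(1,\dots,1)$ and $0$ otherwise, so
\[
  \kSAT(x_1,\dots,x_k) \;=\; 1 - \prod_{i=1}^k \frac{1+x_i}{2} \;=\; 1 - 2^{-k}\prod_{i=1}^k(1+x_i).
\]

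Next I would expand $\prod_{i=1}^k(1+x_i)$ by choosing from each factor either the $1$ or the $x_i$ term and summing over all $2^k$ choices, which gives $\prod_{i=1}^k(1+x_i) = \sum_{\T\subseteq[k]}\prod_{i\in\T}x_i$. Substituting this back yields $\kSAT(x_1,\dots,x_k) = 1 - 2^{-k}\sum_{\T\subseteq[k]}\prod_{i\in\T}x_i$, which is exactly the claim; since the monomials $\{\prod_{i\in\T}x_i\}_{\T\subseteq[k]}$ form a basis for functions on $\{\pm1\}^k$, this is simultaneously the (unique) multilinear/Fourier expansion.

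An equivalent route staying within the Fourier formalism of \pref{fact:plancherel} is to compute the coefficients directly via $\wh{\kSAT}(\T)=\E_{\bz\sim\{\pm1\}^k}[\kSAT(\bz)\prod_{i\in\T}\bz_i]$. For $\T=\varnothing$ this equals $\E[\kSAT(\bz)]=1-2^{-k}$, since $\kSAT$ vanishes at exactly one of the $2^k$ points. For $\T\ne\varnothing$, writing $\kSAT(\bz)=1-\prod_i\tfrac{1+\bz_i}{2}$ and using $\E_{\bz}[\prod_{i\in\T}\bz_i]=0$ leaves only the contribution of the point $\bz=(1,\dots,1)$, giving $\wh{\kSAT}(\T)=-2^{-k}$; combining the two cases reproduces the stated formula. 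I do not expect a genuine obstacle here: the only point requiring care is pinning down the sign convention for which assignment falsifies the bare predicate, after which the argument is a single binomial-type expansion.
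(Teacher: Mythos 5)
Your proposal is correct and, in substance, the same as the paper's: the paper writes $\kSAT(x)=1-\Ind_{x=\vec{1}}$ and checks that every Fourier coefficient of the indicator equals $2^{-k}$, which is exactly your second route, while your primary route (expanding $\prod_{i}\frac{1+x_i}{2}$ into $2^{-k}\sum_{\T\subseteq[k]}\prod_{i\in\T}x_i$) is just the same computation done multiplicatively. Both are valid one-line derivations under the paper's convention that the bare predicate is falsified only at $(1,\dots,1)$.
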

\begin{proof}
	First note that $\kSAT(x_1,\dots,x_k) = 1 - \Ind_{x=\vec{1}}$.  It remains to verify that all the Fourier coefficients of $\Ind_{x=\vec{1}}$ are equal to $2^{-k}$.  Indeed, for any $\T \subseteq [k]$:
	\[
		\wh{\Ind}_{x=\vec{1}} (\T) = \E_{\bx\sim\{\pm1\}^k}\chi_\T(\bx)\Ind_{x=\vec{1}}(\bx) = 2^{-k},
	\]
	which completes the proof.
\end{proof}

We are now ready to prove \pref{lem:kXOR-principle}.
\begin{proof}[Proof of \pref{lem:kXOR-principle}]
	If $x$ is a $(1-\eta)$-satisfying assignment for $\bInst$, then on one hand:
	\begin{align*}
		\frac{1}{|\bInst|} \sum_{(c,\cls)\in\bInst} \kSAT(c\circ x_\cls) \ge 1-\eta. \numberthis \label{eq:lower-bound-obj}
	\end{align*}
	On the other hand, by \pref{thm:aow-ref}, with high probability we can certify that $\bInst$ is quasirandom, specifically $\calD_{\bInst,x}$ is approximately $(k-1)$-wise uniform for all $x\in \{\pm1\}^n$ (recall \pref{def:eps_t_wise_uniform} and \pref{def:quasirandom}).
	Thus, we can certify:
	\begin{align*}
		\frac{1}{|\bInst|} \sum_{(c,\cls)\in\bInst} \kSAT(c\circ x_\cls) &= \E_{\bz\sim\{\pm1\}^k}\left[\kSAT(\bz) D_{\bInst,x}(\bz)\right] &\text{(by \pref{obs:obj-val})} \\
		&= (1-2^{-k})\wh{D_{\bInst,x}}(\varnothing) + 2^{-k}\sum_{\substack{\T\subseteq[k] \\ \T\ne\varnothing}}\wh{D_{\bInst,x}}(\T) &\text{(by \pref{claim:kSAT-fourier})} \\
		&= 1-2^{-k} + 2^{-k}\wh{D_{\bInst,x}}([k]) + 2^{-k}\sum_{\substack{\T\subseteq[k] \\ 1\le |\T|\le k-1}} \wh{D_{\bInst,x}}(\T) \\
		&\le 1-2^{-k} + 2^{-k}\wh{D_{\bInst,x}}([k]) + \frac{2^{O(k)}\log^{5/2}n}{\sqrt{\alpha}}. &\text{(by \pref{thm:aow-ref})}    \numberthis \label{eq:upper-bound-obj}
	\end{align*}
	If the certification of the above inequality fails, then we halt the algorithm and output \texttt{failure}.

	Otherwise, \pref{eq:lower-bound-obj} and \pref{eq:upper-bound-obj} together tell us that:
	\begin{align*}
		1-\eta &\le 1-2^{-k} + 2^{-k}\wh{D_{\bInst,x}}([k]) + \frac{2^{O(k)}\log^{5/2}n}{\sqrt{\alpha}},
	\end{align*}
	which can be rearranged as:
	\begin{align*}
		\wh{D_{\bInst,x}}([k]) \ge 1 - 2^k\eta - \frac{2^{O(k)}\log^{5/2}n}{\sqrt{\alpha}}.	\numberthis \label{eq:XOR-coeff-lower}
	\end{align*}
	By \pref{obs:obj-val} and \pref{eq:XOR-coeff-lower} the fraction of clauses of $\bInst$ that are $\kXOR$-satisfied by $x$ is:
	\[
		\frac{1+\wh{D_{\bInst,x}}([k])}{2} \ge 1 - 2^{k-1}\eta - \frac{2^{O(k)}\log^{5/2}n}{\sqrt{\alpha}},
	\]
	which completes the proof.
\end{proof}

\section{Count Certification for $k$CSPs}
\label{sec:kXOR}

Thanks to the $\kXOR$-principle, we can first focus on $\kXOR$.
For any $\kXOR$ instance, we can calculate the number of exactly satisfying assignments by Gaussian elimination.
However, the problem becomes non-trivial when we turn to the number of approximate solutions.
A priori, it is unclear whether we can certify a bound better than a naive $2^{O(n)}$ bound.
In this section, we will show an algorithm that certifies a subexponential upper bound when the underlying graph is random and sufficiently dense.

\subsection{Count certification for $\TwoXOR$}   \label{sec:2XOR}

As a warmup, we start with $\TwoXOR$.

\begin{theorem}
\label{thm:2xor_upper_bound}
    Let $\bG \sim \UnsignedH{2}{m}{n}$ be a random graph where $m = \Delta n$ and  $\Delta = n^{\delta}$ (for some constant $\delta > 0$).
    For any $\eta \in [0,1]$, there is a polynomial-time algorithm certifying that the number of $(1-\eta)$-satisfying assignments to any $\TwoXOR$ instance on $\bG$ is
    \begin{itemize}
        \item at most 2 if $\eta \leq \frac{1}{3n}$,
        \item at most $2e^{3\eta n \log n}$ if $\eta > \frac{1}{3n}$,
    \end{itemize}
    with probability at least $1 - \exp(-n^{\Omega(\delta)})$ over the randomness of $\bG$.
\end{theorem}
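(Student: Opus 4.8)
The plan is to combine a strong expansion property of the random graph $\bG$ with a ``product trick'' that turns counting approximate solutions into bounding the smaller side of a near-balanced cut, and then package these into a certification algorithm in the sense of \pref{def:certification_algorithm}.

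\textbf{Step 1 (the random graph is a near-optimal expander).} First I would show that with probability $1-\exp(-n^{\Omega(\delta)})$ over $\bG\sim\UnsignedH{2}{m}{n}$, the graph satisfies: (a) every vertex has degree $(2\pm o(1))\Delta$, and (b) $\lambda_2(L_{\bG})\ge 1-o(1)$. Part (a) is a Chernoff bound for each degree (essentially a $\Binom(\Theta(n),\Theta(\Delta/n))$ random variable) together with a union bound over the $n$ vertices, which costs only $n\exp(-\Omega(\Delta))=\exp(-n^{\Omega(\delta)})$ since $\Delta=n^\delta$. Part (b) follows from \pref{thm:spectral_gap} --- the model $\UnsignedH{2}{m}{n}$ and $G(n,\Theta(\Delta/n))$ behave identically in this regime, and the bound can also be obtained directly by applying \pref{thm:spec-norm-bound} to the centered adjacency matrix of $\bG$. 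By Cheeger's inequality (\pref{thm:cheeger}), (b) gives conductance $\phi_{\bG}\ge\lambda_2(L_{\bG})/2\ge \frac12-o(1)$, so together with (a): every $S\subseteq[n]$ with $|S|\le n/2$ has $e_{\bG}(S,\ol S)\ge\phi_{\bG}\cdot\vol(S)\ge(1-o(1))\Delta|S|$, and every singleton has $e_{\bG}(\{v\},\ol{\{v\}})=\deg_{\bG}(v)\ge(2-o(1))\Delta$.

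\textbf{Step 2 (product trick and a deterministic structural bound).} Fix \emph{any} graph $G$ on $[n]$ with $m$ edges satisfying the conclusions of Step 1 and \emph{any} $\TwoXOR$ instance $\Inst$ whose underlying multigraph is $G$; I will bound the number of $(1-\eta)$-satisfying assignments of $\Inst$ deterministically (this uniformity over $\Inst$, i.e.\ dependence on $G$ only, is exactly what makes the statement reusable for the recursive $\kXOR$ argument). If no $(1-\eta)$-satisfying assignment exists there is nothing to prove, so fix one such $x^*$. For any $(1-\eta)$-satisfying $x$, set $y\coloneqq x\circ x^*$ and $S_{\pm}\coloneqq\{i:y_i=\pm1\}$. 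A constraint of $\pInst$ (all signs set to $+1$) is violated by $y$ only if it is violated by $x$ or by $x^*$ viewed as a constraint of $\Inst$, so by a union bound $y$ violates at most $2\eta m$ constraints of $\pInst$; but the constraints of $\pInst$ violated by $y$ are exactly the $G$-edges, with multiplicity, crossing the cut $(S_+,S_-)$, hence $e_G(S_+,S_-)\le 2\eta m=2\eta\Delta n$. Applying Step 1 to the smaller side gives $\min(|S_+|,|S_-|)\le(2+o(1))\eta n$; and if $\eta\le\frac1{3n}$ then $2\eta m\le\frac23\Delta<(2-o(1))\Delta$, which by the singleton bound forces $\min(|S_+|,|S_-|)=0$, i.e.\ $x\in\{x^*,-x^*\}$. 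Thus every $(1-\eta)$-satisfying assignment lies in $B(x^*,(2+o(1))\eta n)\cup B(-x^*,(2+o(1))\eta n)$ (just the two points $\{x^*,-x^*\}$ when $\eta\le\frac1{3n}$), so their number is at most $2\binom{n}{\le(2+o(1))\eta n}$, which is $\le 2$ for $\eta\le\frac1{3n}$ and, using the crude estimate $\binom{n}{\le r}\le n^{r+1}$ (and that the radius is in fact $0$ whenever $\eta\lesssim 1/n$), at most $2\exp(3\eta n\log n)$ for $\eta>\frac1{3n}$.

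\textbf{Step 3 (the certification algorithm), and the main obstacle.} On input $\bG$ the algorithm computes all degrees and $\lambda_2(L_{\bG})$ in polynomial time; if these meet the thresholds of Step 1 it outputs the corresponding bound from Step 2, and otherwise the trivial bound $2^n$. The first requirement of \pref{def:certification_algorithm} holds because Step 2 is a worst-case statement over all $\TwoXOR$ instances on any graph meeting those thresholds, and the second holds because Step 1 shows the thresholds are met with probability $1-\exp(-n^{\Omega(\delta)})$. I expect essentially all the work to be in Step 1 --- pinning down the near-optimal spectral gap (hence conductance) for the precise model $\UnsignedH{2}{m}{n}$ with failure probability as small as $\exp(-n^{\Omega(\delta)})$ --- and in the bookkeeping of $o(1)$ factors in Step 2 needed to land exactly on the constants $2$ and $3\eta n\log n$; the conceptual heart, the product-to-cut reduction combined with Cheeger's inequality, is short and robust.
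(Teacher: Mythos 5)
Your proposal is correct and follows essentially the same route as the paper's proof: certify degree concentration and a near-optimal spectral gap (outputting the trivial bound $2^n$ if the checks fail), use the product trick to reduce to the all-positive instance, apply Cheeger's inequality to force the disagreement set to be either empty (when $\eta\le\frac{1}{3n}$) or of size $O(\eta n)$, and count the resulting Hamming balls. The only differences are minor bookkeeping choices (e.g.\ your $(2+o(1))\eta n$ radius versus the paper's $3\eta n$ threshold, and the explicit check thresholds $2\Delta(1\pm\Delta^{-1/3})$ and $\lambda_2\ge 1-\Delta^{-1/4}$), which do not change the argument.
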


\begin{remark}
    Given a random graph, the algorithm simultaneously certifies an upper bound for all instances on this graph with arbitrary signings.
    The simultaneous certification will be crucial in the subsequent sections.
\end{remark}

Our first observation is that if the graph has large expansion, then given an approximate satisfying assignment, we cannot flip too many variables without violating many constraints.
Specifically, if we flip $S\subseteq [n]$ ($|S| < n/2$), then the number of clauses negated is $e(S, \overline{S})$, and this is large if the graph is an expander.
In fact, \pref{thm:2xor_upper_bound} holds for any instance with an expanding graph.
Our second observation is the following (which holds for all $k\geq 2$).

\begin{observation}
\label{obs:product_of_solns}
    Let $\Inst$ be a $\kXOR$ instance with arbitrary signings, and let $x, x'\in \{\pm1\}^n$ be two $(1-\eta)$-satisfying assignments.
    If a clause $(b,\cls)\in \Inst$ is satisfied by both, then $x_{\cls} \cdot x'_{\cls} = b^2 = 1$.
    Thus, the entry-wise product $x\circ x' \in \{\pm1\}^n$ is a $(1-2\eta)$-satisfying assignment for the all-positive instance $\pInst$ (changing all signings to $+1$).
\end{observation}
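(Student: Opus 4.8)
The plan is to prove this directly from the definition of $\kXOR$ satisfaction together with the multiplicativity of parity under the entrywise product. Recall that a $\kXOR$ clause $(b,\cls)$ is satisfied by an assignment $y\in\{\pm1\}^n$ precisely when $\prod_{i=1}^k y_{\cls[i]} = b$. The key algebraic identity I would use is that for any two assignments $x,x'$ and any tuple $\cls$,
\[
    \prod_{i=1}^k (x\circ x')_{\cls[i]} = \prod_{i=1}^k x_{\cls[i]}\, x'_{\cls[i]} = \left(\prod_{i=1}^k x_{\cls[i]}\right)\left(\prod_{i=1}^k x'_{\cls[i]}\right).
\]
So if a clause $(b,\cls)$ is satisfied by both $x$ and $x'$, the right-hand side equals $b\cdot b = b^2 = 1$, which is exactly the statement that the all-positive clause $(1,\cls)$ of $\pInst$ (the same variable tuple with the signing flipped to $+1$) is satisfied by $x\circ x'$. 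This is the only place where the $\XOR$ structure is genuinely used: for a general predicate $P$, knowing $P(c\circ x_\cls)=P(c\circ x'_\cls)=1$ tells us nothing about $P$ evaluated at $c\circ(x\circ x')_\cls$.

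Next I would count the clauses satisfied by both $x$ and $x'$. Let $A$ (resp.\ $B$) be the set of clauses of $\Inst$ satisfied by $x$ (resp.\ $x'$). By the hypothesis that $x,x'$ are $(1-\eta)$-satisfying, $|A|,|B|\ge(1-\eta)|\Inst|$, so by inclusion-exclusion $|A\cap B|\ge |A|+|B|-|\Inst|\ge(1-2\eta)|\Inst|$. By the previous paragraph, every clause in $A\cap B$, after its signing is flipped to $+1$, is satisfied by $x\circ x'$. Since $\pInst$ consists of exactly the clauses of $\Inst$ with all signings set to $+1$ (in particular $|\pInst|=|\Inst|$), it follows that $x\circ x'$ satisfies at least $(1-2\eta)|\pInst|$ clauses of $\pInst$, i.e.\ $x\circ x'$ is $(1-2\eta)$-satisfying for $\pInst$.

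There is no real obstacle here; the statement is a one-line consequence of the two steps above. The only mild care needed is notational: the symbol $x_\cls$ is used both for the $k$-tuple $(x_{\cls[1]},\dots,x_{\cls[k]})$ and, in the displayed claim $x_\cls\cdot x'_\cls = b^2$, for the scalar parity $\prod_{i=1}^k x_{\cls[i]}$, so I would state this convention explicitly at the start and then the argument goes through verbatim.
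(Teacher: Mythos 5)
Your proof is correct and follows exactly the reasoning the paper intends: the multiplicativity of the parity gives $b\cdot b=1$ for clauses satisfied by both assignments, and the union bound $|A\cap B|\ge(1-2\eta)|\Inst|$ yields the count. The paper leaves the union-bound step implicit in the word ``Thus,'' so your write-up is just a slightly more explicit version of the same argument.
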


We now proceed to prove \pref{thm:2xor_upper_bound}.

\begin{proof}[Proof of \pref{thm:2xor_upper_bound}]
    The algorithm is as follows:
    given a graph with $n$ vertices and $m = \Delta n = n^{1+\delta}$ edges, and a parameter $\eta\in [0,1]$.

    \begin{enumerate}[(1)]
        \item Check that all vertex degrees are within $2\Delta \left(1\pm \frac{1}{\Delta^{1/3}}\right)$.
        \label{step:check_vertex_degree}

        \item Compute $\lambda_2(L)$ where $L$ is the normalized Laplacian matrix.
        Check that $\lambda_2(L) \geq 1- \frac{1}{\Delta^{1/4}}$.
        \label{step:check_spectral_gap}

        \item If the checks fail, output $2^n$.
        Otherwise, if $\eta \leq \frac{1}{3n}$, output 2;
        otherwise, output $2e^{3\eta n \log n}$.
    \end{enumerate}

    The random graph $\bG$ is an \ER random graph sampled from $\calG(n,p)$ with $p = \frac{2m}{n^2} = 2n^{-1+\delta}$ (removing parallel edges and self-loops allowed by the random model).
    It is a standard result in random graph theory that at this density, all vertex degrees concentrate around $np = 2\Delta$; specifically, the check in \pref{step:check_vertex_degree} will succeed with probability $1- \exp(-\Omega(\Delta^{1/3}))$ by a simple Chernoff bound (cf.\ \cite{FK16}).
    Furthermore, the check in \pref{step:check_spectral_gap} will succeed with probability $1-\exp(-\Omega(\Delta^{1/4}))$ due to \pref{thm:spectral_gap}.
    Note that for any instance where the checks fail, the output $2^n$ is still a valid (trivial) upper bound.

    Consider a maximum satisfying assignment $x$ (assume it is $(1-\eta)$-satisfying, otherwise our bound trivially holds), and let $x'$ be any $(1-\eta)$-satisfying assignment.
    By \pref{obs:product_of_solns}, $y \coloneqq x\circ x'$ is a $(1-2\eta)$-satisfying assignment to the all-positive instance $\pInst$ on $\bG$.
    Clearly, the all-ones vector $\vec{1}$ and its negation are exactly satisfying assignments to $\pInst$.
    We will show that $y$ must be close to $\vec{1}$ or $-\vec{1}$, meaning that $x'$ must be close to $x$ or $-x$ in Hamming distance.

    Suppose (without loss of generality) that $y$ is closer to $\vec{1}$, and let $S = \{i: y_i = -1\}$ where $|S| \leq \frac{n}{2}$.
    Then, the constraints of $\pInst$ between $S$ and $\ol{S}$ will be violated.
    By Cheeger's inequality (\pref{thm:cheeger}), $e(S,\ol{S}) \geq \frac{\lambda_2}{2} \vol(S)$, and by the degree concentration, $\vol(S) \geq |S| \cdot 2\Delta(1 - o(1))$.
    Thus, the spectral gap $\lambda_2\geq 1-o(1)$ and the degree bounds together certify that
    \begin{equation*}
        e(S,\ol{S}) \geq \Delta |S| (1-o(1)).
    \end{equation*}

    If $|S| \geq 3\eta n$, then $e(S,\ol{S}) > 2\eta \Delta n = 2\eta m$, contradicting that $y$ is a $(1-2\eta)$-satisfying assignment of $\pInst$.
    Thus, any $(1-\eta)$-satisfying assignment $x'$ must be $\floor{3\eta n}$-close to $x$ or $-x$ in Hamming distance.
    Note that if $\eta \leq \frac{1}{3n}$, then $x'$ can only be $\pm x$.

    For the $\eta > \frac{1}{3n}$ case, the number of assignments $\floor{3\eta n}$ away from $\pm x$ is upper bounded by
    \begin{equation*}
        2\sum_{\ell=0}^{\min(\floor{3\eta n}, n)} \binom{n}{\ell}
        \leq 2e^{3\eta n \log n}.
    \end{equation*}
    Note that if $\eta = O(1)$, then the upper bound trivially holds.
    For $\eta = o(1)$, we use the fact that $\binom{n}{3\eta n} \leq \frac{n^{3\eta n}}{(3\eta n)!}$.
\end{proof}

\subsection{Count certification for $\kXOR$}

For $k\geq 3$, we can obtain subexponential upper bounds by a recursive algorithm.
Same as \pref{thm:2xor_upper_bound}, our algorithm simultaneously certifies an upper bound for all $\kXOR$ instances on the given random hypergraph.

\begin{theorem}
\label{thm:kxor_upper_bound}
    For constant $k\geq 3$, let $\bH \sim \UnsignedH{k}{m}{n}$ be a random $k$-uniform hypergraph with $m = \Delta n$ and $\Delta = n^{\delta}$ (for some constant $\delta\in (0, k-1)$).
    For any $\eta \in [0,1]$ and $\eps > 0$, there is a polynomial-time algorithm certifying that the number of $(1-\eta)$-satisfying assignments to any $\kXOR$ instance on $\bH$ is at most
    \begin{equation*}
        2(2n)^{k-2} \cdot \exp\left(O(\eta n \log n) \right) \cdot \exp\left(O\left(n^{1 - \frac{\delta}{k-2}+ \eps}\right)\right)
    \end{equation*}
    with probability at least $1 - n^k\exp(-n^{\Omega(\eps)})$ over the randomness of $\bH$.
\end{theorem}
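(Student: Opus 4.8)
The plan is to reduce counting approximate $\kXOR$ solutions to the same problem for $(k-1)\XOR$ on a smaller random hypergraph and to recurse, with \pref{thm:2xor_upper_bound} as the base case. Fix once and for all a partition of $[n]$ into $\ell$ equal blocks $S_1,\dots,S_\ell$, where $\ell$ is a parameter we will ultimately set to $\approx n^{\delta/(k-2)-\eps}$. Given $\bH$, the algorithm forms for each block $S_i$ the induced $(k-1)$-uniform hypergraph $H_{S_i}$ on $\ol{S_i}$ (\pref{def:induced_hypergraph}) and recursively runs the $(k-1)\XOR$ count-certification algorithm on each of them, bottoming out in the degree-concentration and spectral-gap checks of \pref{thm:2xor_upper_bound} at uniformity $2$; if any check fails it outputs the trivial bound $2^n$. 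The recursion tree has $O(\ell^{k-2})=\poly(n)$ nodes, and at each node the object handled is genuinely a random hypergraph of the appropriate uniformity: for a fixed block $S_i$, each ordered $(k-1)$-tuple in $\ol{S_i}$ is included in $H_{S_i}$ independently of the others, since distinct tuples are fed by disjoint families of potential hyperedges of $\bH$, and the inclusion probability works out to density $\Theta(\Delta/\ell)$. Hence the recursive hypothesis applies at every node, and a union bound over the tree gives overall success probability $1-n^{k}\exp(-n^{\Omega(\eps)})$.

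For correctness of the numerical bound, fix a $\kXOR$ instance $\Inst$ on $\bH$ and a $(1-\eta)$-satisfying assignment $x$. Assign each clause $(b,\cls)$ of $\Inst$ to the block containing $\cls[1]$; call a clause assigned to block $i$ \emph{captured} if $\cls[2{:}k]\subseteq\ol{S_i}$, so that $x$ violates it as a $\kXOR$ clause exactly when $x|_{\ol{S_i}}$ violates the corresponding clause of the induced $(k-1)\XOR$ instance $\Inst_{S_i,\,x|_{S_i},\,k-1}$ (\pref{def:induced_txor}). Since each block receives $\approx m/\ell \gg\log n$ clauses, a Chernoff bound shows that with high probability a $(1-O(1/\ell))$ fraction of all $m$ clauses is captured. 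Writing $m_i$ for the number captured in block $i$ and $\eta_i$ for the fraction of those violated by $x$, we get $\sum_i m_i\eta_i\le\eta m$ and $\sum_i m_i\ge(1-O(1/\ell))m$, whence $\min_i\eta_i\le 2\eta$ once $\ell$ exceeds a constant multiple of $k$. Letting $i^\star(x)$ be a block achieving this minimum and summing over its possible values,
\[
  \#\{(1-\eta)\text{-satisfying }x\}
  \;\le\; \sum_{i=1}^{\ell}\,\#\bigl\{x : x|_{\ol{S_i}}\text{ is }(1-2\eta)\text{-satisfying for }\Inst_{S_i,\,x|_{S_i},\,k-1}\bigr\}
  \;\le\; \sum_{i=1}^{\ell} 2^{|S_i|}\cdot F_i,
\]
where $F_i$ is the count-certification bound the recursion yields for \emph{all} $(k-1)\XOR$ instances on the fixed hypergraph $H_{S_i}$ with error parameter $2\eta$. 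It is exactly here that we use that the recursive bound depends only on the hypergraph: it holds simultaneously over all $2^{|S_i|}$ signings $\sigma=x|_{S_i}$, so the algorithm never has to enumerate assignments to $S_i$.

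It remains to unfold the recurrence. Each level contributes a factor $2^{|S_i|}=2^{n/\ell}$ from the outer block together with a $\poly(n)$ factor (the sum over the $\ell$ blocks, plus leading constants, including the ``$2$'' for the clusters $\pm x$ in \pref{thm:2xor_upper_bound}); after the $k-2$ levels from $\kXOR$ down to $\TwoXOR$ these compound to $2^{O((k-2)n/\ell)}\cdot\poly(n)$, while the $\TwoXOR$ base case supplies the $\exp(O(\eta n\log n))$ term (the error parameter having grown only by a bounded factor through the recursion). Choosing $\ell = n^{\delta/(k-2)-\eps}$ turns $2^{O((k-2)n/\ell)}$ into $\exp\bigl(O(n^{1-\delta/(k-2)+\eps})\bigr)$, and the density available at level $j$ is $\Delta/\ell^{k-j}=n^{\delta-(k-j)(\delta/(k-2)-\eps)}\ge n^{(k-2)\eps}\gg\log n$, so every level — in particular the base case — meets the density requirement of the level below. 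Collecting the pieces gives the claimed bound.

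I expect the main obstacle to be precisely this balancing act, carried out simultaneously across all $k-2$ levels: the partition must be coarse enough that $2^{n/\ell}$ is subexponential and that only an $o(1)$ fraction of clauses is lost per level, yet fine enough that dividing the density by $\ell$ a total of $k-2$ times still leaves it polynomially large, which forces $\ell$ into a narrow window around $n^{\delta/(k-2)}$; alongside this one must track the growth of the error parameter $\eta$ and of the polynomial prefactors through all the recursive calls and union-bound the failure events over the whole recursion tree, and one must verify that each induced hypergraph really is a fresh random $(k-1)$-uniform hypergraph of the advertised density. The fact that the $\TwoXOR$ (and hence every) count-certification guarantee is hypergraph-only, with no dependence on signings, is what prevents an exponential enumeration blowup in the recursion, so confirming that the induced-instance reduction preserves this property is the crux of the argument.
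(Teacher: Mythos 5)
Your proposal is essentially the paper's own argument: partition $[n]$ into blocks of size $\approx n^{1-\delta/(k-2)+\eps}$, pass to the induced $\kmoXOR$ instances (whose certified bounds depend only on the hypergraph, so the $2^{|S_i|}$ signings need not be enumerated), recurse down to the $\TwoXOR$ base case of \pref{thm:2xor_upper_bound}, and verify that each induced hypergraph is again a random hypergraph whose density exponent drops by $\delta/(k-2)-\eps$ per level. The only structural difference is cosmetic: you keep one partition parameter $\ell$ across all levels, while the paper re-optimizes the block size at each level of the recursion (\pref{lem:average_partition} plus the inductive choice of $c$); both yield the stated bound.

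Two points need patching. First, your aggregation step ("$\min_i \eta_i \le 2\eta$") rests on the event that a $(1-O(1/\ell))$ fraction of clauses is captured, which holds only with high probability; as written your algorithm never checks it, so on an adversarial hypergraph the output $\sum_i 2^{|S_i|}F_i$ could fail to be an upper bound, violating condition 1 of \pref{def:certification_algorithm}. Either add an explicit check of the captured count (outputting $2^n$ on failure), or do what the paper does in \pref{lem:average_partition}: pass down the \emph{absolute} violation threshold $\lfloor k\eta m/\ell\rfloor$, using that each clause of $\Inst$ contributes at most $k$ induced clauses across the blocks — this makes the aggregation deterministic, and concentration of the induced clause count is then needed only for the numerical quality of the bound, not for its validity. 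Second, your choice $\ell\approx n^{\delta/(k-2)-\eps}$ is meaningless when $\delta\ge(k-2)(1+\eps)$ (the theorem allows $\delta$ up to $k-1$), since then $\ell>n$; you must cap $\ell$ at $n$ (singleton blocks, the paper's $c=0$ case), after which the density bookkeeping still goes through because $\delta-1<k-2$. With these fixes the proof is correct and matches the paper's.
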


\begin{remark}
    The $(2n)^{k-2}$ in the upper bound is there to handle the case when $\eta = o(1/n)$ and $\delta \geq k-2$ (very dense instance).
    In this case, we get a $\poly(n)$ upper bound.
\end{remark}

The main idea is that if we have a certification algorithm for $\kmoXOR$, then we can obtain upper bounds for the \textit{induced $\kmoXOR$} defined in \pref{def:induced_txor}.
At a high level, the algorithm will do the following:
1) fix a set $S\subseteq V$ of a certain size,
2) look at the induced $(k-1)$-uniform hypergraph $H_{S}$ (\pref{def:induced_hypergraph}),
3) run the certification algorithm for $\kmoXOR$ on $H_{S}$ to obtain an upper bound,
and 4) multiply by $2^{|S|}$.

The intuition is that for every assignment $\sigma_S\in \{\pm1\}^S$, we get an induced $\kmoXOR$ instance with a random hypergraph and arbitrary signings (determined by $\sigma_S$).
Here we crucially use the fact that our algorithm simultaneously certifies a bound for all signing patterns, hence we avoid enumerating every assignment $\sigma_S$.
Once we have an upper bound on approximate solutions to the induced instances, we simply multiply it by $2^{|S|}$ to get the final upper bound.

We immediately see that for a fixed subset $S$, the above procedure throws away most of the clauses (keeping only clauses that have 1 variable in $S$).
Thus, it is clearly suboptimal to look at just one subset $S$.
To resolve this, we partition $V$ into subsets $S_1,\dots, S_\ell$, run the algorithm on each of them, and aggregate the results via the following lemma.

\begin{lemma}
\label{lem:average_partition}
    Given a $\kXOR$ instance $\Inst$, a partition of vertices $S_1, \dots, S_\ell$, and a threshold $t$.
    Suppose for each $i \in [\ell]$ and each induced $\kmoXOR$ instance $\Inst_{S_i, \sigma_{S_i}}$, the number of assignments (on variables $[n]\setminus S_i$) that violate at most $\floor{\frac{kt}{\ell}}$ constraints is upper bounded by $u_i$,
    then the number of assignments violating at most $t$ constraints in $\Inst$ is upper bounded by $\sum_{i=1}^\ell 2^{|S_i|} u_i$.
\end{lemma}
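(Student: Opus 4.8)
The plan is to fix any assignment $x \in \{\pm1\}^n$ that violates at most $t$ constraints of $\Inst$ and show that $x$ is ``counted'' by at least one of the induced instances, in a way that lets us union-bound over the $\ell$ parts. For a fixed part $S_i$, decompose $x$ as $(\sigma_{S_i}, \tau_i)$ where $\sigma_{S_i} \in \{\pm1\}^{S_i}$ is the restriction of $x$ to $S_i$ and $\tau_i \in \{\pm1\}^{[n]\setminus S_i}$ is the restriction to the complement. The key observation is that the clauses of $\Inst$ that contribute to the induced instance $\Inst_{S_i,\sigma_{S_i}}$ are exactly those with $k-1$ variables in $S_i$ and one variable outside; call this set of clauses $C_i$. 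By the definition of the induced instance (\pref{def:induced_txor}), $x$ satisfies a clause $(b,\cls) \in C_i$ if and only if $\tau_i$ satisfies the corresponding clause of $\Inst_{S_i,\sigma_{S_i}}$. Hence the number of constraints of $\Inst_{S_i,\sigma_{S_i}}$ violated by $\tau_i$ equals the number of clauses in $C_i$ violated by $x$, which is at most the total number of clauses of $\Inst$ violated by $x$, i.e.\ at most $t$. That alone is not good enough — we need the bound $\lfloor kt/\ell\rfloor$ — so the first real step is a counting/averaging argument.

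The averaging step goes as follows. Let $B \subseteq \Inst$ be the set of clauses violated by $x$, so $|B| \le t$. Each clause $(b,\cls) \in \Inst$ has exactly $k$ variables (with multiplicity, but the relevant count is over the $k$ slots), and for the clause to land in $C_i$ it must have exactly one variable outside $S_i$; since the $S_i$ partition $V$, a clause with all $k$ variables distinct and contained in $S_i \cup S_j$ with one in $S_j$ is counted in $C_j$ only. More carefully: sum over $i \in [\ell]$ the quantity $|B \cap C_i|$. A violated clause $(b,\cls) \in B$ contributes to $C_i$ for those $i$ such that exactly one of its $k$ variables lies outside $S_i$; the number of such $i$ is at most $k$ (it is exactly the number of distinct parts that a single variable of $\cls$ lies in, when the other $k-1$ lie in one part — in any case bounded by $k$). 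Therefore $\sum_{i=1}^{\ell} |B \cap C_i| \le k|B| \le kt$, so by averaging there exists an index $i^\ast \in [\ell]$ with $|B \cap C_{i^\ast}| \le \lfloor kt/\ell\rfloor$. For this $i^\ast$, the restriction $\tau_{i^\ast}$ of $x$ violates at most $\lfloor kt/\ell\rfloor$ constraints of $\Inst_{S_{i^\ast},\sigma_{S_{i^\ast}}}$, hence $\tau_{i^\ast}$ is one of the at most $u_{i^\ast}$ assignments counted in the hypothesis.

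Finally, to finish we bound the total count. Every good $x$ maps to a pair $(i^\ast, \sigma_{S_{i^\ast}}, \tau_{i^\ast})$ where $\tau_{i^\ast}$ ranges over a set of size at most $u_{i^\ast}$ and $\sigma_{S_{i^\ast}} \in \{\pm1\}^{S_{i^\ast}}$ ranges over at most $2^{|S_{i^\ast}|}$ values; and $x$ is recovered from this data since $(\sigma_{S_{i^\ast}},\tau_{i^\ast})$ determines $x$ on all of $[n]$. Summing over $i^\ast \in [\ell]$ gives the bound $\sum_{i=1}^{\ell} 2^{|S_i|} u_i$ on the number of $(1 - t/|\Inst|)$-type good assignments, as claimed.

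The main obstacle I anticipate is the clean handling of the counting step $\sum_i |B\cap C_i| \le k|B|$, i.e.\ being careful about clauses whose variable tuple $\cls \in [n]^k$ has repeated entries or whose $k$ variables span three or more parts (in which case the clause contributes to \emph{no} $C_i$, which only helps). One must check that ``exactly one variable outside $S_i$'' is the right membership condition for $C_i$ matching \pref{def:induced_txor} with $t=1$ there (note the unfortunate clash of the symbol $t$: the lemma's threshold $t$ versus the induced-instance parameter, which here is $k-1$, i.e.\ $\Inst_{S_i,\sigma_{S_i}} = \Inst_{S_i,\sigma_{S_i},k-1}$), and that the per-clause multiplicity across parts is at most $k$ and not something larger. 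Everything else is bookkeeping.
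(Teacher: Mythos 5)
Your overall strategy is the paper's: average the violated clauses over the parts to find an index $i^\ast$ for which at most $\floor{kt/\ell}$ induced constraints are violated, then count via $2^{|S_{i^\ast}|}\cdot u_{i^\ast}$ and sum over $i^\ast$ (the paper writes this last step as a swap of two sums rather than as an encoding/recovery argument, but that difference is cosmetic). However, there is a concrete error in your identification of the clause set $C_i$. The induced instance in the lemma is $\Inst_{S_i,\sigma_{S_i}} = \Inst_{S_i,\sigma_{S_i},k-1}$, a $\kmoXOR$ instance on the variables $[n]\setminus S_i$; by \pref{def:induced_txor} (with the definition's parameter equal to $k-1$) its constraints come from clauses of $\Inst$ having exactly \emph{one} variable in $S_i$ and $k-1$ variables outside $S_i$, with the sign multiplied by the value of $\sigma$ on that single $S_i$-variable. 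You instead declare $C_i$ to be the clauses with $k-1$ variables in $S_i$ and one variable outside; that describes the induced $1\XOR$ instance $\Inst_{S_i,\sigma_{S_i},1}$, and your parenthetical ``matching \pref{def:induced_txor} with $t=1$ there'' cannot be reconciled with your (correct) assertion that $\Inst_{S_i,\sigma_{S_i}}=\Inst_{S_i,\sigma_{S_i},k-1}$. As written, your key correspondence step (``$x$ satisfies a clause of $C_i$ iff $\tau_i$ satisfies the corresponding induced clause'') ties the violations of $x$ to the wrong instance, so the hypothesis bound $u_i$ --- which concerns the $(k-1)$-ary induced instance on $[n]\setminus S_i$ --- is never actually invoked.

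The fix is simply to swap ``in'' and ``outside'': let $C_i$ be the clauses with exactly one variable in $S_i$ and $k-1$ outside, note that $x$ violates such a clause iff $\tau_i$ violates the corresponding constraint of $\Inst_{S_i,\sigma_{S_i}}$ (the $S_i$-variable's value is absorbed into the modified sign $b'$), and observe that a fixed clause belongs to at most $k$ of the sets $C_i$, since each of its $k$ variables lies in exactly one part and a part qualifies only if it contains exactly one of the clause's variables --- this is precisely the paper's observation that each hyperedge contributes at most $k$ hyperedges to the union of the induced hypergraphs. With that correction, your averaging bound $\sum_i |B\cap C_i|\le k|B|\le kt$, the choice of $i^\ast$, and the final count $\sum_{i=1}^{\ell} 2^{|S_i|}u_i$ all go through exactly as in the paper's proof.
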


\begin{proof}
    Let $H$ be the underlying $k$-uniform hypergraph.
    Consider the induced $(k-1)$-uniform hypergraphs $H_{S_1},\dots H_{S_\ell}$.
    Each hyperedge in $H$ contributes at most $k$ hyperedges in the induced hypergraphs (i.e.\ the union of $H_{S_1},\dots, H_{S_\ell}$ will have at most $k$ copies of the same hyperedge).
    Thus, for any assignment $\sigma$ that violates $\leq t$ constraints in $\Inst$, there must be an $i\in [\ell]$ such that $\leq \floor{\frac{kt}{\ell}}$ constraints are violated in the induced $\kmoXOR$ instance $\Inst_{S_i, \sigma_{S_i}}$.

    Next, we bound the number of assignments violating at most $t$ constraints.
    \begin{equation*}
    \begin{aligned}
        \sum_{\sigma\in \{\pm1\}^n} \bone(\sigma \text{ violates $\leq t$ constraints in $\Inst$})
        &\leq \sum_{\sigma\in \{\pm1\}^n} \sum_{i=1}^\ell \bone\left(\sigma\text{ violates $\leq \left\lfloor \frac{kt}{\ell} \right\rfloor$ constraints in $\Inst_{S_i,\sigma_{S_i}}$} \right) \\
        &\leq \sum_{i=1}^\ell \left|\left\{\sigma: \sigma \text{ violates $\leq \left\lfloor\frac{kt}{\ell} \right\rfloor$ constraints in $\Inst_{S_i,\sigma_{S_i}}$}\right\}\right| \\
        &\leq \sum_{i=1}^\ell 2^{|S_i|} u_i.
    \end{aligned}
    \end{equation*}
    The second inequality follows by switching the two summations, and the final inequality holds because an upper bound $u_i$ for the induced $\kmoXOR$ instance implies an upper bound of $2^{|S_i|} u_i$ by enumerating all possible assignments to $S_i$.
\end{proof}

In the proof of \pref{thm:kxor_upper_bound}, we will partition $[n]$ into $\ell = n^{1-c}$ subsets of size $n^c$ for some $c<1$ chosen later.
The upper bounds $u_1,\dots,u_{\ell}$ will be obtained recursively and will roughly be the same with high probability.
Thus, by \pref{lem:average_partition}, we get an upper bound of $\ell 2^{n^c} u$, where we can choose $c$ to obtain the optimal result.

\begin{proof}[Proof of \pref{thm:kxor_upper_bound}]
    First, we can assume without loss of generality that $\eps < \frac{\delta}{k-2}$, otherwise the upper bound trivially holds.
    Our algorithm is the exact same procedure as \pref{lem:average_partition}.

    \noindent\rule{16cm}{0.4pt}

    \noindent{\bf Certification Algorithm for $\kXOR$}

    \noindent{\bf Input:}
    $k$-uniform hypergraph with $n$ variables and $m = \Delta n = n^{1+\delta}$ edges, parameters $\eta\in [0,1], \eps \in (0, \frac{\delta}{k-2})$.

    \begin{enumerate}[(1)]
        \item If $\delta < k-2$, choose $c = 1 - \frac{\delta}{k-2} + \eps$ for $0 < \eps < \frac{\delta}{k-2}$; if $\delta \geq k-2$, choose $c = 0$.
        Partition the vertices into $\ell = n^{1-c}$ subsets $S_1,\dots, S_{\ell}$ of size $n^c$, and extract the induced $(k-1)$-uniform hypergraphs $H_{S_1},\dots,H_{S_\ell}$ (removing duplicate hyperedges).

        \item Run the $\kmoXOR$ certification algorithm with $\eta' m' = \frac{k}{\ell}\eta m$ on each $H_{S_i}$, where $m'$ is the number of hyperedges in $H_{S_i}$.
        Let $u_i$ be the upper bound.

        \item Output $\sum_{i=1}^\ell 2^{|S_i|} u_i$.
    \end{enumerate}

    \noindent\rule{16cm}{0.4pt}

    We will prove the correctness of the algorithm by induction on $k$.
    The base case is $k=2$, and our $\twoxor$ algorithm from \pref{thm:2xor_upper_bound} achieves the same guarantees
    (we assume in this case $n^{1-\frac{\delta}{k-2}+\eps} = 0$).
    Now, suppose $k\geq 3$ and we have a $\kmoXOR$ algorithm with performance as stated in \pref{thm:kxor_upper_bound}.
    Then, by \pref{lem:average_partition}, the output is a valid upper bound on the number of assignments violating at most $\eta m$ constraints.
    \pref{thm:kxor_upper_bound} requires the number of hyperedges $m = n^{1+\delta}$ for some constant $\delta \in (0, k-1)$, and thus it suffices to prove that the induced hypergraphs $H_{S_1},\dots, H_{S_\ell}$ have the required density, which we can control by choosing $c$.

    For each $S_i$, the induced $\kmoXOR$ is a random $(k-1)$-uniform hypergraph where each $(k-1)$-tuple is included with probability $q \coloneqq 1- (1-p)^{|S_i|}$, where $p = \frac{m}{n^k} = n^{1+\delta-k}$
    (recall that we treat hyperedges as tuples; removing duplicates will not affect the upper bound).
    Here we split into two cases,
    \begin{itemize}
        \item $\delta \leq k-2$: we set $c = 1 - \frac{\delta}{k-2} + \eps < 1$.
        Then, $p |S_i| = n^{1+\delta-k}\cdot n^c < n^{-1+c} = o(1)$.
        Thus, $q = p|S_i| (1 \pm o(1)) = pn^c (1 \pm o(1))$.

        \item $\delta > k-2$: we set $c = 0$, thus $q = p = pn^c$.
    \end{itemize}
    In both cases, the number of hyperedges in the induced hypergraph is concentrated around
    \begin{equation*}
        m' = q n^{k-1} = \frac{m}{n^k} n^c (1 \pm o(1))\cdot n^{k-1} = n^{\delta + c} (1 \pm o(1)).
    \end{equation*}
    Thus, the density $\Delta' = \frac{m'}{n-n^c} \sim n^{\delta+c-1}$.
    Let $\delta' \coloneqq \delta + c - 1$.
    Again, we split into the two cases,
    \begin{itemize}
        \item $\delta \leq k-2$: we have $\delta' = \delta + (1-\frac{\delta}{k-2} + \eps) -1 = (1- \frac{1}{k-2})\delta + \eps \geq \eps > 0$ since $\eps > 0$, and $\delta' \leq (k-2) + c-1 < k-2$ since $c < 1$.
        Further, $\frac{\delta'}{k-3} \geq \frac{\delta}{k-2} + \frac{\eps}{k-3} > \frac{\delta}{k-2}$.

        \item $\delta > k-2$: we have $\delta' = \delta - 1 > k-3 \geq 0$ since $k \geq 3$, and $\delta' < k-2$ since $\delta < k-1$.
        Further, $\frac{\delta'}{k-3} = \frac{\delta-1}{k-3} > \frac{\delta}{k-2}$.
    \end{itemize}
    In both cases, the induced $\kmoXOR$ instance has the required density $\Delta' = n^{\delta'}$ with $\delta' \in (0, k-2)$, which means we can apply the $\kmoXOR$ algorithm.
    The parameter $\eta'$ is set to $\frac{1}{m'} \cdot \frac{k}{\ell} \eta m \sim k\eta$ (capped at 1), and set $\eps' = \eps$.

    The $\kmoXOR$ algorithm on the induced instance will certify an upper bound of
    \begin{equation*}
        u_i \leq 2(2n)^{k-3} \exp\left(O(\eta' n \log n) \right) \cdot \exp\left( O\left(n^{1 - \frac{\delta'}{k-3}+ \eps'}\right)\right).
    \end{equation*}

    Since $\eta' = O(\eta)$, $\eps' = \eps$, and $\frac{\delta'}{k-3} > \frac{\delta}{k-2}$, our final upper bound is
    \begin{equation*}
        \sum_{i=1}^\ell 2^{|S_i|}u_i
        \leq 2(2n)^{k-2} \cdot \exp\left(O(\eta n \log n) \right) \cdot \exp\left(O\left(n^{1 - \frac{\delta}{k-2}+ \eps}\right)\right).
    \end{equation*}

    Finally, we bound the failure probability.
    The $\kmoXOR$ algorithm fails with probability $< n^{k-1}\exp(-n^{\Omega(\eps')}) = n^{k-1}\exp(-n^{\Omega(\eps)})$.
    We union bound over the $\ell$ induced hypergraphs, we get the failure probability $< n^k \exp(-n^{\Omega(\eps)})$.
\end{proof}

\subsection{Count certification for all $\kCSP$s}
First, observe that as an immediate consequence of \pref{thm:kxor_upper_bound} and \pref{lem:kXOR-principle}, we have:
\begin{corollary}   \label{cor:kSAT-count}
    For constant $k\ge 3$, let $\bInst\sim\HDist{k}{n}{m}$ be a random signed hypergraph where $m = \Delta n = n^{1+\delta}$.  For every constant $\eps > 0$, there is an algorithm that certifies with high probability that the number of $(1-\eta)$-satisfying assignments to $\bInst$ as an instance of $\kSAT$ is at most
    \[
        \exp\left(\wt{O}(\eta n)\right)\cdot\exp\left(\wt{O}\left(n^{\frac{k+1}{4}-\frac{\delta}{2}}\right)\right) \cdot \exp\left(O\left(n^{1-\frac{\delta}{k-2}+\eps}\right)\right).
    \]
\end{corollary}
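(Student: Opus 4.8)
The plan is to compose the $\kXOR$-principle (\pref{lem:kXOR-principle}) with the count-certification algorithm for $\kXOR$ (\pref{thm:kxor_upper_bound}); the only genuine work is bookkeeping of densities and error terms.

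First I would view $\bInst$ as a $\kSAT$ instance with $m = n^{1+\delta}$ clauses, so that in the notation of \pref{lem:kXOR-principle} one has $\alpha = m/n^{(k-1)/2} = n^{1+\delta-(k-1)/2}$. Running the algorithm of \pref{lem:kXOR-principle} on $\bInst$ certifies, with high probability, that every $(1-\eta)$-satisfying assignment of $\bInst$ as a $\kSAT$ instance violates at most $V \le 2^{k-1}\eta\,m + 2^{O(k)}\sqrt{\alpha\log^5 n}\cdot n^{(k-1)/2}$ of the $\kXOR$ clauses of $\bInst$. Dividing by $m$ and simplifying the exponent --- the normalization $m=n^{1+\delta}$ rewrites $\sqrt\alpha\cdot n^{(k-1)/2}$ as $n^{1+\delta}\cdot n^{(k-3)/4-\delta/2}$ --- this says the $\kXOR$-violation \emph{fraction} of any such assignment is at most $\eta' \coloneqq 2^{k-1}\eta + \wt{O}(n^{(k-3)/4-\delta/2})$; equivalently, every $(1-\eta)$-$\kSAT$-satisfying assignment is a $(1-\eta')$-satisfying assignment of $\bInst$ viewed as a $\kXOR$ instance.

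Second, the underlying hypergraph $\bH$ of $\bInst$ is distributed as $\UnsignedH{k}{m}{n}$ with $\Delta = n^\delta$, so \pref{thm:kxor_upper_bound} applies with parameters $\eta'$ and $\eps$; crucially its certificate depends only on $\bH$ and not the signings, hence it simultaneously bounds the number of $(1-\eta')$-satisfying assignments of \emph{every} $\kXOR$ instance on $\bH$ --- in particular of the actual signed instance $\bInst$ --- by $2(2n)^{k-2}\cdot\exp(O(\eta' n\log n))\cdot\exp(O(n^{1-\frac{\delta}{k-2}+\eps}))$. Chaining the two certificates (the composed algorithm outputs the trivial bound $2^n$ if either sub-algorithm fails) certifies the same quantity for the number of $(1-\eta)$-$\kSAT$-satisfying assignments of $\bInst$. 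Substituting $\eta'$, one has $\eta' n\log n = 2^{k-1}\eta n\log n + \wt{O}(n^{1+(k-3)/4-\delta/2}) = \wt{O}(\eta n) + \wt{O}(n^{(k+1)/4-\delta/2})$ and $2(2n)^{k-2} = \exp(\wt{O}(1))$, so the bound collapses to $\exp(\wt{O}(\eta n))\cdot\exp(\wt{O}(n^{(k+1)/4-\delta/2}))\cdot\exp(O(n^{1-\delta/(k-2)+\eps}))$, as claimed. A union bound over the $o(1)$ failure probability of \pref{thm:aow-ref} (invoked inside \pref{lem:kXOR-principle}) and the $n^k\exp(-n^{\Omega(\eps)})$ failure probability of \pref{thm:kxor_upper_bound} finishes the argument.

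There is no substantive obstacle --- the result really is ``immediate'' --- but two points deserve care. The first is matching conventions: \pref{lem:kXOR-principle} controls the \emph{absolute} number of violated $\kXOR$ clauses, while \pref{thm:kxor_upper_bound} is parameterized by the violated \emph{fraction}, and it is precisely the $m=n^{1+\delta}$ normalization in this conversion that produces the exponent $(k+1)/4-\delta/2$. The second is that one must explicitly invoke the signing-independence of the $\kXOR$ counting algorithm so that the conclusion of the $\kXOR$-principle --- a statement about the \emph{signed} instance $\bInst$ --- can be plugged in without re-randomizing over signs. No lower bound on $\delta$ is needed: when $\delta\le(k-3)/2$ the middle factor $\exp(\wt{O}(n^{(k+1)/4-\delta/2}))$ is already $\exp(\Omega(n))$ and the statement is vacuous.
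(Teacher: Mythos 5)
Your proposal is correct and matches the paper's own route: the paper derives \pref{cor:kSAT-count} exactly as the immediate composition of \pref{lem:kXOR-principle} with \pref{thm:kxor_upper_bound}, and your conversion of the absolute clause-violation bound into the fraction $\eta' = 2^{k-1}\eta + \wt{O}(n^{(k-3)/4-\delta/2})$ is precisely the bookkeeping that yields the middle factor $\exp(\wt{O}(n^{(k+1)/4-\delta/2}))$. The remarks on signing-independence and on the vacuity of the bound for small $\delta$ are consistent with the paper's treatment, so nothing further is needed.
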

It is simple to upgrade the statement of \pref{cor:kSAT-count} from the case of $\kSAT$ to all $\kCSP$s.
\begin{corollary}   \label{cor:kCSP-to-kSAT}
    Let $P$ be any predicate not equal to the constant-$1$ function.  For constant $k\ge 3$, let $\bInst\sim\HDist{k}{n}{m}$ be a random signed hypergraph where $m = \Delta n = n^{1+\delta}$.  For every constant $\eps>0$ there is an algorithm that certifies with high probability that the number of $(1-\eta)$-satisfying assignments to $\bInst$ as an instance of $P$ is at most
    \[
        \exp\left(\wt{O}(\eta n)\right)\cdot\exp\left(\wt{O}\left(n^{\frac{k+1}{4}-\frac{\delta}{2}}\right)\right) \cdot \exp\left(O\left(n^{1-\frac{\delta}{k-2}+\eps}\right)\right).
    \]
\end{corollary}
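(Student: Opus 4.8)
The plan is to reduce the $P$-CSP counting problem to the $\kSAT$ counting problem already solved in \pref{cor:kSAT-count}, via a clause-by-clause rewriting that leaves the distribution of random instances unchanged and can only shrink the set of $(1-\eta)$-satisfying assignments. First, since $P$ is not the constant-$1$ function, I would fix any $z^\ast \in \{\pm 1\}^k$ with $P(z^\ast)=0$ (found by brute force, as $k$ is a constant). Given a $P$-CSP instance $\Inst$, I would build a $\kSAT$ instance $\Inst'$ on the same variables by replacing each clause $(c,\cls)\in\Inst$ with the single $\kSAT$ clause $(c\circ z^\ast,\cls)$. The point is that a $\kSAT$ clause $(c',\cls)$ is violated by $x$ precisely when $c'\circ x_\cls=\vec{1}$; so $(c\circ z^\ast,\cls)$ is violated iff $c\circ x_\cls=z^\ast$, and since $P(z^\ast)=0$ this forces $P(c\circ x_\cls)=0$, i.e.\ the original clause is violated too. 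Hence for every $x$ the fraction of $\kSAT$ clauses of $\Inst'$ it violates is at most the fraction of $P$-clauses of $\Inst$ it violates; in particular every $(1-\eta)$-satisfying assignment of $\Inst$ as a $P$-CSP is $(1-\eta)$-satisfying for $\Inst'$ as $\kSAT$, so the count for $\Inst$ is at most the count for $\Inst'$.

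Next I would observe that for fixed $z^\ast$ the map $(c,\cls)\mapsto(c\circ z^\ast,\cls)$ is a bijection on the set of all $2^k n^k$ potential hyperedges---indeed it is an involution. Therefore if $\bInst\sim\HDist{k}{m}{n}$ (each potential hyperedge included independently with probability $m/(2^kn^k)$), the rewritten instance $\bInst'$ is again distributed as $\HDist{k}{m}{n}$, i.e.\ a random $\kSAT$ formula of the same density $\Delta=m/n=n^\delta$.

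The certification algorithm for the $P$-CSP is then the composition: compute $\bInst'$ deterministically from the input, run the $\kSAT$ count-certification algorithm of \pref{cor:kSAT-count} on $\bInst'$, and output its value. Soundness (the first condition of \pref{def:certification_algorithm}) holds for \emph{every} input because the number of $(1-\eta)$-satisfying assignments of $\bInst$ as a $P$-CSP is at most that of $\bInst'$ as $\kSAT$, which is at most what the $\kSAT$ algorithm outputs. The typical-value bound (the second condition) holds because $\bInst'$ has exactly the distribution $\HDist{k}{m}{n}$ on which \pref{cor:kSAT-count} guarantees the output is at most $\exp(\wt{O}(\eta n))\cdot\exp(\wt{O}(n^{(k+1)/4-\delta/2}))\cdot\exp(O(n^{1-\delta/(k-2)+\eps}))$, which is the claimed bound.

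I do not expect any real obstacle here: all the technical weight lives in \pref{thm:kxor_upper_bound} and \pref{lem:kXOR-principle} feeding \pref{cor:kSAT-count}. The only two things worth checking carefully are (i) that the rewriting can only \emph{decrease} the violated fraction, so that $(1-\eta)$-satisfiability and the associated count transfer in the direction we want, and (ii) that the rewriting is a bijection on potential hyperedges, so the random-instance distribution is genuinely unchanged---both are one-line verifications. A variant that includes, for each $P$-clause, the $\kSAT$ clauses of all unsatisfying assignments of $P$ also works and still preserves $(1-\eta)$-satisfiability, at the cost of multiplying the density by a constant, which is irrelevant to the asymptotics.
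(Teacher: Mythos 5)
Your proposal is correct and is essentially the paper's own proof: pick a string $z$ outside the support of $P$, replace each clause $(c,U)$ by $(c\circ z,U)$ so that every $(1-\eta)$-satisfying assignment for $P$ remains $(1-\eta)$-satisfying as $\kSAT$, note the rewriting is a bijection on potential hyperedges so $\bInst'\sim\HDist{k}{m}{n}$, and invoke \pref{cor:kSAT-count}. The extra care you take with the violation direction and the involution property are exactly the "easy to see" steps the paper leaves implicit.
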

\begin{proof}
    Let $z\in\{\pm1\}^k$ be any string not in the support of $P$.  Construct a new signed hypergraph $\bInst'\coloneqq \{(c\circ z,U): (c,U)\in\bInst\}$.  If $x$ is a $(1-\eta)$-satisfying assignment to $\bInst$ as an instance of $P$, then $x$ is also a $(1-\eta)$-satisfying assignment to $\bInst'$ as an instance of $\kSAT$.  Further, it is easy to see that $\bInst'$ is also distributed as $\HDist{k}{n}{m}$, so we can apply the algorithm from \pref{cor:kSAT-count} to certify a bound on the number of $(1-\eta)$-satisfying assignments to $\bInst'$ as an instance of $\kSAT$.  Consequently, we get the desired statement.
\end{proof}

\section{Count certification of Solution Clusters}
\label{sec:clustering}

In this section, we bound the number of clusters of satisfying assignments.
Due to the $\ThreeXOR$-principle, we first focus on clusters of $(1-\eta)$-satisfying assignments to random $\ThreeXOR$ instances.

\begin{theorem} \label{thm:3XOR_clusters}
    Consider a random 3-uniform hypergraph $\bH\sim \UnsignedH{3}{m}{n}$ where $m = \Delta n$ and $\Delta = n^{\delta}$ for some constant $\delta\in(0,2)$.
    Let $\eta \in [0,\eta_0]$ where $\eta_0$ is a universal constant, and let $\theta \coloneqq \max(2\eta, \Delta^{-\frac{1}{2}}\log n)$.
    There is a polynomial-time algorithm certifying that the $(1-\eta)$-satisfying assignments to any $\ThreeXOR$ instance on $\bH$ are covered by at most
    \begin{equation*}
        \exp(O(\theta^2 \log(1/\theta))n)
    \end{equation*}
    diameter-$(\theta n)$ clusters, with probability at least $1-\frac{1}{\poly(n)}$ over the randomness of $\bH$.
\end{theorem}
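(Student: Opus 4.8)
The plan is to leverage the $\TwoXOR$ analysis of \pref{thm:2xor_upper_bound} to certify that any two $(1-\eta)$-satisfying assignments of a $\ThreeXOR$ instance on $\bH$ are either close or nearly antipodal, and then invoke a covering argument over $\{\pm 1\}^n$ to count clusters. First, fix a $\ThreeXOR$ instance $\Inst$ on $\bH$ and let $x, x'$ be two $(1-\eta)$-satisfying assignments. By \pref{obs:product_of_solns}, the pointwise product $y \coloneqq x \circ x'$ is a $(1-2\eta)$-satisfying assignment for the all-positive instance $\pInst$. So it suffices to show that any $(1-2\eta)$-satisfying assignment $y$ of $\pInst$ must have Hamming weight (or co-weight) at most $\approx \theta n / 2$, i.e.\ $\min(|\{i: y_i = -1\}|, |\{i : y_i = +1\}|) \le \tfrac{\theta}{2} n$. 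This is the crux: once established, every $(1-\eta)$-satisfying $x'$ is within Hamming distance $\tfrac{\theta}{2}n$ of $x$ or of $-x$, so the solution set is covered by balls of radius $\tfrac{\theta}{2}n$ (diameter $\theta n$) centered at $\pm x$ --- but that would give only $2$ clusters, which is \emph{too strong} and false. The correct reading is that we only get a \emph{pairwise} structural statement when $\eta$ is small enough to force $y$ genuinely unbalanced; for the stated range we instead must argue that all solutions lie in the union of two Hamming balls of radius $\tfrac{\theta}{2}n$ around the (unknown) pair $\pm x$ \emph{only after passing through an induced $\TwoXOR$}, and then cover the ambient structure. Concretely: for a $3$-uniform hypergraph, pick a set $S$ of size $\rho n$ (with $\rho$ a small constant), form the induced $\TwoXOR$ instance $\pInst_{S,\vec 1}$ on $\ol S$ whose underlying graph $H_{S,2}$ (restricted to pairs outside $S$) is, for a random $\bH$, distributed like an $\ER$ graph of density $\Theta(\rho \Delta)$; apply \pref{thm:2xor_upper_bound} to this graph. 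Since $\Delta = n^\delta$ and $\rho$ constant, $\rho\Delta = n^{\Theta(\delta)}$, so the $\TwoXOR$ certification applies with failure probability $\exp(-n^{\Omega(\delta)})$.

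Carrying this out, the second step is: \textbf{(a)} run the algorithm certifying the $\ER$ spectral gap and degree concentration (\pref{thm:spectral_gap}, \pref{thm:cheeger}) on $H_{S,2}$; \textbf{(b)} conclude via Cheeger that for any $\pInst_{S,\sigma}$-assignment $z$ on $\ol S$ that $(1-\eta')$-satisfies it with $\eta' = O(\eta + \Delta^{-1/2}\log n) = O(\theta)$, the minority side of $z$ has size $\le O(\theta) |\ol S| \le O(\theta) n$; \textbf{(c)} note every $(1-\eta)$-satisfying assignment $x$ of $\Inst$, restricted to $\ol S$ after multiplying by a reference solution's restriction, must $(1-2\eta - O(\Delta^{-1/2}\log n))$-satisfy $\pInst_{S,\sigma}$ where $\sigma$ is the reference restricted to $S$ --- here we use that most $3$-clauses touching $\ol S$ in two coordinates have their third coordinate in $S$ when $|S|$ is a constant fraction, which needs a hypergraph-regularity certification (this is exactly the flavor of \pref{lem:3_hypergraph_structure} foreshadowed in the introduction). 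Then \textbf{(d)} the number of assignments whose $\ol S$-restriction is within the two Hamming balls of radius $O(\theta)n$, and whose $S$-restriction is arbitrary, is at most $2^{\rho n} \cdot 2 \cdot \sum_{\ell \le O(\theta) n}\binom{n}{\ell} = \exp(O(\rho n) + O(\theta \log(1/\theta) n))$; choosing $\rho = \Theta(\theta^2)$ balances this against the covering count below.

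The third step is the \textbf{cluster-counting / covering argument}. We want to cover the solution set by diameter-$\theta n$ Hamming balls. By the structural statement, solutions come in (at most two) translates of a radius-$O(\theta)n$ ball in each induced coordinate block; aggregating over a partition of $[n]$ into $1/\rho$ blocks $S_1, \dots, S_{1/\rho}$ (each of size $\rho n$), running the above on each block and intersecting, forces any solution into a set of the form (ball of radius $O(\theta)n$) whose center ranges over a code-like set. The number of such centers needed is governed by the largest $O(\theta)$-balanced binary code, which by the MRRW bound \cite{MRRW77} has size $2^{O(\theta^2 \log(1/\theta))n}$; multiplying by the within-block slack $2^{O(\rho n)} = 2^{O(\theta^2 n)}$ keeps the total at $\exp(O(\theta^2\log(1/\theta))n)$. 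Finally, the failure probability is a union bound: $O(1/\rho)$ invocations of \pref{thm:2xor_upper_bound}, each failing with probability $\exp(-n^{\Omega(\delta)})$, plus the hypergraph-regularity certification failing with probability $1/\poly(n)$ (via matrix Bernstein, \pref{thm:spec-norm-bound}, on the relevant contraction of the random tensor), giving total failure $1/\poly(n)$ as claimed.

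The main obstacle I expect is step \textbf{(c)} --- certifying that restricting a $3$-clause to its two $\ol S$-coordinates, after substituting the reference assignment on the $S$-coordinate, faithfully transfers the $(1-\eta)$-satisfaction down to a $(1-O(\eta))$-satisfaction of the induced $\TwoXOR$ for a \emph{constant} fraction of clauses simultaneously for \emph{all} solutions. This requires a hypergraph analogue of the expander-mixing lemma certified in polynomial time (the content of the promised \pref{lem:3_hypergraph_structure}): one must control, uniformly over $x$, the number of clauses with a given sign pattern relative to a balanced partition, which amounts to a spectral bound on a $3$-tensor contraction and is exactly where the density $\Delta \gg \log n$ (equivalently $\theta \gtrsim \Delta^{-1/2}\log n$) enters. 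Everything else is bookkeeping: Cheeger, Chernoff for degrees, the MRRW bound as a black box, and the partition-aggregation already packaged in the spirit of \pref{lem:average_partition}.
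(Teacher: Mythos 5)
There is a genuine gap. The heart of the paper's proof is a certified \emph{pairwise dichotomy} that your proposal never states or derives: for two $(1-\eta)$-satisfying assignments $x,x'$, the product $y=x\circ x'$ is $(1-2\eta)$-satisfying for $\pInst$, and one certifies that the Hamming distance between $x$ and $x'$ is either at most $\theta n$ or lies in $[(\tfrac12-\theta)n,(\tfrac12+\theta)n]$. This comes from counting violated all-positive clauses by type relative to $S_+=\{i:y_i=+1\}$: when $|S_+|=(\tfrac12+\gamma)n$ the clauses with exactly two variables in $S_+$ are violated, and when $|S_-|=(\tfrac12+\gamma)n$ the clauses fully inside $S_-$ are violated; \pref{lem:3_hypergraph_structure} (matrix Bernstein on the primal graph plus the expander mixing lemma) certifies lower bounds of roughly $3m(\gamma-2\gamma^2)$ and $m(\gamma+2\gamma^2)$ on these counts, which exceed $2\eta m$ exactly outside the two allowed distance ranges. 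It is the ``distance $\approx 0$ or $\approx n/2$'' structure that makes one representative per cluster a $2\theta$-balanced code, so that \pref{lem:balanced_code_bound} yields $\exp(O(\theta^2\log(1/\theta))n)$ clusters. Your route does not produce this: step (b), via Cheeger on an induced $\TwoXOR$, gives the $\TwoXOR$-type dichotomy ``$y$ close to $\vec{1}$ or to $-\vec{1}$ on $\ol{S}$'', i.e.\ distance $\approx 0$ or $\approx n$, which is the wrong statement for arity $3$ (for the all-positive instance $y\approx-\vec{1}$ violates almost every clause, and no certification below the refutation threshold can force near-solutions of $\pInst$ to be unbalanced, since that would require genuinely $3$-wise statistics); and pairwise distances near $n$ give the balanced-code bound nothing to latch onto. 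Your final paragraph simply asserts that the cluster centers ``range over a code-like set'' — that assertion is precisely the theorem's content.

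The machinery you build toward it also does not close. Step (c), which you yourself flag as the obstacle, asks to transfer satisfaction to an induced $\TwoXOR$ whose signing depends on the unknown reference solution's restriction to $S$, uniformly over all solutions; the paper needs no such transfer, since it works directly with the all-positive $\ThreeXOR$ instance on the full hypergraph and only certifies hyperedge-count lower bounds via pairwise (primal-graph) statistics. Step (d) counts assignments rather than clusters and yields exponent $O(\theta\log(1/\theta))n$, not $O(\theta^2\log(1/\theta))n$; choosing $\rho=\Theta(\theta^2)$ cannot repair this because the $\sum_{\ell\le O(\theta)n}\binom{n}{\ell}$ factor is independent of $\rho$. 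So the missing idea is the certified $0$-or-$n/2$ distance dichotomy obtained from \pref{lem:3_hypergraph_structure}; with it, the proof is short, and without it the MRRW invocation is unsupported.
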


As an immediate corollary of \pref{thm:3XOR_clusters}, \pref{lem:kXOR-principle}, and the reduction in the proof of \pref{cor:kCSP-to-kSAT} we have:
\begin{corollary}
    \label{cor:3SAT_clusters}
    Let $P$ be any $3$-ary predicate not equal to the constant-$1$ predicate.  Let $\bInst\sim\HDist{3}{n}{m}$ be a random signed hypergraph where $m = \Delta n = n^{1+\delta}$ for some constant $\delta \in (0,2)$.
    Let $\eta \in [0,\eta_0]$ where $\eta_0$ is a universal constant, and let $\theta \coloneqq 8\eta + O(\sqrt{\frac{\log^5 n}{\Delta}})$.
    There is an algorithm that certifies with high probability that the $(1-\eta)$-satisfying assignments to $\bInst$ as a $P$-$\CSP$ instance are covered by at most
    \begin{equation*}
        \exp(O(\theta^2 \log(1/\theta))n)
    \end{equation*}
    diameter-$(\theta n)$ clusters.
\end{corollary}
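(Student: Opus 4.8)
The plan is to compose three results already in hand: the predicate-to-$3\SAT$ reduction from the proof of \pref{cor:kCSP-to-kSAT}, the $\ThreeXOR$-principle of \pref{lem:kXOR-principle} specialized to $k=3$, and the $\ThreeXOR$ cluster-counting algorithm of \pref{thm:3XOR_clusters}. First I would unfold the reduction: given the random $P$-CSP instance $\bInst \sim \HDist{3}{n}{m}$, pick any $z \in \{\pm1\}^3$ outside the support of $P$ and set $\bInst' = \{(c\circ z, U) : (c,U)\in\bInst\}$. As argued for \pref{cor:kCSP-to-kSAT}, $\bInst'$ is again distributed as $\HDist{3}{n}{m}$, its underlying unsigned hypergraph $\bH$ is distributed as $\UnsignedH{3}{m}{n}$, and every assignment that $(1-\eta)$-satisfies $\bInst$ as a $P$-CSP also $(1-\eta)$-satisfies $\bInst'$ as $3\SAT$. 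So it suffices to cover the $(1-\eta)$-satisfying assignments of the random $3\SAT$ instance $\bInst'$.

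Next I would run the certification algorithm of \pref{lem:kXOR-principle} on $\bInst'$ with $k=3$. Since $m = \Delta n = \alpha n^{(k-1)/2}$ forces $\alpha = \Delta = n^{\delta}$, that lemma certifies (with high probability) that every $(1-\eta)$-satisfying assignment of $\bInst'$ must $\ThreeXOR$-satisfy at least $(1-4\eta)m - 2^{O(1)}\sqrt{\Delta\log^5 n}\cdot n$ clauses of the induced $\ThreeXOR$ instance on $\bH$ (the instance whose hyperedge set is that of $\bInst'$ and whose signing on $(c,U)$ is $c_1c_2c_3$). Dividing by $m = \Delta n$, every such assignment is $(1-\eta')$-satisfying for that $\ThreeXOR$ instance, where $\eta' := 4\eta + O(\sqrt{\log^5 n/\Delta})$. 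Now I would invoke \pref{thm:3XOR_clusters} on $\bH$ with noise parameter $\eta'$: with high probability over $\bH$ its algorithm certifies that the $(1-\eta')$-satisfying assignments of \emph{any} $\ThreeXOR$ instance on $\bH$ — in particular our induced one, whose signings we never needed to know in advance — are covered by at most $\exp(O(\theta'^2\log(1/\theta'))\,n)$ diameter-$(\theta' n)$ clusters with $\theta' = \max(2\eta', \Delta^{-1/2}\log n)$. Since $2\eta' = 8\eta + O(\sqrt{\log^5 n/\Delta})$ and $\sqrt{\log^5 n/\Delta} \gg \Delta^{-1/2}\log n = \sqrt{\log^2 n/\Delta}$, we get $\theta' = 8\eta + O(\sqrt{\log^5 n/\Delta}) = \theta$, exactly the parameter in the statement. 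Chaining the two inclusions from the first paragraph, the same clusters cover all $(1-\eta)$-satisfying assignments of $\bInst$ as a $P$-CSP, and a union bound over the $o(1)$ failure probabilities of the two subroutines gives the high-probability guarantee; the algorithm is the obvious pipeline (reduce, run \pref{lem:kXOR-principle}, run \pref{thm:3XOR_clusters}), which is efficient since both components are.

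There is no real obstacle beyond parameter bookkeeping, and I would flag the two places it needs attention. First, \pref{thm:3XOR_clusters} requires its input noise to be below its own universal constant; so the constant $\eta_0$ in this corollary must be chosen small enough that $\eta' = 4\eta + O(\sqrt{\log^5 n/\Delta})$ stays below that threshold for all large $n$ — this uses that $\Delta = n^{\delta}$ with $\delta > 0$ constant, which drives the $\sqrt{\log^5 n/\Delta}$ term to $0$. Second, one should check the density hypothesis transfers: $\delta \in (0,2)$ is exactly what \pref{thm:3XOR_clusters} needs, and the induced $\ThreeXOR$ lives on the same random hypergraph $\bH \sim \UnsignedH{3}{m}{n}$, so no re-randomization argument is required. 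All the substance of the statement lives in \pref{lem:kXOR-principle} and \pref{thm:3XOR_clusters}; this corollary is their routine composition.
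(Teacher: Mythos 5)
Your proposal is correct and is exactly the route the paper takes: the corollary is stated there as an immediate consequence of the predicate-to-$3\SAT$ reduction from \pref{cor:kCSP-to-kSAT}, the $k=3$ case of \pref{lem:kXOR-principle}, and \pref{thm:3XOR_clusters}, and your parameter bookkeeping ($\eta' = 4\eta + O(\sqrt{\log^5 n/\Delta})$, hence $\theta' = 2\eta' = \theta$ dominating $\Delta^{-1/2}\log n$) matches what the paper leaves implicit.
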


Inspecting the proof of \pref{thm:2xor_upper_bound}, we see that it actually proves a stronger statement: for any pair of $(1-\eta)$-satisfying assignments $x,x'$ to the $\TwoXOR$ instance $\Inst$, $x'$ must be $(3\eta n)$-close to $x$ or $-x$ in Hamming distance.
The proof looks at the instance $\pInst$ (where all signs are set to $+1$) and the expansion of the graph.

The proof of \pref{thm:3XOR_clusters} will follow a similar path.
On a high level, we will first prove that $y \coloneqq x\circ x'$ must be either close to $\vec{1}$ or be roughly balanced, i.e.\ $x$ and $x'$ must have Hamming distance close to 0 or roughly $\frac{n}{2}$.
The main ingredient is \pref{lem:3_hypergraph_structure} which lets us certify an important structural result of random 3-uniform hypergraphs, allowing us to reason about the constraints violated by $y$ in $\pInst$.
\pref{lem:3_hypergraph_structure} will be a crucial step in \pref{sec:global-cardinality} as well.

The second ingredient is a result in coding theory.
Since the clusters are roughly $\frac{n}{2}$ apart, the number of clusters must be upper bounded by the cardinality of the largest $\eps$-balanced binary error-correcting code.
The best known upper bound is $2^{O(\eps^2\log(1/\eps))n}$, obtained by \cite{MRRW77} using linear programming techniques and also by \cite{Alo09} using an analysis of perturbed identity matrices.
This gives our final result.

We begin by proving an eigenvalue bound for the \textit{primal graph} of a random 3-uniform hypergraph (the graph obtained by adding a 3-clique for each hyperedge; see \pref{def:primal_graph}).
For simplicity, we will implicitly assume that all hyperedges with repeated vertices (allowed in our random model) are removed; this will not affect the results.

\begin{lemma}
    \label{lem:primal_graph_spectral_gap}
    Let $\bH\sim \UnsignedH{3}{m}{n}$ be a random 3-uniform hypergraph where $m = \Delta n$ and $\Delta = n^{\delta}$ (for constant $\delta \in (0,2)$), and let $\bA$ be the adjacency graph of the primal graph $\bprimalH$.
    Then, there is a constant $C$ such that with probability $1-\frac{1}{\poly(n)}$,
    \begin{equation*}
        \left\|\bA - \frac{6\Delta}{n}J\right\| \leq C\sqrt{\Delta \log n}.
    \end{equation*}
\end{lemma}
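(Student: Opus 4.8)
The plan is to bound $\|\bA - \tfrac{6\Delta}{n}J\|$ by writing $\bA$ as a sum of independent random matrices indexed by the hyperedges of $\bH$ and applying the matrix Bernstein inequality (\pref{thm:spec-norm-bound}). Recall that the random model includes each of the $2^3 n^3$ signed hyperedges independently with probability $\tfrac{m}{2^3 n^3}$; ignoring signs, each of the $n^3$ ordered triples (equivalently, each unordered triple, up to the usual $O(1)$ multiplicity bookkeeping) is present with probability roughly $p \coloneqq \tfrac{m}{n^3} = \tfrac{\Delta}{n^2}$. For each potential hyperedge $e = \{a,b,c\}$, let $\bX_e$ be the indicator that $e$ is present, and let $M_e \coloneqq A_{K_e}$ be the adjacency matrix of the triangle on $\{a,b,c\}$ (a fixed $0/1$ matrix with six nonzero entries and spectral norm $2$). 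Then $\bA = \sum_e \bX_e M_e$, and $\E \bA = p \sum_e M_e$. A direct count shows $\sum_e M_e = (n-2) J'$ where $J'$ is $J$ with the diagonal zeroed (each off-diagonal pair $(u,v)$ lies in exactly $n-2$ triangles), so $\E\bA = p(n-2) J' = \tfrac{\Delta(n-2)}{n^2}(J - \Id)$, which agrees with $\tfrac{6\Delta}{n} J$ up to lower-order terms of operator norm $O(\Delta/n) = o(\sqrt{\Delta\log n})$. Hence it suffices to bound $\|\bA - \E\bA\| = \|\sum_e (\bX_e - p) M_e\|$.

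Next I would set $\bS_e \coloneqq \tfrac{1}{2}(\bX_e - p) M_e$ so that $\E \bS_e = 0$ and $\|\bS_e\| \le \tfrac{1}{2}\|M_e\| = 1$, and apply \pref{thm:spec-norm-bound} with $d = n$. The key quantity is the variance proxy $v = \|\sum_e \E[\bS_e^2]\|$. Here $\E[\bS_e^2] = \tfrac{1}{4}\Var(\bX_e)\, M_e^2 \preceq \tfrac{1}{4} p\, M_e^2$, and $M_e^2$ for a triangle on $\{a,b,c\}$ is supported on those three coordinates with diagonal entries $2$ and off-diagonal entries $1$. Summing over all triangles through a fixed vertex $u$ (there are $O(n^2)$ of them) gives a diagonal contribution of $O(p n^2) = O(\Delta)$, and the off-diagonal part of $\sum_e M_e^2$ is a scalar multiple of $J'$ with entry $O(p n) = O(\Delta/n)$, contributing $O(\Delta)$ in operator norm as well. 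Therefore $v = O(\Delta)$. Plugging $t = C\sqrt{\Delta \log n}$ into the tail bound yields $\Pr[\|\sum_e \bS_e\| \ge t] \le 2n \exp\!\big(\tfrac{-C^2 \Delta \log n}{O(\Delta) + O(\sqrt{\Delta \log n})}\big)$, and since $\Delta = n^\delta \gg \sqrt{\Delta\log n}$, the denominator is $O(\Delta)$, so for $C$ large enough this is $\le n^{-10}$, say. Undoing the factor of $2$ (i.e. $\|\bA - \E\bA\| = 2\|\sum_e \bS_e\| \le 2t$, absorbed into the constant) completes the argument.

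The main technical obstacle is the variance bookkeeping: one has to be careful that $\sum_e \E[\bS_e^2]$ is genuinely $O(\Delta)$ in operator norm and not, say, $O(\Delta \cdot n)$, which would ruin the bound. The diagonal is the dominant term and is easy — each vertex sits in $O(n^2)$ triangles, each contributing $O(p)$ — but one must also confirm the off-diagonal sum does not blow up; this works out because each pair $(u,v)$ lies in only $n-2$ triangles, so the off-diagonal entries of $\sum_e M_e^2$ are $O(pn) = O(\Delta/n)$ and the resulting rank-one-ish matrix has norm $O(\Delta/n \cdot n) = O(\Delta)$. A secondary nuisance is handling the multiplicity introduced by ordered-versus-unordered tuples and repeated-vertex hyperedges in \pref{def:signed_hypergraph}, but as the excerpt already notes these are removed and affect nothing at the level of the stated bound; one simply absorbs the $O(1)$ factors into the constant $C$. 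No independence issues arise beyond those handled by matrix Bernstein, since the $\bX_e$ are independent across $e$ by construction of the model.
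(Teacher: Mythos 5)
Your proposal follows essentially the same route as the paper: decompose $\bA$ into independent triangle adjacency matrices indexed by hyperedges, center them, bound the variance proxy by $O(\Delta)$ (diagonal dominated by the $O(pn^2)$ triangles through each vertex, off-diagonal a multiple of $J$ with entries $O(\Delta/n)$), and invoke \pref{thm:spec-norm-bound}; the paper does exactly this and then compares $\E\bA$ to $\tfrac{6\Delta}{n}J$ at the end.

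One bookkeeping point needs repair, though. You take $e$ to range over \emph{unordered} triples present with probability $p=\Delta/n^2$, which gives $\E\bA \approx \tfrac{\Delta}{n}(J-\Id)$, and you assert this matches $\tfrac{6\Delta}{n}J$ up to $O(\Delta/n)$ norm error while deferring the ordered-versus-unordered factor to ``absorb into $C$.'' That deferral is fine for the variance and tail constants, but not for the centering: an unordered triple corresponds to $6$ ordered tuples of the model, so its presence probability is $\approx 6\Delta/n^2$ (equivalently, each pair lies in $6(n-2)$ ordered triples), and that factor of $6$ is exactly what produces the $\tfrac{6\Delta}{n}J$ in the statement. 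If the mean were really $\tfrac{\Delta}{n}J$, the discrepancy with $\tfrac{6\Delta}{n}J$ would have operator norm $\Theta(\Delta)\gg\sqrt{\Delta\log n}$ and could not be hidden in $C$. With the multiplicity tracked correctly (or by summing over ordered tuples, as the paper does, which yields $\E\bA=\tfrac{6\Delta(n-2)}{n^2}(J-\Id)$ and a genuinely negligible $O(\Delta/n)$ correction), your argument is complete and coincides with the paper's proof.
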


\begin{proof}
    The primal graph is a random graph such that for each tuple $\cls = (a,b,c)\in [n]^3$ (no repeated vertices), edges $(a,b),(b,c), (c,a)$ are included with probability $p \coloneqq \frac{m}{n^3} = \frac{\Delta}{n^2}$.
    Let $A_\cls$ be the adjacency matrix of the graph containing just the 3 edges.
    Then, the adjacency matrix $\bA = \sum_{\cls\in [n]^3} \bone(\cls\in \bH) A_\cls$.

    Define $\bS_\cls \coloneqq (\bone(\cls\in \bH) - p) A_\cls$ and
    \begin{equation*}
        \bS \coloneqq \sum_{\cls \in [n]^3} \bS_\cls = \bA - p \cdot 6(n-2) (J - \Id).
    \end{equation*}

    Clearly, $\E[\bS] = 0$ since $\E[\bS_{\cls}] = 0$, and $\E[\bS_\cls^2] = p(1-p) A_\cls^2$.
    Further, $\sum_{U} A_U^2 = 6(n-2) (J + (n-2)\Id)$, thus
    \begin{equation*}
        v \coloneqq \left\|\sum_{U\in [n]^3} \E[\bS_\cls^2]\right\| = p(1-p) \left\|\sum_{U\in [n]^3} A_U^2\right\| = p(1-p) \cdot 6(n-1)(2n-2) \leq 12\Delta.
    \end{equation*}

    Moreover, $\|\bS_\cls\| \leq \|A_\cls\| \leq 2$.
    Thus, by the matrix Bernstein inequality (\pref{thm:spec-norm-bound}) and assuming $\Delta = \omega(\log n)$, for a large enough constant $C$,
    \begin{equation*}
        \Pr\left[ \|\bS\| \geq C\sqrt{\Delta \log n} \right] \leq \frac{1}{\poly(n)}.
    \end{equation*}

    Next, $\bS = \bA - \frac{6\Delta(n-2)}{n^2}(J-\Id) = (\bA - \frac{6\Delta}{n}J) + \frac{12\Delta}{n^2}J - \frac{6\Delta(n-2)}{n^2}\Id$.
    The second and third terms have norm $O(\frac{\Delta}{n})$, negligible compared to $\sqrt{\Delta \log n}$ when $\Delta \ll n^2\log n$.
    Thus, by the triangle inequality,
    \begin{equation*}
        \left\|\bA - \frac{6\Delta}{n}J\right\| \leq C\sqrt{\Delta \log n}.
        \qedhere
    \end{equation*}
\end{proof}

\pref{lem:primal_graph_spectral_gap} allows us to apply the expander mixing lemma on the primal graph and prove the following structural result for random hypergraphs.

\begin{lemma}
    \label{lem:3_hypergraph_structure}
    Let $\bH\sim \UnsignedH{3}{m}{n}$ be a random 3-uniform hypergraph where $m = \Delta n$ and $\Delta = n^{\delta}$ (for constant $\delta \in (0,2)$).
    There is an algorithm certifying that for all subsets $S\subseteq [n]$ such that $|S| = (\frac{1}{2} + \gamma)n$ and $\gamma \in (0,\frac{1}{2})$, 
    \begin{enumerate}
        \item the number of hyperedges with 2 variables in $S$ and 1 in $\ol{S}$ is at least $3m(\gamma - 2\gamma^2) - O(n\sqrt{\Delta \log n})$,
        \item the number of hyperedges fully contained in $S$ is at least $m(\gamma + 2\gamma^2) - O(n\sqrt{\Delta \log n})$,
    \end{enumerate}
    with probability $1-\frac{1}{\poly(n)}$ over the randomness of $\bH$.
\end{lemma}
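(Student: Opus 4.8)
The plan is to make the spectral bound of \pref{lem:primal_graph_spectral_gap} the entire certificate. The algorithm forms the primal graph $\bprimalH$, computes $\| \bA - \frac{6\Delta}{n}J \|$, and checks that it is at most $C\sqrt{\Delta \log n}$; by \pref{lem:primal_graph_spectral_gap} this check passes with probability $1-\frac{1}{\poly(n)}$, and if it fails the algorithm outputs the trivial lower bound $0$. Everything below is a \emph{deterministic} consequence of the spectral bound, hence holds simultaneously for all subsets $S$. Now fix $S\subseteq[n]$ with $|S| = (\frac12+\gamma)n$ and, for $i\in\{0,1,2,3\}$, let $m_i$ be the number of hyperedges of $\bH$ having exactly $i$ of their three (distinct) vertices in $S$; parts (1) and (2) are then lower bounds on $m_2$ and $m_3$. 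The first step is the combinatorial translation to the primal graph: a hyperedge with $i$ vertices in $S$ places $\binom{i}{2}$ of its three primal edges inside $S$, $i(3-i)$ across the cut $(S,\ol S)$, and $\binom{3-i}{2}$ inside $\ol S$. Summing over all hyperedges and recalling that $e(\cdot,\cdot)$ counts ordered pairs with multiplicity,
\begin{align*}
 e(S,S) &= 6m_3 + 2m_2, & e(S,\ol S) &= 2m_2 + 2m_1, & e(\ol S,\ol S) &= 2m_1 + 6m_0.
\end{align*}

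The second step is to run the expander mixing computation (the proof of \pref{thm:expander_mixing_lemma}, with $\frac{6\Delta}{n}J$ in place of the de-meaned matrix): from $\| \bA - \frac{6\Delta}{n}J \| \le C\sqrt{\Delta\log n}$ we get, for all $T,T'$, that $| 1_T^\top \bA 1_{T'} - \frac{6\Delta}{n}|T|\,|T'| | \le C\sqrt{\Delta\log n}\,\sqrt{|T|\,|T'|}$. Applying this to $(S,S)$, $(S,\ol S)$, and $(\ol S,\ol S)$, using $|S| = (\frac12+\gamma)n$, $|\ol S| = (\frac12-\gamma)n$, $m=\Delta n$, and that every error term is $O(n\sqrt{\Delta\log n})$, gives
\begin{align*}
 3m_3 + m_2 &= 3m\left( \tfrac14 + \gamma + \gamma^2 \right) \pm O(n\sqrt{\Delta\log n}), \\
 m_2 + m_1 &= 3m\left( \tfrac14 - \gamma^2 \right) \pm O(n\sqrt{\Delta\log n}), \\
 m_1 + 3m_0 &= 3m\left( \tfrac14 - \gamma + \gamma^2 \right) \pm O(n\sqrt{\Delta\log n}).
\end{align*}

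The third step closes the argument using only $m_0\ge 0$ and $m_1\ge 0$, since the linear system above is underdetermined (it has one degree of freedom). From the third identity and $m_0\ge 0$ we get $m_1 \le 3m(\frac14-\gamma+\gamma^2) + O(n\sqrt{\Delta\log n})$; substituting into the second identity yields $m_2 \ge 3m(\gamma - 2\gamma^2) - O(n\sqrt{\Delta\log n})$, which is part (1). Symmetrically, the second identity and $m_1\ge 0$ give $m_2 \le 3m(\frac14-\gamma^2) + O(n\sqrt{\Delta\log n})$; substituting into the first identity yields $m_3 \ge m(\gamma + 2\gamma^2) - O(n\sqrt{\Delta\log n})$, which is part (2). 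These are clean weakenings of the truth: the expected values $3m(\frac12+\gamma)^2(\frac12-\gamma)$ and $m(\frac12+\gamma)^3$ exceed the stated bounds by the nonnegative slacks $3m(\frac12-\gamma)^3$ and $m(\frac12+\gamma)(\frac12-\gamma)^2$.

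The computation is otherwise routine; the two places I expect to need care are the multiplicity bookkeeping of the first step (getting the coefficients $6,2$ and $2,2$ and $2,6$ right, including for hyperedges with repeated vertices, which we have assumed removed), and the observation that one must bring in the $\ol S$-side identity together with $m_0\ge 0$ rather than attempting to solve for $m_2$ and $m_3$ exactly — since the three edge-count identities do not determine $m_3$ on their own, the nonnegativity constraints are essential.
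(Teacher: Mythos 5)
Your proposal is correct and follows essentially the same route as the paper: certify $\|\bA - \tfrac{6\Delta}{n}J\| \le C\sqrt{\Delta\log n}$ for the primal graph via \pref{lem:primal_graph_spectral_gap}, translate hyperedge counts $m_0,\dots,m_3$ into primal edge counts, apply the expander mixing bound, and use nonnegativity of $m_0$ (resp.\ $m_1$) to eliminate $m_1$ (resp.\ $m_2$) — exactly the paper's inequalities $|T_2|\ge \tfrac12 e(S,\ol S)-\tfrac12 e(\ol S,\ol S)$ and $|T_3|\ge \tfrac16(e(S,S)-e(S,\ol S))$ in a slightly more systematic presentation. The arithmetic and error terms check out, so no changes are needed.
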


\begin{proof}
    We first look at the primal graph $\bprimalH$.
    The average degree is $\frac{6m}{n} = 6\Delta$.
    Let $\bA$ be the adjacency matrix, and let $\meanbA = \bA - \frac{6\Delta}{n}J$.
    The certification algorithm is simply checking that $\|\meanbA\| \leq C\sqrt{\Delta \log n}$ for some constant $C$.
    By \pref{lem:primal_graph_spectral_gap}, this will succeed with high probability.
    We proceed to prove that this is a valid certificate.

    We categorize the hyperedges of $\bH$ into 4 groups $T_0, T_1, T_2, T_3$, where $T_i$ is the set of hyperedges with $i$ variables in $S$ and $3-i$ in $\ol{S}$.
    We first lower bound $|T_2|$.

    By the expander mixing lemma (\pref{thm:expander_mixing_lemma}), the number of edges (of $\bprimalH$) between $S$ and $\ol{S}$ is
    \begin{equation*}
        e(S,\ol{S}) = \frac{6\Delta}{n} |S| (n-|S|) \pm C\sqrt{\Delta \log n} \sqrt{|S|(n-|S|)}.
    \end{equation*}

    Moreover, the number of edges within $\ol{S}$ (note that $e(\ol{S},\ol{S})$ double counts the edges)
    \begin{equation*}
        \frac{1}{2}e(\ol{S},\ol{S}) = \frac{3\Delta}{n} |\ol{S}|^2 \pm C\sqrt{\Delta \log n} |\ol{S}|.
    \end{equation*}

    Observe that the edges between $S$ and $\ol{S}$ must come from $T_1$ and $T_2$, each hyperedge contributing 2 edges: $2|T_1| + 2|T_2| = e(S,\ol{S})$.
    On the other hand, the the edges within $\ol{S}$ come from $T_0, T_1$, each hyperedge contributing 3 and 1 edges respectively: $3|T_0| + |T_1| = \frac{1}{2}e(\ol{S}, \ol{S})$.
    Thus, we have $|T_2| \geq \frac{1}{2}e(S,\ol{S}) - \frac{1}{2}e(\ol{S},\ol{S})$.
    For $|S| = (\frac{1}{2}+\gamma)n$,
    \begin{equation*}
    \begin{aligned}
        |T_2| &\geq 
        \frac{3\Delta}{n}\cdot \left( \left(\frac{1}{2}-\gamma\right) \left(\frac{1}{2}+\gamma\right) - \left(\frac{1}{2}-\gamma\right)^2\right) n^2 - O(n\sqrt{\Delta \log n}) \\
        &= 3m (\gamma - 2\gamma^2) - O(n\sqrt{\Delta \log n}).
    \end{aligned}
    \end{equation*}

    Next, we lower bound $|T_3|$.
    Similar to the derivations for $|T_2|$, we observe that $3|T_3| + |T_2| = \frac{1}{2} e(S,S)$ and $|T_2| \leq \frac{1}{2}e(S,\ol{S})$, hence $|T_3| \geq \frac{1}{6}(e(S,S) - e(S,\ol{S}))$.
    Similar calculations show that
    \begin{equation*}
        |T_3| \geq m (\gamma + 2\gamma^2) - O(n\sqrt{\Delta \log n}).
        \qedhere
    \end{equation*}
\end{proof}

Note that for small $\gamma$, the error term $O(n\sqrt{\Delta \log n})$ is negligible compared to $m(\gamma \pm 2\gamma^2)$ as long as $\gamma \gg \sqrt{\frac{\log n}{\Delta}}$.
Moreover, in the first claim, for $\gamma$ close to $\frac{1}{2}$ (say $\gamma = \frac{1}{2}-\gamma'$, $|S| = (1-\gamma')n$), the error term is negligible as long as $\gamma' \gg \sqrt{\frac{\log n}{\Delta}}$.

Next, we state a result by \cite{Alo09}, which was proved using an elegant argument about rank lower bounds of perturbed identity matrices.  Towards doing so, we define an \emph{$\eps$-balanced code of length-$n$} as a subset $L$ of $\{\pm1\}^n$ such that every pair of distinct $x,y\in L$ have Hamming distance in $\frac{1\pm\eps}{2} n$.
\begin{lemma}[{\cite[Proposition 4.1]{Alo09}}]
    \label{lem:balanced_code_bound}
    For any $\frac{1}{\sqrt{n}} \leq \eps < \frac{1}{2}$, the cardinality of any $\eps$-balanced code of length $n$ is at most $2^{c\eps^2 \log(1/\eps)n}$ for some absolute constant $c$.
\end{lemma}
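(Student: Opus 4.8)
The plan is to follow Alon's spectral argument \cite{Alo09}, deducing the bound from a rank estimate for ``perturbed identity'' matrices amplified by Chebyshev polynomials. First I would pass to geometry: given an $\eps$-balanced code $L\subseteq\{\pm1\}^n$, set $N\coloneqq|L|$ and associate to each $x\in L$ the unit vector $v_x\coloneqq x/\sqrt n$. Since $\langle x,y\rangle=n-2\,\mathrm{dist}_H(x,y)$, the hypothesis that distinct codewords lie at Hamming distance $\frac{1\pm\eps}{2}n$ says exactly that $\|v_x\|=1$ and $|\langle v_x,v_y\rangle|\le\eps$ for $x\ne y$.

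Next I would amplify with Chebyshev polynomials. Let $T_d$ be the degree-$d$ Chebyshev polynomial of the first kind and put $P(t)\coloneqq T_d(t/\eps)/T_d(1/\eps)$, so $\deg P=d$, $P(1)=1$, and (as $|T_d|\le1$ on $[-1,1]$) $\delta\coloneqq\max_{|t|\le\eps}|P(t)|=1/T_d(1/\eps)$. Using $T_d(x)=\cosh(d\operatorname{arccosh}x)\ge\tfrac12 e^{d\operatorname{arccosh}x}$ for $x\ge1$ and $\operatorname{arccosh}(1/\eps)\ge\ln(1/\eps)$, the quantity $\delta$ is exponentially small in $d$. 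Form the symmetric $N\times N$ matrix $M$ with $M_{xy}\coloneqq P(\langle v_x,v_y\rangle)$; then $M_{xx}=1$ and $|M_{xy}|\le\delta$ off the diagonal.

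Then I would bound $\mathrm{rank}(M)$ from both sides. From above: $\langle v_x,v_y\rangle^k=\langle v_x^{\otimes k},v_y^{\otimes k}\rangle$ with all $v_x^{\otimes k}$ in the symmetric subspace of $(\R^n)^{\otimes k}$, of dimension $\binom{n+k-1}{k}$, so $\mathrm{rank}(M)\le\sum_{k=0}^d\binom{n+k-1}{k}=\binom{n+d}{d}$. From below: combining $\mathrm{tr}(M)=N$, $\|M\|_F^2\le N+N(N-1)\delta^2$ and the Cauchy--Schwarz inequality $N^2=(\sum_i\lambda_i)^2\le\mathrm{rank}(M)\cdot\|M\|_F^2$ gives $\mathrm{rank}(M)\ge N/(1+N\delta^2)$. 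Choosing $d\coloneqq\lceil\ln(4N)/(2\operatorname{arccosh}(1/\eps))\rceil$ forces $T_d(1/\eps)\ge\sqrt N$, i.e.\ $N\delta^2\le1$, so $\mathrm{rank}(M)\ge N/2$ and hence $N\le2\binom{n+d}{d}$. Since $\eps<\tfrac12$ and $N\le2^n$ one checks $1\le d\le n$ (the case $N\le1/\eps^2$ being trivially within the target bound), so $\binom{n+d}{d}\le(2en/d)^d$; taking logarithms, using $\ln N\asymp 2d\operatorname{arccosh}(1/\eps)$ and $e^{2\operatorname{arccosh}(1/\eps)}\ge\eps^{-2}$, a short computation yields $\ln N=O(\eps^2\ln(1/\eps)\,n)$, which is the claim (with $c$ absorbing constants).

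The main obstacle I anticipate is this final parameter balancing: one must verify that the amplification cost of pushing $T_d$ to degree $d$ --- an $e^{\Theta(d)\operatorname{arccosh}(1/\eps)}$ factor --- exactly matches, after substituting $d\approx\ln N/(2\operatorname{arccosh}(1/\eps))$, the $(2en/d)^d$ growth of the rank bound, so that the exponent comes out as $\eps^2\log(1/\eps)$ and not a weaker $\eps\log(1/\eps)$; this hinges on the factor $2$ in the choice of $d$ and on $e^{2\operatorname{arccosh}(1/\eps)}\ge\eps^{-2}$, and on confirming $d\le n$ so that the correct binomial estimate is in force. Everything else --- the symmetric-tensor rank bound, the trace/Frobenius computation, and the Chebyshev inequalities --- is routine. (The hypothesis $\eps\ge1/\sqrt n$ is needed only to keep the statement non-vacuous.)
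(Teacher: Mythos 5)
The paper offers no proof of this lemma at all---it is imported verbatim as \cite[Proposition~4.1]{Alo09}---and your argument is a correct reconstruction of exactly that cited proof: a rank lower bound for perturbed identity matrices (trace against Frobenius norm via Cauchy--Schwarz) played off against a rank upper bound obtained by low-degree polynomial amplification of the Gram matrix of the normalized codewords. The only cosmetic deviation from Alon is that you amplify with a normalized Chebyshev polynomial and bound the rank through symmetric tensor powers of the vectors $v_x\in\R^n$, whereas Alon takes entrywise (Hadamard) powers of the rank-$\le n$ Gram matrix; your bookkeeping (the choice of $d$, the check $1\le d\le n$, and the trivial case $N\le\eps^{-2}$ handled via $\eps\ge 1/\sqrt{n}$) does close the computation to $\log N=O(\eps^2\log(1/\eps)\,n)$, so the proposal is correct.
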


Finally, we are ready to prove \pref{thm:3XOR_clusters}.

\begin{proof}[Proof of \pref{thm:3XOR_clusters}]
    Similar to the proof of \pref{thm:2xor_upper_bound}, we consider the instance $\pInst$.
    By \pref{obs:product_of_solns}, for any $(1-\eta)$-satisfying assignments $x,x'$, the product $y \coloneqq x \circ x'$ is a $(1-2\eta)$-satisfying assignment for $\pInst$.

    Let $S_+ = \{i: y_i = +1\}$ and $S_- = \ol{S_+}$.
    Assume $|S_+| = (\frac{1}{2}+\gamma) n$ for $\gamma > 0$ ($x,x'$ agree on more than half).
    Since all $\ThreeXOR$ clauses have sign $+1$ in $\pInst$, the clauses that have 2 variables in $S_+$ and 1 in $S_-$ must be violated.
    By \pref{lem:3_hypergraph_structure}, we can certify a lower bound of $3m(\gamma - 2\gamma^2)(1-o(1))$ of such clauses when $\omega(\sqrt{\frac{\log n}{\Delta}}) \leq \gamma \leq \frac{1}{2} - \omega(\sqrt{\frac{\log n}{\Delta}})$.
    Thus, we take $\theta \coloneqq \max(2\eta, \Delta^{-\frac{1}{2}}\log n)$.
    For a small enough $\eta$ ($\eta < 1/6$ suffices), we can certify that the number of violated constraints $3m (\gamma-2\gamma^2)(1-o(1)) > 2\eta m$ for all $\gamma \in [\theta, \frac{1}{2}-\theta]$.
    This shows that $|S_+|$ must be either $\geq (1-\theta)n$ or $\leq (\frac{1}{2}+\theta)n$.

    On the other hand, suppose $|S_-| = (\frac{1}{2}+\gamma)n$ for $\gamma > 0$ ($x,x'$ agree on less than half).
    The clauses contained in $S_-$ must be violated.
    Again, \pref{lem:3_hypergraph_structure} allows us to lower bound such clauses by $m(\gamma + 2\gamma^2)(1-o(1)) > 2\eta m$ for all $\gamma \in [\theta, \frac{1}{2}]$.
    This shows that $|S_-|$ must be $\leq (\frac{1}{2}+\theta)n$.

    Combining the results, we can certify that $x, x'$ must have Hamming distance $\leq \theta n$ or between $[(\frac{1}{2}-\theta)n, (\frac{1}{2}+\theta)n]$.
    Thus, the $(1-\eta)$-satisfying solutions form clusters of diameter $\theta n$, and the distance between any two clusters is $(\frac{1}{2}\pm \theta)n$.
    If we pick one assignment from each cluster, this gives a $(2\theta)$-balanced code.
    Thus, by \pref{lem:balanced_code_bound}, we can upper bound the number of clusters by
    \begin{equation*}
        \exp(O(\theta^2 \log(1/\theta))n).
    \end{equation*}
    This completes the proof.
\end{proof}

\section{Refuting CSPs under global cardinality constraints}	\label{sec:global-cardinality}
In this section, we give an algorithm to strongly refute random CSPs with global cardinality constraints, i.e., constraints of the form $\sum_i x_i \ge B$ well under the refutation threshold for appropriate values of $B$. 

\subsection{Refuting $3\CSP$s under global cardinality constraints}

For $\ThreeSAT$, there is a strong refutation algorithm using the random hypergraph structure result of \pref{lem:3_hypergraph_structure}, without requiring the $\ThreeXOR$-principle.  Via an identical reduction as the one in the proof of \pref{cor:kCSP-to-kSAT} the below statement extends to all $3\CSP$s.

\begin{theorem}
	\label{thm:3SAT_global_card}
	Given a $\ThreeSAT$ instance $\bInst\sim \HDist{3}{m}{n}$ where $m = \Delta n = n^{1+\delta}$ for constant $\delta > 0$,
	there is an efficient algorithm that certifies with high probability that $\bInst$ has no $\rho$-biased assignment which is $(1-\eta)$-satisfying where $\rho \gg \sqrt{\frac{\log n}{\Delta}}$ and $\eta = \rho/32$.
\end{theorem}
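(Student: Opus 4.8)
The plan is to avoid reasoning about $\bInst$ as a whole and instead isolate the two \emph{monochromatic} sign-classes of its clauses. First I would let $\bH_{+}$ (resp.\ $\bH_{-}$) be the $3$-uniform hypergraph whose hyperedges are exactly those tuples $\cls$ for which the clause $((+1,+1,+1),\cls)$ (resp.\ $((-1,-1,-1),\cls)$) occurs in $\bInst$. Since each signed tuple is included in $\bInst$ independently with probability $\frac{m}{2^{3}n^{3}} = \frac{m/8}{n^{3}}$, both $\bH_{+}$ and $\bH_{-}$ are distributed as $\UnsignedH{3}{m/8}{n}$, i.e.\ random $3$-uniform hypergraphs of density $\Theta(\Delta)$ --- precisely the objects \pref{lem:3_hypergraph_structure} applies to (its proof needs only density $\omega(\log n)$). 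The certification algorithm is then: (i) check via a Chernoff bound that $|\bH_{+}|, |\bH_{-}| = \frac{m}{8}(1 \pm o(1))$, and (ii) run the certification of \pref{lem:3_hypergraph_structure} on $\bH_{+}$ and on $\bH_{-}$; if any check fails, output the trivial bound $2^{n}$. (We may assume $\delta \in (0,2)$ so that \pref{lem:3_hypergraph_structure} applies; the complementary regime $\delta \ge 2$ is not of interest, being far above the $\ThreeSAT$ refutation threshold.)

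The crucial combinatorial observation is that an all-$(+1)$ clause $((+1,+1,+1),\cls)$ is \emph{falsified} by $x$ precisely when $x_{\cls} = \vec{1}$, i.e.\ when all three variables of $\cls$ lie in $S_{+} \coloneqq \{i : x_i = +1\}$; symmetrically, an all-$(-1)$ clause is falsified iff all three of its variables lie in $S_{-} \coloneqq \{i : x_i = -1\}$. Hence for any $x$ with $|S_{+}| = (\tfrac12 + \gamma)n$, $\gamma > 0$, the second item of \pref{lem:3_hypergraph_structure} applied to $\bH_{+}$ with the set $S_{+}$ certifies that the number of hyperedges of $\bH_{+}$ lying inside $S_{+}$ is at least $\frac{m}{8}(\gamma + 2\gamma^{2}) - O(n\sqrt{\Delta\log n})$; each such hyperedge is an all-$(+1)$ clause of $\bInst$ falsified by $x$, so this quantity lower-bounds the total number of clauses of $\bInst$ falsified by $x$. (The case where $x$ has more $-1$'s than $+1$'s is identical, using $\bH_{-}$ and $S_{-}$.)

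To finish, suppose for contradiction that $x$ is $\rho$-biased and $(1-\eta)$-satisfying with $\eta = \rho/32$, and WLOG $x$ has weakly more $+1$'s, so $|S_{+}| = (\tfrac12 + \gamma)n$ with $\gamma \ge \rho/2$. By the previous paragraph (using $\gamma + 2\gamma^{2} \ge \gamma \ge \rho/2$), at least $\frac{\rho m}{16} - O(n\sqrt{\Delta\log n})$ clauses of $\bInst$ are falsified by $x$; but $(1-\eta)$-satisfiability forces at most $\eta m = \frac{\rho m}{32}$ clauses to be falsified. Combining, $\frac{\rho m}{32} \le O(n\sqrt{\Delta\log n})$, i.e.\ $\rho \le O\!\left(\tfrac{n\sqrt{\Delta\log n}}{m}\right) = O\!\left(\sqrt{\tfrac{\log n}{\Delta}}\right)$, contradicting $\rho \gg \sqrt{\log n/\Delta}$ once the constant hidden in ``$\gg$'' is taken larger than the one produced by \pref{lem:3_hypergraph_structure}. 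So no such assignment exists, and since the algorithm only asserts this when its (high-probability) structural checks succeed, it is a valid certification; the failure probability is $\frac{1}{\poly(n)}$ by a union bound over the two applications of \pref{lem:3_hypergraph_structure}. The extension to an arbitrary $3$-ary predicate $P$ is then the same sign-shifting reduction used in the proof of \pref{cor:kCSP-to-kSAT}.

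I expect the one genuine step to be recognizing that one should restrict to a single monochromatic sign-class of clauses: this is exactly what makes the bias of $x$ visible to a certifiable quantity. Counting \emph{all} falsified clauses only probes the degree-$3$ Fourier coefficient of $\ThreeSAT$, which is the advantage of a random $\ThreeXOR$ instance and cannot be certifiably bounded below density $n^{3/2}$; restricting to all-$(+1)$ (or all-$(-1)$) clauses instead turns the task into a clean statement about the edge distribution of a random hypergraph --- supplied by \pref{lem:3_hypergraph_structure} --- in which a biased assignment is forced to violate many constraints. Everything after that is bookkeeping: one checks that the additive error $O(n\sqrt{\Delta\log n})$ of \pref{lem:3_hypergraph_structure} is dominated by $\rho m/32 = \Theta(\rho\Delta n)$, which is precisely the hypothesis $\rho \gg \sqrt{\log n/\Delta}$, and notes that $\gamma \mapsto \gamma + 2\gamma^{2}$ is increasing so the lower bound degrades monotonically in $\gamma$ and no separate handling of near-$\vec{1}$ assignments is needed.
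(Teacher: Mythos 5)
Your proposal is correct and takes essentially the same route as the paper: split $\bInst$ into the two monochromatic sign-classes, each a random $3$-uniform hypergraph of density $\Delta/8$, observe that a biased assignment forces every hyperedge of the appropriate class lying entirely inside its majority side to be violated, and certify a lower bound on the number of such hyperedges via \pref{lem:3_hypergraph_structure}, contradicting $(1-\eta)$-satisfiability once $\rho \gg \sqrt{\log n/\Delta}$. The only differences are cosmetic (which sign-class is paired with $S_+$ versus $S_-$ under the paper's clause convention, plus your explicit handling of $\delta \ge 2$ and of the hyperedge-count concentration), so no further comment is needed.
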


\begin{proof}
	Given a random $\ThreeSAT$ instance $\bInst$, we extract sub-instances $\bInst|_+$ and $\bInst|_-$ consisting of clauses with no negations and fully-negated clauses, respectively.
	Each sub-instance is a random 3-uniform hypergraph with density $\frac{1}{8}\Delta$.
	Consider an assignment $x\in \{\pm1\}^n$, and let $S_+ = \{i: x_i = 1\}$ and $S_- = \{i: x_i = -1\}$.
	Our main insight is that all hyperedges of $\bInst|_-$ contained in $S_+$ must be violated, and similarly all hyperedges of $\bInst|_+$ contained in $S_-$ must be violated.

	First, we consider the case when $|S_+| = \frac{1+\rho}{2}n$ for $\rho \gg \sqrt{\frac{\log n}{\Delta}}$.
	By \pref{lem:3_hypergraph_structure}, we can certify with high probability that there must be more than $\frac{1}{8}m \cdot \frac{\rho}{4}$ hyperedges of $\bInst|_-$ contained in $S_+$.
	Thus, for $\eta = \rho/16$, any assignment $x$ such that $|S_+| \geq \frac{1+\rho}{2}n$ cannot be $(1-\eta)$-satisfying.

	Similarly, consider the case when $|S_-| = \frac{1+\rho}{2}n$ for $\rho \gg \sqrt{\frac{\log n}{\Delta}}$.
	Again by \pref{lem:3_hypergraph_structure}, we can certify with high probability that there must be more than $\frac{1}{8}m \cdot \frac{\rho}{4}$ hyperedges of $\bInst_+$ contained in $S_-$.
	Thus, any assignment $x$ such that $|S_-| \geq \frac{1+\rho}{2}n$ cannot be $(1-\eta)$-satisfying.

	Therefore, this certifies that any $\rho$-biased assignment $x$ cannot be $(1-\eta)$-satisfying.
\end{proof}

\begin{remark} \label{rem:comparison_to_KOS}
	We compare our result to the result of \cite[Corollary C.2]{KOS18}.
	For random $\ThreeXOR$ with $m = n^{\frac{3}{2}-\eps}$ (under the refutation threshold), they showed that the Sum-of-Squares algorithm can certify that there is no $\rho$-biased exactly satisfying assignment when $\rho = \wt{\Omega}(n^{-\frac{1}{4}+\frac{\eps}{2}})$.
	Our algorithm matches this cardinality condition for $\rho$, and further extends to $(1-\Theta(\rho))$-satisfying assignments and to arbitrary 3CSPs.
\end{remark}

\subsection{The case of $\kCSP$s when $k\ge 4$}
In this section we give an algorithm to refute $\kCSP$s under global cardinality constraint when $k\ge 4$.  Our approach yields quantitatively weaker guarantees so we make no effort to optimize the tradeoff between refutation quality and the imbalance in the global cardinality constraint. 
Akin to our certified counting algorithms, we begin by first giving strong refutation algorithms for $\kXOR$, then use the $\kXOR$-principle to extend the algorithm to $\kSAT$, which then implies a strong refutation algorithm for every $\kCSP$.  
For $\kXOR$ we prove:

\begin{theorem}	\label{thm:main-kXOR-global-card}
	Let $k\ge4$. Given a $\kXOR$ instance $\bInst\sim\HDist{k}{m}{n}$ where $m\coloneqq n^{\frac{k-2}{2} + \beta}$, there is an efficient algorithm that certifies with high probability that $\bInst$ has no $2\rho$-biased assignment which is $(1-\eta)$-satisfying where $\rho\gg \frac{\log^6 n}{n^{\beta/(k-2)}}$ and
	\[
		\eta = \rho^{k-2}\left(\frac{\rho^2}{2} -
		\wt{O}\left( \frac{1}{\rho^{(k-2)/2} n^{(k-4)/4} n^{\beta/2}} \right) - \wt{O}\left(\frac{1}{\rho^{(k-2)/4}n^{\beta/2}}\right) - \frac{2}{n}\right).
	\]
\end{theorem}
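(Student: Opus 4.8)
The plan is to carry out the two-step argument sketched in the introduction. \emph{Step 1} proves a purely deterministic ``expander $\TwoXOR$'' statement; \emph{Step 2} is the reduction that, after fixing a small set $S$ of variables, exhibits $\bInst$ as a family of induced $\TwoXOR$ instances — one per assignment $\sigma$ to $S$ — all supported on one fixed random graph $\bG$, and certifies that \emph{every} member of the family has nearly balanced signs via a single invocation of \pref{thm:aow-poly} (this is what forces $k\ge4$, since we need a genuine degree-$(k-2)\ge2$ polynomial refutation). For Step 1 I would prove: if $\Inst$ is a $\TwoXOR$ instance on a (multi)graph $G$ on $n'$ vertices with $m'$ edges such that $\|\meanA(G)\|\le\lambda$, and the numbers of positive and negative constraints of $\Inst$ differ by at most $\tau m'$, then every $z\in\{\pm1\}^{n'}$ with $\frac1{n'}|\sum_i z_i|\ge\mu$ violates at least $m'(\tfrac{\mu^2}{2}-\tfrac\tau2)-\tfrac{\lambda n'}{2}$ constraints. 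The proof is one application of the expander mixing lemma (\pref{thm:expander_mixing_lemma}): writing $S_{\pm}=\{i:z_i=\pm1\}$, the edges $\{u,v\}$ with $z_uz_v=+1$ are exactly those inside $S_+$ or inside $S_-$, and since $|S_+|^2+|S_-|^2\ge(\tfrac12+\tfrac{\mu^2}{2})(n')^2$ their number is $\ge m'\tfrac{1+\mu^2}{2}-\tfrac{\lambda n'}{2}$; every negative constraint sitting on such an edge is violated, and at most $\tfrac{m'(1+\tau)}{2}$ constraints are positive, which gives the bound. Both hypotheses ``$\|\meanA(G)\|\le\lambda$'' and ``signs are $\tau$-balanced'' will be certifiable.

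\textbf{Step 2.} Fix $S\subseteq[n]$ with $|S|=\rho n$ and let $\bG\coloneqq H_{S,2}$ be the induced graph on $\ol S$ (one edge $\{U_{k-1},U_k\}$ for each clause $(c,U)\in\bInst$ with $U_1,\dots,U_{k-2}\in S$ and $U_{k-1},U_k\in\ol S$; cf.\ \pref{def:induced_txor}). Then $\bG$ is, up to lower-order effects, an \ER graph on $n'=(1-\rho)n$ vertices with $m'=\Theta(\rho^{k-2}m)$ edges, hence average degree $\Delta'=\Theta(\rho^{k-2}\Delta)$ with $\Delta=m/n$; the hypothesis $\rho\gg\log^6 n/n^{\beta/(k-2)}$ is exactly what makes $\Delta'\gg\mathrm{polylog}(n)$, so matrix Bernstein (\pref{thm:spec-norm-bound}, applied as in the proof of \pref{lem:primal_graph_spectral_gap}) certifies $\|\meanA(\bG)\|\le\wt O(\sqrt{\Delta'})$ with high probability, whence $\lambda n'/m'=\wt O(1/\sqrt{\Delta'})=\wt O\!\big(\rho^{-(k-2)/2}n^{-(k-4)/4-\beta/2}\big)$ — the first error term in the statement. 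For the balance certificate, by \pref{obs:trunc-vs-induced} the signed surplus $\#\mathrm{pos}(\sigma)-\#\mathrm{neg}(\sigma)$ of the induced instance $\Inst_{S,\sigma,2}$ equals $Q(\sigma)\coloneqq\sum_{(b,W)\in\TruncInst{S}{k-2}}b\prod_{i\in W}\sigma_i$, a degree-$(k-2)$ polynomial in the $|S|=\rho n$ variables of $S$ whose coefficients are independent, centered, each nonzero with probability $O(m/n^{k-2})$, and (w.h.p., after dividing by the largest-magnitude coefficient, which is $\wt O(1)$) bounded in $[-1,1]$. Applying \pref{thm:aow-poly} with degree $k-2\ge2$ and $n\leftarrow\rho n$ certifies $\max_\sigma|Q(\sigma)|\le\wt O\!\big(\rho^{3(k-2)/4}n^{(k-2)/2+\beta/2}\big)$, and dividing by $m'$ yields a certified bound $\tau=\wt O\!\big(\rho^{-(k-2)/4}n^{-\beta/2}\big)$ on the imbalance fraction of \emph{every} $\Inst_{S,\sigma,2}$ — the second error term.

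\textbf{Assembling.} Suppose $x\in\{\pm1\}^n$ is $2\rho$-biased and $(1-\eta)$-satisfies $\bInst$, and write $x=(\sigma,z)$ with $\sigma=x|_S$, $z=x|_{\ol S}$. Since $|\sum_S\sigma_i|\le|S|=\rho n$ we get $|\sum_{\ol S}z_i|\ge\rho n$, so $z$ has bias $\mu\ge\rho/(1-\rho)\ge\rho$; and since a counted clause $(c,U)$ of $\bInst$ is $x$-satisfied iff the corresponding clause of $\Inst_{S,\sigma,2}$ is $z$-satisfied, and at most $\eta m$ clauses are $x$-violated, $z$ violates at most an $\eta m/m'$ fraction of $\Inst_{S,\sigma,2}$. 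Feeding this into Step 1 (with $\mu=\rho$, the certified $\tau$, and $\lambda=\wt O(\sqrt{\Delta'})$) a contradiction arises unless $\eta m/m'\ge\tfrac{\rho^2}{2}-\tfrac\tau2-\tfrac{\lambda n'}{2m'}$, i.e.\ unless $\eta\ge\rho^{k-2}(1-\rho)^2\big(\tfrac{\rho^2}{2}-\wt O(\cdot)-\wt O(\cdot)\big)$, which after absorbing $(1-\rho)^2=1-O(\rho)$ and the $O(1/n)$ slack from rounding $|S|$ into the stated error budget is exactly the claimed bound. The certification algorithm is: fix $S$, form $\bG$ and $Q$, run the spectral check ``$\|\meanA(\bG)\|\le C\sqrt{\Delta'\log n}$'' and the \pref{thm:aow-poly}-certification for $\pm Q$; accept ``no $2\rho$-biased $(1-\eta)$-satisfying assignment'' iff both pass. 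Correctness is the deterministic chain above, and the failure probability is a union bound over the two $1/\mathrm{poly}(n)$ bad events.

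\textbf{Main obstacle.} The one genuinely non-routine ingredient is the balance certificate: recognizing that ``$\Inst_{S,\sigma,2}$ has nearly equal numbers of positive and negative constraints for \emph{all} $2^{\rho n}$ choices of $\sigma$'' is a single bounded-polynomial-refutation statement about the $(k-2)\XOR$ instance $\TruncInst{S}{k-2}$ via \pref{obs:trunc-vs-induced}, together with picking $|S|=\rho n$ so that both the expander error $\wt O(1/\sqrt{\Delta'})$ and the balance error $\tau$ come out strictly below the $\rho^2/2$ gain from the bias. Tracking these through to the stated exponents — and reducing \pref{thm:aow-poly} to the bounded-coefficient case, since the multiply-covered $(k-2)$-tuples are negligibly rare and each coefficient is $\wt O(1)$ w.h.p.\ — is where essentially all the bookkeeping lives; Step 1, the bias transfer, and the density computation for $\bG$ are straightforward. (The extension of this $\kXOR$ statement to $\kSAT$ and hence to all $\kCSP$s is then the usual application of \pref{lem:kXOR-principle} and the reduction in the proof of \pref{cor:kCSP-to-kSAT}, exactly as in \pref{thm:3SAT_global_card}.)
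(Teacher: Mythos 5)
Your proposal follows essentially the same route as the paper: your balance certificate for all induced $\TwoXOR$ instances via \pref{obs:trunc-vs-induced} and \pref{thm:aow-poly} is the paper's \pref{lem:induced-two-xor-balanced}, your expander-mixing refutation of nearly-balanced, biased $\TwoXOR$ instances is \pref{lem:simul-2XOR}, and your assembly with $|S|=\rho n$, the bias transfer to $\ol{S}$, and the division by $m'\approx\rho^{k-2}m$ reproduces the paper's proof of \pref{thm:main-kXOR-global-card} with the same two error terms. The only differences are cosmetic (you state the $\TwoXOR$ step as a deterministic lemma with the spectral bound as a hypothesis, and you handle the coefficient boundedness by rescaling rather than the paper's truncation at a fixed threshold $B$), so the proposal is correct and matches the paper's argument.
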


Via the $\kXOR$ principle (\pref{lem:kXOR-principle}), and the arbitrary $\kCSP$-to-$\kSAT$ reduction in the proof of \pref{cor:kCSP-to-kSAT}, we get the following statement for refutation of $\kCSP$s under global cardinality constraints.
\begin{corollary}	\label{cor:main-kCSP-global-card}
	Let $\bInst\sim\HDist{k}{m}{n}$ where $m\coloneqq n^{\frac{k-1}{2}+\beta}$ and $\beta > 0$.  For any predicate $P$ not equal to the constant-$1$ function and any constant $\rho>0$, there is an efficient algorithm that certifies that $\bInst$ has no $2\rho$-biased assignment which $(1-\rho^k/2)$-satisfies $\bInst$ as a $P$-CSP instance.
\end{corollary}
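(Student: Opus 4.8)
\textbf{Proof approach for \pref{cor:main-kCSP-global-card}.} The plan is to obtain this corollary (for $k\ge 4$; the case $k=3$ is handled separately by \pref{thm:3SAT_global_card}) by chaining \pref{thm:main-kXOR-global-card} behind the same two reductions used for our counting results, and then doing the parameter bookkeeping. Both reductions --- passing from an arbitrary predicate $P$ to $\kSAT$, and from $\kSAT$ to $\kXOR$ via the $\kXOR$-principle --- never touch the assignment $x$ itself; they only modify the instance and degrade the ``satisfied fraction''. So the $2\rho$-biasedness of $x$ is automatically preserved, and it suffices to track how the slack parameter grows. First I would apply the $P\to\kSAT$ reduction from the proof of \pref{cor:kCSP-to-kSAT}: fixing $z\in\{\pm1\}^k$ outside the support of $P$, the shifted instance $\bInst'\coloneqq\{(c\circ z,U):(c,U)\in\bInst\}$ is again distributed as $\HDist{k}{m}{n}$ with $m=n^{(k-1)/2+\beta}$, and every assignment that $(1-\eta)$-satisfies $\bInst$ as a $P$-CSP instance $(1-\eta)$-satisfies $\bInst'$ as a $\kSAT$ instance. (Routing through $\kSAT$ rather than directly to $\kXOR$ is essential, since the $\kXOR$-principle exploits the degree-$k$ Fourier coefficient of the predicate being $\pm2^{-k}$, which need not hold for a general $P$.)

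Next I would invoke the $\kXOR$-principle (\pref{lem:kXOR-principle}) on $\bInst'$ with $\alpha=n^{\beta}$: with high probability one certifies that every $(1-\eta)$-satisfying assignment of $\bInst'$, viewed as a $\kXOR$ assignment, satisfies at least a $\left(1-2^{k-1}\eta-2^{O(k)}\sqrt{\log^5 n}/n^{\beta/2}\right)$ fraction of the clauses of $\bInst'$ interpreted as a $\kXOR$ instance (which is itself distributed as $\HDist{k}{m}{n}$). Write $\eta'\coloneqq 2^{k-1}\eta+2^{O(k)}\sqrt{\log^5 n}/n^{\beta/2}$ for this degraded parameter. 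Finally I would run the algorithm of \pref{thm:main-kXOR-global-card} on $\bInst'$-as-$\kXOR$. Writing $m=n^{(k-1)/2+\beta}=n^{(k-2)/2+\beta'}$ with $\beta'=\frac12+\beta>0$, its hypotheses are met (for any constant $\rho>0$ the condition $\rho\gg\log^6 n/n^{\beta'/(k-2)}$ holds once $n$ is large), so with high probability we certify that $\bInst'$-as-$\kXOR$ has no $2\rho$-biased assignment with satisfied fraction $\ge 1-\eta_{\XOR}$, where
\[
\eta_{\XOR}=\rho^{k-2}\!\left(\frac{\rho^2}{2}-\wt{O}\!\left(\frac{1}{\rho^{(k-2)/2}n^{(k-4)/4}n^{\beta'/2}}\right)-\wt{O}\!\left(\frac{1}{\rho^{(k-2)/4}n^{\beta'/2}}\right)-\frac{2}{n}\right).
\]
For $k\ge 4$ and constant $\rho$, all three error terms are $o(1)$, so $\eta_{\XOR}\ge\frac{\rho^k}{2}(1-o(1))$, while the additive slack in $\eta'$ is also $o(1)$. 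Taking $\eta=\rho^k/2^{k+1}$ (the value recorded in \pref{table:results}) gives $\eta'=2^{k-1}\eta+o(1)\le\frac{\rho^k}{4}+o(1)<\eta_{\XOR}$ for all large $n$. Hence the composed certificate rules out any $2\rho$-biased $(1-\eta)$-satisfying assignment of $\bInst'$ as $\kSAT$, and therefore --- by the deterministic $z$-shift implication --- any $2\rho$-biased $(1-\eta)$-satisfying assignment of $\bInst$ as a $P$-CSP; this composition is a valid certification algorithm in the sense of \pref{def:certification_algorithm}, and a union bound over the failure events of \pref{lem:kXOR-principle} and \pref{thm:main-kXOR-global-card} keeps the total failure probability $o(1)$.

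I expect no genuine obstacle here: all the substance lives inside \pref{thm:main-kXOR-global-card}, and this corollary is pure packaging. The only two points requiring care are (i) that re-indexing the density to $\beta'=\frac12+\beta$ keeps both the density and the bias hypotheses of \pref{thm:main-kXOR-global-card} satisfied --- which it does comfortably, since $\beta'>0$ and the required lower bound on $\rho$ is only $n^{-\Omega(1)}$ --- and (ii) that the $2^{O(k)}\sqrt{\log^5 n}/n^{\beta/2}$ error introduced by the $\kXOR$-principle, together with the vanishing error terms inside $\eta_{\XOR}$, are dominated by the constant gap between $\rho^k/4$ and $\rho^k/2$; both are immediate for constant $\rho$ as $n\to\infty$.
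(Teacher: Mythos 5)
Your proposal is essentially the paper's own proof: the paper derives \pref{cor:main-kCSP-global-card} in a single sentence by chaining \pref{thm:main-kXOR-global-card} through the $\kXOR$-principle (\pref{lem:kXOR-principle}) and the $P\to\kSAT$ shift from \pref{cor:kCSP-to-kSAT}, and your write-up supplies exactly that chain together with the correct bookkeeping (re-indexing the density as $\beta'=\tfrac12+\beta$, checking the bias hypothesis, and absorbing the $o(1)$ errors). One caveat: because the $\kXOR$-principle loses a factor $2^{k-1}$, your chain certifies the conclusion for $\eta=\rho^k/2^{k+1}$ (matching Table \ref{table:results} and the informal ``$(1-O(\rho^k))$'' statement), not the literal $1-\rho^k/2$ in the corollary as printed; that constant cannot be reached by this reduction, so the discrepancy is looseness in the paper's stated constant rather than a gap in your argument.
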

While a quantitatively stronger statement than the above is true, we present the simplified version for ease of exposition.

Now we turn our attention to proving \pref{thm:main-kXOR-global-card}, and in service of which we prove two other lemmas as ingredients.

The high level idea for our $\kXOR$ strong-refutation algorithm is to pick some set of vertices $S\subseteq[n]$ of size $n^c$ where $0<c<1$ is an appropriately chosen constant.  Then for any assignment $y$ to the variables in $S$, there is an induced $\twoxor$ instance $\bInst_{S,y,2}$ on $[n]\setminus S$ that must be approximately satisfied (recall the definition of induced instances in \pref{def:induced_txor}).  The two steps of the algorithm are then to:
\begin{enumerate}
\item Certify that for any $y$ the induced $\twoxor$ instance is $(1/2+\eps)$-positive for some small $\eps$.
\item Simultaneously strongly-refute the family of all $\twoxor$ instances that are $(1/2+\eps)$-positive under a global cardinality constraint.
\end{enumerate}

We first describe the algorithm that certifies that the induced $\twoxor$ instance is $(1/2+\eps)$-positive.
\begin{lemma}	\label{lem:induced-two-xor-balanced}
	Given a $\kXOR$ instance $\bInst\sim\HDist{k}{m}{n}$ where $m\coloneqq n^{\frac{k-2}{2}+\beta}$, constant $c$ satisfying $c > 1 - \frac{2\beta}{k-2}$, and a fixed set of vertices $S$ of cardinality $n^c$, there is an efficient algorithm to certify with high probability:
	\begin{displayquote}
		For any $y\in\{\pm1\}^S$ the instance $\bInst_{S,y,2}$ is $\left(\frac{1}{2}+\poseps\right)$-positive for the following choice of $\poseps$.
		\begin{align*}
			\poseps = \wt{O}\left(\max\left\{1, n^{\frac{\beta}{2} - \frac{k-2}{4}}\right\} \sqrt{\frac{n^{(1-c)\frac{k-2}{2}}}{n^{\beta}}}\right).
		\end{align*}
	\end{displayquote}
\end{lemma}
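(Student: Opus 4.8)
The plan is to recognize that the quantity controlling the balance of the induced $\twoxor$ instance is itself the objective value of a random $(k-2)\XOR$ polynomial, and then certify that it is uniformly small over all assignments $y$ to $S$ using the polynomial-refutation algorithm of \cite{AOW15} (\pref{thm:aow-poly}). By \pref{obs:trunc-vs-induced},
\[
    \sum_{(b,U)\in\bInst_{S,y,2}} b \;=\; \sum_{(b,U)\in \TruncbInst{S}{k-2}} b\prod_{i\in U} y_i ,
\]
and grouping the clauses of $\TruncbInst{S}{k-2}$ by their $(k-2)$-tuple of variables in $S$, the right-hand side is a degree-$(k-2)$ polynomial $\sum_{W}\bw_W\,y^W$ in $\{y_i\}_{i\in S}$ whose coefficient $\bw_W$ is the signed count of clauses of $\bInst$ whose first $k-2$ coordinates equal $W$ and whose last two coordinates lie in $\ol S$. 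Since $S$ is fixed and the signs of the potential hyperedges of $\bInst$ are i.i.d.\ uniform, each $\bw_W$ is a difference of two i.i.d.\ binomials, hence has a symmetric, mean-zero distribution with $\Pr[\bw_W\neq 0]\le p\coloneqq\frac{m}{n^{k-2}}=n^{\beta-(k-2)/2}$, and the $\bw_W$'s are mutually independent (they depend on disjoint blocks of potential hyperedges). A union bound over the $n^{c(k-2)}$ tuples certifies $|\bw_W|\le t$ for all $W$ with $t=\wt{O}(\max\{1,\sqrt p\})$ (in particular $t=\wt{O}(1)$ in the regime $p=o(1)$, which is the relevant one for $k\ge4$ below the $\kXOR$ refutation threshold); truncating each $\bw_W$ at level $t$ preserves symmetry — hence mean-zero-ness — and the sparsity bound, and changes nothing on this high-probability event.

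With the coefficients normalized to lie in $[-1,1]$, I would apply \pref{thm:aow-poly} to the $n^c$-variable degree-$(k-2)$ polynomial $\sum_W (\bw_W/t)\,y^W$ to certify, with high probability and for every $y\in\{\pm1\}^S$,
\[
    \sum_{W}\bw_W\,y^W \;\le\; \wt{O}\!\left(t\cdot\max\!\big\{\sqrt p\,n^{3c(k-2)/4},\ n^{c(k-2)/2}\big\}\right).
\]
Meanwhile $|\bInst_{S,y,2}|=|\TruncbInst{S}{k-2}|$ is a sum of independent Bernoullis concentrating around $m'\asymp m\,n^{-(1-c)(k-2)}=n^{\beta-(k-2)/2+c(k-2)}$, which is polynomially large precisely because the hypothesis $c>1-\frac{2\beta}{k-2}$ makes this exponent positive, so a Chernoff bound gives $|\bInst_{S,y,2}|=(1\pm o(1))m'$. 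Since the fraction of positive constraints of $\bInst_{S,y,2}$ equals $\frac12+\frac{1}{2|\bInst_{S,y,2}|}\sum_W\bw_W y^W$, dividing the two estimates gives a certified value of $\poseps$ of order $\frac{1}{m'}\cdot t\cdot\max\{\sqrt p\,n^{3c(k-2)/4},\,n^{c(k-2)/2}\}$; the same hypothesis $c>1-\frac{2\beta}{k-2}$ is exactly what makes the $\sqrt p\,n^{3c(k-2)/4}$ branch dominate, and a short exponent computation shows $\frac{\sqrt p\,n^{3c(k-2)/4}}{m'}=\wt{O}\big(n^{(1-c)(k-2)/4-\beta/2}\big)$, while the leftover factor $t=\wt{O}(\max\{1,\sqrt p\})=\wt{O}(\max\{1,n^{\beta/2-(k-2)/4}\})$ supplies the remaining factor in the claimed $\poseps$.

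The conceptual step — and the only place any idea is needed — is the first one: observing that the signed count of positive-minus-negative clauses of the induced $\twoxor$ instance, viewed as a function of $y\in\{\pm1\}^S$, is a random $(k-2)\XOR$ objective, because the signs of the truncated clauses inherit independence and uniformity from the original model; certifying its approximate balance simultaneously over all $y$ is then precisely the task solved in \cite{AOW15}. After that, the work is bookkeeping: checking that $\bw_W$ has the stated symmetric, mean-zero, $p$-sparse distribution, that truncating at $t=\wt{O}(\max\{1,\sqrt p\})$ is harmless, and that the parameters $p$, $n^c$, and $m'$ combine so that \pref{thm:aow-poly} outputs exactly the advertised $\poseps$ — with the hypothesis $c>1-\frac{2\beta}{k-2}$ doing double duty to keep $m'$ polynomially large and to select the operative branch of the refutation bound. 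I expect this parameter-matching, rather than any genuine analytic difficulty, to be the main obstacle.
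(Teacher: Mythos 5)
Your proposal is correct and follows essentially the same route as the paper's proof: rewrite the positive-minus-negative count of $\bInst_{S,y,2}$ via \pref{obs:trunc-vs-induced} as a degree-$(k-2)$ polynomial in $y$ whose coefficients $\bw_U$ are signed clause counts, truncate them at the high-probability level to meet the boundedness/sparsity hypotheses, apply \pref{thm:aow-poly}, and divide by the concentrated value of $\left|\TruncbInst{S}{k-2}\right|$, with $c>1-\tfrac{2\beta}{k-2}$ selecting the dominant branch. The only cosmetic difference is that you keep $p=m/n^{k-2}$ un-capped at $1$ while the paper uses $p=\min\{\alpha,1\}$, which is harmless since the \cite{AOW15} bound is monotone in $p$ and your final expression still matches the stated $\poseps$.
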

\begin{proof}
	To put the statement we want to certify in a different way, we would like an algorithm that certifies:
	\[
		\frac{1}{\left|\bInst_{S,y,2}\right|}\sum_{(b,U)\in\bInst_{S,y,2}} b \le 2\eps.
	\]
	By \pref{obs:trunc-vs-induced}, this is equivalent to certifying the following for the truncated instance:
	\[
		\frac{1}{\left|\TruncbInst{S}{k-2}\right|}\sum_{(b,U)\in\TruncbInst{S}{k-2}} b\prod_{i\in U}y_i \le 2\eps,
	\]
	whose LHS can then be rewritten as:
	\begin{align*}
		\frac{1}{\left|\TruncbInst{S}{k-2}\right|}\sum_{U\in S^{k-2}} \prod_{i\in U}y_i \sum_{(b,U)\in\TruncbInst{S}{k-2}} b.
	\end{align*}
	We write $\bw_U$ to denote $\sum_{(b,U)\in\TruncbInst{S}{k-2}} b$.  Each $\bw_U$ is distributed as the sum of $O(n^2)$ random variables which are independent, bounded, centered and each nonzero with probability $O(n^{\beta-1-k/2})$.  Denote $\alpha$ as the expected number of nonzero terms in the sum defining $\bw_U$; its value is $Cn^{\beta+1-k/2}$ for some constant $C>0$.  By standard binomial concentration and Hoeffding's inequality, we know that with high probability $|\bw_U|\le B\coloneqq\max\{\log n, \sqrt{\alpha}\log n\}$ for all $U\in S^{k-2}$.
	Further, the probability that a given $\bw_U$ is nonzero is at most $p\coloneqq\min\{\alpha,1\}$.  Define $\wt{\bw}_U$ as the random variable $\bw_U\Ind[|\bw_U|\le B]$.
	\begin{enumerate}
		\item $\wt{\bw}_U$ is a centered random variable since $\bw_U$ is symmetric around $0$.
		\item $\wt{\bw}_U$ is supported on $[-B,B]$.
		\item The probability that $\wt{\bw}_U$ is nonzero is at most $p$.
	\end{enumerate}
	Hence by the certification algorithm of \cite{AOW15} (\pref{thm:aow-poly}), we can certify with high probability that for any $y\in\{\pm1\}^S$:
	\[
		\sum_{U\in S^{k-2}} \wt{\bw}_U \prod_{i\in U}y_i \le 2^{O(k)} B\max\{\sqrt{p}|S|^{3(k-2)/4}, |S|^{(k-2)/2}\}\log^{3/2}|S|.
	\]
	Since $\bw_U=\wt{\bw}_U$ for all $U\in S^{k-2}$ with high probability, our algorithm can verify this and also certify an identical upper bound on:
	\[
		\sum_{U\in S^{k-2}} \bw_U \prod_{i\in U}y_i \le 2^{O(k)} B\max\{\sqrt{p}|S|^{3(k-2)/4}, |S|^{(k-2)/2}\}\log^{3/2}|S|. \numberthis \label{eq:cert-upper}
	\]
	Plugging $|S|=n^c$ and $p = \min\{Cn^{\beta+1-k/2},1\}$ into \pref{eq:cert-upper} with $c > 1 - \frac{2\beta}{k-2}$, we have $\sqrt{p} |S|^{3(k-2)/4} > |S|^{(k-2)/2}$.  Moreover, since $\left|\TruncbInst{S}{k-2}\right|$ concentrates around $(1\pm o(1)) (\frac{|S|}{n})^{k-2} m$, we can certify with high probability that:
	\begin{align*}
		\frac{1}{\left|\TruncbInst{S}{k-2}\right|}\sum_{U\in S^{k-2}} \bw_U \prod_{i\in U}y_i &\le 2^{O(k)} B \sqrt{\frac{n^{(1-c)\frac{k-2}{2}}}{n^{\beta}} \log^3 n} \\
		&\le 2^{O(k)}\max\left\{1, n^{\frac{\beta}{2} - \frac{k-2}{4}}\right\} \sqrt{\frac{n^{(1-c)\frac{k-2}{2}}}{n^{\beta}}} \log^{5/2} n,
	\end{align*}
	since $B = \max\{1,\sqrt{\alpha}\} \log n = O(\max\{1, n^{\frac{\beta}{2} - \frac{k-2}{4}}\} \log n)$.
\end{proof}

We now describe the algorithm to \emph{simultaneously} refute the relevant family of $\twoxor$ instances.  The following definition more concretely describes the family of instances we are interested in.
\begin{definition}
	Given a multi-graph $G$, let $\calF(G,\eps)$ be the collection of all $\twoxor$ instances on $G$ that are $\left(\frac{1}{2}+\eps\right)$-positive.
\end{definition}

\begin{lemma}	\label{lem:simul-2XOR}
	Let $\bG$ be an $n$-vertex random multi-graph with average degree $\Delta\ge\log^2 n$ obtained by independently adding $\Binom\left(r,\frac{\Delta}{nr}\right)$ edges between $i$ and $j$ for every pair of distinct $i,j\in[n]$.  Further assume $r>\frac{2\Delta}{n}$.  There is an efficient algorithm that takes in $\bG$ as input and with high probability certifies that every $\calI\in\calF(\bG,\eps)$ has no $\rho$-biased assignment that is $\left(1-\frac{\rho^2}{2}+8\sqrt{\frac{\log n}{\Delta}}+\eps+\frac{2}{n}\right)$-satisfying.
\end{lemma}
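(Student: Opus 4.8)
The plan is to reduce the statement to a cut upper bound in $\bG$ that follows deterministically from a spectral certificate, via the expander mixing lemma, and which therefore handles all of $\calF(\bG,\eps)$ at once.

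Fix any $\calI\in\calF(\bG,\eps)$ and write $m\coloneqq|\calI|$; by definition at least $(\tfrac12-\eps)m$ of its constraints are negative. Let $x\in\{\pm1\}^n$ be any $\rho$-biased assignment and set $S_+=\{i:x_i=+1\}$, $S_-=\{i:x_i=-1\}$. Since $x$ is $\rho$-biased, $\min\{|S_+|,|S_-|\}\le\tfrac{1-\rho}{2}n$, hence $|S_+|\cdot|S_-|\le\tfrac{1-\rho^2}{4}n^2$. The crucial observation is that every constraint $(b,\{u,v\})\in\calI$ with $u,v$ on the same side of this bipartition has $x_ux_v=+1$ and is therefore violated whenever $b=-1$. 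Consequently
\[
  \#\{\text{constraints of }\calI\text{ violated by }x\}\ \ge\ \#\{\text{negative constraints}\}-\#\{\text{crossing constraints}\}\ \ge\ \left(\tfrac12-\eps\right)m-e(S_+,S_-),
\]
where $e(S_+,S_-)$ counts edges of $\bG$ across the cut with multiplicity. Since this lower bound depends only on $\bG$ and not the signing, it suffices to certify simultaneously, for all bipartitions arising from $\rho$-biased assignments, that $e(S_+,S_-)<\big(\tfrac12-\tfrac{\rho^2}{2}+8\sqrt{\tfrac{\log n}{\Delta}}+\eps+\tfrac2n\big)m$: this forces every $\rho$-biased $x$ to violate more than $\big(\tfrac{\rho^2}{2}-8\sqrt{\tfrac{\log n}{\Delta}}-\eps-\tfrac2n\big)m$ constraints of every $\calI\in\calF(\bG,\eps)$.

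The certification algorithm computes $m$, the average degree $d=2m/n$, and the spectral norm $\|\meanbA\|$ of the de-meaned adjacency matrix $\meanbA=\bA-\tfrac dnJ$ of $\bG$, and checks $m\ge(1-o(1))\binom n2\tfrac\Delta n$ and $\|\meanbA\|\le C\sqrt{\Delta\log n}$ for an absolute constant $C$. The first check holds with high probability by a Chernoff bound, since $m$ is a sum of independent $\Binom(r,\tfrac\Delta{nr})$ random variables with mean $\binom n2\tfrac\Delta n$. For the second, write $\bA-\E\bA$ as a sum of independent mean-zero symmetric matrices, one per vertex pair, the one for $\{i,j\}$ being $(\bX_{ij}-\tfrac\Delta n)$ times a fixed $\{0,1\}$-matrix of norm $\le\sqrt2$, where $\bX_{ij}\sim\Binom(r,\tfrac\Delta{nr})$ and $\E\bA=\tfrac\Delta n(J-\Id)$. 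Each $\bX_{ij}$ exceeds $O(\log n)$ only with probability $n^{-\omega(1)}$ (its mean is $\tfrac\Delta n$), so after truncating the multiplicities at that scale the summands have norm $O(\log n)$ and the Bernstein variance proxy is at most $\Delta$; \pref{thm:spec-norm-bound} then gives $\|\bA-\E\bA\|=O(\sqrt{\Delta\log n}+\log^2 n)=O(\sqrt{\Delta\log n})$ using $\Delta\ge\log^2n$, whence the bound on $\|\meanbA\|$. Once both checks pass, \pref{thm:expander_mixing_lemma} deterministically certifies $e(S_+,S_-)\le\tfrac dn|S_+||S_-|+\|\meanbA\|\sqrt{|S_+||S_-|}$ for all bipartitions.

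Finally we combine. For a $\rho$-biased $x$, plugging in $|S_+|\cdot|S_-|\le\tfrac{1-\rho^2}{4}n^2$, $d=\Delta(1\pm o(1))$, and $\|\meanbA\|\le C\sqrt{\Delta\log n}$ gives $e(S_+,S_-)\le\tfrac{1-\rho^2}{4}n\Delta(1+o(1))+\tfrac{Cn}{2}\sqrt{\Delta\log n}$; using $m\ge(1-o(1))\tfrac{(n-1)\Delta}{2}$ to turn $n\Delta$ into a multiple of $m$ (the replacement of $n$ by $n-1$ produces the $\tfrac2n$ term) and $n\sqrt{\Delta\log n}$ into $O(\sqrt{\log n/\Delta})\,m$, and absorbing the $o(1)$'s into this last term, routine bookkeeping yields $e(S_+,S_-)\le\big(\tfrac{1-\rho^2}{2}+\tfrac2n+8\sqrt{\tfrac{\log n}{\Delta}}\big)m$. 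By the reduction of the second paragraph this proves the lemma. The one genuinely delicate point is the bound $\|\meanbA\|=O(\sqrt{\Delta\log n})$: since the lemma permits the multi-edge parameter $r$ to be large, a black-box matrix Bernstein bound would incur an $r\log n$ additive error, so one must first argue that the $\Binom(r,\tfrac\Delta{nr})$ multiplicities concentrate at scale $O(\log n)$ and truncate accordingly; everything else — the Chernoff bound on $m$ and the constant-chasing — is routine.
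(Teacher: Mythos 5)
Your proposal is correct and follows essentially the same route as the paper: lower-bound the violated constraints of any $\calI\in\calF(\bG,\eps)$ by (number of same-side pairs) minus (number of positive constraints) — equivalently, by $(\tfrac12-\eps)m$ minus the crossing edges — and certify the needed edge-count bound for all $\rho$-biased bipartitions at once from a matrix-Bernstein spectral certificate on $\bG$, exactly as in the paper's expander-mixing-style computation. The only substantive difference is your handling of the "delicate point": the paper never truncates multiplicities, because it decomposes $A_{\bG}-\E A_{\bG}=\sum_{e\in E(r\cdot K_n)}A_e\left(\Ind[e\in\bG]-\tfrac{\Delta}{rn}\right)$ over the $r\binom{n}{2}$ independent Bernoulli edge-copies, so each summand has norm at most $1$, the variance proxy is at most $\Delta$, and \pref{thm:spec-norm-bound} gives $\|A_{\bG}-\E A_{\bG}\|\le 4\sqrt{\Delta\log n}$ directly with no dependence on $r$. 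Note also that your per-pair truncation at scale $O(\log n)$ implicitly assumes the mean multiplicity $\Delta/n$ is $O(\log n)$, which the lemma does not guarantee (and which fails in the paper's application for $k\ge 5$); you would need to truncate at $\Delta/n$ plus the deviation scale, or simply switch to the per-copy decomposition, and you should also track the constant so that the certified error term really fits under the stated $8\sqrt{\log n/\Delta}+\tfrac{2}{n}$.
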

\begin{proof}
	Let $x$ be a $\rho$-biased assignment and let $Y = \{v\in [n]: x_v = +1\}$ where $|Y| = (\frac{1}{2}+c)n$ and $|c| \geq \frac{\rho}{2}$.  To prove the statement our algorithm will certify a lower bound on the fraction of constraints that are violated in any $\calI\in\calF(\bG,\eps)$.  Towards doing so, we will first certify some lower bound $\gamma$ on the fraction of constraints within $Y$ or $\ol{Y}$.  Now, for every constraint $\{u,v\}$ where $u,v\in Y$ or $u,v\in\ol{Y}$, $x_ux_v = +1$.  Since the fraction of constraints $(b,\{u,v\})$ for which $b=+1$ is bounded by $\frac{1}{2}+\eps$, the fraction of violated constraints must be at least $\gamma-\left(\frac{1}{2}+\eps\right)$.

	The certification of the lower bound is much like the proof of the expander mixing lemma.  The number of edges that are contained within $Y$ or $\ol{Y}$ is accounted by:
	\begin{align*}
		2(|E(\bG[Y])| + |E(\bG[\ol{Y}])|) &= 1_Y^{\top}A_{\bG}1_Y + 1_{\ol{Y}}^{\top}A_{\bG}1_{\ol{Y}} \\
		&= 1_Y^{\top} \E[A_{\bG}] 1_Y + 1_{\ol{Y}}^{\top} \E[A_{\bG}] 1_{\ol{Y}} + 1_Y^{\top}(A_{\bG}-\E A_{\bG})1_Y + 1_{\ol{Y}}^{\top}(A_{\bG}-\E A_{\bG}) 1_{\ol{Y}} \\
		&\ge \frac{\Delta}{n}\left(|Y|^2 + |\ol{Y}|^2 \right) - \|A_{\bG}-\E A_{\bG}\|\cdot n - \Delta.
	\end{align*}
	With $|Y| = \left(\frac{1}{2}+c\right)n$ and $|c| \geq \frac{\rho}{2}$, we can rewrite the above inequality as:
	\begin{align*}
		2(|E(\bG[Y])| + |E(\bG[\ol{Y}])|) &\ge \Delta n\left(\frac{1}{2}+2c^2\right) - \|A_{\bG}-\E A_{\bG}\|\cdot n - \Delta \\
		&\ge \frac{1}{2}\Delta n\left(1+\rho^2\right) - \|A_{\bG}-\E A_{\bG}\|\cdot n - \Delta
	\end{align*}
	With the above bound in hand, our algorithm can certify a lower bound of $\gamma-\left(\frac{1}{2}+\eps\right)$ where
	\[
		\gamma = \frac{1}{2|E(\bG)|} \left(\frac{1}{2}\Delta n\left(1+\rho^2\right) - \|A_{\bG}-\E A_{\bG}\|\cdot n - \Delta\right).
	\]
	Next, to treat the factor involving $|E(\bG)|$ observe that:
	\[
		2|E(\bG)| = 1^{\top}\E A_{\bG}1 + 1^{\top}(A_{\bG}-\E A_{\bG})1 \le \Delta (n-1) + \|A_{\bG}-\E A_{\bG}\|\cdot n.
	\]
	To complete the proof, it suffices to give a high-probability upper bound on $\|A_{\bG}-\E A_{\bG}\|$.  For the bound on $\|A_{\bG}-\E A_{\bG}\|$ we use the matrix Bernstein inequality, as stated in \pref{thm:spec-norm-bound}.  Let $r\cdot K_n$ be the graph on $[n]$ with $r$ parallel edges between every pair of vertices.  Then $\bG$ can be thought of as the graph obtained by sampling each $e\in E(r\cdot K_n)$ independently with probability $\frac{\Delta}{rn}$.  For $e\in E(r\cdot K_n)$ define $A_e$ as the adjacency matrix of the single edge $e$.  Then:
	\[
		A_{\bG}-\E A_{\bG} = \sum_{e\in E(r\cdot K_n)} A_e \cdot \left(\Ind[e\in\bG] - \frac{\Delta}{rn}\right).
	\]
	Then the $v$ parameter from the statement of \pref{thm:spec-norm-bound} for the above sum of random matrices is then equal to:
	\[
		\left\| \sum_{e\in E(r\cdot K_n)} \Id_e \E\left[\left(\Ind[e\in\bG] - \frac{\Delta}{rn}\right)^2\right] \right\| \le \left\| \Delta\cdot\left(1-\frac{\Delta}{nr}\right)\cdot\Id \right\| \le \Delta.
	\]
	Thus, by \pref{thm:spec-norm-bound}:
	\[
		\|A_{\bG}-\E A_{\bG}\| \le 4\sqrt{\Delta\log n}
	\]
	except with probability at most $1/n^2$.  When this holds for large enough $n$ this implies:
	\[
		\gamma \ge \frac{1}{2} + \frac{\rho^2}{2} - 8\sqrt{\frac{\log n}{\Delta}} - \frac{2}{n}.
	\]
	This in turn implies that the lower bound certified on the fraction of constraints violated by $x$ is with high probability at least:
	\[
		\frac{\rho^2}{2} - 8\sqrt{\frac{\log n}{\Delta}} - \frac{2}{n} - \eps
	\]
	which completes the proof.
\end{proof}

Now we are ready to prove \pref{thm:main-kXOR-global-card}.
\begin{proof}[Proof of \pref{thm:main-kXOR-global-card}]
	If $n^{\beta}\ge n\log^6 n$, then we can use the algorithm of \cite{AOW15} as stated in \pref{thm:aow-ref} to prove our statement.  Thus, we assume $n^{\beta}\le n\log^6 n$.  Let $S$ be some set, say $\{1,\dots,\ell\}$, where $\ell\coloneqq n^c$ is chosen so that $n^c = \rho n$.  Hence, when $k\ge 4$, the value of $\poseps$ is $\wt{O}\left(\frac{1}{\rho^{(k-2)/4} n^{\beta/2}}\right)$.
	By \pref{lem:induced-two-xor-balanced} we can certify with high probability that simultaneously for all assignments $y$ to variables in $S$, the induced $\TwoXOR$ formula is $\bInst_{S,y,2}$ is $\left(\frac{1}{2}+\wt{O}\left(\frac{1}{\rho^{(k-2)/4} n^{\beta/2}}\right)\right)$-positive.
	The underlying graph in $\bInst_{S,y,2}$ remains the same as we vary $y$.
	The expected number of edges in this graph is $\rho^{k-2}n^{\frac{k-2}{2}+\beta}\gg n^{\frac{k}{2}-1}\log^6 n$ and hence the number of edges concentrates around its expectation.
	And thus, the average degree $\Delta$ is $\rho^{k-2}n^{\frac{k-4}{2}+\beta}\gg \log^6 n$.  Further, the underlying graph is distributed exactly the same as in the hypothesis of \pref{lem:simul-2XOR}.
	Thus, when $k\ge 4$, an application of \pref{lem:simul-2XOR} tells us that we can certify with high probability that any $\rho$-biased assignment $x\in\{\pm 1\}^{n\setminus S}$ must violate at least
	\[
		\frac{\rho^2}{2} -
		\wt{O}\left( \frac{1}{\rho^{(k-2)/2} n^{(k-4)/4} n^{\beta/2}} \right) - \wt{O}\left(\frac{1}{\rho^{(k-2)/4}n^{\beta/2}}\right) - \frac{2}{n}
	\]
	fraction of the constraints on the induced formula $\bInst_{S,x_S,2}$, and consequently must violate at least
	\[
		\rho^{k-2}\left(\frac{\rho^2}{2} -
		\wt{O}\left( \frac{1}{\rho^{(k-2)/2} n^{(k-4)/4} n^{\beta/2}} \right) - \wt{O}\left(\frac{1}{\rho^{(k-2)/4}n^{\beta/2}}\right) - \frac{2}{n}\right)
	\]
	fraction of the constraints in $\bInst$.  Thus, any assignment $x$ that avoids violating at least the above fraction of constraints must satisfy $\left|\sum_{i\in[n]\setminus S}x_i\right|\le \rho n$.  Since $|S| = \rho n$, $x$ must be $2\rho$-biased.
\end{proof}

\section{Dimension-based count certification}
\label{sec:subspace_count}

We begin by upper bounding the number of Boolean vectors close to an arbitrary linear subspace.

\begin{theorem}
\label{thm:subspace_count}
    Let $V$ be a linear subspace of dimension $\alpha n$ in $\R^n$ for some $\alpha \in (0,1)$.
    For any $\eps\in (0,1/4)$, the number of Boolean vectors in $\cube$ that are $\eps$-close to $V$ is upper bounded by $2^{(H_2(4\eps^2) + \alpha \log \frac{3}{\eps})n}$.
\end{theorem}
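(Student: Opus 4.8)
The plan is to combine a volumetric net argument inside the $\alpha n$-dimensional subspace $V$ with a counting bound on normalized Boolean vectors lying in a small Euclidean ball, following the sketch given right after the statement. Write $\Pi_V$ for the orthogonal projection onto $V$. Every $x\in\cube$ satisfies $\|x\|_2=1$, and if $x$ is $\eps$-close to $V$ then $\|x-\Pi_V x\|_2\le\eps$, so $\Pi_V x$ lies in the unit ball $B_1(0)\cap V$ of $V$.

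First I would fix a minimal $\eps$-net $\mathcal N$ of $B_1(0)\cap V$. By the standard packing bound --- balls of radius $\eps/2$ centered at an $\eps$-separated subset of $B_1(0)\cap V$ are disjoint and contained in $B_{1+\eps/2}(0)\cap V$ --- one gets $|\mathcal N|\le (1+2/\eps)^{\alpha n}\le (3/\eps)^{\alpha n}=2^{\alpha\log(3/\eps)\,n}$, using $\eps<1$. For any Boolean $x$ that is $\eps$-close to $V$, choose $w\in\mathcal N$ with $\|\Pi_V x-w\|_2\le\eps$; then $\|x-w\|_2\le 2\eps$ by the triangle inequality. Thus every Boolean vector $\eps$-close to $V$ lies in the radius-$2\eps$ Euclidean ball around some point of $\mathcal N$.

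Second I would bound, for a fixed center $w\in V$, the number of $x\in\cube$ with $\|x-w\|_2\le 2\eps$. If there is at least one such vector, fix one and call it $x_0$; any other such $x$ satisfies $\|x-x_0\|_2\le 4\eps$. Writing $x=s/\sqrt n$ and $x_0=t/\sqrt n$ with $s,t\in\{\pm1\}^n$, a direct computation gives $\|x-x_0\|_2^2=\tfrac4n\,\mathrm{dist}_H(s,t)$, so $\mathrm{dist}_H(s,t)\le 4\eps^2 n$. Hence the number of $x\in\cube$ in any fixed radius-$2\eps$ ball is at most $\sum_{j=0}^{\lfloor 4\eps^2 n\rfloor}\binom{n}{j}\le 2^{H_2(4\eps^2)\,n}$, where the binomial-sum estimate is valid because $4\eps^2<1/4<1/2$ (recall $\eps<1/4$). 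Multiplying the net size by this per-ball count yields the claimed bound $2^{(H_2(4\eps^2)+\alpha\log(3/\eps))\,n}$.

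I do not expect a genuine obstacle here, since both ingredients --- a Euclidean net of a low-dimensional ball and Hamming-ball counting via the entropy estimate --- are standard; the points that need care are (i) passing through the projection $\Pi_V x$ rather than an arbitrary nearby point of $V$, which is what confines the net to the unit ball of $V$ so that $|\mathcal N|$ depends only on $\dim V$; (ii) comparing two Boolean vectors sitting in the \emph{same} small ball (rather than comparing a Boolean vector to the net point), which is the step that produces the constant $4\eps^2$ inside $H_2$; and (iii) verifying $4\eps^2\le 1/2$ so that $\sum_{j\le\beta n}\binom{n}{j}\le 2^{H_2(\beta)n}$ applies, which holds throughout the range $\eps\in(0,1/4)$.
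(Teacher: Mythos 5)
Your proposal is correct and follows essentially the same route as the paper: project each Boolean vector onto $V$, cover $B_1(0)\cap V$ by an $\eps$-net of size $(3/\eps)^{\alpha n}$, and bound the number of normalized Boolean vectors in each radius-$2\eps$ ball by $2^{H_2(4\eps^2)n}$ via the Hamming-distance/entropy estimate. The only cosmetic difference is that you compare two Boolean vectors through the ball's diameter while the paper phrases the same step as a contradiction with ball membership; the resulting bound is identical.
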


\begin{proof}
    Let $T$ be the set of vectors in $\cube$ that are $\eps$-close to $V$, and let $B_V \coloneqq B_1(0) \cap V$ be the unit ball in $V$.
    We take an $\eps$-net $\calN_\eps$ of $B_V$.
    Every $x\in T$ is $\eps$-close to a point in $B_V$ (namely $\Pi_V x$), so by the triangle inequality, every $x\in T$ is $2\eps$-close to a point in $\calN_\eps$.

    Next, we bound the number of vectors in any $\eps$-ball.

    \begin{claim}
    \label{lem:boolean_in_ball}
        For any $\eps \in (0,1/\sqrt{2})$ and vector $u\in \R^{n}$, there can be at most $2^{H_2(\eps^2)n}$ Boolean vectors in $\cube$ contained in the $\eps$-ball $B_{\eps}(u)$.
    \end{claim}

    \begin{proof}
        If there are no Boolean vectors in $B_{\eps}(u)$ we are done.  So assume there is a Boolean vector $x\in B_{\eps}(u)$.
        For any Boolean vector $x'$ at Hamming distance at least $\eps^2 n$:
        \[
            \|x-x'\|_2 \ge \sqrt{\eps^2 n \cdot \frac{4}{n}} = 2\eps,
        \]
        which means $x'\notin B_{\eps}(u)$ and any Boolean vector in $B_{\eps}(u)$ must be Hamming distance at most $\eps^2 n$ from $x$, of which there are:
        \begin{equation*}
            \sum_{i=0}^{\eps^2 n} \binom{n}{i} \leq 2^{H_2(\eps^2) n},
        \end{equation*}
        where the inequality follows from the assumption $\eps < 1/\sqrt{2}$.
    \end{proof}

    A standard volume argument shows that there exists an $\eps$-net with cardinality $|\calN_\eps| \leq (\frac{3}{\eps})^{\alpha n}$ (see, for example, \cite[Corollary 4.2.13]{Ver18}).
    Finally, we bound the cardinality of $T$. Since $T \subseteq \bigcup_{u\in \calN_{\eps}} B_{2\eps}(u)$,
    \begin{equation*}
        |T| \leq |\calN_{\eps}| \cdot 2^{H_2(4\eps^2)n} \leq 2^{(H_2(4\eps^2) + \alpha \log \frac{3}{\eps})n}.
        \qedhere
    \end{equation*}
\end{proof}

\begin{remark} \label{rem:subspace_count_tight}
    The upper bound of \pref{thm:subspace_count} is almost tight.
    For some small constants $\alpha, \eps > 0$, consider the subcube
    $T = \left\{\pm \frac{1}{\sqrt{n}}(\vec{1}, y): y\in \{\pm1 \}^{\alpha n}\right\} \subset \cube$,
    and let $V = \spn(T)$.
    Clearly, $|T| = 2^{\alpha n+1}$ and $\dim(V) = \alpha n+1$.
    For any $x\in T$, there are
    \[
        \sum_{i=1}^{\frac{\eps^2n}{4}} \binom{(1-\alpha)n}{i} \geq 2^{H_2(\frac{\eps^2}{4(1-\alpha)})(1-\alpha)n - O(\log n)}
    \]
    number of Boolean vectors $\eps$-close to $x$ and differ from $x$ in the first $(1-\alpha)n$ coordinates.
    Multiplied by $|T|$, the number of Boolean vectors $\eps$-close to $V$ is at least $2^{(H_2(\frac{\eps^2}{4(1-\alpha)}) + \Omega(\alpha)) n - O(\log n)}$.
    This shows that the exponent in the upper bound of \pref{thm:subspace_count} is tight up to a $\log(1/\eps)$ factor.
\end{remark}

The idea of bounding the number of structured vectors close to a subspace will be the main theme in the following sections.
Specifically, for a matrix $M\in \R^{n \times n}$, if $x^\top M x \approx \lambda_{\max}(M) \|x\|_2^2$, then $x$ must be close to the top eigenspace of $M$.
This allows us to apply \pref{thm:subspace_count} (or a variant of it for independent sets).
For a linear subspace $V$, we denote $\Pi_{V^\perp}$ as the projection matrix to the orthogonal subspace $V^\perp$, i.e.\ $\|\Pi_{V^\perp}x\|_2$ is the distance from $x$ to $V$.
We will first prove the following useful lemma.

\begin{lemma} \label{lem:distance_to_subspace}
    Let $M$ be a symmetric $n\times n$ matrix with eigenvalues $\lambda_1 \geq \lambda_2 \geq \cdots \geq \lambda_n$ and orthonormal eigenvectors $v_1,\dots, v_n$ such that $\lambda_1 > 0$.
    Further, let $V \coloneqq \spn\{v_i: \lambda_i \geq \lambda_1(1 - \delta)\}$ for some constant $\delta \in (0,1)$.
    Suppose $x\in \R^n$ satisfies $x^\top M x \geq \lambda_1 (1-\eta) \|x\|_2^2$ for some $\eta > 0$, then $\|\Pi_{V^\perp} x\|_2 \leq \sqrt{\frac{\eta}{\delta}}\|x\|_2$.
\end{lemma}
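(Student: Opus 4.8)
\textbf{Proof proposal for \pref{lem:distance_to_subspace}.}
The plan is to diagonalize $M$ and split the Rayleigh quotient according to whether an eigenvector lies in $V$ or in $V^\perp$. By homogeneity we may assume $\|x\|_2 = 1$. Since $v_1,\dots,v_n$ is an orthonormal eigenbasis, write $x = \sum_{i=1}^n c_i v_i$ with $\sum_i c_i^2 = 1$, so that $x^\top M x = \sum_i \lambda_i c_i^2$. Let $I \coloneqq \{i : \lambda_i \geq \lambda_1(1-\delta)\}$ be the index set defining $V$, and let $J \coloneqq [n]\setminus I$, so that $\|\Pi_{V^\perp}x\|_2^2 = \sum_{j\in J} c_j^2 =: s$ and $\sum_{i\in I} c_i^2 = 1-s$.

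The key step is the termwise comparison. For $i\in I$ we have $\lambda_i \leq \lambda_1$, hence $\lambda_i c_i^2 \leq \lambda_1 c_i^2$. For $j\in J$ we have $\lambda_j < \lambda_1(1-\delta)$, and since $c_j^2 \geq 0$ this gives $\lambda_j c_j^2 \leq \lambda_1(1-\delta) c_j^2$ regardless of the sign of $\lambda_j$ (here I use that $\lambda_1 > 0$ and $\delta\in(0,1)$, so $\lambda_1(1-\delta) > 0$). Summing,
\[
    \lambda_1(1-\eta) \;\leq\; x^\top M x \;=\; \sum_{i\in I}\lambda_i c_i^2 + \sum_{j\in J}\lambda_j c_j^2 \;\leq\; \lambda_1(1-s) + \lambda_1(1-\delta)s \;=\; \lambda_1(1-\delta s).
\]
Dividing through by $\lambda_1 > 0$ yields $1-\eta \leq 1 - \delta s$, i.e.\ $s \leq \eta/\delta$, and taking square roots gives $\|\Pi_{V^\perp}x\|_2 = \sqrt{s} \leq \sqrt{\eta/\delta}\,\|x\|_2$ after undoing the normalization.

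There is essentially no hard step here: the only point that requires a moment's care is handling possibly-negative eigenvalues $\lambda_j$ in the $J$-sum, which is why the argument is phrased as an upper bound $\lambda_j c_j^2 \leq \lambda_1(1-\delta)c_j^2$ rather than as a two-sided estimate, and this is exactly where the hypothesis $\lambda_1 > 0$ (ensuring the threshold $\lambda_1(1-\delta)$ is positive) is used.
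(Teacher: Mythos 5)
Your proof is correct and is essentially identical to the paper's: both expand $x$ in the eigenbasis, bound the eigenvalues by $\lambda_1$ on the $V$-block and by $\lambda_1(1-\delta)$ on the $V^\perp$-block, and rearrange the resulting inequality against the hypothesis $x^\top M x \geq \lambda_1(1-\eta)\|x\|_2^2$. Your explicit remark about possibly negative eigenvalues and the use of $\lambda_1>0$ is a nice touch that the paper leaves implicit, but the argument is the same.
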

\begin{proof}
    Let $\alpha \coloneqq \frac{1}{n} \dim(V)$.
    Let $x = \sum_{i=1}^n \wh{x}_i v_i$ written in the eigenvector basis.
    Clearly, we have $\sum_{i=1}^n \wh{x}_i^2 = \|x\|_2^2$ and $\sum_{\alpha n + 1}^n \wh{x}_i^2 = \|\Pi_{V^\perp} x\|_2^2$.
    \begin{equation*}
        x^\top M x = \sum_{i=1}^n \lambda_i \wh{x}_i^2 \leq \sum_{i=1}^{\alpha n} \lambda_1 \wh{x}_i^2 + \sum_{i=\alpha n + 1}^n \lambda_1 (1-\delta) \wh{x}_i^2
        \leq \lambda_1 \|x\|_2^2 - \delta \lambda_1 \|\Pi_{V^\perp} x\|_2^2.
    \end{equation*}
    Along with $x^\top M x \geq \lambda_1 (1-\eta) \|x\|_2^2$, we conclude
    $\|\Pi_{V^\perp} x\|_2^2 \leq \frac{\eta}{\delta} \|x\|_2^2$.
\end{proof}

\subsection{Sherrington-Kirkpatrick}
\label{sec:SK}

Recall the Sherrington-Kirkpatrick (SK) problem:
given $\G$ sampled from $\GOE(n)$, compute
\begin{equation*}
    \OPT(\G) = \max_{x\in \{\pm1\}^n} x^\top \G x.
\end{equation*}

A simple spectral refutation algorithm gives a \textit{spectral bound} of $\OPT(\G) \leq (2+o(1))n^{3/2}$.
We will certify an upper bound on the number of assignments achieving value close to the spectral bound.

\begin{theorem}
    \label{thm:SK_upper_bound}
    Let $\G \sim \GOE(n)$.
    Given $\eta \in (0,\eta_0)$ for some universal constant $\eta_0$,
    there is an algorithm certifying with high probability that at most $2^{O(\eta^{3/5}\log\frac{1}{\eta})n}$ assignments $x\in \{\pm1\}^n$ satisfy $x^\top \G x \geq 2(1-\eta)n^{3/2}$.
\end{theorem}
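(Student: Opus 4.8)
The plan is to certify that any assignment achieving a $(1-\eta)$-fraction of the spectral bound must lie close to the \emph{top eigenspace} of $\G$, which is low-dimensional by the semicircle law (\pref{lem:goe_eps_subspace_dim}), and then to invoke the subspace-counting bound of \pref{thm:subspace_count}. Concretely, I would have the algorithm first compute the eigendecomposition of $\G$, in particular $\lambda_1 := \lambda_{\max}(\G)$. If $\lambda_1 < 2(1-\eta)\sqrt{n}$ it outputs $0$ (by the Rayleigh quotient no $x \in \{\pm1\}^n$ can then satisfy $x^\top \G x \ge 2(1-\eta)n^{3/2}$). Otherwise it fixes a parameter $\delta = \Theta(\eta^{2/5}) \in (0,1)$, sets $\eta' := 1 - \frac{2(1-\eta)\sqrt n}{\lambda_1} \in [0,1)$, lets $V := \spn\{v_i : \lambda_i(\G) \ge \lambda_1(1-\delta)\}$ with $\alpha n := \dim V$, and outputs $2^{(H_2(4\eta'/\delta) + \alpha\log(3/\sqrt{\eta'/\delta}))n}$, falling back to the trivial bound $2^n$ whenever a smallness hypothesis of the lemmas invoked below is violated.

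To check this is always a valid upper bound (condition $1$ of \pref{def:certification_algorithm}): any $x$ with $x^\top \G x \ge 2(1-\eta)n^{3/2}$ has $\|x\|_2^2 = n$, so by the choice of $\eta'$ it satisfies $x^\top \G x \ge (1-\eta')\lambda_1\|x\|_2^2$. Then \pref{lem:distance_to_subspace} (applied with $M = \G$) gives $\|\Pi_{V^\perp} x\|_2 \le \sqrt{\eta'/\delta}\cdot\sqrt n$, i.e. the normalized vector $x/\sqrt n \in \cube$ is $\sqrt{\eta'/\delta}$-close to $V$. Applying \pref{thm:subspace_count} to the $\alpha n$-dimensional subspace $V$ with $\eps = \sqrt{\eta'/\delta}$ then bounds the number of such Boolean vectors by $2^{(H_2(4\eta'/\delta) + \alpha\log(3/\sqrt{\eta'/\delta}))n}$, which is exactly the output.

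For the high-probability guarantee (condition $2$): by \pref{fact:norm_of_goe} together with the matching lower bound $\lambda_{\max}(\G) \ge (2-o(1))\sqrt n$ (a consequence of the semicircle law, \pref{thm:semicircle_law}), we have $\lambda_1 = (2\pm o(1))\sqrt n$ w.h.p., hence $\eta' = \eta + o(1) = \Theta(\eta)$. Moreover $\lambda_1(1-\delta) \ge (2 - 2\delta - o(1))\sqrt n$, so \pref{lem:goe_eps_subspace_dim} with $\eps = 2\delta + o(1)$ yields $\alpha \le \frac{(2\delta)^{3/2}}{\pi}(1+o(1)) = O(\delta^{3/2})$. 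Substituting $\delta = \Theta(\eta^{2/5})$ makes $\eta'/\delta = \Theta(\eta^{3/5})$, so $H_2(4\eta'/\delta) = O(\eta^{3/5}\log(1/\eta))$ and $\alpha\log(3/\sqrt{\eta'/\delta}) = O(\delta^{3/2}\log(1/\eta)) = O(\eta^{3/5}\log(1/\eta))$; the two contributions are balanced precisely because $\eta/\delta \asymp \delta^{3/2}$ at $\delta \asymp \eta^{2/5}$, which is how the exponent $\eta^{3/5}$ arises. Thus w.h.p. the output is $2^{O(\eta^{3/5}\log(1/\eta))n}$, and the smallness conditions ($4\eta'/\delta < 1/2$ for $H_2$, $\sqrt{\eta'/\delta} < 1/4$ for \pref{thm:subspace_count}, $\delta < 1$) all hold once $\eta_0$ is a sufficiently small constant.

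I expect the only genuinely delicate point to be the certification discipline rather than any hard estimate: to preserve the ``always-valid'' property, the algorithm must feed the \emph{observed} $\lambda_{\max}(\G)$ and the \emph{observed} dimension of the near-top eigenspace into \pref{lem:distance_to_subspace} and \pref{thm:subspace_count}, invoking the typical-value analysis (\pref{fact:norm_of_goe}, \pref{thm:semicircle_law}, \pref{lem:goe_eps_subspace_dim}) only to establish condition $2$. The remaining work — the one-parameter optimization of the exponent over $\delta$, and bookkeeping the $o(1)$ fluctuations of $\lambda_{\max}(\G)$ and of $\dim V$ — is routine.
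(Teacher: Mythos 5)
Your proposal is correct and follows essentially the same route as the paper's proof: certify via \pref{lem:distance_to_subspace} that near-optimal $x$ lie close to the top eigenspace, bound its dimension by the semicircle law (\pref{lem:goe_eps_subspace_dim}), count via \pref{thm:subspace_count}, and balance with $\delta = \Theta(\eta^{2/5})$. The only difference is bookkeeping: you feed the observed $\lambda_1$ (via $\eta'$) and a threshold relative to $\lambda_1$ into the bound, while the paper fixes thresholds at $2(1-\delta)\sqrt n$ and adds an explicit check $|\lambda_1/\sqrt n - 2| < n^{-1/4}$ with fallback $2^n$ — both are valid certification disciplines.
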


\begin{proof}
    The algorithm is as follows.

    \begin{enumerate}[(1)]
        \item Choose $\delta = \eta^{2/5}$, and let $\eps = \sqrt{\frac{\eta}{\delta}} = \eta^{3/10}$.

        \item Compute the eigenvalues $\lambda_1 \geq \cdots \geq \lambda_n$ of $\G$, and compute $\alpha = \frac{1}{n} |\{i: \lambda_i \geq 2(1 - \delta)\sqrt{n}\}|$.
        Check that $|\frac{\lambda_1}{\sqrt{n}} - 2| < n^{-1/4}$; output $2^n$ if this fails.
        \label{step:check_goe_spectrum}

        \item Output $2^{(H_2(16\eps^2) + \alpha \log \frac{3}{\eps})n}$.
    \end{enumerate}

    Let $v_1,\dots, v_n$ be the corresponding eigenvectors of $\G$,
    and let $V_\delta \coloneqq \spn\{v_i: \lambda_i \geq 2(1-\delta)\sqrt{n}\}$ be the top eigenspace of dimension $\alpha n$.
    By the semicircle law (\pref{lem:goe_eps_subspace_dim}), with high probability $\alpha \leq O(\delta^{3/2})$.
    Moreover, the check in \pref{step:check_goe_spectrum} will succeed with high probability due to \pref{fact:norm_of_goe}, and thus $\lambda_1 \leq (2+o(1))\sqrt{n}$.

    Next, consider a normalized Boolean vector $y\in \cube$ such that $y^\top \G y \geq 2(1-\eta)\sqrt{n}$.
    By \pref{lem:distance_to_subspace}, we have
    $\|\Pi_{V_\delta^\perp} y\|_2 \leq \sqrt{\frac{\eta}{\delta}} + o(1) = \eps + o(1)$, i.e.\ $y$ is $2\eps$-close to $V_{\delta}$.

    By \pref{thm:subspace_count}, the number of $y\in \cube$ that are $2\eps$-close to a $\alpha n$-dimensional subspace is
    \begin{equation*}
        2^{(H_2(16\eps^2) + \alpha \log \frac{3}{\eps})n}.
    \end{equation*}

    Finally, we use the fact that $H_2(p) \leq 2p\log_2\frac{1}{p}$ for $p \leq \frac{1}{2}$.
    Thus, for small enough $\eta < \eta_0$, 
    $H_2(16\eps^2) \leq O(\eps^2\log\frac{1}{\eps})$.
    Since $\alpha \leq O(\delta^{3/2})$, our choice $\delta = \eta^{2/5}$ gives us an upper bound
    \begin{equation*}
        2^{O(\eta^{3/5}\log\frac{1}{\eta})n}.
    \end{equation*}
    This completes the proof.
\end{proof}

\subsection{Independent sets}
\label{sec:independent_set}

Recall that the best known \textit{certifiable} upper bound of the largest independent set size (the independence number) in a random $d$-regular graph is by the smallest eigenvalue of the
adjacency matrix (known as Hoffman's bound).
We first present a proof of the certification, which will give us some insights for the counting problem.

For a set $S\subseteq V$, let $1_S\in \zo^n$ be the indicator vector of $S$.
We will heavily use the ``centered'' vector $y_S\in \R^n$ defined as follows,
\begin{equation*}
    y_S = 1_S - \frac{\angles{1_S, \vec{1}}}{n} \vec{1}, \quad
    y_S(i) = \begin{cases}
        1 - \frac{|S|}{n} & i\in S, \\
        -\frac{|S|}{n} & i\notin S.
    \end{cases}
\end{equation*}
In words, $y_S$ is the projection of $1_S$ onto the subspace orthogonal to the all-ones vector.
Crucially, we have $\angles{y_S, \vec{1}} = 0$ and $\|y_S\|_2^2 = |S|(1-\frac{|S|}{n})$.

For an adjacency matrix $A$, let $\meanA \coloneqq A - \frac{d}{n} J$ be the ``de-meaned'' adjacency matrix, i.e., the matrix obtained by projecting away the Perron eigenvector.
We will mainly use $\meanA$ because its eigenvalues are well-distributed, whereas $A$ has an outlier eigenvalue $d$.
However, note that they have the same minimum eigenvalue: $\lambda_{\min}(A) = \lambda_{\min}(\meanA) > 0$.
The following lemma, widely known as Hoffman's bound (see also \cite{FO05}), relates the independence number to $-\lambda_{\min}(A)$.

\begin{lemma}[Certifiable upper bound on independence number]
\label{lem:max_ind_set_size}
    Let $G$ be a $d$-regular graph with $n$ vertices, let $A$ be the adjacency matrix, and let $\lambda \coloneqq -\lambda_{\min}(A)$.
    Suppose $S\subseteq [n]$ is an independent set, then
    \[
        |S| \leq \frac{\lambda}{d+\lambda} n.
    \]
\end{lemma}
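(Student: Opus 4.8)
The plan is to use the standard eigenvalue-interlacing / quadratic-form argument on the de-meaned adjacency matrix $\meanA = A - \frac{d}{n}J$. First I would record the two facts about $\meanA$ already noted in the text: it is symmetric, its Perron direction $\vec 1$ has been projected out so $\meanA \vec 1 = 0$, and crucially $\lambda_{\min}(\meanA) = \lambda_{\min}(A) = -\lambda$ (adding a positive semidefinite rank-one term $\frac{d}{n}J$ only affects the eigenvalue in the $\vec 1$ direction, which is $0$ for $\meanA$ and $d$ for $A$, and $d > 0 > \lambda_{\min}$). Consequently $\meanA \succeq -\lambda\, \Id$ on the whole space, i.e. $z^\top \meanA z \geq -\lambda \|z\|_2^2$ for every $z \in \R^n$.

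Next I would plug in the centered indicator $y_S = 1_S - \frac{|S|}{n}\vec 1$ defined just above the lemma, for which $\langle y_S, \vec 1\rangle = 0$ and $\|y_S\|_2^2 = |S|(1 - \frac{|S|}{n})$. On one hand, the lower bound $y_S^\top \meanA y_S \geq -\lambda \|y_S\|_2^2 = -\lambda |S|(1-\frac{|S|}{n})$. On the other hand I would compute $y_S^\top \meanA y_S$ directly: since $\meanA \vec 1 = 0$, we have $y_S^\top \meanA y_S = 1_S^\top \meanA 1_S = 1_S^\top A 1_S - \frac{d}{n} 1_S^\top J 1_S = 2|E(G[S])| - \frac{d}{n}|S|^2$. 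Because $S$ is an independent set, $|E(G[S])| = 0$, so $y_S^\top \meanA y_S = -\frac{d}{n}|S|^2$. Combining the two: $-\frac{d}{n}|S|^2 \geq -\lambda |S|\left(1 - \frac{|S|}{n}\right)$, and dividing through by $|S| > 0$ (the case $S = \varnothing$ being trivial) gives $\frac{d}{n}|S| \leq \lambda\left(1 - \frac{|S|}{n}\right)$, which rearranges to $|S|\left(\frac{d}{n} + \frac{\lambda}{n}\right) \leq \lambda$, i.e. $|S| \leq \frac{\lambda}{d+\lambda}n$.

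There is no real obstacle here — the whole proof is the one-line quadratic-form inequality plus bookkeeping. The only point that deserves a sentence of care is the identification $\lambda_{\min}(A) = \lambda_{\min}(\meanA)$ and the resulting semidefinite inequality $\meanA + \lambda\Id \succeq 0$ used over all of $\R^n$ (not just $\vec 1^\perp$); this is where $d$-regularity enters, through the fact that $\vec 1$ is the Perron eigenvector of $A$ with eigenvalue $d \geq 0 \geq \lambda_{\min}(A)$, so de-meaning does not create a new, more negative eigenvalue. Everything else is direct computation with $y_S$.
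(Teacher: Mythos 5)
Your proof is correct and follows essentially the same route as the paper's: both pass to the de-meaned matrix $\meanA = A - \frac{d}{n}J$, use that $S$ independent gives $1_S^\top A 1_S = 0$ so $1_S^\top(-\meanA)1_S = \frac{d}{n}|S|^2$, identify this with $y_S^\top(-\meanA)y_S$ since $\meanA\vec 1 = 0$, and bound it by $\lambda\|y_S\|_2^2 = \lambda\,|S|(1-\frac{|S|}{n})$ using $\lambda_{\max}(-\meanA) = -\lambda_{\min}(A)$. Your extra sentence justifying $\lambda_{\min}(\meanA)=\lambda_{\min}(A)$ via the Perron direction is a welcome clarification of a step the paper states without proof, but the argument is the same.
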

\begin{proof}
    Since $S$ is an independent set, $1_S^{\top} A 1_S = 0$.
    Further, we have $1_S^\top J 1_S = |S|^2$.
    Thus,
    \begin{equation*}
        1_S^\top \left(\frac{d}{n}J - A\right) 1_S = \frac{d}{n}|S|^2.
    \end{equation*}
    Denote $\meanA = A - \frac{d}{n}J$.
    Since $\vec{1}$ is in the kernel of $\meanA$, by the definition of $y_S$, we have $1_S^\top (- \meanA) 1_S = y_S^\top(- \meanA) y_S$.
    Thus,
    \begin{equation*}
        \frac{d}{n} |S|^2 = y_S^\top (- \meanA) y_S 
        \leq \lambda \|y_S\|_2^2
        = \lambda \cdot \frac{|S|(n-|S|)}{n},
        \numberthis \label{eq:ind_set_size_bound}
    \end{equation*}
    where $\lambda = \lambda_{\max}(-\meanA) = -\lambda_{\min}(A)$ since $A$ and $\meanA$ have the same minimum eigenvalue.
    This gives us the upper bound.
\end{proof}

For random $d$-regular graphs, $\lambda \leq 2\sqrt{d-1} + o(1)$ due to Friedman's Theorem (\pref{thm:eigenvalue_regular_graph}).
Denote $\Ram \coloneqq \frac{2\sqrt{d-1}}{d}$ and $\Cindset \coloneqq \frac{\Ram}{1 + \Ram}$.
Note that for all $d\geq 2$, $\Ram \leq 1$, thus $\Cindset \leq \frac{1}{2}$.
\pref{lem:max_ind_set_size} allows us to certify with high probability that all independent sets in a random $d$-regular graph have size $\leq \Cindset n(1+o(1))$.

We then turn to the problem of counting large independent sets.

\begin{theorem}
    \label{thm:ind_set_upper_bound}
    Let $d\geq 3$ be a constant.
    For a random $d$-regular graph $\bG$ on $n$ vertices, given $\eta \in (0,\eta_0)$ for some universal constant $\eta_0$,
    there is an algorithm certifying with high probability that there are at most $2^{O(\eta^{3/5}\log\frac{1}{\eta})n}$ independent sets of size $\Cindset(1-\eta)n$.
\end{theorem}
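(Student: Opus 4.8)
The plan is to follow the proof of \pref{thm:SK_upper_bound} essentially verbatim, with the centered indicator vector $y_S$ of an independent set playing the role of a normalized Boolean assignment and $\meanbA$ playing the role of $\G$. Fix $\eta$ and set $\delta \coloneqq \eta^{2/5}$ and $\eps \coloneqq \sqrt{\eta/\delta} = \eta^{3/10}$. The algorithm computes $M \coloneqq -\meanbA = \tfrac{d}{n}J - \bA$ together with its eigenvalues $\lambda_1(M) \ge \cdots \ge \lambda_n(M)$, checks that $\lambda_1(M) \le 2\sqrt{d-1}+\eps_n$ for the $\eps_n$ of \pref{thm:eigenvalue_regular_graph} (outputting the trivial bound $2^n$ if this fails), sets $\alpha \coloneqq \tfrac{1}{n}\left|\{i : \lambda_i(M) \ge \lambda_1(M)(1-\delta)\}\right|$, and outputs a bound of the form $2^{(\alpha\log(3/\eps) + O(\eps^2\log(1/\eps)))n}$ coming from the variant of \pref{thm:subspace_count} described below. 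For correctness of this counting bound, consider an independent set $S$ with $|S| = \Cindset(1-\eta)n$. By \pref{eq:ind_set_size_bound} we have $y_S^\top M y_S = \tfrac{d}{n}|S|^2$ while $\|y_S\|_2^2 = \tfrac{|S|(n-|S|)}{n}$, so the Rayleigh quotient of $y_S$ against $M$ equals $\tfrac{d|S|}{n-|S|} = \tfrac{d\Cindset(1-\eta)}{1-\Cindset(1-\eta)}$. Using $\Cindset = \tfrac{\Ram}{1+\Ram}$ and $d\Ram = 2\sqrt{d-1}$, this quantity equals $2\sqrt{d-1}$ at $\eta = 0$, and a first-order expansion in $\eta$ (together with $\Cindset \le \tfrac12$) shows it is at least $2\sqrt{d-1}(1-(2+o(1))\eta)$; once the spectral check passes this is $\ge \lambda_1(M)(1-(2+o(1))\eta)$.

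Applying \pref{lem:distance_to_subspace} with this $M$, with $x = y_S$, $\eta' = (2+o(1))\eta$, and the chosen $\delta$, we conclude $\|\Pi_{V_\delta^\perp}y_S\|_2 \le \sqrt{\eta'/\delta}\,\|y_S\|_2 = O(\eps)\|y_S\|_2$, where $V_\delta \coloneqq \spn\{v_i : \lambda_i(M) \ge \lambda_1(M)(1-\delta)\}$. By \pref{thm:eigenvalue_regular_graph} and \pref{thm:kesten_mckay}, with high probability $\lambda_1(M) = 2\sqrt{d-1}(1\pm o(1))$, so \pref{lem:km_eps_subspace_dim} gives $\alpha = \tfrac{1}{n}\dim(V_\delta) \le O(\delta^{3/2})$ with high probability. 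It remains to bound the number of independent sets $S$ of this one fixed size whose vector $y_S$ lies within $O(\eps)$ of the $\alpha n$-dimensional subspace $V_\delta$, measured relative to the common norm $R \coloneqq \|y_S\|_2 = \Theta(\sqrt n)$. Since $S \mapsto y_S$ is injective on sets of a common size, it suffices to count such vectors. Take an $(\eps R)$-net $\calN$ of the radius-$R$ ball of $V_\delta$, of cardinality $|\calN| \le (3/\eps)^{\alpha n}$; every relevant $y_S$ lies within $O(\eps)R$ of some $u \in \calN$. For a fixed $u$, if $y_S$ and $y_{S'}$ both lie within $c\eps R$ of $u$, then $y_S - y_{S'} = 1_S - 1_{S'}$ (as $|S| = |S'|$), so $|S\triangle S'| = \|1_S - 1_{S'}\|_2^2 \le (2c\eps R)^2 = \Theta(\eps^2 n) =: D$; and the number of sets of size $|S|$ at symmetric difference at most $D$ from a fixed one is $\le \sum_{j\le D/2}\binom{n}{j}^2 \le 2^{2nH_2(D/(2n)) + O(\log n)} = 2^{O(\eps^2\log(1/\eps))n}$. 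Multiplying, the number of such $S$ is at most $(3/\eps)^{\alpha n}\cdot 2^{O(\eps^2\log(1/\eps))n}$.

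Finally, with $\alpha \le O(\delta^{3/2})$ and $\log(3/\eps) = O(\log(1/\eta))$ this bound is $2^{O\left((\delta^{3/2}+\eps^2)\log(1/\eta)\right)n} = 2^{O\left((\delta^{3/2}+\eta/\delta)\log(1/\eta)\right)n}$, which is minimized at $\delta = \eta^{2/5}$, yielding $2^{O(\eta^{3/5}\log(1/\eta))n}$ as claimed; the spectral check and the bound $\alpha \le O(\delta^{3/2})$ each hold with probability $1-o(1)$ by \pref{thm:eigenvalue_regular_graph} and \pref{thm:kesten_mckay}. The main obstacle is the step above, the variant of \pref{thm:subspace_count}: one must replace ``Boolean vectors in a small ball'' by ``centered indicators $y_S$ of equal-size sets in a small ball'' and exploit that $y_S - y_{S'} = 1_S - 1_{S'}$ converts metric closeness into small symmetric difference, all while carefully tracking the common norm $R = \Theta(\sqrt n)$ so that the entropy exponent emerges as $O(\eps^2\log(1/\eps))n$ rather than something larger. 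The Rayleigh-quotient identity $d\Cindset/(1-\Cindset) = 2\sqrt{d-1}$ and its first-order expansion in $\eta$ are the only other points requiring computation, and both are short.
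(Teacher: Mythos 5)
Your proposal is correct and follows essentially the same route as the paper's proof: a spectral check via Friedman's theorem, the Rayleigh-quotient bound $\frac{d|S|}{n-|S|}\ge d\Ram(1-2\eta)$ for the centered vector $y_S$, \pref{lem:distance_to_subspace} to place $y_S$ near the top eigenspace of $-\meanbA$ whose dimension is $O(\delta^{3/2})n$ by the Kesten--McKay law, an $\eps$-net of the ball in that subspace, a symmetric-difference count for the $y_S$'s in a single small ball, and the optimization $\delta=\eta^{2/5}$. The only cosmetic difference is that your ball-counting step uses $y_S-y_{S'}=1_S-1_{S'}$ for equal-size sets, whereas the paper's \pref{lem:y_in_ball} proves the slightly more general bound $\|y_S-y_T\|_2^2\ge|S\Delta T|/4$ for all sets of size at most $n/2$; both yield the same $2^{O(\eps^2\log(1/\eps))n}$ bound per ball.
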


\begin{remark}
    A trivial upper bound is $\binom{n}{\Cindset (1-\eta)n} \approx 2^{H_2(\Cindset(1-\eta))n} = 2^{\Omega_d(n)}$ for small $\eta$.
    Thus, for constant $d$ and small $\eta$, our upper bound is significantly better than this trivial bound.
\end{remark}

Let $\indsets{G}$ be the set of independent sets of size at least $\Cindset (1-\eta)n$ in a $d$-regular graph $G$, and let $\Y{G} = \{y_S: S\in \indsets{G}\}$.
Clearly, we have $|\indsets{G}| = |\Y{G}|$.
To bound $|\Y{\bG}|$ for a random $d$-regular graph $\bG$, we follow the same idea of bounding the number of vectors close to a subspace.
We first show that any $y_S\in \Y{\bG}$ is close to the top eigenspace of $-\meanbA$.

Let $v_1,\dots, v_n$ be the eigenvectors of $-\meanbA$, and let $\lambda_1,\dots,\lambda_n$ be the eigenvalues.
For a constant $\delta \in (0,1)$, let $V_\delta = \spn\{v_i: \lambda_i \geq d\Ram(1-\delta)\}$.

\begin{lemma}
    \label{lem:y_S_distance_to_subspace}
    Suppose $\lambda_{\max}(-\meanbA) \leq d\Ram (1+ o(1))$.
    Let $\eta \in (0,1)$.
    Then, for any independent set $S$ of size $\Cindset (1-\eta)n$, the vector $y_S$ satisfies
    $\|\Pi_{V_{\delta}^\perp} y_S\|_2 \leq \sqrt{\frac{2\eta}{\delta}}\|y_S\|_2$.
\end{lemma}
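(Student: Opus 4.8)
The plan is to produce the two inputs needed for \pref{lem:distance_to_subspace} applied to the matrix $M = -\meanbA$ and then read off the conclusion. The first step is to compute the quadratic form $y_S^\top(-\meanbA)y_S$ \emph{exactly}, using only that $S$ is independent. Since $S$ is independent, $1_S^\top \bA 1_S = 0$, so $1_S^\top \meanbA 1_S = 1_S^\top \bA 1_S - \tfrac{d}{n} 1_S^\top J 1_S = -\tfrac{d}{n}|S|^2$. Because $\vec{1}$ lies in the kernel of $\meanbA$ and $y_S = 1_S - \tfrac{|S|}{n}\vec{1}$, we have $y_S^\top \meanbA y_S = 1_S^\top \meanbA 1_S$, hence
\[
    y_S^\top(-\meanbA)y_S = \frac{d}{n}|S|^2 .
\]
Dividing by $\|y_S\|_2^2 = |S|(n-|S|)/n$ gives the Rayleigh quotient $\tfrac{d|S|}{n-|S|}$. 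Plugging in $|S| = \Cindset(1-\eta)n$ and using the identity $\Cindset/(1-\Cindset) = \Ram$, this equals $d\Ram \cdot \tfrac{1-\eta}{1+\Ram\eta}$; since $0 < \Ram \le 1$, the elementary bound $\tfrac{1-\eta}{1+\Ram\eta} \ge 1 - \eta - \Ram\eta \ge 1 - 2\eta$ yields
\[
    y_S^\top(-\meanbA)y_S \ \ge\ d\Ram(1-2\eta)\,\|y_S\|_2^2 .
\]

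The second step is the eigenvalue truncation, i.e.\ essentially the computation inside \pref{lem:distance_to_subspace}. Expand $y_S = \sum_i \wh{y}_i v_i$ in the eigenbasis of $-\meanbA$ with eigenvalues $\lambda_1 \ge \cdots \ge \lambda_n$, and set $k \coloneqq \dim V_\delta$, so that $\|\Pi_{V_\delta^\perp} y_S\|_2^2 = \sum_{i>k}\wh{y}_i^2$. Using $\lambda_i \le \lambda_1$ for $i \le k$ and $0 < \lambda_i < d\Ram(1-\delta)$... (more precisely $\lambda_i < d\Ram(1-\delta)$, which is a genuine upper bound even where $\lambda_i < 0$) for $i > k$,
\[
    y_S^\top(-\meanbA)y_S \ \le\ \lambda_1\|y_S\|_2^2 - \bigl(\lambda_1 - d\Ram(1-\delta)\bigr)\,\|\Pi_{V_\delta^\perp}y_S\|_2^2 .
\]
Combining this with the lower bound from the first step and invoking the hypothesis $\lambda_{\max}(-\meanbA) = \lambda_1 \le d\Ram(1+o(1))$ (which holds with high probability by \pref{thm:eigenvalue_regular_graph}, and in particular $\lambda_1 > 0$), the numerator $\lambda_1 - d\Ram(1-2\eta)$ is $d\Ram(2\eta + o(1))$ and the denominator $\lambda_1 - d\Ram(1-\delta)$ is $d\Ram(\delta - o(1))$, so
\[
    \|\Pi_{V_\delta^\perp}y_S\|_2^2 \ \le\ \frac{\lambda_1 - d\Ram(1-2\eta)}{\lambda_1 - d\Ram(1-\delta)}\,\|y_S\|_2^2 \ \le\ \frac{2\eta}{\delta}\,\|y_S\|_2^2 ,
\]
after absorbing the lower-order terms; taking square roots gives the claim.

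The main obstacle I anticipate is purely bookkeeping: $V_\delta$ is defined through the deterministic threshold $d\Ram(1-\delta)$ rather than through $\lambda_1(1-\delta)$, so one cannot apply \pref{lem:distance_to_subspace} as a literal black box. The clean fix is to repeat its one-line truncation computation with the threshold $d\Ram(1-\delta)$, as above; alternatively one can observe that $\lambda_1(1-\delta)$ and $d\Ram(1-\delta)$ sandwich each other up to $o(1)$, so the corresponding candidate subspaces nest, and the factor $\sqrt{2\eta/\delta}$ survives. The only place randomness enters is the $o(1)$-control on $\lambda_{\max}(-\meanbA)$, which is exactly the content of \pref{thm:eigenvalue_regular_graph}.
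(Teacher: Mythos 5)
Your proposal is correct and follows essentially the same route as the paper: derive $y_S^\top(-\meanbA)y_S = \frac{d}{n}|S|^2$ from independence, lower-bound the Rayleigh quotient by $d\Ram(1-2\eta)$ via $\Cindset/(1-\Cindset)=\Ram$ and $\Ram\le 1$, and finish with the eigenbasis-truncation argument of \pref{lem:distance_to_subspace} together with $\lambda_{\max}(-\meanbA)\le d\Ram(1+o(1))$. Your explicit redo of the truncation with the deterministic threshold $d\Ram(1-\delta)$ is in fact slightly more careful than the paper, which cites \pref{lem:distance_to_subspace} directly even though that lemma's subspace is defined via $\lambda_1(1-\delta)$; this is a cosmetic discrepancy absorbed in the $o(1)$'s, exactly as you note.
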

\begin{proof}
    Recall that $\|y_S\|_2^2 = |S| \left(1-\frac{|S|}{n}\right)$ and that $y_S^\top (-\meanbA) y_S = \frac{d}{n}|S|^2$.  Hence:
    \begin{align*}
        \frac{y_S^\top (-\meanbA) y_S }{\|y_S\|_2^2} &= \frac{d|S|}{n\left(1-\frac{|S|}{n}\right)} = \frac{dC_d(1-\eta)}{1-C_d(1-\eta)}
        = \frac{dr_d(1-\eta)}{1+\eta r_d}
        \ge dr_d(1-2\eta)
    \end{align*}
    where the last inequality uses $\frac{1}{1+t}\geq 1-t$ for $t\in (0,1)$ and $\Ram \leq 1$.

    Since $\lambda_{\max}(-\meanbA) \leq d\Ram (1+ o(1))$, the statement follows from \pref{lem:distance_to_subspace}.
\end{proof}

Next, similar to \pref{lem:boolean_in_ball}, we upper bound the number of $y_S\in \Y{\bG}$ that can be in the same $\eps\sqrt{n}$-ball.
Here, we have a factor of $\sqrt{n}$ in the radius because each $y_S \in \Y{\bG}$ has norm $\Theta(\sqrt{n})$.

\begin{lemma}\label{lem:y_in_ball}
    Let $\eps > 0$ such that $\eps <  \frac{1}{4\sqrt{2}}$, and let $G$ be a $d$-regular graph whose maximum independent set is bounded by $\frac{n}{2}$.
    There can be at most $2^{(32\eps^2 \log_2 \frac{1}{\eps})n}$ vectors in $\Y{G}$ contained in any $(\eps \sqrt{n})$-ball.
\end{lemma}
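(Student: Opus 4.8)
The plan is to follow the template of \pref{lem:boolean_in_ball}, adapted to the centered vectors $y_S$. Fix an $(\eps\sqrt n)$-ball; if it contains no vector of $\Y{G}$ we are done, so fix one such $y_{S_0}$. The goal is to show that every other $y_S\in\Y{G}$ in the ball has $S$ at small symmetric-difference distance from $S_0$, and then count such $S$ via the binary entropy bound.

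First I would record a clean identity: since $y_S - y_{S_0} = (1_S - 1_{S_0}) - \tfrac{|S|-|S_0|}{n}\vec 1$, expanding and using $\langle 1_S - 1_{S_0},\vec 1\rangle = |S|-|S_0|$ and $\|1_S-1_{S_0}\|_2^2 = |S\,\triangle\, S_0|$ gives $\|y_S-y_{S_0}\|_2^2 = |S\,\triangle\, S_0| - \tfrac{(|S|-|S_0|)^2}{n}$. Both $y_S$ and $y_{S_0}$ lie in the same $(\eps\sqrt n)$-ball, so by the triangle inequality $\|y_S-y_{S_0}\|_2^2 \le 4\eps^2 n$, i.e.\ $|S\,\triangle\, S_0| - \tfrac{(|S|-|S_0|)^2}{n} \le 4\eps^2 n$.

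The heart of the argument is bounding $h \coloneqq |S\,\triangle\, S_0|$. Writing $a = |S\setminus S_0|$ and $b = |S_0\setminus S|$, so that $h = a+b$ and $|S|-|S_0| = a-b$, the inequality reads $h - \tfrac{(a-b)^2}{n} \le 4\eps^2 n$. If $h \le n/2$ then $\tfrac{(a-b)^2}{n}\le \tfrac{h^2}{n}\le \tfrac h2$, hence $\tfrac h2 \le 4\eps^2 n$ and $h\le 8\eps^2 n$. The remaining case $h > n/2$ is where the hypothesis that every independent set (in particular $S$ and $S_0$) has size $\le n/2$, together with $\eps < \tfrac1{4\sqrt 2}$, is used: from $|S|\le n/2$ (hence $a\le n/2$) one gets $|S|-|S_0| = a-b \le n-h$, so $\|y_S-y_{S_0}\|_2^2 \ge h - \tfrac{(n-h)^2}{n}$, and on $h\in[n/2,n]$ this is increasing in $h$ with minimum value $\tfrac n4$ at $h=n/2$; since $\eps<\tfrac1{4\sqrt2}$ forces $4\eps^2 n < \tfrac n4$, this case contradicts membership in the ball. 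I expect this ``nearly complementary sets'' case to be the only real obstacle; everything else is bookkeeping.

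Finally, since every $y_S\in\Y{G}$ in the ball satisfies $|S\,\triangle\, S_0|\le 8\eps^2 n$ and $8\eps^2 < \tfrac14 < \tfrac12$, the number of such $S$ is at most $\sum_{i=0}^{\lfloor 8\eps^2 n\rfloor}\binom ni \le 2^{H_2(8\eps^2)n}$. Using $H_2(p)\le 2p\log_2\tfrac1p$ for $p\le\tfrac12$ together with $\tfrac1{8\eps^2}\le\tfrac1{\eps^2}$ gives $H_2(8\eps^2)\le 16\eps^2\log_2\tfrac1{8\eps^2}\le 32\eps^2\log_2\tfrac1\eps$, which yields the claimed bound of $2^{(32\eps^2\log_2\frac1\eps)n}$ vectors of $\Y{G}$ in the ball.
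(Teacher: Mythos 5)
Your proof is correct and follows essentially the same route as the paper: fix a reference $y_{S_0}$ in the ball, use $|S|,|S_0|\le n/2$ to show any other $y_S$ in the ball has small symmetric difference $|S\,\triangle\,S_0|$, and then count via the binomial/entropy bound $H_2(p)\le 2p\log_2\frac{1}{p}$. The only difference is cosmetic: the paper invokes the one-line inequality $\|y_S-y_T\|_2^2\ge |S\,\triangle\,T|/4$ (giving threshold $16\eps^2 n$), whereas your exact identity plus case split gives the slightly sharper threshold $8\eps^2 n$ — both land on the stated bound $2^{(32\eps^2\log_2\frac{1}{\eps})n}$.
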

\begin{proof}
    For any sets $S,T\subseteq[n]$ and $|S|,|T|\le\frac{n}{2}$,
    we have $\|y_S - y_T\|_2^2 \ge |S\Delta T|/4$ where $S\Delta T$ is the symmetric difference.
    Thus, if $|S \Delta T| > 16\eps^2 n$, then they cannot be in the same $\eps \sqrt{n}$-ball.

    Pick any set $S$ for which $y_S\in\Y{G}$ (if no such set exists we are trivially done).  Then the number of sets $T$ such that $|S\Delta T| \le 16\eps^2 n$ is at most ${n\choose 16\eps^2 n}$, which is at most $2^{H_2(16\eps^2) n}$, which is at most $2^{(32\eps^2\log_2\frac{1}{\eps}) n}$ since $H_2(p)\le 2p\log_2\frac{1}{p}$ when $p\le\frac{1}{2}$. This completes the proof.
\end{proof}

Now, we are ready to prove \pref{thm:ind_set_upper_bound}.

\begin{proof}[Proof of \pref{thm:ind_set_upper_bound}]
    Recall that we defined $\Ram \coloneqq \frac{2\sqrt{d-1}}{d}$ and $\Cindset \coloneqq \frac{\Ram}{1+\Ram}$, and the de-meaned adjacency matrix $\meanbA = \bA - \frac{d}{n}J$.
    Further, define $\Cynorm \coloneqq \frac{\sqrt{\Ram}}{1+\Ram}$.
    Note that $\Cynorm < \sqrt{\Cindset} < 1$.
    The algorithm is as follows.

    \begin{enumerate}[(1)]
        \item Choose $\delta = \eta^{2/5}$, $\eps = \sqrt{\frac{2\eta}{\delta}}$, and $\eps'_d = 2\eps \Cynorm$.

        \item Compute the eigenvalues $\lambda_1 \geq \cdots \geq \lambda_n$ of $-\meanbA$, and compute $\alpha = \frac{1}{n} |\{i: \lambda_i \geq 2\sqrt{d-1}(1 - \delta)\}|$.
        Check that $|\lambda_1 - 2\sqrt{d-1}| < \frac{\log\log n}{\log n}$; output $2^n$ if this fails.
        \label{step:check_ramanujan}

        \item Output $2^{(32\eps'^2_d \log (1/\eps'_d) + \alpha \log(3/\eps)) n}$.
    \end{enumerate}

    Let $v_1,\dots, v_n$ be the eigenvectors of $-\meanbA$,
    and let $V_\delta \coloneqq \spn\{v_i: \lambda_i \geq 2\sqrt{d-1}(1-\delta)\}$ be the space spanned by the top $\alpha n$ eigenvectors.
    First, the check in \pref{step:check_ramanujan} will succeed with high probability due to Friedman's Theorem (\pref{thm:eigenvalue_regular_graph}).
    Thus, $\lambda_{\max}(-\meanbA) = 2\sqrt{d-1}(1+ o(1)) = d \Ram(1+o(1))$.
    Moreover, by the Kesten--McKay law (\pref{lem:km_eps_subspace_dim}), $\alpha \leq O(\delta^{3/2})$ with high probability.

    $\lambda_{\max}(-\meanbA)$ certifies that the maximum independent set has size at most $(1+o(1))\Cindset n$.
    Then, we have that every $y_S \in \Y{\bG}$ has norm $\|y_S\|_2^2 \leq (1+o(1))n \Cindset (1-\Cindset) = (1+o(1)) n \frac{\Ram}{(1+\Ram)^2} = (1+o(1)) \Cynorm^2 n$.
    Since $d\ge 3$, $(1+o(1))\Cindset\le \frac{1}{2}$ and by \pref{lem:y_S_distance_to_subspace}, every $y_S\in\Y{\bG}$ satisfies
    \begin{equation*}
        \|\Pi_{V_{\delta}^\perp} y\|_2 \leq \sqrt{\frac{2\eta}{\delta}} \|y_S\|_2 \leq \eps (1+o(1))\Cynorm \sqrt{n}.
    \end{equation*}
    Thus, $\Y{\bG}$ is contained in the centered ball of radius $2\Cynorm \sqrt{n}$ in $\R^n$, and is within distance $2\eps \Cynorm \sqrt{n}$ from the subspace $V_{\delta}$.

    We take an $\eps$-net $\calN_{\eps}$ of $B_1(0) \cap V_{\delta}$, the unit ball within $V_{\delta}$, and scale it by $2\Cynorm \sqrt{n}$.
    Then $\Y{\bG} \subseteq \bigcup_{u\in \calN_{\eps}} B_{\eps'_d \sqrt{n}}(u)$.
    By \pref{lem:y_in_ball}, each $(\eps'_d\sqrt{n})$-ball contains at most
    \begin{equation*}
        2^{(32\eps'^2_d \log \frac{1}{\eps'_d})n}
    \end{equation*}
    vectors in $\Y{\bG}$, provided that $\eps'_d < \frac{1}{4\sqrt{2}}$, which holds as long as $\eta < \eta_0$ for some universal constant $\eta_0$.

    The cardinality of $\calN_{\eps}$ is bounded by $\left(\frac{3}{\eps}\right)^{\alpha n} \leq 2^{(\alpha \log\frac{3}{\eps})n}$.
    Since $\alpha \leq O(\delta^{3/2})$ and $\Cynorm < 1$, our choice $\delta = \eta^{2/5}$ gives us an overall upper bound of
    \begin{equation*}
        2^{O(\eta^{3/5}\log\frac{1}{\eta})n},
    \end{equation*}
    which completes the proof.
\end{proof}

\section{Hardness evidence}
\label{sec:hardness}

In this section, we provide hardness evidence for some of the algorithmic problems we consider.  In these cases, improving slightly or significantly on our bounds would also result in improved algorithms for the refutation versions of these problems.  While we don't make confident claims of improvements on the refutation problems being computationally intractable, algorithms for them would certainly bypass several known barriers.

\subsection{Refutation-to-certified-counting reduction for $\kXOR$}
\label{sec:kXOR_hardness}

We first address the problem of counting solutions to a random $\kXOR$ instance.  Let $\bInst\sim\HDist{k}{m}{n}$ be a random $\kXOR$ instance on $m=\Delta n$ clauses.

\begin{theorem} \label{thm:kXOR_hardness}
    If there is an efficient algorithm that with high probability certifies a bound of $\exp(\frac{\eta n}{10k})$ on the number of $(1-\eta)$-satisfying assignments to $\bInst$, then there is an efficient algorithm that with high probability can certify that $\bInst$ has no $(1-\eta/2)$-satisfying assignments.
\end{theorem}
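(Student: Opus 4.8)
The plan is to establish the contrapositive-style reduction: given the hypothesized counting algorithm $\calA$ certifying an upper bound of $\exp(\eta n / 10k)$ on the number of $(1-\eta)$-satisfying assignments, I will build a refutation algorithm that certifies no $(1-\eta/2)$-satisfying assignment exists. The key observation is a padding argument: if $x$ is a $(1-\eta/2)$-satisfying assignment to $\bInst$, then flipping the values of $x$ on a small set $S \subseteq [n]$ of variables can destroy at most the constraints touching $S$, and there are few such constraints. Concretely, each variable appears in roughly $k\Delta = km/n$ clauses in expectation, and with high probability (by a Chernoff/union bound over the random hypergraph) every variable appears in at most, say, $2k\Delta$ clauses. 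So flipping the values on a set $S$ can violate at most $2k\Delta|S|$ additional clauses. Choosing $|S| = \lfloor \eta n / (4k\Delta) \cdot \Delta \rfloor$—more precisely $|S|$ so that $2k\Delta |S| \le (\eta/2) m = (\eta/2)\Delta n$, i.e.\ $|S| \le \eta n / (4k)$—guarantees that every one of the $2^{|S|}$ assignments obtained by flipping $x$ on subsets of $S$ remains $(1-\eta)$-satisfying. Hence a single $(1-\eta/2)$-satisfying assignment forces at least $2^{|S|} \ge 2^{\eta n/(4k) - O(1)} > \exp(\eta n/10k)$ distinct $(1-\eta)$-satisfying assignments (using $\ln 2 / 4 > 1/10$).

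The refutation algorithm is then: run $\calA$ on $\bInst$; if $\calA$ certifies the number of $(1-\eta)$-satisfying assignments is at most $\exp(\eta n/10k)$, output ``$\bInst$ has no $(1-\eta/2)$-satisfying assignment.'' Soundness of this output follows from the padding argument above: the existence of even one $(1-\eta/2)$-satisfying assignment would contradict the certified bound $\exp(\eta n/10k) < 2^{|S|}$. Completeness (that the algorithm succeeds with high probability on a random instance) follows because $\calA$ succeeds with high probability by hypothesis, and separately the degree-regularity event (every variable in at most $2k\Delta$ clauses) holds with high probability over $\bInst \sim \HDist{k}{m}{n}$.

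The main steps, in order, are: (i) record the high-probability degree bound on $\bInst$ and fix $|S|$ accordingly; (ii) verify the padding claim that flipping a $(1-\eta/2)$-satisfying assignment on any subset of $S$ yields a $(1-\eta)$-satisfying assignment, which is just bounding the number of clauses incident to $S$; (iii) check the arithmetic $2^{|S|} > \exp(\eta n/10k)$ for the chosen $|S|$; (iv) assemble the reduction and union-bound the two failure events. I do not anticipate a serious obstacle here—the argument is essentially a counting/padding triviality—but the one point requiring a little care is handling the dependence on $\Delta$: if $\Delta$ grows with $n$ then $|S| \approx \eta n/(4k)$ is still $\Theta(\eta n)$ independent of $\Delta$ (the degree bound $2k\Delta$ and the constraint count $\Delta n$ both scale with $\Delta$, so it cancels), so the exponent stays $\Theta(\eta n)$ as claimed; one should also make sure $\eta$ is not so small that $|S| < 1$, but for the regime of interest $\eta = \omega(1/n)$ this is fine, and for $\eta = O(1/n)$ the statement is vacuous.
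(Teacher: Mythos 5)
Your core idea --- flip a $(1-\eta/2)$-satisfying assignment on subsets of a small fixed set $S$ to produce $2^{|S|}$ distinct $(1-\eta)$-satisfying assignments and contradict the certified count --- is exactly the paper's argument (the paper takes $|S|=\eta n/(3k)$ and uses $2^{\eta n/(3k)} > \exp(\eta n/(10k))$; your choice $|S|=\eta n/(4k)$ works just as well). However, as written your reduction has a soundness gap. Your refutation algorithm outputs ``no $(1-\eta/2)$-satisfying assignment'' whenever $\calA$ certifies the count bound, and you invoke the degree-regularity event only as a high-probability event in the completeness analysis. By \pref{def:certification_algorithm}, though, the refutation output must be correct on \emph{every} instance on which it is produced, not merely on typical ones. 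On an atypical instance where many clauses meet $S$ (or some variable has huge degree), the padding argument breaks, and it is entirely possible that the true number of $(1-\eta)$-satisfying assignments is below $\exp(\eta n/(10k))$ --- so $\calA$ truthfully certifies it --- while a $(1-\eta/2)$-satisfying assignment exists; your algorithm would then emit a false certificate. The fix is what the paper does: make the combinatorial condition part of the algorithm, i.e.\ explicitly count the clauses with at least one variable in $S$ and output the refutation only if this count is at most $\eta m/2$ (a check that passes with high probability, so completeness is unaffected).

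A second, smaller point: your intermediate claim that every variable lies in at most $2k\Delta$ clauses with high probability needs $\Delta \gg \log n$; for constant $\Delta$ the maximum degree is $\Theta(\log n/\log\log n)$, not $O(\Delta)$. This is harmless in the regime of interest ($\Delta \ge n^{\eps}$), but the quantity you actually need is the aggregate number of clauses meeting $S$, which has mean roughly $k|S|\Delta \le \eta m/4$ and concentrates whenever $\eta m \to \infty$; bounding (and checking) that aggregate directly, as the paper does, is both simpler and more robust.
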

\begin{proof}
    Assume we ran the algorithm from the hypothesis of the theorem statement on $\bInst$, and obtained a bound on the count of $\exp(\frac{\eta n}{10k}) \leq 2^{\frac{\eta n}{4k}}$.  Take any set $S\subseteq[n]$ of size $\ell\coloneqq\frac{\eta n}{3k}$, say $\{1,\dots,\ell\}$.  If the number of clauses with at least one variable in $S$ is at most $\frac{\eta m}{2}$ (which holds with high probability), we output ``$\bInst$ has no $(1-\eta/2)$-satisfying assignments''.  Now, if there exists a $(1-\eta/2)$-satisfying assignment $x$ to $\bInst$, then any $x'$ which differs from $x$ on a subset of indices in $S$ must be $(1-\eta)$-satisfying.  However, since there are $2^{\frac{\eta n}{3k}}$ choices for $x'$ but a contradictory bound of $2^{\frac{\eta n}{4k}}$ on the number of such strings, there cannot be a $(1-\eta/2)$ satisfying assignment.
\end{proof}

\begin{remark} \label{rem:kXOR_count_tight}
    If $n^{\eps}\ll \Delta \ll n^{k/2-1}$, and $\eta n = \frac{n^{1+\eps}}{\Delta^{1/(k-2)}}$ for some $\eps > 0$, we are in a regime where:
    \begin{enumerate}
        \item there are no known algorithms to certify that there are no $(1-\eta/2)$-satisfying assignments, and
        \item our algorithm from \pref{thm:kxor_upper_bound} certifies a bound of $\exp(O(\eta n \log n))$.
    \end{enumerate}
    Thus, beating our algorithm in the above regime of $\eta$ by more than a logarithmic factor in the exponent would break a long-standing algorithmic barrier.
\end{remark}

\subsection{Refutation-to-certified-counting reduction for Independent Set}
\label{sec:ind_set_hardness}

In this section, we show evidence that our upper bound for independence number in a random $d$-regular graph (\pref{thm:ind_set_upper_bound}) cannot be improved significantly.

First, recall that we defined $\Ram \coloneqq \frac{2\sqrt{d-1}}{d}$ and $\Cindset \coloneqq \frac{\Ram}{1+\Ram}$.
Moreover, the best known \textit{certifiable} upper bound (Hoffman's bound) for the independence number of a random $d$-regular graph is $\Cindset n$.
It is widely believed that beating this bound (getting an upper bound of $(1-\eps)\Cindset n$ for some constant $\eps$) requires some new algorithmic ideas.

\begin{theorem} \label{thm:ind_set_hardness}
    Let $\bG$ be a random $d$-regular graph.
    Given constant $\eta \in (0, 1/2)$, if there is an efficient algorithm that with high probability certifies a bound of $\exp\left(\frac{\Cindset}{4} \eta \log(1/\eta)n \right)$ on the number of independent sets of size $\Cindset (1-\eta) n$, then there is an algorithm that with high probability certifies that $\bG$ has no independent set of size $(1-\eta/2)\Cindset n$.
\end{theorem}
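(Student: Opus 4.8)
The plan is to mirror the $\kXOR$ argument of \pref{sec:kXOR_hardness}, exploiting that the family of independent sets is downward closed. Let $\calA$ be the hypothesized counting algorithm, so that for every $d$-regular graph $G$ the output $\calA(G)$ is a valid upper bound on the number of independent sets of size $\lfloor\Cindset(1-\eta)n\rfloor$ in $G$, and with high probability over a random $d$-regular $\bG$ we have $\calA(\bG)\le\exp\!\left(\tfrac{\Cindset}{4}\eta\log(1/\eta)n\right)$. From $\calA$ I would build a refutation algorithm $\calA'$ as follows: run $\calA(G)$, set the threshold $\tau\coloneqq\binom{\lfloor(1-\eta/2)\Cindset n\rfloor}{\lfloor(1-\eta)\Cindset n\rfloor}$, and output ``$G$ has no independent set of size $\lfloor(1-\eta/2)\Cindset n\rfloor$'' if $\calA(G)<\tau$, and the trivial bound otherwise.

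For soundness, I would use that any subset of an independent set is again an independent set. Hence if $G$ contained an independent set $S$ with $|S|\ge\lfloor(1-\eta/2)\Cindset n\rfloor$, then all of its $\binom{|S|}{\lfloor(1-\eta)\Cindset n\rfloor}$ subsets of size $\lfloor(1-\eta)\Cindset n\rfloor$ are independent sets of that size, so $\calA(G)\ge\binom{|S|}{\lfloor(1-\eta)\Cindset n\rfloor}\ge\tau$ (the binomial being monotone in its top argument), and $\calA'$ would not falsely declare the absence of such a set. Thus whenever $\calA'$ outputs ``no independent set of size $\lfloor(1-\eta/2)\Cindset n\rfloor$'' the graph genuinely has none, as required by \pref{def:certification_algorithm}.

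For completeness I need $\tau$ to strictly dominate the count bound with high probability. Writing $\tau=\binom{a}{a-k}$ with $a\coloneqq\lfloor(1-\eta/2)\Cindset n\rfloor$ and $a-k=\lfloor(1-\eta/2)\Cindset n\rfloor-\lfloor(1-\eta)\Cindset n\rfloor=(\eta/2)\Cindset n\pm1$, and using $\binom{a}{a-k}\ge\left(\tfrac{a}{a-k}\right)^{a-k}$ together with $\tfrac{a}{a-k}\to\tfrac{1-\eta/2}{\eta/2}=\tfrac{2}{\eta}-1\ge\tfrac1\eta$ (valid since $\eta<\tfrac12$), one gets for all large $n$ that $\tau\ge\eta\cdot\exp\!\left(\tfrac{\Cindset}{2}\eta\log(1/\eta)n\right)$. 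Since $\eta$ is a fixed constant and $\tfrac{\Cindset}{2}>\tfrac{\Cindset}{4}$, this strictly exceeds $\exp\!\left(\tfrac{\Cindset}{4}\eta\log(1/\eta)n\right)$ once $n$ is large, so with high probability $\calA(\bG)<\tau$ and $\calA'$ correctly refutes. The only care needed is the elementary bookkeeping of the floor functions in $\Cindset(1-\eta)n$, $\Cindset(1-\eta/2)n$ and their difference; these perturb exponents by at most $O(\log(1/\eta))$, comfortably absorbed by the factor-of-two slack between $\tfrac{\Cindset}{2}$ and $\tfrac{\Cindset}{4}$, so there is no genuine obstacle here — which is consistent with this being only weak hardness evidence.
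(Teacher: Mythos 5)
Your proposal is correct and follows essentially the same route as the paper: exploit downward closure of independent sets to lower bound the number of size-$\Cindset(1-\eta)n$ independent sets inside a hypothetical size-$(1-\eta/2)\Cindset n$ one by the binomial coefficient, and show this exceeds the certified count bound (the paper bounds the binomial via $H_2(p)\ge p\log(1/p)$, you via $\binom{a}{a-k}\ge(a/(a-k))^{a-k}$, which is the same estimate in disguise). The only difference is that you spell out the reduction algorithm and its soundness/completeness explicitly rather than phrasing it as a contradiction, which is a presentational rather than mathematical distinction.
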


\begin{proof}
    Assume we have an algorithm that obtains a bound of $\exp\left({\frac{\Cindset}{4}\eta \log(1/\eta) n}\right)$.
    Further, assume that there is an independent set $S$ of size $(1-\eta/2) \Cindset n$.
    Then, by the fact that any subset of $S$ is also an independent set, the number of independent sets of size $(1-\eta)\Cindset n$ must be at least
    \begin{equation*}
        \binom{(1-\frac{\eta}{2}) \Cindset n}{(1-\eta) \Cindset n} = \binom{(1-\eta/2)\Cindset n}{\frac{\eta}{2}\Cindset n}
        \geq 2^{H_2(\frac{\eta/2}{1-\eta/2})(1-\eta/2)\Cindset n - O(\log n)}
        >  \exp\left({\frac{\Cindset}{4}\eta \log(1/\eta) n}\right),
    \end{equation*}
    using the fact that $H_2(p) \geq p\log(1/p)$ and $\eta < 1/2$.
    This contradicts the upper bound.
\end{proof}

\begin{remark}
    For constant $d\geq 3$ and a small constant $\eta > 0$, \pref{thm:ind_set_upper_bound} certifies a bound of $\exp\left(O(\eta^{3/5}\log (1/\eta) n)\right)$.
    However, beating a bound of $\exp\left(O(\eta \log(1/\eta)n)\right)$ would also beat the Hoffman's bound by a factor of $(1-\eta/2)$.
    We conjecture that there is an efficient algorithm to certify an upper bound matching the lower bound, but we leave that as an open direction.
\end{remark}

\subsection{Approach for proving low-degree hardness for certified counts in $\kSAT$}   \label{sec:kSAT-hardness-approach}
We note that the counting hardness evidence for $\kXOR$ we provide is only evidence for optimality of our algorithms for counting \emph{approximately satisfying assignments} to CSPs.  It is desirable to give evidence suggesting that our algorithm's guarantees for certifying bounds on, say, the number of exactly satisfying assignments to a random $\kSAT$ formula are tight.  One approach for doing so is to construct a \emph{planted distribution} with an appropriately large number of $\kSAT$ assignments, and prove that it is impossible for low-degree polynomials to distinguish this planted distribution from random $\kSAT$ instances.  Here, we provide a blueprint for constructing such a planted distribution:
\begin{enumerate}
    \item Sample a random $k$-uniform hypergraph with a planted independent set $S$ (where $S$ is an independent set if no hyperedge contains $\ge 2$ vertices in $S$).
    \item Sample a random assignment $x$ on variables outside $S$.
    \item Place random negations $c$ so the clauses completely outside $S$ are satisfied by $x$ as $\ThreeXOR$, and for hyperedges $U=(u,v,w)$ with $u\in U$, the clause $(c_{U,v}v, c_{U,w}w)$ is satisfied as $\twoxor$.
\end{enumerate}
One of the challenges is in planting an independent set that doesn't ``stand out'' to spectral or low-degree polynomial distinguishers.  To this end we suspect that the techniques from \cite{BBKMW20} might be useful.

\section*{Acknowledgments}
We would like to thank Pravesh Kothari, Prasad Raghavendra, and Tselil Schramm for their encouragement and thorough feedback on an earlier draft.  We would also like to thank Tselil Schramm for enlightening discussions on refuting random CSPs.

\bibliographystyle{alpha}
\bibliography{main}

\end{document}